\theoremstyle{plain}
\theoremstyle{plain}
\theoremstyle{plain}
\newtheorem{lem}{\protect\lemmaname}[section]
\theoremstyle{plain}
\newtheorem{thm}{\protect\theoremname}[section]
\theoremstyle{plain}
\theoremstyle{definition}
\newtheorem{defn}{\protect\definitionname}[section]
\theoremstyle{definition}
\theoremstyle{definition}
\newtheorem{rem}{\protect\remarkname}[section]
\providecommand{\claimname}{Claim}
\providecommand{\lemmaname}{Lemma}
\providecommand{\propositionname}{Proposition}
\providecommand{\theoremname}{Theorem}
\providecommand{\corollaryname}{Corollary} 
\providecommand{\definitionname}{Definition}
\providecommand{\assumptionname}{Assumption}
\providecommand{\remarkname}{Remark}
\DeclareMathOperator*{\argmax}{arg\,max}
\DeclareMathOperator*{\argmin}{arg\,min}
\newcommand{\nn}{\nonumber}
\newcommand{\Eloc}{\mathcal{E}_{\mathrm{loc}}}
\newcommand{\Eest}{\mathcal{E}_{\mathrm{est}}}
\newcommand{\Eprunet}{\mathcal{E}_{\mathrm{prune},t}}
\newcommand{\Eloct}{\mathcal{E}_{\mathrm{loc},t}}
\newcommand{\Eestt}{\mathcal{E}_{\mathrm{est},t}}
\newcommand{\Eprune}{\mathcal{E}_{\mathrm{prune}}}
\newcommand{\Ebal}{\mathcal{E}_{\mathrm{balanced}}}
\newcommand{\Scoll}{S_{\mathrm{coll}}}
\newcommand{\Shead}{S_{\mathrm{head}}}
\newcommand{\Stail}{S_{\mathrm{tail}}}
\newcommand{\Median}{\mathrm{Median}}
\newcommand{\CN}{\mathrm{CN}}
\newcommand{\errtilde}{\widetilde{\mathrm{err}}}
\newcommand{\Bmax}{B_{\mathrm{max}}}
\newcommand{\Fmax}{F_{\mathrm{max}}}
\newcommand{\SNR}{\mathrm{SNR}}
\newcommand{\pe}{P_{\mathrm{e}}}
\newcommand{\Xv}{\mathbf{X}}
\newcommand{\Yv}{\mathbf{Y}}
\newcommand{\Ac}{\mathcal{A}}
\newcommand{\Ec}{\mathcal{E}}
\newcommand{\Fc}{\mathcal{F}}
\newcommand{\Ic}{\mathcal{I}}
\newcommand{\Uc}{\mathcal{U}}
\newcommand{\Xc}{\mathcal{X}}
\newcommand{\CC}{\mathbb{C}}
\newcommand{\EE}{\mathbb{E}}
\newcommand{\PP}{\mathbb{P}}
\newcommand{\RR}{\mathbb{R}}
\newcommand{\ZZ}{\mathbb{Z}}
\newcommand{\supp}{\mathrm{supp}}
\newcommand{\sv}{\mathbf{s}}
\newcommand{\wh}{\widehat}
\newcommand{\Err}{\mathrm{Err}}
\newcommand{\err}{\mathrm{err}}
\newcommand{\poly}{\mathrm{poly}}
\newtheorem{theorem}{Theorem}[section]
\newtheorem{definition}[theorem]{Definition}
\newtheorem{remark}[theorem]{Remark}
\newenvironment{proofof}[1]{\noindent{\bf Proof of #1:}}{$\qed$\par}
 \gdef\xxxmark{%
   \expandafter\ifx\csname @mpargs\endcsname\relax % in minipage?
     \expandafter\ifx\csname @captype\endcsname\relax % in figure/caption?
       \marginpar{xxx}% not in a caption or minipage, can use marginpar
     \else
       xxx % notice trailing space
     \fi
   \else
     xxx % notice trailing space-
   \fi}
 \gdef\xxx{\@ifnextchar[\xxx@lab\xxx@nolab}
 \long\gdef\xxx@lab[#1]#2{{\bf [\xxxmark #2 ---{\sc #1}]}}
 \long\gdef\xxx@nolab#1{{\bf [\xxxmark #1]}}
\newcommand{\e}{\epsilon}
\newcommand{\C}{\mathbb{C}}
\begin{document}

\title{An Adaptive Sublinear-Time Block Sparse Fourier Transform}

\author{Volkan Cevher\\EPFL \and Michael Kapralov\\EPFL \and Jonathan Scarlett\\EPFL \and Amir Zandieh\\EPFL}

\maketitle

 \begin{abstract}
The problem of approximately computing the $k$ dominant Fourier coefficients of a vector $X$ quickly, and using few samples in time domain, is known as the Sparse Fourier Transform (sparse FFT) problem. A long line of work on the sparse FFT  has resulted in algorithms with $O(k\log n\log (n/k))$ runtime [Hassanieh \emph{et al.}, STOC'12] and $O(k\log n)$ sample complexity [Indyk \emph{et al.}, FOCS'14]. These results are proved using non-adaptive algorithms, and the latter $O(k\log n)$ sample complexity result is essentially the best possible under the sparsity assumption alone: It is known that even adaptive algorithms must use $\Omega((k\log(n/k))/\log\log n)$ samples [Hassanieh \emph{et al.}, STOC'12]. By {\em adaptive}, we mean being able to exploit previous samples in guiding the selection of further samples.

This paper revisits the sparse FFT problem with the added twist that the sparse coefficients approximately obey a $(k_0,k_1)$-block sparse model. In this model, signal frequencies are clustered in $k_0$ intervals with width $k_1$ in Fourier space, and $k= k_0k_1$ is the total sparsity. Signals arising in applications are often well approximated by this model with $k_0\ll k$.

Our main result is the first sparse FFT algorithm for $(k_0, k_1)$-block sparse signals with a sample complexity of $O^*(k_0k_1 + k_0\log(1+ k_0)\log n)$ at constant signal-to-noise ratios, and sublinear runtime. A similar sample complexity was previously achieved  in the works on {\em model-based compressive sensing} using random Gaussian measurements, but used $\Omega(n)$ runtime. To the best of our knowledge, our result is the first sublinear-time algorithm for model based compressed sensing, and the first sparse FFT result that goes below the $O(k\log n)$ sample complexity bound. 

 Interestingly, the  aforementioned model-based compressive sensing result that relies on Gaussian measurements is non-adaptive, whereas our algorithm crucially uses {\em adaptivity} to achieve the improved sample complexity bound.  We prove that adaptivity is in fact necessary in the Fourier  setting: Any {\em non-adaptive} algorithm must use $\Omega(k_0k_1\log \frac{n}{k_0k_1})$ samples for the $(k_0,k_1$)-block sparse model, ruling out improvements over the vanilla sparsity assumption. Our main technical innovation for adaptivity is a new randomized energy-based importance sampling technique that may be of independent interest.
\end{abstract}

\setcounter{page}{0}
\thispagestyle{empty}

%\setcounter{page}{0}
%\thispagestyle{empty}

%!TEX root = BlockSparseFT-new.tex

\newpage
\section{Introduction} \label{sec:intro}

The discrete Fourier transform (DFT) is one of the most important tools in modern signal processing, finding applications in audio and video compression, radar, geophysics, medical imaging, communications, and many more. The best known algorithm for computing the DFT of a general signal of length $n$ is the Fast Fourier Transform (FFT), taking $O(n\log n)$ time, which matches the trivial $\Omega(n)$ lower bound up to a logarithmic factor.

In recent years, significant attention has been paid to exploiting \emph{sparsity} in the signal's Fourier spectrum, which is naturally the case for numerous of the above applications. By sparse, we mean that the signal can be well-approximated by a small number of Fourier coefficients. Given this assumption, the computational lower bound of  $\Omega(n)$ no longer applies. Indeed, the DFT can be computed in \emph{sublinear time}, while using a sublinear number of samples in the time domain \cite{gilbert2014recent,gilbert2008tutorial}. 

 The problem of computing the DFT of signals that are approximately sparse in the Fourier domain has received significant attention in several communities. The seminal work of~\cite{CTao,RV} in {\em compressive sensing}  first showed that only $k \log^{O(1)} n$ samples in time domain suffice to recover a length $n$ signal with at most $k$ nonzero Fourier coefficients.  A different line of research on the {\em Sparse Fourier Transform} (sparse FFT), with origins in computational complexity and learning theory, has  resulted in algorithms that use $k\log^{O(1)} n$ samples  and $k\log^{O(1)} n$ runtime (i.e., the runtime is {\em sublinear} in the length of the input signal). Many such algorithms have been proposed in the literature \cite{GL,KM,Man,GGIMS,AGS,GMS,Iw,Ak,HIKP,HIKP2,LWC,BCGLS,HAKI,pawar2013computing,heidersparse, IKP, IK14a, K16, PZ15}; we refer the reader to the recent surveys \cite{gilbert2014recent,gilbert2008tutorial} for a more complete overview.

The best known runtime for computing the $k$-sparse FFT is due to Hassanieh \emph{et al.}~\cite{HIKP2}, and is given by $O(k\log n\log(n/k))$, asymptotically improving upon the FFT for all $k=o(n)$. The recent works of~\cite{IKP, K16} also show how to achieve a sample complexity of $O(k\log n)$ (which is essentially optimal) in linear time, or  in time $k\log^{O(1)} n$ at the expense of $\poly(\log\log n)$ factors. Intriguingly, the aforementioned algorithms are all \textit{non-adaptive}. That is, these algorithms do not exploit existing samples in guiding the selection of the new samples to improve approximation quality. In the same setting, it is also known that adaptivity cannot improve the sample complexity by more than an $O(\log\log n)$ factor~\cite{HIKP2}.

%(up to $\mathrm{poly}(\log\log n)$ factors) 
%since it
%Indeed, even if an algorithm can  (i.e., when it is \textit{adaptive}), it must obey the lower bound of $\Omega\big( k\frac{\log n}{\log \log n}\big)$ \cite{Has12}. 

Despite the significant gains permitted by sparsity, designing an algorithm for handling \emph{arbitrary} sparsity patterns may be overly generic; in practice, signals often exhibit more specific sparsity structures.  A common example is \emph{block sparsity}, where significant coefficients tend to cluster on known partitions, as opposed to being unrestricted in the signal spectrum. Other common examples include tree-based sparsity, group sparsity, and dispersive sparsity \cite{Bar10,baldassarre2013group,halabi2015totally,bach2010structured}.  

Such structured sparsity models can be captured via the \emph{model-based} framework \cite{Bar10}, where the number of sparsity patterns may be far lower than ${n \choose k}$. For the compressive sensing problem, this restriction has been shown to translate into a reduction in the sample complexity, even with non-adaptive algorithms. Specifically, one can achieve a sample complexity of $O(k + \log |\mathcal{M}|)$ with dense measurement matrices based on the Gaussian distribution, where $\mathcal{M}$ is the set of permitted sparsity patterns.  Reductions in the sample complexity with other types of measurement matrices, e.g., sparse matrices based on expanders, are typically less  \cite{bah2014model,indyk2013model}. Other benefits of exploiting model-based sparsity include  faster recovery and improved noise robustness \cite{Bar10,bah2014model}.
%In addition, the time complexity of the model-based framework depends on the difficulty of computing the projection onto the model. 

Surprisingly, in stark contrast to the extensive work on exploiting model-based sparsity with general linear measurements, there are no existing sparse FFT algorithms exploiting such structure.  This paper presents the first such algorithm, focusing on the special case of block sparsity. Even for this relatively simple sparsity model, achieving the desiderata turns out to be quite challenging, needing a whole host of new techniques, and intriguingly, requiring \emph{adaptivity} in the sampling. %, as we describe in detail below.

To clarify our contributions, we describe our model and the problem statement in more detail.

\paragraph{\underline{Model and Basic Definitions}}

    The Fourier transform of a signal $X \in \CC^n$ is denoted by $\wh{X}$, and defined as
        $$\wh{X}_f=\frac{1}{n} \sum_{i \in [n]} X_i \omega_n^{-ft}, \quad f \in [n],$$
    where $\omega_n$ is the $n$-th root of unity. %In the sequel, we omit the subscript $n$ in $\omega$ for notational ease. 
    With this definition, Parseval's theorem takes the form 
$\|X\|^2 = n\|\wh{X}\|_2^2$.

We are interested in computing the Fourier transform of signals that, in frequency domain, are well-approximated by a \emph{block sparse} signal with $k_0$ blocks of width $k_1$, formalized as follows.  % More precisely, we divide the range of frequencies $f \in [n]$ into $\frac{n}{k_1}$ intervals of width $k_1$ as follows.

\begin{defn}[Block sparsity] \label{def:block}
    Given a sequence $X \in \CC^n$ and an even block width $k_1$, the \emph{$j$-th interval} is defined as $I_j = \big((j-1/2)k_1,(j+1/2)k_1\big] \cap \ZZ$ for $j \in \big[\frac{n}{k_1}\big]$, and we refer to $\wh{X}_{I_j}$ as the \emph{$j$-th block}.
    We say that a signal is \emph{$(k_0,k_1)$-block sparse} if it contains non-zero values within at most $k_0$ of these intervals.
\end{defn}

Block sparsity is of direct interest in several applications \cite{Bar10,baraniuk2010low}; we highlight two examples here: (i) In spectrum sensing, cognitive radios seek to improve the utilization efficiency in a sparsely used wideband spectrum. In this setting, the frequency bands being detected are non-overlapping and predefined. (ii) Audio signals often contain blocks corresponding to different sounds at different frequencies.  Such blocks may be {\em non-uniform}, and can be modeled by the {\em $(k,c)$ model} in which $k$ coefficients are arbitrarily spread across $c$ different clusters.  It was argued in \cite{Cev09a} that any signal from the $(k,c)$ model is also $(3c,k/c)$-block sparse in the uniform model. 

% (ii)  In multi-sensor processing, a frequency-sparse signal is recorded by an array of sensors.  Each sensor has the same dominant frequencies, but with different delays and amplitudes; hence, the frequency-domain signals of the sensors can be rearranged to produce a block-sparse signal.  While such rearranging does not directly fall into the framework of the present paper, our techniques can be applied, and in fact simplified, for this latter setting.

% We say that a signal is \emph{$(k_0,k_1)$-block sparse} if the energy of the part of the signal that lies on at most $k_0$ of these intervals is more than a threshold. To be more precise, $k_0$ block sparsity is defined as follows,

Our goal is to output a list of frequencies and values estimating $\wh{X}$, yielding an $\ell_2$-distance to $\wh{X}$ not much larger than that of the best $(k_0,k_1)$-block sparse approximation. Formally, we say that an output signal $\wh{X}'$ satisfies the \emph{$\ell_2/\ell_2$ block-sparse recovery guarantee} if
\begin{equation} %\label{eq:l2l2}
\|\wh{X}-\wh{X}'\|_2\leq (1+\epsilon) \min_{\wh{Y}\text{ is }(k_0,k_1)\text{-block sparse}} \|\wh{X} - \wh{Y}\|_2 \nonumber
\end{equation}
for an input parameter $\epsilon > 0$.  

The sample complexity and runtime of our algorithm are parameterized by the {\em signal-to-noise ratio} (SNR) of the input signal, defined as follows.
\begin{defn}[Tail noise and signal-to-noise ratio (SNR)] \label{def:SNR}
    We define the \emph{tail noise level} as
    \begin{align}
        \Err(\wh{X},k_0,k_1) & := \min_{\substack{S \subset [\frac{n}{k_1}] \\ |S| = k_0 }} \sum_{j \in [\frac{n}{k_1}] \backslash S} \|\wh{X}_{I_j}\|_2^2, \label{eq:def_S}
    \end{align}
    and its normalized version as  $\mu^2 := \frac{1}{k_0} \Err^2(\wh{X},k_0,k_1)$, representing the average noise level per block.  The \emph{signal-to-noise ratio} is defined as $\SNR := \frac{\|\wh{X}\|^2 }{\Err^2(\wh{X},k_0,k_1)}$. 
\end{defn}

Throughout the paper, we assume that both $n$ and $k_1$ are powers of two.  For $n$, this is a standard assumption in the sparse FFT literature.  As for $k_1$, the assumption comes without too much loss of generality, since one can always round the block size up to the nearest power of two and then cover the original $k_0$ blocks with at most $2k_0$ larger blocks, thus yielding a near-identical recovery problem other than a possible increase in the SNR.  We also assume that $\frac{n}{k_1}$ exceeds a large absolute constant; if this fails, our stated scaling laws can be obtained using the standard FFT.

We use $O^*(\cdot)$ notation to hide $\log\log\SNR$, $\log\log n$, and $\log\frac{1}{\epsilon}$ factors.  Moreover, to simplify the notation in certain lemmas having free parameters that will be set in terms of $\epsilon$, we assume throughout the paper that $\epsilon = \Omega\big( \frac{1}{\poly \log n} \big)$, and hence $\log\frac{1}{\epsilon} = O(\log \log n)$.  This is done purely for convenience, and since the dependence on $\epsilon$ is not our main focus; the precise expressions with $\log\frac{1}{\epsilon}$ factors are easily inferred from the proofs.  Similarly, since the low-SNR regime is not our key focus, we assume that $\SNR \ge 2$, and thus $\log\SNR$ is positive.

% The signal to noise ratio of the input signal is an important parameter governing the performance of our algorithms, as we see below.

\paragraph{\underline{Contributions.}}

We proceed by informally stating our main result; a formal statement is given in Section \ref{sec:final_statement}.

\begin{thm} \label{thm:final} \emph{(Upper bound -- informal version)}
    There exists an adaptive algorithm for approximating the Fourier transform with $(k_0, k_1)$-block sparsity that achieves the $\ell_2/\ell_2$ guarantee for any constant $\e = \Theta(1)$, with a sample complexity of $O^*\big((k_0 k_1 + k_0\log(1+k_0) \log n) \log \SNR)$, and a runtime of $O^*\big((k_0 k_1 \log^3 n + k_0\log(1+k_0) \log^2 n) \log \SNR)$. 
\end{thm}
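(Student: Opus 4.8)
The plan is to build $\wh{X}'$ by an outer refinement loop of $O(\log\SNR)$ rounds. Round $t$ starts from a current $O(k_0k_1)$-sparse (in frequency) estimate $\wh{X}^{(t)}$, accesses the residual $R^{(t)}:=\wh{X}-\wh{X}^{(t)}$ implicitly via time-domain samples of $X$ minus the explicitly known estimate, and with constant probability returns $\wh{X}^{(t+1)}$ with $\|\wh{X}-\wh{X}^{(t+1)}\|_2^2$ a constant factor smaller, down to a floor of $O(1)\cdot\Err^2(\wh{X},k_0,k_1)$; the last $O(1)$ rounds are run at higher internal accuracy to convert the constant-factor guarantee into the $(1+\e)$ guarantee (cheap since $\e=\Theta(1)$), and standard amplification boosts the overall success probability. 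Each round has a \emph{location} phase producing a candidate set $L$ of $O(k_0)$ blocks, and an \emph{estimation} phase producing estimates $\{\wh{X}_{I_j}\}_{j\in L}$ which are subtracted off. Allotting $O^*(k_0\log(1+k_0)\log n)$ samples and $O^*(k_0\log(1+k_0)\log^2 n)$ time to location, $O^*(k_0k_1)$ samples and $O^*(k_0k_1\log^3 n)$ time to estimation (the time dominated by filter evaluations and small bucket FFTs), and multiplying by the $O(\log\SNR)$ rounds, yields the stated bounds.

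For the location phase, I would first pass to the coarse granularity of the $n/k_1$ intervals $I_j$: composing a pseudorandom spectral permutation $f\mapsto\sigma f+b \bmod n$ with a filter of pass-band width $\Theta(k_1)$ and time support $O((n/k_1)\poly\log n)$, then subsampling, hashes the blocks into $B=\Theta(k_0)$ buckets, so that with constant probability any fixed heavy block is alone among heavy blocks in its bucket; estimating bucket energies (a few samples and $O^*(1)$ repetitions per bucket) and running $O(\log(n/k))$ bit-test measurements per heavy bucket then locates a constant fraction of the heavy blocks with $O^*(k_0\log n)$ samples. The difficulty is the heavy blocks lost to collisions, and this is where adaptivity enters: using the randomized energy-based importance-sampling primitive, I would re-focus sampling onto the still-unresolved part of the spectrum, peeling off a constant fraction of the remaining heavy blocks per stage — each stage re-hashes the current residual, estimates per-bucket energies from a fresh sample batch, retains only buckets whose measured energy clears a threshold, and recurses with the sampling budget split \emph{in proportion to measured energies}, so that no effort is spent on buckets dominated by tail noise. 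The design target is that $O(\log(1+k_0))$ such stages drive the un-located share of the head energy below a constant fraction (absorbed by the outer loop), for a location cost of $O^*(k_0\log(1+k_0)\log n)$ — as opposed to the $\Omega(k_0k_1\log n)$ forced on any non-adaptive scheme.

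The estimation phase exploits block contiguity. Given $L$, we permute so that with constant probability a fixed located interval lands in its own bucket, then recover its $k_1$ coefficients by zooming into that block with a filter that need only be flat \emph{at the scale of a block} rather than at single-frequency scale — hence with time support $O((n/k_1)\poly\log n)$, and costing $O^*(k_1)$ samples of $X$ per located block, so $O^*(k_0k_1)$ overall. Because only a constant-factor improvement per round is needed, any located block whose estimate fails (it collided, or hit a bad filter offset) is simply charged to the residual and recovered in a later round, so no $\log n$-deep repetition to union-bound over all blocks is required — which is what keeps this term free of a $\log n$ factor. Subtracting the resulting $O(k_0k_1)$-sparse estimate produces $\wh{X}^{(t+1)}$.

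The main obstacle is analyzing the adaptive importance-sampling step and its coupling with the global $\ell_2/\ell_2$ accounting. Since the samples drawn at stage $\ell+1$ depend on the energy estimates at stage $\ell$, the modular ``hash, estimate, repeat'' analysis of non-adaptive sparse FFT no longer applies, and one must condition carefully so that: (i) the stagewise energy estimates are simultaneously accurate over all surviving regions; (ii) the adaptive re-focusing provably resolves a constant fraction of the remaining heavy blocks per stage, so that only $O(\log(1+k_0))$ stages are needed, even against adversarial block placement and tail noise that can masquerade as signal; and (iii) the energy leaked past the importance-sampling thresholds, together with the energy lost to bucket collisions in both phases, telescopes to at most a fixed constant fraction of the current residual, so the outer loop contracts geometrically. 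Getting (ii) with the right $\log(1+k_0)$ (rather than a weaker $\log(n/k)$ or $\log k_0$) is exactly why a carefully calibrated energy-based importance sampling, rather than naive geometric zooming, is needed, and calibrating its thresholds — loose enough never to discard a genuine heavy block, tight enough to hold the budget — is the quantitative crux; the spectral-permutation, filtering, and median-of-estimates ingredients behind both phases are otherwise standard sparse-FFT machinery re-instantiated at the block scale, with keeping the estimation term free of $\log n$ a second, more localized point demanding care.
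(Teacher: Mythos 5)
Your outer structure (an $O(\log\SNR)$ SNR-reduction loop, a location phase targeted at $O^*(k_0\log(1+k_0)\log n)$ samples, an estimation phase at $O^*(k_0k_1)$ samples per round, and charging failed estimates to the residual rather than union-bounding) matches the paper. But the core of your location phase has a genuine gap, and your adaptivity is aimed at the wrong obstruction. When you ``hash the blocks into $B=\Theta(k_0)$ buckets'' by permuting, filtering at pass-band width $\Theta(k_1)$, and subsampling, each bucket value is a \emph{complex-weighted sum} of the $k_1$ coefficients inside a block (evaluated at one point of $\wh{X}\star\wh{G}$). A heavy block can destructively cancel in that sum and contribute essentially nothing to its bucket, so ``estimating bucket energies'' and bit-testing heavy buckets can permanently miss heavy blocks --- and no amount of re-focusing on ``still-unresolved'' buckets recovers them, because they never look heavy in any bucket of that single reduction. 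The paper's fix is to form $2k_1$ reductions $Z^r$, one per time-domain shift by $r\cdot\frac{n}{2k_1}$, and prove (Lemma 2.5) that the shifts \emph{cumulatively} capture each block's energy: $\frac{1}{2k_1}\sum_r|\wh{Z}^r_j|^2\approx\|\wh{X}_{I_j}\|_2^2$. The real problem then becomes that running $O(k_0)$-sparse recovery on all $2k_1$ reductions costs $\Omega(k_0k_1\log n)$, and \emph{that} is where adaptivity enters: one spends $O(k_0k_1)$ samples estimating $\|\wh{Z}^r\|_2^2$ for all $r$, then samples shifts $r$ proportionally to these energies and sparsity levels $2^q$ from a truncated geometric distribution, solving a covering problem with total budget $O(k_0\log(1+k_0))$.

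Your adaptive peeling instead targets bucket collisions, which in standard sparse FFT are handled non-adaptively (constant-probability isolation plus iterating on the residual), so it neither explains why adaptivity is essential here nor why a $\log(1+k_0)$ factor appears --- indeed, under your own ``resolve a constant fraction per stage with geometrically shrinking budgets'' scheme the total budget would be $O(k_0)$, not $O(k_0\log(1+k_0))$. In the paper the $\log(1+k_0)$ has a specific source: a block's energy can be spread across the $2k_1$ shifts in up to $\Theta(\log k_0)$ geometric levels (spiky vs.\ flat in $r$), and Appendix D.2 exhibits a signal for which \emph{any} solution of the covering problem needs total budget $\Omega(k_0\log(1+k_0))$. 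To repair your argument you would need to (i) introduce the multi-shift reduction (or an equivalent device) so that heavy blocks are guaranteed to be visible somewhere, and (ii) replace collision-peeling by a budget-allocation rule across shifts driven by estimated shift energies; at that point you would essentially have reconstructed the paper's \textsc{EstimateEnergies}/\textsc{BudgetAllocation} mechanism.
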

Note that while we state the result for $\epsilon = \Theta(1)$ here, the dependence on this parameter is explicitly shown in the formal version.

The sample complexity of our algorithm \emph{strictly improves} upon the sample complexity of  $O(k_0k_1 \log n)$ (essentially optimal under the standard sparsity assumption) when $\log(1+k_0)\log\SNR \ll k_1$ and $\log\SNR \ll \log n$ (e.g., $\SNR= O(1)$). %For the scenario when $SNR=O(1)$ and block sizes are large (e.g. $k_1\geq \log k_0 \log n$), our sample complexity is optimal. 
% If  $k_1\geq \log(1+ k_0) \log^2 n$,  we also asymptotically improve upon the {\em runtime} of sparse FFT algorithms, running in time $O(k_0\log(1+k_0) \log^2 n+k_0 k_1)$. \xxx{we should add this claim}

%For the $(k_0,k_1)$-block sparse model in the Fourier spectrum, our main contribution is an algorithm whose sample complexity is $O(k+ k_0\log k_0 \log n )$ for a constant signal-to-noise ratio (SNR).  This strictly improves on the complexity $O(k_0k_1 \log n)$ for arbitrary sparse signals in several scaling regimes, with the gain being particularly significant when $k_0 \ll k_1$. 

Our algorithm that achieves the above upper bound crucially uses adaptivity. This is in stark contrast with the standard sparse FFT, where we know how to achieve the near-optimal $O(k\log n)$ bound using non-adaptive sampling \cite{IKP}.  While relying on adaptivity can be viewed as a weakness, we provide a lower bound revealing that \emph{adaptivity is essential} for obtaining the above sample complexity gains.  We again state an informal version, which is formalized in Section \ref{sec:lower}.

\begin{thm} \emph{(Lower bound -- informal version)} \label{thm:lb}
    Any non-adaptive sparse FFT algorithm that achieves the $\ell_2/\ell_2$ sparse recovery guarantee with $(k_0, k_1)$-block sparsity must use a number of samples behaving as $\Omega\big( k_0k_1\log \frac{n}{k_0k_1}\big)$.
\end{thm}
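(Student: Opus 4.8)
The approach is information-theoretic, in the spirit of lower bounds for unstructured $\ell_2/\ell_2$ sparse recovery. A non-adaptive algorithm reads the time samples $(X_t)_{t\in T}$ for a \emph{fixed} set $T\subseteq[n]$ with $|T|=m$; after the normalization $\tilde Y_t := n^{-1/2}X_t = \langle\tilde a_t,\wh X\rangle$, the vectors $\{\tilde a_t\}_{t\in T}$ are orthonormal columns of the DFT matrix, so the observation is $\tilde Y = B\wh X$ with $BB^{*}=I_m$. I will plant a random ``message'' $Z$, set $\wh X = \wh X_Z + \wh W$ with $\wh W\sim\mathcal{N}(0,\sigma^2 I_n)$, and argue two things: (a) the $\ell_2/\ell_2$ guarantee forces the algorithm to reconstruct (most of) $Z$, so by Fano $I\big(Z;\tilde Y\big)\ge(1-o(1))H(Z)$; and (b) since $\tilde Y = B\wh X_Z + B\wh W$ with $B\wh W\sim\mathcal{N}(0,\sigma^2 I_m)$, the determinant/Shannon--Hartley bound gives $I\big(Z;\tilde Y\big)\le\tfrac m2\log\!\big(1+\sigma^{-2}\lambda_{\max}(\mathrm{Cov}(\wh X_Z))\big)$, using $B\,\mathrm{Cov}(\wh X_Z)\,B^{*}\preceq\lambda_{\max}(\mathrm{Cov}(\wh X_Z))\,I_m$. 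If the ensemble is chosen so that $\lambda_{\max}(\mathrm{Cov}(\wh X_Z))=O(\sigma^2)$, then $I(Z;\tilde Y)=O(m)$ and hence $m=\Omega(H(Z))$; the whole game is to make $H(Z)=\Omega\big(k_0 k_1\log\tfrac{n}{k_0 k_1}\big)$ while keeping $\lambda_{\max}(\mathrm{Cov}(\wh X_Z))$ small.

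A ``single-scale'' ensemble fails here: even letting $Z$ encode a choice of $k_0$ of the $n/k_1$ intervals together with an arbitrary sign pattern inside each chosen block, one gets only $H(Z)=\Theta\big(k_0\log\tfrac{n}{k_0 k_1}+k_0 k_1\big)=\Theta(k_0 k_1)$ once $k_1\gtrsim\log\tfrac{n}{k_0 k_1}$, a full logarithmic factor short; and since the support of any $(k_0,k_1)$-block-sparse vector lies in a family of size at most $\binom{n/k_1}{k_0}\,2^{k_0 k_1}$, no purely combinatorial planting can beat $\Theta(k_0 k_1 + k_0\log\tfrac{n}{k_0 k_1})$. The extra $\log\tfrac{n}{k_0 k_1}$ must therefore come from amplitude. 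I will use a multi-scale (geometric) construction: fix $d=\Theta\big(\log\tfrac{n}{k_0 k_1}\big)$ levels and a large constant base $D$, draw $S\subseteq[n/k_1]$ with $|S|=k_0$ uniformly, draw i.i.d.\ signs $\sigma^{(\ell)}_{b,r}\in\{\pm1\}$ for $\ell\in[d]$, $b\in S$, $r\in[k_1]$, and put $(\wh X_Z)_{b,r}=\sum_{\ell=1}^{d}D^{\ell}\,\sigma^{(\ell)}_{b,r}$ on $S$ and $0$ elsewhere. Then $H(Z)=\Theta\big(k_0\log\tfrac{n}{k_0 k_1}+k_0 k_1 d\big)=\Theta\big(k_0 k_1\log\tfrac{n}{k_0 k_1}\big)$. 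Because the signs are centered and independent, $\mathrm{Cov}(\wh X_Z)$ is diagonal with entries $\asymp\tfrac{k_0 k_1}{n}D^{2d}$, so choosing $D^{2d}\asymp\tfrac{n}{k_0 k_1}$ (which fixes $d$) makes $\lambda_{\max}(\mathrm{Cov}(\wh X_Z))\asymp1$; taking $\sigma=\Theta(1)$ then gives the desired $I(Z;\tilde Y)=O(m)$, while simultaneously keeping $\|\wh X_Z\|_2^2\asymp k_0 k_1 D^{2d}\asymp n$, comparable to the noise energy.

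It remains to show that $\ell_2/\ell_2$ recovery reveals $\Omega(H(Z))$ bits. First, the best $(k_0,k_1)$-block-sparse approximation of $\wh X_Z+\wh W$ keeps exactly the $k_0$ blocks of $S$ (every nonzero coordinate of $\wh X_Z$ has magnitude $\ge D^d/2=\Theta(\sqrt{n/(k_0 k_1)})\gg\sigma$), so $\Err(\wh X,k_0,k_1)\approx\sigma\sqrt n$; a budget argument --- dropping a block of $S$ costs $\gtrsim k_1 D^{2d}$ in squared error against an allowance of only $((1+\epsilon)^2-1)\sigma^2 n$ --- shows all but an $O(\epsilon)$-fraction of the blocks of $S$ are identified, recovering a constant fraction of the $k_0\log\tfrac{n}{k_0 k_1}$ support bits. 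Second, on the correctly identified blocks I peel scales from the top: thresholding the recovered coordinates at $D^d/2$ recovers the level-$d$ signs on all but an $O(\epsilon)$-fraction of coordinates (the $\ell_2$ error allowance permits at most that many coordinates to have error exceeding $D^d/4$); subtracting the resulting level-$d$ estimate from the samples is a \emph{non-adaptive} linear operation on $(X_t)_{t\in T}$ and yields samples of a signal of the same form with top level $d-1$, and one re-runs. Iterating, a constant fraction of the $d$ levels can be recovered on a constant fraction of coordinates, for a total of $\Omega(k_0 k_1 d)=\Omega(H(Z))$ recovered bits, which is what Fano requires; combined with $I(Z;\tilde Y)=O(m)$ this gives $m=\Omega\big(k_0 k_1\log\tfrac{n}{k_0 k_1}\big)$.

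The main obstacle is this last step: controlling how errors propagate through $\Theta\big(\log\tfrac{n}{k_0 k_1}\big)$ peeling rounds, and calibrating $D$, $\sigma$, and the approximation parameter $\epsilon$ against one another so that, at each round, the $\ell_2/\ell_2$ allowance is small enough to pin down the current top level on almost all coordinates while the coordinates already in error (which carry ``spikes'' of size $\asymp D^{\ell}$ that mask all lower levels) form only a slowly growing set. Extracting the full $\times\log\tfrac{n}{k_0 k_1}$ amplification of $H(Z)$ --- rather than just a bounded number of usable levels --- is exactly what separates the target bound $\Omega\big(k_0 k_1\log\tfrac{n}{k_0 k_1}\big)$ from the weaker $\Omega(k_0 k_1)$ obtainable from a single scale, and it is the technical heart of the argument; it mirrors the chaining device used to remove the analogous logarithmic slack in lower bounds for unstructured sparse recovery.
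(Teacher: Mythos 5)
Your proposal replaces the paper's argument with a DIPW/Price--Woodruff-style generic linear-sketch lower bound: a multi-scale planted ensemble in frequency domain, a Shannon--Hartley capacity bound $I(Z;\tilde Y)=O(m)$ using only row-orthonormality of the sampling matrix, and a peeling argument to force the $\ell_2/\ell_2$ guarantee to reveal $\Omega(k_0k_1\log\frac{n}{k_0k_1})$ bits. There are two problems, and the second is fatal. First, the step you yourself flag as ``the technical heart'' does not go through as described. With external noise $\wh W\sim\mathcal{N}(0,\sigma^2 I_n)$ calibrated so that the per-measurement SNR is constant, the tail error is $\Err^2\approx\sigma^2 n$, so the guarantee only bounds the total head error by $C\sigma^2 n = C\sigma^2 k_0k_1 D^{2d}$; by Markov, level $d-j$ is resolvable on a constant fraction of coordinates only when $C\sigma^2 D^{2j}\ll 1$, i.e.\ for $j=O(1)$ levels. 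Crucially, re-running on the residual does not help: unlike in DIPW, where the ``noise floor'' for each round is the next level down and shrinks geometrically under peeling, here the floor is the external noise $\sigma^2 n$, which is fixed. You therefore extract only $O(k_0k_1)$ bits. If instead you shrink $\sigma$ so that all $d$ levels sit above the noise, the capacity bound becomes $O\big(m\log\frac{n}{k_0k_1}\big)$ and exactly cancels the factor you gained in $H(Z)$. Either way the construction tops out at $\Omega\big(k_0k_1+k_0\log\frac{n}{k_0k_1}\big)$.

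That this ceiling is unavoidable for your approach is shown by the paper's own contribution (d): $O(k_0k_1+k_0\log n)$ \emph{non-adaptive Gaussian} measurements suffice for block-sparse recovery \cite{Bar10}. Your argument uses nothing about Fourier sampling beyond $BB^*=I_m$, which holds (approximately) for normalized Gaussian rows as well, so if it proved $\Omega\big(k_0k_1\log\frac{n}{k_0k_1}\big)$ it would contradict that upper bound (take $k_0=1$, $k_1=\sqrt n$). The theorem is specifically a statement about the weakness of fixed \emph{time-domain} samples, and the paper's proof exploits this: the hard instance is a single modulated boxcar occupying a window of length $\Theta(n/(k_0k_1))$ at a random time location $u$, with noise added only inside that window, and the decoder is \emph{given} $u$. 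The message (the modulation frequency) carries only $\log\frac{n}{k_0k_1}$ bits, but since only the $m_u=O(m/(k_0k_1))$ samples landing in the window convey any information, Fano plus Shannon--Hartley give $\frac{m}{k_0k_1}\log(1+\frac1\alpha)\gtrsim\log\frac{n}{k_0k_1}$. The factor $k_0k_1$ thus comes from sample wastage rather than entropy --- which is also exactly why adaptivity circumvents the bound, a feature any correct proof must reproduce.
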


To the best of our knowledge, these two theorems provide the first results along several important directions, giving {\bf (a)} the first sublinear-time algorithm for model-based compressive sensing; {\bf (b)} the first model-based result with provable sample complexity guarantees in the Fourier setting; {\bf (c)} the first proven gap between the power of adaptive and non-adaptive sparse FFT algorithms; and {\bf (d)} the first proven gap between the power of structured (Fourier basis) and unstructured (random Gaussian entries)  matrices for model-based compressive sensing. 

To see that {\bf (d)} is true, note that the sample complexity $O(k_0 \log n+k_0 k_1)$ for block-sparse recovery can be achieved {\em non-adaptively} using Gaussian measurements~\cite{Bar10}, but we show that adaptivity is required in the Fourier setting.

% The latter two points are due to the following observations.  For {\bf (c)}, note that the optimal bound on the sample complexity of sparse FFT can be achieved non-adaptively~\cite{}. and it is known that adaptivity does not help significantly~\cite{}, whereas our approach uses adaptivity to achieve the sample-optimal\xxx{not quite} bound (in the constant SNR regime) in the model-based setting, and we show that adaptivity is necessary. 

{\bf Dependence of our results on SNR}. The sample complexity and runtime of our upper bound depend logarithmically on the SNR of the input signal.  This dependence is common for sparse FFT algorithms, and even for the case of standard sparsity, algorithms avoiding this dependence in the runtime typically achieve a suboptimal sample complexity \cite{HIKP, HIKP2}. Moreover, to our knowledge, all existing sparse FFT lower bounds consider the constant SNR regime (e.g., \cite{DIPW, PW, HIKP2}). 

We also note that our main result, as stated above, assumes that upper bounds on the SNR and the tail noise are known that are tight to within a constant factor (in fact, such tightness is not required, but the resulting bound replaces the true values by the assumed values). These assumptions can be avoided at the expense of a somewhat worse dependence on $\log\SNR$, but we prefer to present the algorithm in the above form for clarity. The theoretical guarantees for noise-robust compressive sensing algorithms often require similar assumptions \cite{Fou13}.

% We discuss this assumption from the point of view of lower bounds and algorithms. First,  all lower bounds on the sample complexity  of sparse recovery with Fourier measurements~\cite{DPIW, IW, HIKP2} are proved in the regime of constant SNR. For that setting our bounds are optimal up to lower order terms {\bf not quite true?}. On the algorithmic side, logarithmic dependence of runtime on the SNR is common for sparse FFT algorithms: algorithms for even the vanilla sparse FFT that avoid this dependence, or make it weaker than logarithmic~\cite{IKP}, are only applicable with some restrictions (e.g.~\cite{IKP} achieves both near-optimal sample complexity and $O^*(k\log^2 n)$ runtime, but only in dimension 1), or do not achieve optimal sample complexity~

\vspace{-3mm}
\paragraph{\underline{Our techniques}:} At a high level, our techniques can be summarized as follows: 

\textbf{Upper bound.} The high-level idea of our algorithm is to \emph{reduce} the $(k_0,k_1)$-block sparse signal of length $n$ to a number of downsampled $O(k_0)$-sparse signals of length $\frac{n}{k_1}$, and use standard sparse FFT techniques to locate their dominant values, thereby identifying the dominant blocks of the original signal.  Once the blocks are located, their values can be estimated using hashing techniques.  Despite the high-level simplicity, this is a difficult task requiring novel techniques, the most notable of which is an adaptive \emph{importance sampling} scheme for allocating sparsity budgets to the downsampled signals.  Further details are given in Section \ref{sec:overview}.

\textbf{Lower bound.} Our lower bound for non-adaptive algorithms follows the information-theoretic framework of~\cite{PW}, but uses a significantly different ensemble of \emph{structured} approximately block-sparse signals occupying only a fraction $O\big(\frac{1}{k_0k_1}\big)$ of the time domain.  Hence, whereas the analysis of \cite{PW} is based on the difficulty of identifying one of (roughly) ${n \choose k}$ sparsity patterns, the difficulty in our setting is in \emph{non-adaptively} finding where the signal is non-zero -- one must take enough samples to cover the various possible time domain locations.  The details are given in Section \ref{sec:lower}.

Interestingly, our upper bound uses adaptivity to circumvent the difficulty exploited in this lower bounding technique, by \emph{first} determining where the energy lies, and \emph{then} concentrating the rest of its samples on the ``right'' parts of the signal.

%\if 0
%
%\paragraph{\underline{Related Work}.}
%For an in-depth exposition of the related work on sparse-FFT, we refer to the survey articles \cite{gilbert2014recent,gilbert2008tutorial}. 
%
%\xxx{cite relevant papers}
%
%\noindent Standard sparse FT algorithms:
%\begin{itemize}
%    \item Overview papers (Gilbert/Indyk/Iwen/Schmidt 2014, Gilbert/Strauss/Tropp 2008)
%    \item Near-optimal $\ell_2/\ell_2$ guarantees (HIKP STOC 2012, IKP 2014)
%    \item Higher dimensions and measurement re-use (IK FOCS 2014, K STOC 2016)
%    \item $\ell_{\infty}/\ell_2$ guarantees? (HIKP SODA 2012 and references after their eq. (1.2))
%\end{itemize}
%
%\noindent Model-based works:
%\begin{itemize}
%    \item Original paper \cite{Bar10}
%    \item Sublinear-time paper (Kyrillidis/Cevher 2012, arXiv 1203.4746)
%    \item Approximation algorithms (Hegde/Indyk/Schmidt 2014, arXiv 1406.1579)
%    \item Specific block-sparse papers (Eldar \emph{et al.} T-SP 2010, Stojnic \emph{.} T-SP 2009, Kyrillidis \emph{et al.} AISTATS 2016)
%    \item \emph{State the Gaussian measurement guarantees and lower bounds}
%\end{itemize}
%\fi

\paragraph{\underline{Notation}:} For an even number $n$, we define $[n] := \big(-\frac{n}{2}, \frac{n}{2}\big] \cap \ZZ$, where $\ZZ$ denotes the integers.    When we index signals having a given length $m$, all arithmetic should be interpreted as returning values in $[m]$ according to modulo-$m$ arithmetic. For $x,y \in \CC$ and $\Delta \in \RR$, we write $y = x \pm \Delta$ to mean $|y - x| \le \Delta$.  The support of a vector $X$ is denoted by $\supp(X)$.  For a number $a\in \mathbb{R}$, we write $|a|_+:=\max\{0, a\}$ to denote the positive part of $a$.

\paragraph{\underline{Organization}:}
The paper is organized as follows. In Section~\ref{sec:overview}, we provide an outline of our algorithm and the main challenges involved.  We formalize our energy-based importance sampling scheme in Section \ref{sec:ep_sampling}, and provide the corresponding techniques for energy estimation in Section \ref{sec:prelim}.  The block-sparse FFT algorithm and its theoretical guarantees are given in Section \ref{sec:full_alg}, and the lower bound is presented and proved in Section \ref{sec:lower}.  Several technical proofs are relegated to the appendices.

%!TEX root = BlockSparseFT-new.tex

\section{Overview of the Algorithm} \label{sec:overview}

One of our key technical contributions consists of a reduction from the $(k_0, k_1)$-block sparse recovery problem for signals of length $n$ to $O(k_0)$-sparse recovery on a set of carefully-defined {signals} of {\em reduced length} $n/k_1$, in sublinear time. We outline this reduction below.
%assuming that the signal $\wh{X}$ is exactly $(k_0, k_1)$-block sparse. 

A basic candidate reduction to $O(k_0)$-sparse recovery consists of first convolving $\wh{X}$ with a filter $\wh{G}$ whose support approximates the indicator function of the interval $[-k_1/2, k_1/2]$, and then considering a new signal whose Fourier transform consists of samples of $\wh{X}\star \wh{G}$ at multiples of $k_1$. The resulting signal $\wh{Z}$ of length $n/k_1$  {\bf (a)} naturally represents $\wh{X}$, as  every frequency of this sequence is a (weighted) sum of the frequencies in the corresponding block, and {\bf (b)} can be accessed in time domain using a small number of accesses to $X$ (if $G$ is compactly supported; see below).

This is a natural approach, but its vanilla version does not work: Some blocks in $\wh{X}$ may entirely cancel out, not contributing to $\wh{Z}$ at all, and other blocks may add up constructively and contribute an overly large amount of energy to $\wh{Z}$. To overcome this challenge, we consider not one, but rather $2k_1$ reductions: For each $r\in [2k_1]$, we apply the above reduction to the {\em shift of $X$ by $r\cdot \frac{n}{2k_1}$ in time domain}, and call the corresponding vector $Z^r$.  We show that all shifts cumulatively capture the energy of $X$ well, and the major contribution of the paper is an algorithm for locating the dominant blocks in $\wh{X}$  from a small number of accesses to the $Z^r$'s (via an importance sampling scheme).

\textbf{Formal definitions:} We formalize the above discussion in the following, starting with the notion of a \emph{flat filter} that approximates a rectangle.

\begin{defn}[Flat filter] \label{def:filterG}
    A sequence $G \in \RR^n$ with Fourier transform $\wh{G} \in \RR^n$ symmetric about zero is called an \emph{$(n,B,F)$-flat filter} if (i) $\wh{G}_f \in [0,1]$ for all $f \in [n]$; (ii) $\wh{G}_f \ge 1 - \big(\frac{1}{4}\big)^{F-1}$ for all  $f \in [n]$ such that $|f| \le \frac{n}{2B}$; and (iii) $\wh{G}_f \le \big(\frac{1}{4}\big)^{F-1} \big( \frac{n}{B|f|} \big)^{F-1}$ for all $f \in [n]$ such that $|f| \ge \frac{n}{B}$.
\end{defn}

The following lemma, proved in Appendix \ref{sec:pf_filter}, shows that it is possible to construct such a filter having $O(FB)$ support in time domain.

\begin{lem} \label{lem:filter_properties}
    \emph{(Compactly supported flat filter)}
    Fix the integers $(n,B,F)$ with $n$ a power of two, $B < n$, and $F \ge 2$ an even number.  There exists an $(n,B,F)$-flat filter $\wh{G} \in \RR^n$, which (i) is supported on a length-$O(FB)$ window centered at zero in time domain, and (ii) has a total energy satisfying $\sum_{f \in [n]} | \wh{G}_f |^2 \le \frac{3n}{B}$.
    \label{lem:filter}
\end{lem}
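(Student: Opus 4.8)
The plan is to exhibit an explicit filter $G$ and then check, one at a time, the four things the lemma asks for: compact support in time, the three inequalities of Definition~\ref{def:filterG}, and the energy bound $\sum_f|\wh G_f|^2\le 3n/B$. I would take $\wh G$ to be a standard sparse-FFT ``flat window'': morally the indicator $\mathbf 1_{\{|f|\le n/(2B)\}}$, pre-mollified so that its inverse transform decays, and then truncated to a length-$\Theta(FB)$ window in time. Concretely one realizes $\wh G$ as a suitably normalized product of Dirichlet/Fej\'er-type kernels whose scales are chosen so that (a) the ``main lobe'' safely covers $[-n/(2B),n/(2B)]$, and (b) the remaining $\Theta(FB)$ degrees of freedom of the window are spent driving the side-lobe level down to $(1/4)^{F-1}$ with a tail of the shape $(n/(B|f|))^{F-1}$. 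Because $G$ is, by construction, a convolution of $O(F)$ box-cars of width $O(B)$ (equivalently, a single length-$O(FB)$ window), it is automatically supported on an interval of length $O(FB)$ centered at the origin, giving part~(i) of the lemma.

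To verify Definition~\ref{def:filterG}: property~(i), $\wh G_f\in[0,1]$, follows by writing $\wh G$ as a product of $[0,1]$-valued kernels (or as $|\wh H|^2$ for a suitable $H$, after normalization). For the passband bound~(ii) I would expand near $f=0$: each Dirichlet factor $D_w$ obeys $D_w(f)\ge 1-\tfrac16(\pi wf/n)^2$, so if the scales are calibrated so that the sum of squared scales times $(1/(2B))^2$ is $O\big((1/4)^{F-1}\big)$, the product stays $\ge 1-(1/4)^{F-1}$ on $|f|\le n/(2B)$. For the stopband decay~(iii) I would use the elementary bound $|\sin(\pi f/n)|\ge 2|f|/n$, valid for $|f|\le n/2$, which gives $|D_w(f)|\le n/(2w|f|)$ for each factor; taking the product over the $\Theta(F)$ factors and folding in the normalization yields $\wh G_f\le (1/4)^{F-1}(n/(B|f|))^{F-1}$ for $|f|\ge n/B$. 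This is the step where the choice of scales and powers has to be made so that every constant comes out as stated.

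For the energy bound, split $\sum_f|\wh G_f|^2$ into $|f|\le n/(2B)$, the transition band $n/(2B)<|f|<n/B$, and the tail $|f|\ge n/B$. The first range has $n/B+1$ terms, each $\le 1$, hence contributes $\le n/B+1$; the transition band has at most $n/B$ terms, each $\le 1$, contributing $\le n/B$; and the tail is controlled by~(iii): $\sum_{|f|\ge n/B}(1/4)^{2(F-1)}(n/(B|f|))^{2(F-1)}$ is a convergent sum dominated by the $F=2$ case, which is $\lesssim (n/B)^2\sum_{m\ge n/B}m^{-2}\lesssim n/B$. Adding the three pieces gives a bound of the form $c\,n/B+1$, and using the paper's standing assumption that $n/k_1$ (hence $n/B$) exceeds a large constant, this is $\le 3n/B$. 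Alternatively one can deduce this from Parseval, $\sum_f|\wh G_f|^2=\tfrac1n\|G\|_2^2$, if the construction makes $\|G\|_2^2\le 3n^2/B$ transparent.

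The only real difficulty is the filter construction itself. A naive ``single Dirichlet or Fej\'er kernel raised to a power'' does not suffice: forcing the passband to be flat to within $(1/4)^{F-1}$ pushes the box-car width down to $\Theta(B2^{-F})$, and then the stopband constant blows up like $2^{\Theta(F^2)}$, violating~(iii). What makes the lemma possible with only $O(FB)$ support is that the transition width $n/(2B)$ is a constant fraction of the cutoff, so a near-transition-optimal window (flat-top / Dolph--Chebyshev-type, or the explicit kernels used in prior sparse-FFT work) can attain attenuation $(1/4)^{F-1}$ at the cost of only an extra $\Theta(FB)$ in the window length. Once such a window is in hand, all remaining steps --- support, the $|\sin|\ge 2|\cdot|/n$ estimates for~(iii), and the three-way split for the energy --- are routine bookkeeping.
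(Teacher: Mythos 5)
Your overall architecture (exhibit an explicit kernel built from $O(F)$ box-cars of width $O(B)$, check the three properties of Definition~\ref{def:filterG}, then bound the energy by splitting frequency space into passband, transition band, and tail) matches the paper's, and your verification of property~(iii) via $|\sin(\pi x)|\ge 2|x|$ and your energy accounting are essentially the paper's steps. The gap is in the mechanism you propose for the passband bound~(ii), which cannot work. You want $\wh G$ to be a product of Dirichlet factors $D_{w_i}$ and you want flatness on $|f|\le n/(2B)$ to follow from a Taylor expansion of each factor, calibrated so that $\sum_i w_i^2\cdot(2B)^{-2}=O\bigl((1/4)^{F-1}\bigr)$. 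That calibration forces every $w_i\lesssim B\,2^{-(F-1)}$. But your own argument for property~(iii) bounds each factor by $\frac{n}{2w_i|f|}$ and multiplies; evaluated at $|f|=n/B$, the required conclusion $\wh G_f\le(1/4)^{F-1}$ forces $\prod_i\frac{B}{2w_i}\le 4^{-(F-1)}$, i.e.\ the geometric mean of the $w_i$ must be at least about $B$. The two requirements are incompatible for every $F\ge2$ (indeed, any single factor of width $w\gtrsim B$ already drops to about $2/\pi<3/4$ at the passband edge $|f|=n/(2B)$, and a width-$2B$ factor vanishes there). You correctly diagnose this tension for a single kernel raised to a power in your last paragraph, but distinct scales do not escape it, and falling back on ``flat-top / Dolph--Chebyshev-type windows'' is circular: constructing and verifying such a window \emph{is} the lemma, and the Dolph--Chebyshev route yields a uniform stopband level rather than the polynomial decay $(n/(B|f|))^{F-1}$ that~(iii) demands.

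The idea you are missing is the paper's second step. The paper takes $\wh W_f=\bigl(\frac{\sin(\pi(B'-1)f/n)}{(B'-1)\sin(\pi f/n)}\bigr)^F$ with $B'=\Theta(B)$ \emph{large} (so the main lobe is narrow and $\wh W$ is nowhere near flat on the passband) and proves only a \emph{mass} statement: the total mass of $\wh W$ outside $\{|f|\le\lambda n/(2B')\}$ is an $(O(1)/\lambda)^{F-1}$ fraction of $\sum_f\wh W_f$. It then sets $\wh G_f=\frac1Z\sum_{|\Delta|\le 3n/(4B)}\wh W_{f-\Delta}$, i.e.\ convolves $\wh W$ with a width-$\Theta(n/B)$ rectangle in frequency. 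For $|f|\le n/(2B)$ the sum misses only frequencies at distance $\ge n/(4B)$ from the peak of $\wh W$, so $\wh G_f=1-(\text{normalized tail mass})\ge 1-(1/4)^{F-1}$: pointwise flatness is manufactured from a relative-tail-mass bound, not from the shape of the main lobe. In time domain this convolution is multiplication by a sinc, so the $O(FB)$ support is preserved, and the same tail lemma delivers the stopband decay. Without this frequency-domain smoothing step, no choice of scales in your product construction satisfies~(ii) and~(iii) simultaneously.
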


     Throughout the paper, we make use of the filter construction from Lemma \ref{lem:filter_properties}, except where stated otherwise.  To ease the analysis, we assume that $G$ and $\wh{G}$ are pre-computed and can be accessed in $O(1)$ time.  Without this pre-computation, evaluating $\wh{G}$ is non-trivial, but possible using semi-equispaced Fourier transform techniques (\emph{cf.}, Section \ref{sec:semi_equi}).

    With the preceding definition, the set of $2k_1$ downsampled signals is given as follows.
    
    \begin{defn}[Downsampling] \label{def:downsampling}
        Given integers $(n,k_1)$, a parameter  $\delta \in \big(0, \frac{1}{20}\big)$, and a signal $X \in \CC^n$, we say that the set of signals $\{Z^r\}_{r \in [2k_1]}$ with $Z^r \in \CC^{\frac{n}{k_1}}$ is a \emph{$(k_1,\delta)$-downsampling} of $X$ if 
            $$Z^r_j= \frac{1}{k_1} \sum_{i\in [k_1]}(G \cdot X^r)_{j+\frac{n}{k_1} \cdot i}, ~~~ j \in \Big[\frac{n}{k_1}\Big]$$
        for an $\big(n,\frac{n}{k_1},F\big)$-flat filter with $F = 10\log\frac{1}{\delta}$ and support $O\big(F\frac{n}{k_1}\big)$, where we define $X^r_i = X_{i+a_r}$ with $a_r = \frac{nr}{2k_1}$.  Equivalently, in frequency domain, this can be written as
        \begin{equation}
            \wh{Z}^r_j = (\wh{X}^r \star \wh{G} )_{j k_1} = \sum_{f\in [n]}\wh{G}_{f-k_1 \cdot j}  \wh{X}_f  \omega_n^{a_r \cdot f}, ~~~ j\in \Big[\frac{n}{k_1}\Big] \label{equation:4}
        \end{equation}
        by the convolution theorem and the duality of subsampling and aliasing (e.g., see Appendix \ref{sec:pf_uhat}).
    \end{defn}
    
    By the assumption of the bounded support of $G$, along with the choice of $F$, we immediately obtain the following lemma, showing that we do not significantly increase the sample complexity by working with $\{Z^r\}_{r \in [2k_1]}$ as opposed to $X$ itself.

    \begin{lem} \label{lem:downsamp-cost-unit-access}
        \emph{(Sampling the downsampling signals)}
        Let $\{Z^r\}_{r \in [2k_1]}$ be a $(k_1,\delta)$-downsampling of $X \in \CC^n$ for some $(n,k_1,\delta)$.  Then any single entry $Z_i^r$ can be computed in $O\big(\log\frac{1}{\delta}\big)$ time using $O\big(\log\frac{1}{\delta}\big)$ samples of $X$.
    \end{lem}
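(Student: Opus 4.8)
The plan is simply to unwind the definition of a $(k_1,\delta)$-downsampling and then exploit the compact time-domain support of the flat filter $G$. Fix $r \in [2k_1]$ and a target index $j \in \big[\frac{n}{k_1}\big]$. By Definition \ref{def:downsampling}, writing $(G\cdot X^r)_m = G_m X^r_m = G_m X_{m+a_r}$ with $a_r = \frac{nr}{2k_1}$ and all indices taken modulo $n$, we have
$$Z^r_j = \frac{1}{k_1}\sum_{i \in [k_1]} G_{j + \frac{n}{k_1} i}\, X_{j + \frac{n}{k_1} i + a_r}.$$
Thus $Z^r_j$ is a weighted combination of at most $k_1$ entries of $X$, with weights obtained by evaluating $G$ at the points $\big\{\, j + \tfrac{n}{k_1} i : i \in [k_1] \,\big\}$.

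The key observation is that most of these weights vanish. By Lemma \ref{lem:filter_properties}, instantiated with $B = \frac{n}{k_1}$ and $F = 10\log\frac{1}{\delta}$, the filter $G$ is supported on a window of length $O(FB) = O\big(\frac{n}{k_1}\log\frac{1}{\delta}\big)$ centered at $0$ in time domain. The candidate locations $\big\{\, j + \frac{n}{k_1} i : i \in [k_1] \,\big\}$ form an arithmetic progression modulo $n$ with common difference $\frac{n}{k_1}$, so the number of them lying inside a window of length $O\big(\frac{n}{k_1}\log\frac{1}{\delta}\big)$ is at most $O\big(\log\frac{1}{\delta}\big) + O(1) = O\big(\log\frac{1}{\delta}\big)$; for every other $i$ the factor $G_{j + \frac{n}{k_1} i}$ is zero and the term drops out of the sum.

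Consequently, to evaluate $Z^r_j$ it suffices to (i) determine by elementary arithmetic the $O\big(\log\frac{1}{\delta}\big)$ indices $i$ for which $j + \frac{n}{k_1} i$ lands in the support window of $G$; (ii) look up the corresponding values of $G$ (each in $O(1)$ time by the precomputation assumption) and read the corresponding entries $X_{j + \frac{n}{k_1} i + a_r}$ — amounting to $O\big(\log\frac{1}{\delta}\big)$ samples of $X$; and (iii) form the resulting weighted sum with $O\big(\log\frac{1}{\delta}\big)$ arithmetic operations. Adding these contributions gives running time $O\big(\log\frac{1}{\delta}\big)$ and sample complexity $O\big(\log\frac{1}{\delta}\big)$, as claimed. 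There is no genuine obstacle in this argument; the one point that needs care is the counting in step (ii) above — bounding how many points of the progression $\big\{\, j + \frac{n}{k_1} i \,\big\}$ can fall in the filter's support — and this is immediate from comparing the window length $O\big(\frac{n}{k_1}\log\frac{1}{\delta}\big)$ with the spacing $\frac{n}{k_1}$.
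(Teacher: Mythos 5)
Your proof is correct and matches the paper's intended argument: the paper states the lemma as an immediate consequence of the $O\big(F\frac{n}{k_1}\big)$ time-domain support of $G$ together with the choice $F = 10\log\frac{1}{\delta}$, which is exactly the counting you carry out for the arithmetic progression $\big\{j + \tfrac{n}{k_1}i\big\}$ against the filter's support window.
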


%    Specifically, we define, for each $r\in [2k_1]$,  
%      $$Z^r_j= \frac{1}{k_1} \sum_{i\in [k_1]}(G \cdot X^r)_{j+\frac{n}{k_1} \cdot i}, \text{~~~for~} j \in \Big[\frac{n}{k_1}\Big],$$
%    so that 
%    $$
%        \wh{Z}^r_j = (\wh{X}^r \star \wh{G} )_{j k_1} = \sum_{f\in [n]}\wh{G}_{f-k_1 \cdot j}  \wh{X}_f  \omega_n^{a_r \cdot f}, \text{~~~for~}j\in \Big[\frac{n}{k_1}\Big]
%    $$
%    by the convolution theorem and the duality of subsampling and aliasing.  Here $\wh{G}$ approximates the indicator function of an interval of length $k_1$, and $G$ is supported on an interval of length about $n/k_1$, so that access to $Z^r$ is cheap given access to $X$.  The formal definitions are given in Section \ref{sec:prelim}; in particular, see Definition \ref{def:downsampling}.
    
    This idea of using $2k_1$ reductions fixes the above-mentioned problem of constructive and destructive cancellations: The  $2k_1$ reduced signals $Z^r$ ($r\in [2k_1]$) cumulatively capture all the energy of $X$ well.   That is, while the energy $|\wh{Z}_j^r|_2^2$ can vary significantly as a function of $r$, we can tightly control the behavior of the sum $\sum_{r \in [2k_1]}|\wh{Z}_j^r|_2^2$.  This is formalized in the following.
    
\begin{lem} \label{claim:1}
    \emph{(Downsampling properties)}
    Fix $(n,k_1)$, a parameter $\delta \in \big(0, \frac{1}{20}\big)$, a signal $X \in \C^n$, and a $(k_1,\delta)$-downsampling $\{Z^r\}_{r \in [2k_1]}$ of $X$.  The following conditions hold:
    \begin{enumerate}
        \item For all $j \in [\frac{n}{k_1}]$,
            \begin{equation}
              \frac{\sum_{r\in[2k_1]} |\wh{Z}^r_{j}|^2 }{2k_1}\ge ( 1-\delta) \|\wh{X}_{I_{j}}\|_2^2 - 3\delta \cdot \bigg( \|\wh{X}_{I_j \cup I_{j-1} \cup I_{j+1}}\|_2^2 + \delta \sum _{j'\in[\frac{n}{k_1}]\backslash \{j\}} \frac{\|\wh{X}_{I_{j'}}\|_2^2}{|j'-j|^{F-1}}\bigg).  \nn
         \end{equation}
         \item The total energy satisfies $(1 - 12\delta)\|\wh{X}\|_2^2 \le \frac{\sum_{r \in [2k_1]} \|\wh{Z}^r\|_2^2 }{2k_1}\le 6 \|\wh{X}\|_2^2.$
    \end{enumerate}
\end{lem}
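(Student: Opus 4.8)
The plan is to first turn the average over shifts $r$ into a transparent frequency-domain expression, and then to localise it block by block using the decay of the flat filter. Fix $j\in[\tfrac{n}{k_1}]$ and set $\wh{Y}_f:=\wh{G}_{f-k_1j}\wh{X}_f$, so that \eqref{equation:4} reads $\wh{Z}^r_j=\sum_{f\in[n]}\wh{Y}_f\,\omega_n^{a_rf}$. Since $a_r=\tfrac{nr}{2k_1}$ and $2k_1\mid n$ (both $n$ and $k_1$ being powers of two), $\omega_n^{a_rf}=\eta^{rf}$ with $\eta:=\omega_n^{n/(2k_1)}$ a primitive $2k_1$-th root of unity. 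Grouping the summation index by its residue $v$ modulo $2k_1$ exhibits $(\wh{Z}^r_j)_{r\in[2k_1]}$ as a length-$2k_1$ DFT of the aliased sequence $\big(\sum_{f\equiv v}\wh{Y}_f\big)_{v\in[2k_1]}$, and Parseval on $\ZZ_{2k_1}$ gives
\[
\frac{1}{2k_1}\sum_{r\in[2k_1]}\bigl|\wh{Z}^r_j\bigr|^2
=\sum_{v\in[2k_1]}\Bigl|\sum_{f\equiv v\,(\mathrm{mod}\,2k_1)}\wh{G}_{f-k_1j}\wh{X}_f\Bigr|^2
=\sum_{\substack{f,f'\in[n]\\ f\equiv f'\,(\mathrm{mod}\,2k_1)}}\wh{G}_{f-k_1j}\wh{G}_{f'-k_1j}\wh{X}_f\overline{\wh{X}_{f'}},
\]
the last step by expanding the squares ($\wh{G}\in\RR^n$ is real). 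I split this into the diagonal part $\mathrm{Diag}_j:=\sum_f\wh{G}_{f-k_1j}^2|\wh{X}_f|^2$ and an off-diagonal part over pairs $f\ne f'$, which necessarily satisfy $|f-f'|\ge 2k_1$.

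For Part 1, bound $\mathrm{Diag}_j$ from below by keeping only $f\in I_j$: there $|f-k_1j|\le\tfrac{k_1}{2}=\tfrac{n}{2B}$, so property (ii) of Definition~\ref{def:filterG} gives $\wh{G}_{f-k_1j}\ge 1-(\tfrac14)^{F-1}$, and since $F=10\log\tfrac1\delta$ and $\delta<\tfrac1{20}$ one has $2(\tfrac14)^{F-1}\le\delta$, hence $\mathrm{Diag}_j\ge(1-\delta)\|\wh{X}_{I_j}\|_2^2$. For the off-diagonal part, $|\wh{X}_f\overline{\wh{X}_{f'}}|\le\tfrac12(|\wh{X}_f|^2+|\wh{X}_{f'}|^2)$ together with the $f\leftrightarrow f'$ symmetry bounds its modulus by $\sum_f|\wh{X}_f|^2\,\wh{G}_{f-k_1j}H_j(f)$, where $H_j(f):=\sum_{f'\ne f,\,f'\equiv f}\wh{G}_{f'-k_1j}$; here $\wh{G}_{f-k_1j}\le 1$ by property (i), and $H_j(f)\le 2$ because the arguments $f'-k_1j$ are $2k_1$-spaced, so at most one lies in $(-k_1,k_1)$ while the rest decay by property (iii). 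The crux is a pointwise estimate of $\wh{G}_{f-k_1j}H_j(f)$ via a case split on $|f-k_1j|$: if $|f-k_1j|<k_1$ then $\wh{G}_{f-k_1j}\le 1$ but every term of $H_j(f)$ has argument of magnitude $>k_1$, so property (iii) makes $H_j(f)$ geometrically small; if $|f-k_1j|\ge k_1$ then property (iii) already makes $\wh{G}_{f-k_1j}$ geometrically small, and when $f\in I_{j'}$ one has $|f-k_1j|\ge(|j'-j|-\tfrac12)k_1$, producing the polynomial factor $|j'-j|^{-(F-1)}$. The constant $F=10\log\tfrac1\delta$ is chosen exactly so that $\wh{G}_{f-k_1j}H_j(f)\le 3\delta$ for $f\in I_{j-1}\cup I_j\cup I_{j+1}$ and $\wh{G}_{f-k_1j}H_j(f)\le 3\delta^2/|j'-j|^{F-1}$ for $f\in I_{j'}$ with $|j'-j|\ge 2$; grouping $\sum_f|\wh{X}_f|^2\wh{G}_{f-k_1j}H_j(f)$ by block then yields precisely the stated right-hand side.

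For Part 2, the lower bound follows by summing the Part~1 inequality over $j\in[\tfrac{n}{k_1}]$: since $\sum_j\|\wh{X}_{I_{j-1}\cup I_j\cup I_{j+1}}\|_2^2=3\|\wh{X}\|_2^2$ and $\sum_j\sum_{j'\ne j}\tfrac{\|\wh{X}_{I_{j'}}\|_2^2}{|j'-j|^{F-1}}\le 4\|\wh{X}\|_2^2$, the total deficit is at most $12\delta\|\wh{X}\|_2^2$, so $\tfrac{1}{2k_1}\sum_r\|\wh{Z}^r\|_2^2=\sum_j\tfrac{1}{2k_1}\sum_r|\wh{Z}^r_j|^2\ge(1-12\delta)\|\wh{X}\|_2^2$. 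For the upper bound I sum the pointwise decomposition directly: $\sum_j\mathrm{Diag}_j=\sum_f|\wh{X}_f|^2A(f)$ with $A(f):=\sum_j\wh{G}_{f-k_1j}^2$, and since $\{f-k_1j\}_j$ is $k_1$-spaced, at most two of these have magnitude $<k_1$ (each contributing $\le 1$ by property (i)) while the rest are handled by property (iii), so $A(f)\le 3$; the off-diagonal sum over $j$ is bounded by $\sum_f|\wh{X}_f|^2\sum_j\wh{G}_{f-k_1j}H_j(f)\le\delta\|\wh{X}\|_2^2$ via the same pointwise estimates. Combining gives $\tfrac{1}{2k_1}\sum_r\|\wh{Z}^r\|_2^2\le 4\|\wh{X}\|_2^2\le 6\|\wh{X}\|_2^2$.

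The only real difficulty is the bookkeeping in the filter-decay case analysis: one must cleanly separate the flat region $|f-k_1j|\le\tfrac{n}{2B}=\tfrac{k_1}{2}$ (where $\wh{G}\approx 1$), the gap region $\tfrac{k_1}{2}<|f-k_1j|<k_1$ (where only $\wh{G}\le 1$ is available, so geometric decay must come from the companion factor $H_j$ rather than from $\wh{G}$ itself), and the tail region $|f-k_1j|\ge k_1$ (where property (iii) applies), and then verify that $F=10\log\tfrac1\delta$ is large enough to absorb every resulting constant together with the polynomial factors $|j'-j|^{-(F-1)}$ into the stated $\delta$ and $\delta^2$ coefficients. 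Wraparound in $\ZZ_n$ and $\ZZ_{n/k_1}$ is handled by reading all distances circularly, which affects none of the decay estimates.
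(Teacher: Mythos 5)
Your proof is correct and follows the same skeleton as the paper's: the exact identity $\tfrac{1}{2k_1}\sum_r|\wh{Z}^r_j|^2=\sum_{f\equiv f'\,(\mathrm{mod}\,2k_1)}\wh{G}_{f-k_1j}\wh{G}_{f'-k_1j}\wh{X}_f\overline{\wh{X}_{f'}}$ (the paper gets it by summing the geometric series $\sum_r\omega_n^{a_r(f'-f)}$ rather than by Parseval on $\ZZ_{2k_1}$, but it is the same identity), the same diagonal/off-diagonal split, the same use of filter property (ii) on $I_j$ for the main term, and the same block-by-block decay bookkeeping for the error.

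The one place you genuinely diverge is the control of the cross terms. The paper bounds $|\wh{X}_f\overline{\wh{X}_{f'}}|$ by splitting the filter product as $(\wh{G}_{f-k_1j}\wh{G}_{f'-k_1j})^{1/2}\cdot(\cdots)$, invoking a dedicated lemma showing $|\wh{G}_{f}|^{1/2}|\wh{G}_{f-2k_1j'}|^{1/2}\le(\tfrac12)^{F-1}|j'|^{-(F-1)/2}$, and then applying Cauchy--Schwarz over $f$ to reduce everything to the first moment $\sum_f|\wh{G}_{f-k_1j}||\wh{X}_f|^2$. Your route -- AM--GM on $|\wh{X}_f\overline{\wh{X}_{f'}}|$ plus the $f\leftrightarrow f'$ symmetry, reducing to $\sum_f|\wh{X}_f|^2\wh{G}_{f-k_1j}H_j(f)$ -- avoids both the square-root filter lemma and the Cauchy--Schwarz step, which is arguably cleaner. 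Two quantitative points to tighten. First, in the far regime $f\in I_{j'}$, $|j'-j|\ge2$, the companion factor $H_j(f)$ can be $\Theta(1)$ (some $f'\equiv f$ may land within $(-k_1,k_1)$ of $k_1j$), so \emph{both} powers of $\delta$ in your claimed bound $3\delta^2/|j'-j|^{F-1}$ must come from $\wh{G}_{f-k_1j}$ alone; this works because $(\tfrac14)^{F-1}(|j'-j|-\tfrac12)^{-(F-1)}\le(\tfrac13)^{F-1}|j'-j|^{-(F-1)}$ and $(\tfrac13)^{F-1}\ll\delta^2$ for $F=10\log\tfrac1\delta$, but it uses more of the slack in $F$ than the crude $(\tfrac12)^{F-1}\le\delta$ the paper relies on, so this verification should be made explicit. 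Second, in Part 2 your intermediate claim that the summed off-diagonal contribution is at most $\delta\|\wh{X}\|_2^2$ is off by a constant: each $f$ belongs to $I_{j-1}\cup I_j\cup I_{j+1}$ for three values of $j$, so $\sum_j\wh{G}_{f-k_1j}H_j(f)\le9\delta+O(\delta^2)$, giving roughly $10\delta\|\wh{X}\|_2^2$ rather than $\delta\|\wh{X}\|_2^2$. Since $3+10\delta<6$ for $\delta<\tfrac1{20}$, the stated conclusion is unaffected.
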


\noindent The proof is given in Appendix \ref{sec:pf_downsampling}.
    
    \textbf{Location via sparse FFT:}  We expect each $Z^r$ ($r\in[2k_1]$) to be approximately $O(k_0)$-sparse, as every block contributes \emph{primarily} to one downsampled coefficient.  At this point, a natural step is to run $O(k_0)$-sparse recovery on the signals $Z^r$ to recover the dominant blocks. However, there are too many signals $Z^r$ to consider!  Indeed, if we were to run $O(k_0)$-sparse recovery on every $Z^r$, we would recover the locations of the blocks, but at the cost of $O(k_0 k_1 \log n)$ samples. This precludes any improvement on the vanilla sparse FFT.  
    
    It turns out, however, that it is possible to avoid running a $k_0$-sparse FFT on all $2k_1$ reduced signals, and to instead \emph{allocate budgets} to them, some of which are far smaller than $k_0$, and some of which may be zero.  This will be key in reducing the sample complexity.
    
    Before formally defining budget allocation, we present the following definition and lemma, showing that we can use less samples to identify less of the dominant coefficients of a signal, or more samples to identify more dominant coefficients.
    
    \begin{defn} \label{def:covered}
    \emph{(Covered frequency)}
    Given an integer $m$, a frequency component $j$ of a signal $\wh{Z} \in \CC^m$ is called \emph{covered} by budget $s$ in the signal $\wh{Z}$ if
        $|\wh{Z}_j|^2 \geq \frac{\|\wh{Z}\|_2^2}{s}$.
    \end{defn}
    
    \begin{lem} \emph{(\textsc{LocateReducedSignal} guarantees -- informal version) } \label{lem:loc_k}
        There exists an algorithm such that if a signal $X \in \CC^n$, a set of budgets $\{s^r\}_{r \in [2k_1]}$, and a confidence parameter $p$ are given to it as input, then it outputs a list that, with probability at least $1-p$, contains any given $j\in [\frac{n}{k_1}]$ that is covered by $s^r$ in $\wh{Z}^r$ for some $r \in [2k_1]$, where $\{\wh{Z}^r\}_{r \in [2k_1]}$ denotes the $(k_1,\delta)$-downsampling of $X$. Moreover, the list size is $O\big( \sum_{r \in [2k_1]} s^r \big)$, the number of samples that the algorithm takes is $O\big(\sum_{r \in [2k_1]}s^r \log n\big)$, and the runtime is $O\big(\sum_{r \in [2k_1]} s^r \log^2 n\big)$.\footnote{ As stated in the formal version, additional terms in the runtime are needed when it comes to subtracting off a current estimate to form a residual signal. }
    \end{lem}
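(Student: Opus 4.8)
The plan is to reduce the statement to a single‑signal ``heavy‑coordinate location'' primitive that is run, with its own budget, on each of the $2k_1$ downsampled signals, and then to concatenate the resulting lists. Set $m := \frac{n}{k_1}$, and recall from Lemma~\ref{lem:downsamp-cost-unit-access} that any entry of any $Z^r$ can be evaluated using $O\big(\log\frac1\delta\big)$ samples of $X$ and $O\big(\log\frac1\delta\big)$ arithmetic operations, so it suffices to bound the number of $Z^r$‑accesses. The target sub‑claim is: for a fixed $r$ and a fixed frequency $j\in[m]$ with $|\wh{Z}^r_j|^2\ge \|\wh{Z}^r\|_2^2/s^r$ (i.e., $j$ covered by $s^r$ in $\wh{Z}^r$), one can output a list $L^r$ of size $O(s^r)$ that contains $j$ with probability at least $1-p$, using $O(s^r\log m)$ accesses to $Z^r$ and $O(s^r\log^2 m)$ time. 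Taking $L:=\bigcup_{r\in[2k_1]}L^r$ then works: no union bound over $r$ is needed since the guarantee is required only for a single fixed $j$, and summing the per‑signal costs with $m\le n$ yields exactly the stated list size $O(\sum_r s^r)$, sample complexity $O(\sum_r s^r\log n)$, and runtime $O(\sum_r s^r\log^2 n)$ (the $\log\frac1\delta$ and $\log\frac1p$ factors are absorbed, as $\delta$ and $p$ will be at most polylogarithmic wherever the procedure is invoked).

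For the sub‑claim I would use the standard sparse‑FFT location machinery (along the lines of \cite{HIKP2,IKP}), instantiated with sparsity parameter $s^r$ on the length‑$m$ signal $Z^r$. Choose $B=\Theta(s^r)$, a random nonsingular affine ``hashing'' $\pi(f)=\sigma f+\beta \bmod m$, and an $(m,B,F')$‑flat filter with $F'=\Theta(\log s^r)$; the induced $B$ bucket values are computable from $O(BF')=O(s^r\log s^r)$ time‑domain accesses to $Z^r$ via an inner FFT of size $O(B)$. A fixed heavy coordinate $j$ is isolated in its own bucket and dominates that bucket's energy with constant probability: the expected colliding energy from the other coordinates is $O\big(\|\wh{Z}^r\|_2^2/B\big)\le \tfrac14|\wh{Z}^r_j|^2$ once $B$ is a large enough multiple of $s^r$ (a second‑moment/Markov bound over $\sigma,\beta$, using the filter energy bound $\sum_f|\wh{G}_f|^2=O(m/B)$ from Lemma~\ref{lem:filter_properties}), while the off‑bucket leakage is at most $\big(\frac{1}{4}\big)^{\Theta(F')}\|\wh{Z}^r\|_2^2\ll|\wh{Z}^r_j|^2$ by Definition~\ref{def:filterG}(iii). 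Conditioned on this isolation event, the binary representation of $j$ is recovered bit by bit from $O(\log m)$ further hashings that use extra random time‑shifts (``$b$‑tests''), each bit being read off from the sign/phase of the corresponding bucket value; with $O(\log\log m)$ independent sub‑repetitions and a majority vote per bit‑round, all $O(\log m)$ bits are correct with probability $\ge\tfrac12$ given isolation. Running $O(\log\frac1p)$ independent copies of this whole procedure and de‑duplicating the reported locations by a majority vote across copies boosts the detection probability of the fixed $j$ to $1-p$ while keeping $|L^r|=O(s^r)$ (a non‑heavy coordinate survives the vote only with exponentially small probability). The per‑copy cost is $O(s^r\log s^r)$ accesses per hashing times $O(\log m)$ hashings, i.e., $O(s^r\log m)$ accesses and, after accounting for the inner FFTs, $O(s^r\log^2 m)$ time, matching the target.

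The main obstacle is the bucket‑noise control in the isolation step: when $j$ is hashed, one must ensure that neither the genuine collisions of the other $m-1$ frequencies of $\wh{Z}^r$ nor the polynomial tail leakage of the flat filter exceeds $\tfrac12|\wh{Z}^r_j|^2$. This is what forces the simultaneous choices $B=\Theta(s^r)$ with a large constant (so the second‑moment bound on colliding energy bites) and $F'=\Theta(\log s^r)$ (so the $\big(\frac14\big)^{F'-1}$‑type leakage in Definition~\ref{def:filterG} is $o(1/s^r)$ relative to the total energy), while keeping the filter support, and hence the access count, at $O(BF')=O(s^r\log s^r)$. A secondary point to get right is the bookkeeping: every access is ultimately to $X$, each $Z^r$‑access costing an $O(\log\frac1\delta)$ overhead, and the length parameter driving the $\log$ factors is $m=n/k_1$ rather than $n$ --- it is precisely the reduction to the shorter signals $Z^r$ that lets the final bound scale with $\sum_r s^r$ and avoids a spurious blow‑up in $k_1$. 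I would also note explicitly that, since the lemma only asks that each fixed covered $j$ lie in the list with probability $1-p$ (rather than all of them simultaneously), no union bound over the up‑to‑$s^r$ covered coordinates of $\wh{Z}^r$ is ever needed, which is what keeps the repetition count at $O(\log\frac1p)$.
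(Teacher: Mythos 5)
Your overall strategy is the same as the paper's (Algorithm~\ref{alg:loc_k} in Appendix~\ref{sec:loc_k}): hash each $\wh{Z}^r$ into $B^r=\Theta(s^r)$ buckets with a flat filter, isolate a covered $j$ with constant probability via a second-moment/Markov bound, decode its location from the phases of $O(\log m)$ shifted hashings, and repeat $O(\log\frac1p)$ times; since only a single fixed $j$ must be reported, no union bound over coordinates is needed. Two quantitative slips and one structural point are worth flagging.

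First, your filter parameter $F'=\Theta(\log s^r)$ is both unnecessary and inconsistent with your own cost accounting: with support $O(BF')=O(s^r\log s^r)$ per hashing and $O(\log m)$ hashings you get $O(s^r\log s^r\log m)$ accesses, not the claimed $O(s^r\log m)$, and $\log s^r$ can be as large as $\log k_0$, which is not absorbed by the $O(\cdot)$ (or even $O^*(\cdot)$) in the statement. The paper avoids this by taking $F'$ a constant (only $F'\ge 2$ is needed) and bounding the \emph{expected} colliding-plus-leakage energy by $O(\|\wh{Z}^r\|_2^2/B)$ via Lemma~\ref{lem:perm_property}; there is no need to make the deterministic tail bound of Definition~\ref{def:filterG}(iii) small separately. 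Relatedly, your bit-by-bit decoding with $O(\log\log m)$ sub-repetitions and a per-bit majority costs $O(\log m\log\log m)$ hashings; the paper instead decodes $\Lambda$-ary digits with $\Lambda\approx\sqrt{\log m}$, sharing one set $\mathcal{A}$ of $O(\log\log m)$ random pairs across all $O(\log m/\log\log m)$ digit rounds, which is exactly what brings the hashing count down to $O(\log m)$.

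Second, you have proved the statement only in the form where $\{Z^r\}$ downsamples $X$ itself. That is literally what the informal lemma asserts (the footnote concedes extra runtime terms for the residual case), but be aware that the formal version, and everything that actually invokes this primitive, runs on the residual $X-\chi$. There your ``run an independent location instance on each $Z^r$'' reduction breaks down in runtime: evaluating $\chi$ at the sample points of each $Z^r$ separately would cost $k_0k_1^2\,\mathrm{poly}(\log n)$. The paper's proof is organized precisely around fixing this, by forcing all $2k_1$ instances to share the same permutation $\sigma$ and shift set $\mathcal{A}$ so that a single call to \textsc{HashToBinsReduced} (backed by the block semi-equispaced FFT of Lemma~\ref{lem:semi_equi}) supplies all needed residual samples at once; this common-randomness device is the main content of the proof and is absent from your write-up.
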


     The formal statement and proof are given in Appendix \ref{sec:loc_k}, and reveal that $s^r$ essentially dictates how many buckets we hash $\wh{Z}^r$ into in order to locate the dominant frequencies (e.g., see \cite{HIKP2,IKP}).
    
    Hence, the goal of budget allocation is to approximately solve the following covering problem:
    \begin{equation}
        \text{Minimize}_{ \{s^r\}_{r\in[2k_1]} } ~ \sum_{r \in [2k_1]} s^r ~~~  \text{ subject to } ~~  \sum_{\substack{j\text{ is covered by } s_r \\ \text{ in } \wh{Z}^r \text{ for some }r\in[2k_1]}} \|\wh{X}_{I_j}\|_2^2 \ge (1 - \alpha) \cdot \|\wh{X}^*\|_2^2,\label{eq:budget_alloc}
    \end{equation}
    for a suitable constant $\alpha \in (0,1)$, where $s^r$ is the budget allocated to $\wh{Z}^r$, and $\wh{X}^*$ is the best $(k_0.k_1)$-block sparse approximation of $\wh{X}$.  That is, we want to minimize the total budget while accounting for a constant proportion of the signal energy.    
    
    \textbf{Challenges in budget allocation:} Allocating the budgets is a challenging task, as each block in the spectrum of the signal may have very different energy concentration properties in time domain, or equivalently, different variations in $|\wh{Z}_j^r|^2$ as a function of $r$.  To see this more concretely, in Figure \ref{fig:MatrixExample}, we show three \emph{hypothetical} examples of such variations, in the case that $k_0 = 2k_1 = 6$ and all of the blocks have equal energy, leading to equal column sums in the matrices. 
    
    \begin{figure}
    	\centering
        \includegraphics[width=0.9\textwidth]{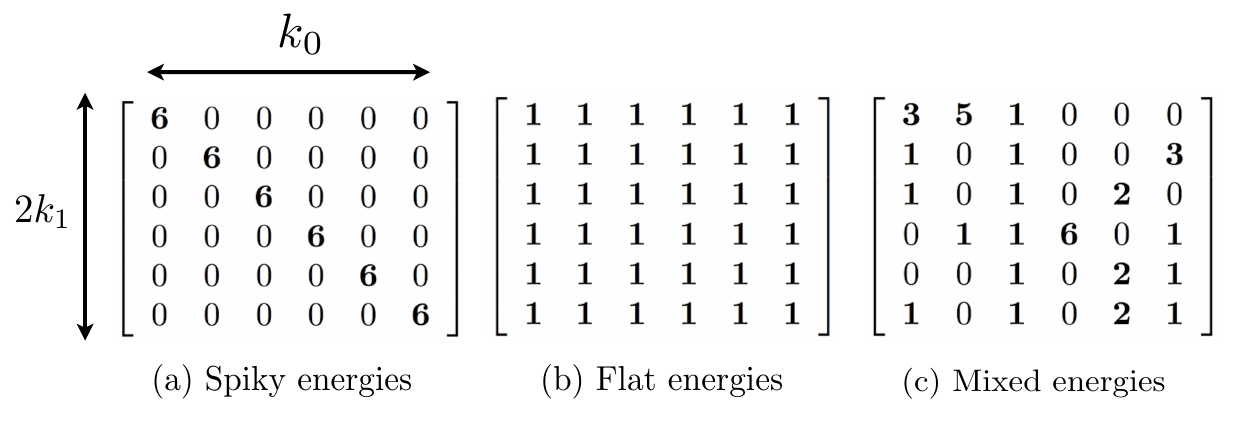}
        \par
        
    	\caption{Three hypothetical examples of matrices with $(r,j)$-th entry given by $|\wh{Z}_j^r|^2$, i.e., each row corresponds to a single sequence $Z^r$, but only at the entries corresponding to the $k_0$ blocks in $X$. \label{fig:MatrixExample}} \vspace*{-2ex}
    \end{figure}
    
    In the first example, each block contributes to a different $Z^r$, and thus the blocks could be located by running $1$-sparse recovery separately on the $2k_1$ signals.  In stark contrast, in the second example, each block contributes equally to each $Z^r$, so we would be much better off running $k_0$-sparse recovery on a single (arbitrary) $Z^r$.  Finally, in the third example, the best budget allocation scheme is completely unclear by inspection alone! We need to design an allocation scheme to handle all of these cases, and to do so \emph{without even knowing the structure of the matrix}.
    
    While the examples in Figure \ref{fig:MatrixExample} may seem artificial, and are not necessarily feasible with the \emph{exact} values given, we argue in Appendix \ref{sec:discussion} that situations exhibiting the same general behavior are entirely feasible.

    % For example, a block all of whose coefficients are equal to each other would have essential support of roughly $n/k_1$ (it would be a sinc function), whereas a block with uncorrelated frequencies would have energy roughly uniformly spread over time domain. The different reductions $Z^r$ correspond to probing the signal on distinct areas of time domain, and we need to allocate a small budget of samples that works {\em simultaneously for all blocks}. 
    
    % For example, if all blocks are highly concentrated in time domain, but each on a disjoint part (an example is given in Appendix~\ref{sec:logk0}), we would run sparse recovery on $\approx k_0$ of the signals $Z^r$, but hash every such signal into $O(1)$ buckets in each vanilla sparse FFT invocation. On the other hand, if all blocks have uncorrelated frequencies and energy is well spread, we would run only a constant number of vanilla sparse FFT procedures, but hash into $\approx k_0$ buckets each time. 
    
    %To tackle the above challenge, we adopt an {\em importance sampling scheme} (see Section~\ref{sec:ep_sampling}) that accesses the signal $X$ in $O(k_0 k_1)$ locations and computes a plan for running vanilla sparse FFT on a subset of reduced signals $Z^r$ such that the total number of samples taken by  sparse FFT on reduced signals is only $O(k_0 \log(1+k_0) \log n)$ as opposed to $O(k_0 k_1 \log n)$. In fact, our plans are, and have to be, very flexible: 
    
    \textbf{Importance sampling:} Our solution is to \emph{sample $r$ values with probability proportional to an estimate of $\|\wh{Z}^r\|_2^2$}, and sample sparsity budgets from a carefully defined distribution (see Section~\ref{sec:ep_sampling}, Algorithm~\ref{alg:2}).  We show that sufficiently accurate estimates of $\|\wh{Z}^r\|_2^2$ for all $r \in [2k_1]$ can be obtained using $O(k_0 k_1)$ samples of $X$ via hashing techniques (\emph{cf.}, Section \ref{sec:prelim}); hence, what we are essentially doing is using these samples to determine where most of the energy of the signal is located, and then favoring the parts of the signal that appear to have more energy.  This is exactly the step that makes our algorithm adaptive, and we prove that it produces a total budget in \eqref{eq:budget_alloc} of the form $O(k_0 \log(1+k_0))$, on average.
    
    % The number of signals $Z^r$ that we run the sparse FFT on may in fact be quite large (up to $O(k_0)$).  However, each invocation of the sparse FFT on a reduced signal hashes the reduced signal into a different number of buckets, and the {\em cumulative} number of buckets we use is only $O(k_0 \log(1+k_0))$, thus requiring $O(k_0 \log(1+k_0) \log n)$ samples as opposed to $O(k_0 k_1 \log n)$.
    
     Ideally, one would hope to solve \eqref{eq:budget_alloc} using a total budget of $O(k_0)$, since there are only $k_0$ blocks.  However, the $\log(1+k_0)$ factor is not an artifact of our analysis: We argue in Appendix \ref{sec:discussion} that very different techniques would be needed to remove it in general.  Specifically, we design a signal $X$ for which the \emph{optimal} solution to \eqref{eq:budget_alloc} indeed satisfies $\sum_{r \in [2k_1]} s^r = \Omega(k_0 \log(1+k_0))$.

    \textbf{Iterative procedure and updating the residual:} The techniques described above allow us to recover a list of blocks that contribute a constant fraction (e.g., $0.9$) of the signal energy. We use $O(\log \SNR)$ iterations of our main procedure to reduce the SNR to a constant, and then achieve $(1+\e)$-recovery with an extra ``clean-up'' step. Most of the techniques involved in this part are more standard, with a notable exception: Running a standard sparse FFT with budgets $s^r$ on the reduced space (i.e., on the vectors $Z^r$) is not easy to implement in $k_0 k_1 \poly(\log n)$ time when $Z^r$ are the {\em residual signals}.  The natural approach is to subtract the current estimate $\wh{\chi}$ of $\wh{X}$ from our samples and essentially run on the residual, but subtraction in $k_0 k_1 \poly(\log n)$ time is not straightforward to achieve. Our solution crucially relies on a novel block semi-equispaced FFT (see Section~\ref{sec:semi_equi}), and the idea of letting the location primitives in the reduced space operate using common randomness (see Appendix~\ref{sec:loc_k}).

\if 0
    
    with one notable exception: We use a novel semi-equispaced FFT procedure to update the signal in time domain. This is based on known semi-equispaced FFT results, but a new twist is required to update all reductions $Z^r$ in $k_0 k_1 \poly(\log n)$ time due to the block structure (see Section~\ref{sec:semi_equi}). 
    
    \fi 
%!TEX root = BlockSparseFT-new.tex

\setlength{\abovedisplayskip}{4pt}
\setlength{\belowdisplayskip}{4pt}

\section{Location via Importance Sampling} \label{sec:ep_sampling}

As outlined above, our approach locates blocks by applying standard sparse FFT techniques to the downsampled signals arising from Definition \ref{def:downsampling}.  In this section, we present the techniques for assigning the corresponding sparsity budgets (\emph{cf.}, \eqref{eq:budget_alloc}).

% to each $\wh{Z}^r$ such that running a sparse recovery on each of them with the assigned budget successfully locates heavy blocks with high probability. These assumptions are summarized in the following Definition.

We use a novel procedure called \emph{energy-based importance sampling}, which approximately samples $r$ values with probability proportional to $\|\wh{Z}^r\|^2$.  Since these energies are not known exactly, we instead sample proportional to a general vector $\gamma = (\gamma^1,\dotsc,\gamma^{2k_1})$, where we think of $\gamma^r$ as approximating $\|\wh{Z}^r\|^2$.  The techniques for obtaining these estimates are deferred to Section \ref{sec:prelim}.

The details are shown in Algorithm  \ref{alg:2}, where we repeatedly sample from the distribution $w_q^r$, corresponding to independently sampling $r$ proportional to $\gamma^r$, and $q$ from a truncated geometric distribution.  The resulting sparsity level to apply to $Z^r$ is selected to be $s^r = 10\cdot 2^q$.

%In more detail, the procedure is as follows for some integer $k_0$, $p \in \big(0,\frac{1}{2}\big)$, and $\delta \in (0,1)$ (see Algorithm \ref{alg:2} for a formal statement):\footnote{Here and subsequently, $\delta$ can be viewed as being $O(\e)$ when we seek $(1+\e)$-approximate sparse recovery.}
%\begin{itemize}
%    \item Take $\frac{10}{\delta} \cdot k_0 \cdot\log \frac{1}{p} $ samples of pairs $(r,q)$ from the distribution 
%   $$w^r_q = \frac{2^{-q} \cdot \gamma^r}{(1-\frac{\delta}{10 k_0})\|\gamma\|_1}, \quad q \in \Big\{1,\dotsc, \log_2 \frac{10 k_0}{\delta} \Big\}.$$
%    \item For each sampled pair $(r,q)$, assign $s^r = 10 \cdot 2^q $ to the signal $\wh{Z}^r$, taking the maximum over $q$ if the same $r$ is sampled multiple times, and setting $s^r = 0$ if $r$ is never sampled. 
%\end{itemize}

According to Definition \ref{def:covered}, $s^r = 10\cdot 2^q$ covers any given frequency $j$ for which $|\wh{Z}^r_j|^2 \ge \frac{ \|\wh{Z}^r\|_2^2 }{ 10 \cdot 2^{q} }$.  The intuition behind sampling $q$ proportional to $2^{-q}$ is that this gives a high probability of producing small $q$ values to cover the heaviest signal components, while having a small probability of producing large $q$ values to cover the smaller signal components.  We only want to do the latter rarely, since it costs significantly more samples.

\begin{algorithm}
\caption{Procedure for allocating sparsity budgets to the downsampled signals}\label{euclid}
\begin{algorithmic}[1]

\Procedure{BudgetAllocation}{$\gamma, k_0, k_1, \delta, p$} %\Comment $C$ is an absolute constant

\State $S \gets \emptyset$

\For{\texttt{$i \in \{1,\dotsc,\frac{10}{\delta} k_0 \cdot \log \frac{1}{p} \}$}}

\State Sample $(r_i,q_i) \in [2k_1] \times \{1,\dotsc,\log_{2} \frac{10 k_0}{\delta} \}$ with probability $w^r_q = \frac{2^{-q}}{1- \delta/(10 k_0)} \frac{\gamma^r}{\|\gamma\|_1}$ \label{line:wrq}

\State $S \gets S \cup \{ (r_i,q_i) \}$

\EndFor

\For{\texttt{$r \in [2 k_1]$}}

\State { $q^* \gets \max_{(r,q') \in S} \{q'\}$} \Comment By convention, $\max \emptyset = -\infty$

\State $s^r \gets 10 \cdot 2^{q^*}$

\EndFor

\State \textbf{return} $s = [s^r]_{r\in[2k_1]}$

\EndProcedure

\end{algorithmic}
\label{alg:2}

\end{algorithm}

% \begin{equation}
%     \frac{1}{\|\gamma\|_2^2} \cdot \sum_{r \in [2k_1]} \Big| \gamma^r - \gamma^r \Big| \le O(\e), \label{eq:74}
% \end{equation}
% where $\gamma^r$ is the $r$-th element of $\gamma$.

We first bound the expected total sum of budgets returned by \textsc{BudgetAllocation}.

\begin{lem}
\emph{(\textsc{BudgetAllocation} budget guarantees)}
For any integers $k_0$ and $k_1$, any positive vector $\gamma \in \RR^{2k_1}$, and any parameters $p \in \big(0,\frac{1}{2}\big)$ and $\delta \in (0,1)$, if the procedure \textsc{BudgetAllocation} in Algorithm \ref{alg:2} is run with inputs $(\gamma, k_0, k_1, \delta , p)$, then the expected value of the total sum of budgets returned, $\{s^r\}_{r \in [2k_1]}$, satisfies $\EE \big[ \sum_{r \in [2k_1]} s^r \big] \le 200\,\frac{k_0}{\delta}  \log \frac{k_0}{\delta} \log \frac{1}{p}$.  The runtime of the procedure is $O\big( \frac{k_0}{\delta}\log\frac{1}{p} + k_1\big)$.
    \label{lemm:10}
\end{lem}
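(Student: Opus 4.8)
Write $N := \frac{10}{\delta}k_0\log\frac1p$ for the number of iterations of the first loop and $Q := \log_2\frac{10k_0}{\delta}$ for the largest admissible value of $q$ (which we may take to be an integer, rounding affecting only constants). The crux is a pointwise inequality. For each $r$ the returned budget is $s^r = 10\cdot 2^{q^*_r}$ with $q^*_r = \max\{q' : (r,q')\in S\}$ and the convention $s^r = 0$ when no sampled pair has first coordinate $r$; since the maximum of finitely many positive reals is at most their sum, $\sum_{r\in[2k_1]} s^r = 10\sum_{r\in[2k_1]} 2^{q^*_r} \le 10\sum_{i=1}^{N} 2^{q_i}$, and this holds on every run of the procedure.

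First I would note that the weights form a genuine distribution: $\sum_{q=1}^{Q} 2^{-q} = 1 - 2^{-Q} = 1 - \frac{\delta}{10k_0}$ and $\sum_{r}\frac{\gamma^r}{\|\gamma\|_1} = 1$, so $\sum_{r,q} w^r_q = 1$. Taking expectations in the displayed inequality, using linearity and the fact that the pairs $(r_i,q_i)$ are i.i.d., gives $\EE\big[\sum_r s^r\big]\le 10N\,\EE[2^{q_1}]$. Since $2^q w^r_q = \frac{1}{1 - \delta/(10k_0)}\cdot\frac{\gamma^r}{\|\gamma\|_1}$ is independent of $q$, we get $\EE[2^{q_1}] = \sum_r\sum_{q=1}^{Q} 2^q w^r_q = \frac{Q}{1-\delta/(10k_0)}$. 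Substituting the values of $N$ and $Q$, using $\frac{1}{1-\delta/(10k_0)}\le\frac{10}{9}$ (valid since $\delta/(10k_0) < \frac1{10}$), and bounding $\log_2\frac{10k_0}{\delta} = O\!\big(\log\frac{k_0}{\delta}\big)$, the bound $\EE\big[\sum_r s^r\big]\le 200\,\frac{k_0}{\delta}\log\frac{k_0}{\delta}\log\frac1p$ follows after collecting constants.

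For the runtime: the first loop has $N = O\big(\frac{k_0}{\delta}\log\frac1p\big)$ iterations; after $O(k_1)$ preprocessing to build an alias table for $\gamma$ (and the fixed CDF of the truncated geometric law on $\{1,\dots,Q\}$), each iteration draws $(r_i,q_i)$ in $O(1)$ time. The second loop needs, for every $r\in[2k_1]$, the largest $q$ among the sampled pairs with first coordinate $r$; bucketing the $N$ sampled pairs by their first coordinate costs $O(N)$, and one pass over the $2k_1$ buckets then produces all $q^*_r$, hence all $s^r$, in $O(N + k_1)$ time. The total is $O\big(\frac{k_0}{\delta}\log\frac1p + k_1\big)$.

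The only subtle point is the first step: bounding $\sum_r 2^{q^*_r}$ by $\sum_i 2^{q_i}$ discards the benefit of taking a maximum when many samples land in the same $r$, but it is exactly what lets us avoid analyzing the distribution of the per-$r$ maxima (which would otherwise call for a union bound together with a case split according to how many of the $N$ samples fall on each $r$), and in expectation the relaxation costs only a constant factor. Everything downstream is a one-line expectation computation and routine data-structure bookkeeping.
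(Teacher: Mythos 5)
Your proof is correct and follows essentially the same route as the paper's: the pointwise bound $\sum_r 2^{q^*_r}\le\sum_{i=1}^N 2^{q_i}$ is exactly the paper's observation that each sampled pair increases the total budget by at most $10\cdot 2^q$, and the expectation computation and runtime argument (independent sampling of $q$ and $r$ from small alphabets, folding the second loop into the first) match as well. The only cosmetic differences are that you use the sharper bound $\tfrac{1}{1-\delta/(10k_0)}\le\tfrac{10}{9}$ where the paper uses $\le 2$, and both you and the paper absorb the $\log_2(10k_0/\delta)$ versus $\log(k_0/\delta)$ discrepancy into the constant $200$.
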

\begin{proof}
    Each time a new $(r,q)$ pair is sampled, the sum of the $s^r$ values increases by at most $10\cdot 2^q$, and hence the overall expected sum is upper  bounded by the number of trials $10k_0\log\frac{1}{p}$ times the expected value of $10\cdot 2^q$ for a single trial:
    \begin{equation}
    \begin{split}
        \EE \Big[ \sum_{r \in [2k_1]} s^r \Big] 
        	&\le \frac{10}{\delta} k_0 \cdot \log \frac{1}{p} \sum_{r \in [2k_1]} \sum_{q = 1}^{\log_{2} \frac{10 k_0}{\delta}} w^r_q \cdot 10 \cdot 2^q \\
            &= \frac{ 100 k_0 \log \frac{1}{p} }{\delta} \cdot \frac{1}{1- \delta/(10 k_0)} \sum_{q = 1}^{\log_{2} \frac{10 k_0}{\delta}} \sum_{r \in [2k_1]}  \frac{\gamma^r}{\|\gamma\|_1} \\
            &\le 200 \frac{k_0}{\delta}  \log \frac{k_0}{\delta} \log \frac{1}{p},
    \end{split} \nn
    \end{equation}
    where the second line follows from the definition of $w_q^r$, and the third line follows from $\frac{\delta}{10k_0} \le \frac{1}{2}$ (since $\delta \le 1$) and $\sum_{r \in [2k_1]}  \frac{\gamma^r}{\|\gamma\|_1} = 1$.
    
   \paragraph{Runtime:} Note that sampling from $w_q^r$ amounts to sampling $q$ and $r$ values independently, and the corresponding alphabet sizes are $O\big(\log\frac{k_0}{\delta}\big)$ and $O(k_1)$ respectively.  The stated runtime follows since we take $O\big(\frac{k_0}{\delta} \log\frac{1}{p}\big)$ samples, and sampling from discrete distributions can be done in time linear in the alphabet size and number of samples \cite{Hag93}.  The second loop in Algorithm \ref{alg:2} need not be done explicitly, since the maximum $q$ value can be updated after taking each sample.
\end{proof}
%\begin{proofsketch}
%    Since the truncated geometric probabilities are proportional to $2^{-q}$ but a given sample
%    produces a sparsity level proportional to $2^q$, the resulting mean sparsity for a single sample is the truncation point $\log_2\frac{10k_0}{\delta}$.  The lemma follows from the fact that we take $O\big(\frac{k_0}{\delta}\log\frac{1}{p}\big)$ such samples.  See Appendix \ref{sec:pf_eps_samples} for the details.
%\end{proofsketch}

As we discussed in Section \ref{sec:overview}, the $\log k_0$ term in the number of samples would ideally be avoided; however, in Appendix \ref{app:logk0}, we argue that even the optimal solution to \eqref{eq:budget_alloc} can contain such a factor.

We now turn to formalizing the fact that the budgets returned by \textsc{BudgetAllocation} are such that most of the dominant blocks are found.  To do this, we introduce the following notion.

\begin{defn}[Active frequencies] \label{eq:s-tilde}
Given $(n,k_0,k_1)$, a signal $X \in \CC^n$, a parameter $\delta \in (0,1)$, and a $(k_1,\delta)$-downsampling $\{Z^r\}_{r \in [2k_1]}$ of $X$, the set of \emph{active frequencies} $\tilde{S}$ is defined as
\begin{equation}
\tilde{S} = \Big\{ j \in \Big[\frac{n}{k_1}\Big] \, : \,  \sum_{r \in [2k_1]} \Big( |\wh{Z}^r_j|^2 \cdot \frac{\gamma^r}{\|\wh{Z}^r\|_2^2} \Big) \geq \delta \cdot \frac{\sum_{r \in [2k_1]} \|\wh{Z}^r\|_2^2}{k_0} \Big\}. \label{eq:active}
\end{equation}
\label{defn15}
\end{defn}
Observe that if $\gamma^r = \|\wh{Z}^r\|_2^2$, this reduces to $\sum_{r \in [2k_1]} |\wh{Z}^r_j|^2  \geq \delta \cdot \frac{\sum_{r \in [2k_1]} \|\wh{Z}^r\|_2^2}{k_0}$, thus essentially stating that the sum of the energies over $r \in [2k_1]$ for the given block index $j$ is an $\Omega\big( \frac{\delta}{k_0} \big)$ fraction of the total energy.  Combined with Lemma \ref{claim:1}, this roughly amounts to $\|\wh{X}_{I_j}\|_2^2$ exceeding an $\Omega\big( \frac{\delta}{k_0} \big)$ fraction of $\|\wh{X}\|_2^2$.

To formalize and generalize this intuition, the following lemma states that the frequencies within $\tilde{S}$ account for most of the energy in $X$, as long as each $\gamma^r$ approximates $\|\wh{Z}^r\|_2^2$ sufficiently well. 

\begin{lem} \label{lemm:11}
\emph{(Properties of active frequencies)}
Fix $(n,k_0,k_1)$, a parameter $\delta \in \big(0,\frac{1}{20}\big)$, a signal $X \in \CC^n$, and a $(k_1,\delta)$-downsampling $\{Z^r\}_{r \in [2k_1]}$ of $X$.  Moreover, fix an arbitrary set $S^* \subseteq \big[\frac{n}{k_1}\big]$ of cardinality at most $10k_0$, and a vector $\gamma \in \RR^{2k_1}$ satisfying
\begin{gather}
    \sum_{r \in [2k_1]} \Big| \|\wh{Z}^r_{S^*}\|_2^2 - \gamma^r \Big|_+ \le  40 \delta \sum_{r \in [2k_1]}  \|\wh{Z}^r\|_2^2. \tag{*}
\end{gather}
Fix the set of active frequencies $\tilde{S}$ according to Definition \ref{defn15}, and define the signal $\wh{X}_{\tilde{S}}$ to equal $\wh{X}$ on all intervals $\{I_j \, ; j \in \tilde{S}\}$ (see Definition \ref{def:block}), and zero elsewhere. Then $\|\wh{X}_{S^* \backslash \tilde{S}}\|_2^2 \leq 100\sqrt{\delta} \|\wh{X}\|_2^2$.
\end{lem}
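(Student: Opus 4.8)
Throughout write $\delta < \tfrac{1}{20}$, $|S^*| \le 10 k_0$, and recall that $\|\wh{X}_{S^* \backslash \tilde{S}}\|_2^2 = \sum_{j \in S^* \backslash \tilde{S}} \|\wh{X}_{I_j}\|_2^2$ and likewise $\|\wh{Z}^r_{S^* \backslash \tilde{S}}\|_2^2 = \sum_{j \in S^* \backslash \tilde{S}} |\wh{Z}^r_j|^2$. The plan is to unfold the defining condition of $\tilde{S}$ into an inequality about the downsampled energies, then strip off the $\gamma$-reweighting using the hypothesis $(*)$, and finally transfer the resulting bound back to $\|\wh{X}\|_2^2$ via the two parts of Lemma \ref{claim:1}.

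\textbf{Step 1 (unfold $\tilde{S}$).} For each $j \in S^* \backslash \tilde{S}$ the condition in Definition \ref{defn15} fails, i.e.\ $\sum_{r \in [2k_1]} |\wh{Z}^r_j|^2 \gamma^r / \|\wh{Z}^r\|_2^2 < \delta \sum_{r}\|\wh{Z}^r\|_2^2 / k_0$. Summing over the at most $10 k_0$ such indices $j$ and exchanging the order of summation yields
\[
\sum_{r \in [2k_1]} \frac{\gamma^r}{\|\wh{Z}^r\|_2^2}\, \|\wh{Z}^r_{S^* \backslash \tilde{S}}\|_2^2 \;\le\; 10\, \delta \sum_{r \in [2k_1]} \|\wh{Z}^r\|_2^2 .
\]

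\textbf{Step 2 (remove the reweighting).} This is the heart of the proof. The obstacle is that $\gamma^r$ may be far below $\|\wh{Z}^r\|_2^2$, so the weight $\gamma^r/\|\wh{Z}^r\|_2^2$ can be negligible and Step 1 then says nothing about $\|\wh{Z}^r_{S^* \backslash \tilde{S}}\|_2^2$; yet $(*)$ controls only the aggregate amount by which $\gamma^r$ falls below $\|\wh{Z}^r_{S^*}\|_2^2$, which can be much smaller than $\|\wh{Z}^r\|_2^2$. I would resolve this by partitioning $r \in [2k_1]$ into three classes, with a threshold $t = \Theta(\sqrt{\delta})$: (i) $\gamma^r < \tfrac12 \|\wh{Z}^r_{S^*}\|_2^2$, where $\big|\|\wh{Z}^r_{S^*}\|_2^2 - \gamma^r\big|_+ > \tfrac12\|\wh{Z}^r_{S^*}\|_2^2$, so by $(*)$ these $r$ carry total mass $\sum \|\wh{Z}^r_{S^*}\|_2^2 \le 80\delta \sum_r\|\wh{Z}^r\|_2^2$; (ii) $\gamma^r \ge \tfrac12 \|\wh{Z}^r_{S^*}\|_2^2$ but $\|\wh{Z}^r_{S^*}\|_2^2 < t\|\wh{Z}^r\|_2^2$, where trivially $\sum \|\wh{Z}^r_{S^*}\|_2^2 < t\sum_r\|\wh{Z}^r\|_2^2$; (iii) $\gamma^r \ge \tfrac12 \|\wh{Z}^r_{S^*}\|_2^2 \ge \tfrac{t}{2}\|\wh{Z}^r\|_2^2$, where the weight satisfies $\gamma^r/\|\wh{Z}^r\|_2^2 \ge t/2$ and so Step 1 gives $\sum \|\wh{Z}^r_{S^* \backslash \tilde{S}}\|_2^2 \le (20\delta/t)\sum_r\|\wh{Z}^r\|_2^2$. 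Since $\|\wh{Z}^r_{S^*\backslash\tilde{S}}\|_2^2 \le \|\wh{Z}^r_{S^*}\|_2^2$, adding the three contributions and optimizing $t \asymp \sqrt{\delta}$ (using $\delta<\tfrac1{20}$ to fold the $O(\delta)$ from class (i) into $O(\sqrt\delta)$) gives $\sum_{r\in[2k_1]}\|\wh{Z}^r_{S^*\backslash\tilde{S}}\|_2^2 \le C_1\sqrt{\delta}\sum_{r\in[2k_1]}\|\wh{Z}^r\|_2^2$ for an absolute constant $C_1$.

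\textbf{Step 3 (transfer to $\wh{X}$ and conclude).} Lemma \ref{claim:1}(2) gives $\sum_r\|\wh{Z}^r\|_2^2 \le 12 k_1\|\wh{X}\|_2^2$, so Step 2 yields $\sum_r\|\wh{Z}^r_{S^*\backslash\tilde{S}}\|_2^2 \le 12 C_1\sqrt\delta\, k_1\|\wh{X}\|_2^2$. Conversely, summing Lemma \ref{claim:1}(1) over $j\in S^*\backslash\tilde{S}$ and using $\sum_j\|\wh{X}_{I_j\cup I_{j-1}\cup I_{j+1}}\|_2^2 \le 3\|\wh{X}\|_2^2$ together with $\sum_j\sum_{j'\neq j}\|\wh{X}_{I_{j'}}\|_2^2/|j'-j|^{F-1} = O(\|\wh{X}\|_2^2)$ (the constant being essentially $2$, since $F = 10\log\tfrac1\delta$ is large), we obtain
\[
\frac{1}{2k_1}\sum_{r\in[2k_1]}\|\wh{Z}^r_{S^*\backslash\tilde{S}}\|_2^2 \;\ge\; (1-\delta)\,\|\wh{X}_{S^*\backslash\tilde{S}}\|_2^2 \,-\, O(\delta)\,\|\wh{X}\|_2^2 .
\]
Combining the two displays, dividing by $2k_1$, and solving for $\|\wh{X}_{S^*\backslash\tilde{S}}\|_2^2$ — once more absorbing the $O(\delta)$ slack into $O(\sqrt\delta)$ and using $1-\delta\ge\tfrac{19}{20}$ — gives $\|\wh{X}_{S^*\backslash\tilde{S}}\|_2^2 \le 100\sqrt\delta\,\|\wh{X}\|_2^2$, the exact constant being a matter of carrying the constants of Steps 2--3 through.

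I expect Step 2 to be the only genuinely delicate part: the three-way partition is engineered precisely so that whenever the reweighting $\gamma^r/\|\wh{Z}^r\|_2^2$ is too small for Step 1 to be informative, the corresponding energy is charged either against the right-hand side of $(*)$ or against the trivial inequality $\|\wh{Z}^r_{S^*}\|_2^2 \le \|\wh{Z}^r\|_2^2$. The remaining steps are routine, the only minor care being the summation of the polynomial tail in Lemma \ref{claim:1}(1) over the $\le 10k_0$ indices of $S^*\backslash\tilde{S}$, which costs nothing because $F$ is a large constant multiple of $\log\tfrac1\delta$.
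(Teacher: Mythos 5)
Your argument follows the paper's skeleton at the endpoints --- Step~1 is exactly the paper's use of Definition \ref{defn15} summed over $S^*\setminus\tilde S$, and Step~3 is exactly the paper's transfer via the two parts of Lemma \ref{claim:1} --- but the middle step is genuinely different. Where you partition the $r$'s into three classes with a threshold $t\asymp\sqrt\delta$, the paper writes $\sum_{r}\|\wh{Z}^r_{S^*\setminus\tilde S}\|_2^2=\sum_r\|\wh{Z}^r\|_2\cdot\frac{\|\wh{Z}^r_{S^*\setminus\tilde S}\|_2^2}{\|\wh{Z}^r\|_2}$ and applies Cauchy--Schwarz, reducing matters to bounding $\sum_r\frac{\|\wh{Z}^r_{S^*\setminus\tilde S}\|_2^4}{\|\wh{Z}^r\|_2^2}\le\sum_r\|\wh{Z}^r_{S^*}\|_2^2\cdot\frac{\|\wh{Z}^r_{S^*\setminus\tilde S}\|_2^2}{\|\wh{Z}^r\|_2^2}$; the split $\|\wh{Z}^r_{S^*}\|_2^2\le\gamma^r+\big|\|\wh{Z}^r_{S^*}\|_2^2-\gamma^r\big|_+$ then feeds one piece into your Step~1 and the other into $(*)$, yielding $50\delta\sum_r\|\wh{Z}^r\|_2^2$ \emph{inside} the square root and hence $\sqrt{50\delta}$ overall. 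What Cauchy--Schwarz buys is precisely that the $40\delta$ coming from $(*)$ ends up under the square root. In your version the class-(i) mass of $80\delta\sum_r\|\wh{Z}^r\|_2^2$ sits outside any square root and converts to roughly $18\sqrt\delta\sum_r\|\wh{Z}^r\|_2^2$ only via $\delta\le\frac{1}{20}$; together with the optimized $2\sqrt{20\delta}$ from classes (ii)/(iii) and the factor $6$ from Lemma \ref{claim:1}(2), your final constant lands around $160$--$175$ rather than $100$ (the paper's route gives about $43\sqrt\delta+10\delta$ before the $\frac{1}{1-\delta}$, comfortably within $100\sqrt\delta$).

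So, strictly read, your proof establishes the lemma only with a larger absolute constant than the $100$ in the statement. This is harmless downstream (the constant merely dictates how small the absolute constant $\delta$ must be chosen in Lemmas \ref{lem:multi_locate} and \ref{lem:reduceSNR}), but if you want the stated constant you should swap your Step~2 for the Cauchy--Schwarz step. One further small point: restricting the sum in Step~1 to the class-(iii) indices requires the discarded summands to be nonnegative, i.e.\ $\gamma^r\ge 0$ for all $r$; this holds for the $\gamma$ actually produced by \textsc{EstimateEnergies} (each $\gamma^r$ is a squared norm) but is not among the stated hypotheses of the lemma, so it deserves a word.
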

\noindent The proof of Lemma~\ref{lemm:11} is given in Appendix \ref{sec:pf_active}.

What remains is to show that if $j$ is active, then $j$ is covered by some $s^r$ in $\wh{Z}^r$ with high constant probability upon running Algorithm \ref{alg:2}.  This is formulated in the following.

\begin{lem}
\emph{(\textsc{BudgetAllocation} covering guarantees)}
Fix $(n,k_0,k_1)$, the parameters $\delta \in (0,1)$ and $p \in \big(0,\frac{1}{2}\big)$, a signal $X \in \CC^n$, and a $(k_1,\delta)$-downsampling $\{Z^r\}_{r \in [2k_1]}$ of $X$.  Moreover, fix a vector $\gamma \in \RR^{2k_1}$ satisfying
\begin{equation}
    \|\gamma\|_1 \le 10 \sum_{r \in [2k_1]} \|\wh{Z}^r\|_2^2. \label{eq58} 
\end{equation} 
Suppose that \textsc{BudgetAllocation} in Algorithm \ref{alg:2} is run with inputs $(\gamma, k_0, k_1, \delta , p)$, and outputs the budgets $\{s^r\}_{r\in[2k_1]}$.  Then for any active $j$ (i.e., $j \in \tilde{S}$ as per Definition \ref{eq:s-tilde}), the probability that there exists some $r \in [2k_1]$ such that $j$ is covered by $s^r$ in $\wh{Z}^r$ is at least $1- p$.
\label{lemm:12}
\end{lem}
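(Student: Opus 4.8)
The plan is to reduce the claim to a lower bound on the probability that a \emph{single} iteration of the loop in Algorithm~\ref{alg:2} already forces some $s^r$ to cover $j$, and then amplify over the $N=\frac{10}{\delta}k_0\log\frac1p$ i.i.d.\ iterations. Write $\theta_j^r:=|\wh{Z}^r_j|^2/\|\wh{Z}^r\|_2^2\in[0,1]$ and $Q:=\log_2\frac{10k_0}{\delta}$, so the geometric part of $w_q^r$ is truncated at $2^{-Q}=\frac{\delta}{10k_0}$. Since $q^*$ in Algorithm~\ref{alg:2} is the largest sampled $q$-value among pairs with first coordinate $r$, and $s^r=10\cdot 2^{q^*}$, the coverage condition of Definition~\ref{def:covered}, $|\wh{Z}^r_j|^2\ge\|\wh{Z}^r\|_2^2/s^r$, amounts to $2^{q^*}\ge\frac{1}{10\theta_j^r}$, which holds as soon as \emph{some} sampled pair equals $(r,q)$ with $q\ge\log_2\frac{1}{10\theta_j^r}$. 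Hence, if $\pi_j$ denotes the probability that one draw $(R,q)\sim w$ satisfies $q\ge\log_2\frac1{10\theta_j^R}$, the bad event (no $s^r$ covers $j$) has probability exactly $(1-\pi_j)^N\le e^{-\pi_j N}$, and it suffices to show $\pi_j=\Omega(\delta/k_0)$ with a constant large enough that $\pi_j N\ge\log\frac1p$, i.e.\ $e^{-\pi_j N}\le p$.

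To lower-bound $\pi_j$, condition on $R=r$: the second coordinate is geometric on $\{1,\dots,Q\}$, and for $r$ with $\theta_j^r\ge\frac{\delta}{100k_0}$ (the smallest scale coverable by $s^r\le 10\cdot 2^Q$) the conditional coverage probability is $\frac{1}{1-2^{-Q}}\sum_{q=q_{\min}^r}^{Q}2^{-q}=\frac{1}{1-2^{-Q}}\big(2^{-q_{\min}^r+1}-2^{-Q}\big)$ with $q_{\min}^r=\max\{1,\lceil\log_2\frac1{10\theta_j^r}\rceil\}$; an elementary ceiling estimate gives $2^{-q_{\min}^r+1}\ge\min(1,10\theta_j^r)$, so this is at least $\min(1,10\theta_j^r)-2^{-Q}$, while $r$'s with $\theta_j^r<\frac{\delta}{100k_0}$ contribute nothing. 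Averaging with weights $\gamma^r/\|\gamma\|_1$, reinstating the uncoverable terms at a cost $O(2^{-Q})\|\gamma\|_1$, and using $\min(1,10\theta_j^r)\ge\theta_j^r$, one gets
\[
\pi_j\;\ge\;\frac{1}{\|\gamma\|_1}\Big(\textstyle\sum_{r\in[2k_1]}\gamma^r\theta_j^r-O(2^{-Q})\,\|\gamma\|_1\Big).
\]
Now the active condition \eqref{eq:active} reads $\sum_r\gamma^r\theta_j^r\ge\frac{\delta}{k_0}\sum_r\|\wh{Z}^r\|_2^2$, the hypothesis \eqref{eq58} reads $\|\gamma\|_1\le 10\sum_r\|\wh{Z}^r\|_2^2$, and $2^{-Q}=\frac{\delta}{10k_0}$; substituting gives $\pi_j\ge c_0\,\delta/k_0$ for an absolute constant $c_0>0$ (the tight form keeps the factor $10$ in $\min(1,10\theta_j^r)$ rather than collapsing it to $\theta_j^r$ and splits off the range $\theta_j^r\ge\frac1{20}$ separately, which is exactly where the numerical constants in \eqref{eq58} and in the loop count of Algorithm~\ref{alg:2} are consumed). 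Plugging $\pi_j N=\Theta(\log\frac1p)$ into $(1-\pi_j)^N\le e^{-\pi_j N}$ then yields the bound $\le p$.

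The main obstacle is entirely in the second step: three quantities all sit at scale $\Theta(\delta/k_0)$---the active-energy threshold $\frac{\delta}{k_0}\sum_r\|\wh{Z}^r\|_2^2$, the geometric truncation level $2^{-Q}=\frac{\delta}{10k_0}$ (so a block whose energy is spread across the $2k_1$ shifts evenly enough that $\theta_j^r\lesssim\delta/k_0$ for \emph{every} $r$ is essentially invisible to the importance sampler), and the slack $\|\gamma\|_1/\sum_r\|\wh{Z}^r\|_2^2$ permitted by \eqref{eq58}. One must verify that the active energy strictly dominates the losses from truncating at $q=Q$ and from the $r$'s below the coverable scale, leaving $\pi_j$ a definite constant times $\delta/k_0$; everything else---the geometric-tail identity, the ceiling bound on $2^{-q_{\min}^r}$, and the Chernoff-type amplification---is routine. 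A minor case to treat cleanly is $\theta_j^r\ge\tfrac1{20}$, where the minimal admissible budget $s^r=20$ already covers $j$ and the tail sum should simply be read as $1-2^{-Q}$.
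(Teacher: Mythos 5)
Your proposal is correct and follows essentially the same route as the paper's proof: reduce to the single-draw coverage probability, lower-bound it by splitting into the full weighted sum $\sum_r 2^{-q_j^r}\gamma^r/\|\gamma\|_1$ (controlled via the active-frequency condition together with $\|\gamma\|_1 \le 10\sum_r\|\wh{Z}^r\|_2^2$, where the factor $10$ in Definition \ref{def:covered} cancels the factor $10$ in \eqref{eq58}) minus the truncation loss at $q>\log_2\frac{10k_0}{\delta}$, which is bounded by $O(\delta/k_0)$ with a strictly smaller constant, and then amplify over the $\frac{10}{\delta}k_0\log\frac{1}{p}$ independent draws. The tension you identify among the three $\Theta(\delta/k_0)$-scale quantities is exactly how the paper's constants are arranged ($\frac{\delta}{4k_0}$ versus $\frac{\delta}{20k_0}$, leaving $\frac{\delta}{5k_0}$), so no new idea is needed.
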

\begin{proof}
    Recall from Definition \ref{def:covered} that if a pair $(r,q)$ is sampled in the first loop of \textsc{BudgetAllocation}, then $j$ is covered provided that 
        $ |\wh{Z}^r_j|^2 \ge \frac{ \|\wh{Z}^r\|_2^2 }{ 10 \cdot 2^{q} }. $
    We therefore define
    \begin{equation}
        q_j^r = \min \bigg\{ q \in \ZZ_+ \,:\, |\wh{Z}_j^r|^2 \ge \frac{\|\wh{Z}^r\|_2^2}{ 10 \cdot 2^q } \bigg\}, \label{eq:q_jr}
    \end{equation}
    and note that the event described in the lemma statement is equivalent to some pair $(r,q)$ being sampled with $q^r_j \leq q$.  Note that due to the range of $q$ from which we sample (\emph{cf.}, Algorithm \ref{alg:2}), this can only occur if $q^r_j \le \log_2 \frac{10 k_0}{ \delta }$.
    
    \textbf{Taking a single sample:} We first compute the probability of being covered for a \emph{single} random sample of $(q,r)$, denoting the corresponding probability by $\PP_1[\cdot]$.  Recalling from line \ref{line:wrq} of Algorithm \ref{alg:2} that we sample each $(q,r)$ with probability $w_q^r = \frac{2^{-q}}{1- \delta/(10 k_0)} \frac{\gamma^r}{\|\gamma\|_1}$, we obtain
    \begin{align}
        \PP_1[j\text{ covered}] 
            &= \sum_{r \in [2k_1]} \sum_{q^r_j \le q \le \log_2 \frac{10 k_0}{ \delta }} w_q^r \nn \\
            &= \frac{1}{1- \delta/(10 k_0)} \sum_{r \in [2k_1]} \sum_{q^r_j \le q \le \log_2 \frac{10 k_0}{ \delta }} 2^{-q} \frac{\gamma^r}{\|\gamma\|_1} \nn \\
            &\ge \frac{1}{2} \sum_{r \in [2k_1] \,:\, q_j^r \le \log_2 \frac{10 k_0}{ \delta }} 2^{-q_j^r} \frac{\gamma^r}{\|\gamma\|_1} \nn \\
            &= \frac{1}{2} \sum_{r \in [2k_1] } 2^{-q_j^r} \frac{\gamma^r}{\|\gamma\|_1} - \frac{1}{2} \sum_{r \in [2k_1] \,:\, q_j^r > \log_2 \frac{10 k_0}{ \delta }} 2^{-q_j^r} \frac{\gamma^r}{\|\gamma\|_1}, \label{eq:p1_cov}
        \end{align}
    where the third line follows since $\delta/(10k_0) \le \frac{1}{2}$ due to the assumption that $\delta \le 1$.
    
    \textbf{Bounding the first term in \eqref{eq:p1_cov}:} Observe from \eqref{eq:q_jr} that $2^{-q_j^r} \ge \frac{1}{2} \frac{10|\wh{Z}_j^r|^2}{ \|\wh{Z}^r\|_2^2 } $, and recall the definition of being active in \eqref{eq:active}.  Combining these, we obtain the following when $j$ is active:
    \begin{equation}
        \begin{split}
            \frac{1}{2} \sum_{r \in [2k_1] } 2^{-q_j^r} \frac{\gamma^r}{\|\gamma\|_1} 
                &\ge \frac{10}{4 \|\gamma\|_1} \sum_{r \in [2k_1]}  \frac{|\wh{Z}_j^r|^2 \gamma^r}{ \|\wh{Z}^r\|_2^2 } \\
                &\ge \frac{10 \delta}{4 \|\gamma\|_1} \cdot \frac{\sum_{r \in [2k_1]} \|\wh{Z}^r\|_2^2}{k_0} \ge \frac{\delta}{4 k_0},
        \end{split} \nn
    \end{equation}    
    where the last inequality follows from the assumption on $\|\gamma\|_1$ in the lemma statement.
    
    \textbf{Bounding the second term in \eqref{eq:p1_cov}:} We have
    \begin{equation}
        \begin{split}
            \frac{1}{2} \sum_{r \in [2k_1] \,:\, q_j^r > \log_2 \frac{10 k_0}{ \delta }} 2^{-q_j^r} \frac{\gamma^r}{\|\gamma\|_1}
                &\le \frac{1}{2} \sum_{r \in [2k_1] } \frac{ \delta }{10 k_0} \frac{\gamma^r}{\|\gamma\|_1} = \frac{ \delta}{ 20 k_0 }.
        \end{split} \nn
    \end{equation}
    Hence, we deduce from \eqref{eq:p1_cov} that $\PP_1[j\text{ covered}] \ge \frac{ \delta }{5 k_0}$.
    
    \textbf{Taking multiple independent samples:} Since the sampling is done $\frac{10}{\delta} k_0 \cdot \log \frac{1}{p}$ times independently, the overall probability of an active block $j$ being covered satisfies
    \begin{equation}
    \begin{split}
        \PP[j\text{ covered}] 
            &\ge 1- \bigg( 1 - \frac{ \delta }{5 k_0} \bigg)^{ \frac{10}{\delta} k_0 \cdot \log \frac{1}{p} } \ge 1 - \exp\bigg( -\frac{10 }{5} \cdot \log \frac{1}{p} \bigg) \ge 1-p,
    \end{split} \nn
    \end{equation}
    where we have applied the inequality $1-\zeta \le e^{-\zeta}$ for $\zeta \ge 0$.
\end{proof}

% In Section \ref{sec:est_downsampled}, we will provide the techniques for efficiently construction $\{\gamma^r\}$ satisfying the preconditions of Lemmas \ref{lemm:11} and \ref{lemm:12}.  As we will see, the idea is to hash each $Z^r$ into $O(k_0)$ buckets, and let the energy estimate be the energy of the hashed signal.

\subsection{The Complete Location Algorithm} \label{sec:complete_loc}

In Algorithm \ref{alg:Locate}, we give the details of \textsc{MultiBlockLocate}, which performs the above-described energy-based importance sampling procedure, runs the sparse FFT location algorithm (see Appendix \ref{sec:loc_k}) with the resulting budgets, and returns a list $L$ containing the block indices that were identified.  

\textsc{MultiBlockLocate} calls two primitives that are defined later in the paper, but their precise details are not needed in order to understand the location step:
\begin{itemize}
    \item \textsc{EstimateEnergies} (see Section \ref{sec:est_downsampled}) provides us with a vector $\gamma$ providing a good approximation of each $\|\wh{Z}^r\|_2^2$, in the sense of satisfying the preconditions of Lemmas \ref{lemm:11} and \ref{lemm:12};
    \item \textsc{LocateReducedSignals} (see Appendix \ref{sec:loc_k}) accepts the sparsity budgets $\{s^r\}$ and runs a standard $s^r$-sparse fast FFT algorithm on each downsampled signal $Z^r$ in order to locate the dominant frequencies.
\end{itemize}
Note that in addition to $X$, these procedures accept a second signal $\wh{\chi}$; this becomes relevant when we iteratively run the block sparse FFT (\emph{cf.}, Section \ref{sec:full_alg}), representing previously-estimated components that are subtracted off to produce a residual.

The required guarantees on \textsc{LocateReducedSignals} are given in Lemma \ref{lem:loc_k} (and more formally in Appendix~\ref{app:logk0}), and in order to prove our main result on \textsc{MultiBlockLocate}, we also need the following lemma ensuring that we can compute energy estimates satisfying the preconditions of Lemmas \ref{lemm:11} and \ref{lemm:12}; the procedure and proof are presented in Section \ref{sec:est_downsampled}. 

\begin{lem} \label{lem:gammaproperties}
    \emph{(\textsc{EstimateEnergies} guarantees)}
    Given $(n,k_0,k_1)$, the signals $X \in \CC^n$ and $\wh{\chi} \in \CC^n$ with $\|\wh{X} - \wh{\chi}\|_2^2 \ge \frac{1}{\poly(n)} \|\wh{\chi}\|_2$, and the parameter $\delta \in \big(\frac{1}{n},\frac{1}{20}\big)$, the procedure \textsc{EstimateEnergies}$(X,\wh{\chi},n,k_0,k_1$,$\delta)$ returns a vector $\gamma \in \RR^{2k_1}$ such that, for any given set $S^*$ of cardinality at most $10k_0$, we have the following with probability at least $\frac{1}{2}$:
    	\begin{enumerate}
    		\item $\sum_{r \in [2k_1]} \Big| \|\wh{Z}^r_{S^*}\|_2^2 - \gamma^r \Big|_+ \le  40 \delta \sum_{r \in [2k_1]}  \|\wh{Z}^r\|_2^2$;
    		\item $\|\gamma\|_1 \le 10 \sum_{r \in [2k_1]} \|\wh{Z}^r\|_2^2$;
    	\end{enumerate}
    	where $\{Z^r\}_{r \in [2k_1]}$ is the $(k_1,\delta)$-downsampling of $X - \chi$ (see Definition \ref{def:downsampling}).
    	
    	Moreover, if $\wh{\chi}$ is $(O(k_0),k_1)$-block sparse, then the sample complexity is $O (\frac{k_0 k_1}{\delta^2} \log \frac{1}{\delta} \log \frac{1}{\delta p})$, and the runtime is $O (\frac{k_0 k_1}{\delta^2} \log^2 \frac{1}{\delta} \log^2 n)$.
\end{lem}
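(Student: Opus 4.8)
\noindent\emph{Proof plan.} Write $m := n/k_1$, let $\{Z^r\}_{r\in[2k_1]}$ with $Z^r\in\CC^m$ be the $(k_1,\delta)$-downsampling of $X-\chi$ (Definition \ref{def:downsampling}), and set $E^r := \|\wh{Z}^r\|_2^2$ and $E := \sum_{r\in[2k_1]}E^r$. The plan is to estimate each $E^r$ by a sparse-FFT–style energy sketch of the length-$m$ signal $Z^r$ and to show that the resulting vector $\gamma$ has the asymmetric accuracy asked for. Concretely, for each $r$ I would draw a random mod-$m$ affine permutation and shift of the frequency axis of $\wh{Z}^r$, convolve with an $(m,B,F')$-flat filter (Lemma \ref{lem:filter_properties}) with $B=\Theta(k_0/\delta^2)$ and $F'=\Theta(\log\tfrac1{\delta p})$, subsample to $B$ buckets to obtain bucket values $u^r_1,\dots,u^r_B$, record the total bucket energy $A^r:=\sum_{b\in[B]}|u^r_b|^2$, and take $\gamma^r$ to be a suitable combination (a trimmed mean of $O(1)$ repetitions, to suppress rare catastrophic-hashing events). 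Each bucket value is a fixed linear combination of $O(BF')$ time-domain entries of $Z^r$, and by Lemma \ref{lem:downsamp-cost-unit-access} each such entry is recovered from $O(\log\tfrac1\delta)$ samples of $X$ once the contribution of $\chi$ is subtracted; since $\wh{\chi}$ is $(O(k_0),k_1)$-block sparse, all of these $\chi$-evaluations across all $r$ are batched using the block semi-equispaced FFT of Section \ref{sec:semi_equi}. Counting $2k_1$ reductions, $O(BF')$ accesses each, $O(\log\tfrac1\delta)$ samples per access, together with the FFTs over the $B$ buckets and the semi-equispaced FFT, yields the stated $O\big(\tfrac{k_0k_1}{\delta^2}\log\tfrac1\delta\log\tfrac1{\delta p}\big)$ sample complexity and $O\big(\tfrac{k_0k_1}{\delta^2}\log^2\tfrac1\delta\log^2 n\big)$ runtime.

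\noindent\emph{The key accuracy estimate.} The heart of the proof is, for each fixed $r$ and each fixed $S^*\subseteq[m]$ with $|S^*|\le 10k_0$, the two one-sided bounds
\[
\EE\big[A^r\big]\;\le\;(1+O(\delta))\,E^r,\qquad\qquad \EE\big[(\|\wh{Z}^r_{S^*}\|_2^2-A^r)_+\big]\;\le\;O(\delta)\,E^r.
\]
For the upper bound I would use $\wh{G}_f\in[0,1]$ and the filter energy bound $\sum_f|\wh{G}_f|^2\le 3n/B$ from Lemma \ref{lem:filter_properties}: every frequency of $\wh{Z}^r$ is hashed into one bucket with gain at most $1$, and the total leakage into the other buckets contributes only an $O(\delta)$-fraction of $E^r$ because $F'=\Theta(\log\tfrac1{\delta p})$ makes the stopband polynomially small. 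For the lower bound I would use that $B=\omega(k_0)\gg|S^*|$, so with probability $1-O(\delta)$ the permutation isolates every coordinate of $S^*$ into its own bucket and the random shift lands it in the passband (gain $\ge 1-O(\delta)$), so its energy is recorded in full; the only losses are collisions within $S^*$, whose expected energy is $O(|S^*|^2/B)\,\|\wh{Z}^r_{S^*}\|_2^2=O(\delta)E^r$, and destructive interference with the diffuse part of $\wh{Z}^r$, bounded by a second-moment computation over the hashing, again $O(\delta)E^r$. The non-degeneracy hypothesis $\|\wh{X}-\wh{\chi}\|_2^2\ge\tfrac1{\poly(n)}\|\wh{\chi}\|_2$ is used only to exclude the pathological regime in which the residual lies below the numerical resolution of the sketch.

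\noindent\emph{Deducing the two conditions.} Granting the key estimate, take expectations over $r$: $\EE[\|\gamma\|_1]=\sum_r\EE[\gamma^r]\le(1+O(\delta))E$, so by Markov's inequality $\|\gamma\|_1\le 10E$ except with probability $<\tfrac14$, which is Condition~2; and $\EE\big[\sum_r(\|\wh{Z}^r_{S^*}\|_2^2-\gamma^r)_+\big]\le O(\delta)E$, so by Markov $\sum_r(\|\wh{Z}^r_{S^*}\|_2^2-\gamma^r)_+\le 40\delta E$ except with probability $<\tfrac14$, which is Condition~1; a union bound gives both with probability at least $\tfrac12$. (Here we use $\|\wh{Z}^r_{S^*}\|_2^2\le E^r$ and $\gamma^r\ge\|\wh{Z}^r_{S^*}\|_2^2-O(\delta)E^r$, so the trimmed mean inherits the one-sided guarantees of $A^r$ up to constants.)

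\noindent\emph{Main obstacle.} The delicate point is the key accuracy estimate, and specifically its asymmetry: the sketch must never substantially under-report the energy carried by an arbitrary prescribed set of at most $10k_0$ frequencies, yet never over-report the total energy by more than a $(1+O(\delta))$ factor, and this uniformly over the unknown, signal-dependent sparsity pattern and (possibly very diffuse) tail of $\wh{Z}^r$. Fixing the bucket count $B$, the filter sharpness $F'$, and the combining rule so that both one-sided guarantees hold at once within the stated resource budget is where the real work lies. A secondary, purely computational, obstacle is meeting the runtime: one must subtract the $(O(k_0),k_1)$-block-sparse estimate $\wh{\chi}$ from all $2k_1$ residual reductions in $k_0k_1\,\poly(\log n)$ time rather than the naive $\Omega(k_0^2k_1^2)$, which is exactly what the block semi-equispaced FFT of Section \ref{sec:semi_equi} is designed to provide.
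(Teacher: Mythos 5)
Your proposal follows essentially the same route as the paper: hash each $Z^r$ into $B=\Theta(k_0/\delta^2)$ buckets with a flat filter using randomness shared across $r$, take $\gamma^r$ to be the total bucket energy, establish one-sided expectation bounds (the paper's Lemma \ref{lemm:14}), conclude by Markov's inequality plus a union bound, and use the block semi-equispaced FFT to subtract $\wh{\chi}$ within the runtime budget. Two quantitative claims in your sketch are off, though neither breaks the argument. First, $\EE[A^r]\le(1+O(\delta))E^r$ is too strong for this filter: a frequency whose hashed offset lands in the transition band $\frac{n}{2B}<|f'|<\frac{n}{B}$ can deposit an $\Omega(1)$ fraction of its energy into two adjacent buckets, so one only gets $\EE[\|\wh{U}^{*r}\|_2^2]\le 3\|\wh{Z}^r\|_2^2$ (via $\sum_f|\wh{G}_f|^2\le 3n/B$), which is all Condition~2 needs since it allows a factor $10$. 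Second, the collision loss is $O(|S^*|/B)\,\|\wh{Z}^r_{S^*}\|_2^2=O(\delta^2)E^r$, not $O(|S^*|^2/B)\,\|\wh{Z}^r_{S^*}\|_2^2$, which with $|S^*|=\Theta(k_0)$ would read $O(k_0\delta^2)E^r$ and would not be $O(\delta)E^r$; the dominant term forcing $B=\Theta(k_0/\delta^2)$ is in fact the cross term $O(\sqrt{|S^*|/B})E^r$ from the second-moment leakage computation. The repetitions and trimmed mean are unnecessary (a single hashing plus Markov already yields probability $\ge\frac12$), and the paper additionally tracks the $n^{-c}$ numerical error of the semi-equispaced FFT explicitly, which is precisely where the hypothesis $\|\wh{X}-\wh{\chi}\|_2^2\ge\frac{1}{\poly(n)}\|\wh{\chi}\|_2$ enters.
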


\begin{remark}
    The preceding lemma ensures that the $\gamma^r$ provide good approximations of $\|\wh{Z}^r\|_2^2$ in a ``restricted'' and ``one-sided'' sense, while not over-estimating the total energy by more than a constant factor.  Specifically, the first part concerns the energy of $\wh{Z}^r$ restricted to a fixed set of size $O(k_0)$, and characterizes the extent to which the energies are under-estimated.  It appears to be infeasible to characterize over-estimation in the same way (e.g., replacing $|\cdot|_+$ by $|\cdot|$), since several of the samples could be overly large due to spiky noise. 
\end{remark}

\begin{remark}
    Here and subsequently, the $\poly(n)$ lower bounds regarding $(\wh{X},\wh{\chi})$ are purely technical, resulting from extremely small errors when subtracting off $\wh{\chi}$.  See Section \ref{sec:semi_equi} for further details.
\end{remark}

\begin{algorithm}
\caption{Multi-block sparse location}
\begin{algorithmic}[1]

\Procedure{MultiBlockLocate}{$X, \wh{\chi}, n, k_0, k_1, \delta , p$}

% \State $a_r \gets \frac{nr}{2k_1}$ for each $r \in [2k_1]$
\For{\texttt{$t \in \{1,\dotsc, 10\log \frac{1}{p}\}$}}
\State $\gamma \gets \textsc{EstimateEnergies}(X,\wh{\chi},n,k_0,k_1,\delta)$ \Comment See Section \ref{sec:est_downsampled}
\State $ \sv^{(t)} \gets \textsc{BudgetAllocation} (\gamma, k_0, k_1, \delta, \frac{1}{2}\delta p)$ \Comment $\gamma = (\gamma^1,\dotsc,\gamma^{2k_1})$ \label{line:call_budget}
\EndFor
\State $\sv \gets \max_{t} \sv^{(t)}$ (element-wise with respect to $r \in [2k_1]$)
\State $L \gets \textsc{LocateReducedSignals}(X, \wh{\chi}, n, k_0, k_1, \textbf{s}, \delta, \frac{1}{2}\delta p)$ \label{line:locate}  \Comment See Appendix \ref{sec:loc_k}
% \State $\{Z_X^r\}_{r \in [2k_1]} \gets (k_1,\delta)$-downsampling of $X$
% \State $\{Z_{\chi}^r\}_{r \in [2k_1]} \gets (k_1,\delta)$-downsampling of $\chi$
%\State $L \gets \emptyset$
%\For{\texttt{$r \in [2k_1]$}}
%\State $L = L \, \cup \textsc{ LocateSignal} (\wh{Z}_X^r, \wh{Z}_{\chi}^r, \textbf{s}_r, \frac{1}{2}\delta p)$ \label{line:locate}  \Comment See Appendix \ref{sec:loc_k}
%\EndFor

\State \textbf{return} $L$
\EndProcedure

\end{algorithmic}
\label{alg:Locate}

\end{algorithm}

\noindent We are now in a position to provide our guarantees on \textsc{MultiBlockLocate}, namely, on the behavior of the list size, and on the energy that the components in the list capture.  Note that the output of \textsc{MultiBlockLocate} is random, since the same is true of \textsc{EstimateEnergies}, \textsc{BudgetAllocation}, and \textsc{LocateReducedSignals}. 

\begin{lem} \label{lem:multi_locate}
    \emph{(\textsc{MultiBlockLocate} guarantees)}
	Given $(n,k_0,k_1)$, the parameters $\delta \in \big(\frac{1}{n},\frac{1}{20}\big)$ and $p \in \big(\frac{1}{n},\frac{1}{2}\big)$, and the signals $X \in \CC^n$ and $\chi \in \CC^n$ with $\wh{\chi}_0$ uniformly distributed over an arbitrarily length-$\frac{\|\wh{\chi}\|^2}{\poly(n)}$ interval,  the output $L$ of the function \textsc{MultiBlockLocate}$(X,\wh{X},k_1,k_0,n, \delta , p)$ has the following properties for any set $S^*$ of cardinality at most $10k_0$:
\begin{enumerate}
\item $\EE \big[ |L| \big] = O \big( \frac{k_0}{\delta} \log \frac{k_0}{\delta}  \log\frac{1}{ p} \log^2 \frac{1}{\delta p} \big) $; % with respect to the randomness of the procedures \textsc{EstimateEnergies}, \textsc{BudgetAllocation}, and \textsc{LocateReducedSignals}.
\item $\sum_{j \in S^*\backslash L} \| (\wh{X}-\wh{\chi})_{I_j} \|_2^2 \leq 200\sqrt{\delta} \|\wh{X}-\wh{\chi}\|_2^2$ with probability at least $1 - p$.
\end{enumerate}
Moreover, if $\wh{\chi}$ is $(O(k_0),k_1)$-block sparse, and we have $\delta = \Omega\big(\frac{1}{\poly(\log n)}\big)$ and $p = \Omega\big(\frac{1}{\poly(\log n)}\big)$, then (i) the expected sample complexity is $O^*\big(\frac{k_0}{\delta} \log (1+k_0) \log n + \frac{k_0 k_1}{\delta^2} \big)$, and the expected runtime is $O^*\big( \frac{k_0}{\delta} \log (1+k_0) \log^2 n + \frac{k_0 k_1}{\delta^2} \log^2 n + \frac{k_0 k_1}{\delta} \log^3 n \big)$; (ii) if the procedure returns $L$, then we are guaranteed that the algorithm used $O^*\big(|L| \cdot \log n + \frac{k_0 k_1}{\delta^2} \big)$ samples and $O^*\big( |L|\cdot \log^2 n + \frac{k_0 k_1}{\delta^2} \log^2 n + \frac{k_0 k_1}{\delta} \log^3 n \big)$ runtime.

%
%\begin{itemize}
%\item the expected sample complexity is $O(k_0\log(1+k_0)\log n + k_0k_1)$;
%\item  the expected runtime is  $O(k_0\log(1+k_0)\log^2 n + k_0k_1 \log^3 n)$.
%\end{itemize}
%
%More generally, if $\wh{\chi}$ is $(O(k_0),k_1)$-block sparse, the expected sample complexity is  
%$
%O\big(\frac{k_0}{\delta} \log \frac{k_0}{\delta}  \log\frac{1}{p}\log\frac{1}{\delta} \log^3 \frac{1}{\delta p} \log n + \frac{k_0 k_1}{\delta^2} \log \frac{1}{\delta} \log \frac{1}{p}\big),
%$
%the expected runtime is 
%$
%O\big( \frac{k_0}{\delta} \log \frac{k_0}{\delta}  \log\frac{1}{ p} \log\frac{1}{\delta} \log^3 \frac{1}{\delta p}\cdot\log^2 n + \frac{k_0 k_1}{\delta^2} \log\frac{1}{\delta} \log \frac{1}{\delta p} \log^2 n + \frac{k_0 k_1}{\delta}\log \frac{1}{\delta p} \log^3 n \big),
%$
%and if the procedure returns $L$, then we are guaranteed that the algorithm used 
%$
%O\big(|L| \cdot  \log\frac{1}{\delta}\cdot\log\frac{1}{\delta p} \log n + \frac{k_0 k_1}{\delta^2} \log \frac{1}{\delta} \log \frac{1}{p}\big)
%$
%samples and 
%$
%O\big( |L|\cdot\log\frac{1}{\delta}\log\frac{1}{\delta p}\cdot\log^2 n + \frac{k_0 k_1}{\delta^2} \log\frac{1}{\delta} \log \frac{1}{\delta p} \log^2 n + \frac{k_0 k_1}{\delta}\log \frac{1}{\delta p} \log^3 n \big)
%$ runtime.

\end{lem}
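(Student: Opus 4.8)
The plan is to assemble the lemma from the three component guarantees already in hand: Lemma~\ref{lemm:10} (budget bound for \textsc{BudgetAllocation}), Lemma~\ref{lemm:12} (covering probability for active frequencies), Lemma~\ref{lemm:11} (active frequencies capture the energy), Lemma~\ref{lem:gammaproperties} (\textsc{EstimateEnergies} meets the preconditions with probability $\ge\frac12$), and Lemma~\ref{lem:loc_k} (\textsc{LocateReducedSignals} samples/runtime/correctness as a function of $\sum_r s^r$). The two properties to establish are (1) the expected list size and (2) the energy-capture-with-probability-$1-p$ statement; then the sample/runtime bounds follow by plugging the list size into Lemma~\ref{lem:loc_k} together with the \textsc{EstimateEnergies} cost.

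First I would handle Property~1. The final budget vector is $\sv=\max_t\sv^{(t)}$ over $t\in\{1,\dots,10\log\frac1p\}$ runs, so $\sum_r s^r \le \sum_t \sum_r (s^{(t)})^r$, and by Lemma~\ref{lemm:10} (applied with confidence parameter $\tfrac12\delta p$) each run has expected total budget $O\big(\frac{k_0}{\delta}\log\frac{k_0}{\delta}\log\frac1{\delta p}\big)$; summing over the $O(\log\frac1p)$ runs gives $\EE[\sum_r s^r] = O\big(\frac{k_0}{\delta}\log\frac{k_0}{\delta}\log\frac1p\log\frac1{\delta p}\big)$. Then Lemma~\ref{lem:loc_k} says $|L| = O(\sum_r s^r)$ deterministically given the budgets, except that \textsc{LocateReducedSignals} is itself run with confidence $\tfrac12\delta p$ and contributes its own list inflation; tracing the formal version in Appendix~\ref{sec:loc_k} yields an extra $\log\frac1{\delta p}$-type factor, so $\EE[|L|] = O\big(\frac{k_0}{\delta}\log\frac{k_0}{\delta}\log\frac1p\log^2\frac1{\delta p}\big)$ by linearity of expectation (crucially, the bound on $\sum_r s^r$ holds in expectation over \emph{any} realization of $\gamma$, so no conditioning subtlety arises here). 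The absorption of $\log\frac{k_0}{\delta}$ into $\log(1+k_0)$ under the regime $\delta=\Omega(1/\poly\log n)$ gives the form quoted in the ``moreover'' clause.

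Next, Property~2. Fix the set $S^*$. I would condition on a single run $t$ of \textsc{EstimateEnergies} succeeding, which happens with probability $\ge\frac12$ by Lemma~\ref{lem:gammaproperties}; over the $10\log\frac1p$ independent runs, at least one succeeds with probability $\ge 1-2^{-10\log(1/p)} \ge 1 - p/3$ (adjusting constants). On that good run, $\gamma$ satisfies precondition~(*) of Lemma~\ref{lemm:11} and \eqref{eq58} of Lemma~\ref{lemm:12}. Lemma~\ref{lemm:11} then gives $\|\wh{X}_{S^*\setminus\tilde S}\|_2^2 \le 100\sqrt\delta\,\|\wh X\|_2^2$ (applied to the residual $X-\chi$), so it remains to show that every active $j\in\tilde S\cap S^*$ is, with high probability, put into $L$. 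Lemma~\ref{lemm:12} (with confidence $\tfrac12\delta p$) says a fixed active $j$ is covered by some $s^{(t),r}$ in $\wh Z^r$ with probability $\ge 1-\tfrac12\delta p$; since $\sv\ge\sv^{(t)}$ elementwise, it is also covered by the final $\sv$; then Lemma~\ref{lem:loc_k} (with confidence $\tfrac12\delta p$) ensures \textsc{LocateReducedSignals} reports any such covered $j$ with probability $\ge 1-\tfrac12\delta p$. A union bound over the $\le 10k_0$ elements of $S^*$ — here is where the $\delta$ in the confidence parameter pays for itself, since $10k_0\cdot\delta p \cdot O(1) \le p/3$ after fixing constants in \textsc{BudgetAllocation}'s repetition count — gives that \emph{all} active $j\in S^*$ land in $L$, except with probability $\le p/3$. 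Combining: with probability $\ge 1-p$, $S^*\setminus L \subseteq S^*\setminus\tilde S$, hence $\sum_{j\in S^*\setminus L}\|(\wh X-\wh\chi)_{I_j}\|_2^2 \le \|\wh X_{S^*\setminus\tilde S}\|_2^2 \le 100\sqrt\delta\|\wh X-\wh\chi\|_2^2$; the loose constant $200$ in the statement absorbs the bookkeeping. (The hypothesis that $\wh\chi_0$ is uniform over a tiny interval is exactly what Lemma~\ref{lem:gammaproperties} needs via its own $\poly(n)$-separation precondition; I would just check that this is inherited.)

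Finally, the sample and runtime bounds. For part~(i), one run of \textsc{EstimateEnergies} costs $O\big(\frac{k_0k_1}{\delta^2}\log\frac1\delta\log\frac1{\delta p}\big)$ samples and $O\big(\frac{k_0k_1}{\delta^2}\log^2\frac1\delta\log^2 n\big)$ time by Lemma~\ref{lem:gammaproperties} (using $\wh\chi$ being $(O(k_0),k_1)$-block sparse), multiplied by $O(\log\frac1p)$ runs; \textsc{BudgetAllocation} is cheap ($O(\frac{k_0}{\delta}\log\frac1p + k_1)$ per call). \textsc{LocateReducedSignals} costs $O(\sum_r s^r\log n)$ samples and $O(\sum_r s^r\log^2 n)$ time by Lemma~\ref{lem:loc_k}, plus the extra residual-subtraction terms flagged in its footnote, which in the block setting contribute the $\frac{k_0k_1}{\delta}\log^3 n$ term via the block semi-equispaced FFT of Section~\ref{sec:semi_equi}. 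Taking expectations and using $\EE[\sum_r s^r] = O^*\big(\frac{k_0}{\delta}\log(1+k_0)\big)$ gives the stated expected sample complexity $O^*\big(\frac{k_0}{\delta}\log(1+k_0)\log n + \frac{k_0k_1}{\delta^2}\big)$ and runtime; part~(ii) is the same arithmetic but phrased conditionally on the realized $L$ (replacing $\EE[\sum_r s^r]$ by the actual $|L|$), which is immediate since all the \textsc{LocateReducedSignals} costs are deterministic functions of the budgets once they are fixed. The main obstacle I anticipate is \emph{bookkeeping the confidence parameters}: making sure the $\tfrac12\delta p$ passed down to \textsc{BudgetAllocation} and \textsc{LocateReducedSignals} survives the union bound over $|S^*|\le 10k_0$ active frequencies and still leaves room for the $O(\log\frac1p)$-fold amplification of \textsc{EstimateEnergies}, all while keeping the $\log$-factor accounting tight enough to land the $O^*$ bounds — and verifying that the per-run budget bound of Lemma~\ref{lemm:10}, which holds for arbitrary $\gamma$, can be combined with the $\frac12$-probability guarantee of Lemma~\ref{lem:gammaproperties} without the two interacting (they don't, because the budget bound is unconditional, but this needs to be said explicitly).
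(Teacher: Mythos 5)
Your treatment of Property~1 and of the sample/runtime accounting matches the paper's proof essentially verbatim: per-run budget bound from Lemma~\ref{lemm:10}, summed over the $O(\log\frac{1}{p})$ runs via $\max_t \le \sum_t$, and the extra $\log\frac{1}{\delta p}$ list-inflation factor from \textsc{LocateReducedSignals}; the costs then follow by substituting $\EE[|L|]$ (or the realized $|L|$) into Lemma~\ref{lem:loc_k} together with the \textsc{EstimateEnergies} cost. That part is fine.

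There is, however, a genuine gap in your argument for Property~2. You propose a union bound over the at most $10k_0$ elements of $S^*$, with each active $j$ failing to land in $L$ with probability at most $O(\delta p)$, and you claim $10k_0\cdot \delta p\cdot O(1)\le p/3$ ``after fixing constants.'' This is false in general: $\delta$ is a parameter bounded above by the absolute constant $\frac{1}{20}$ (and in \textsc{ReduceSNR} it is indeed an absolute constant), so $10k_0\,\delta p$ grows linearly in $k_0$ and exceeds $p$ as soon as $k_0\gtrsim 1/\delta$. No choice of the repetition constants in \textsc{BudgetAllocation} fixes this, because the per-element failure probability would need to scale as $p/k_0$, which would in turn cost an extra $\log k_0$ in the budgets. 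The paper avoids the union bound entirely: it bounds the \emph{expected missed energy},
\begin{equation}
\EE\Big[\sum_{j\in (S^*\cap\tilde S)\setminus L}\|\wh{X}'_{I_j}\|_2^2\Big]\le \sum_{j\in\tilde S}\|\wh{X}'_{I_j}\|_2^2\,\PP[j\notin L]\le \delta p\,\|\wh{X}'\|_2^2, \nonumber
\end{equation}
which holds regardless of how many active elements there are because the energies sum to at most $\|\wh{X}'\|_2^2$, and then applies Markov's inequality to conclude that the missed active energy is at most $\delta\|\wh{X}'\|_2^2$ with probability $1-p$. Combined with the $100\sqrt{\delta}\|\wh{X}'\|_2^2$ bound on the inactive part from Lemma~\ref{lemm:11}, this yields the stated $200\sqrt{\delta}$ bound. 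You should replace your union bound with this expectation-plus-Markov step; the rest of your outline then goes through.
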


\begin{rem}
    The procedure \textsc{MultiBlockLocate} is oblivious to the choice of $S^*$ in this lemma statement.
\end{rem}

\begin{proof}
    \textbf{First claim:} Note that in each iteration of the outer loop when we run \textsc{BudgetAllocation}$(\gamma, k_0, k_1, \delta, \frac{1}{2}\delta p)$, Lemma \ref{lemm:10} implies that for any $t$, the following holds true:
        $$\EE \Big[ \sum_{r \in [2k_1]} \textbf{s}^{(t)}_r \Big] = O\Big(\frac{k_0}{\delta} \log \frac{k_0}{\delta} \log \frac{1}{\delta p} \Big),$$
    where $\textbf{s}^{(t)}_r$ is the $r$-th entry of the budget allocation vector $\mathbf{s}^t$ at iteration $t$. Therefore,
        \begin{equation}
        \EE \Big[ \sum_{r \in [2k_1]} \textbf{s}_r \Big] = \EE \Big[ \sum_{r \in [2k_1]} \max_{t=1,\dotsc,10\log\frac{1}{p}} \textbf{s}_r^t \Big]  \le \sum_{t =1}^{10 \log \frac{1}{p}} \EE \Big[ \sum_{r \in [2k_1]}\sv_r^{(t)} \Big] = O\Big(k_0 \log \frac{k_0}{\delta} \log\frac{1}{p} \log\frac{1}{\delta p}\Big). \label{eq:exp_s_r}
        \end{equation}
    We now apply Lemma \ref{lem:loc_k}, which is formalized in Appendix \ref{sec:loc_k}; the assumption $\max_{r\in[2k_1]} \textbf{s}_r = O\big(\frac{k_0}{\delta}\big)$ therein is satisfied due to the range of $q$ from which we sample in \textsc{BudgetAllocation}.  We set the target success probability to $1 - \frac{1}{2}\delta p$, which guarantees that the size of the list returned by the function \textsc{LocateReducedSignals} is $O\big(\sum_{r \in [2k_1]} \textbf{s}_r \log \frac{1}{\delta p}\big)$.  Therefore, by \eqref{eq:exp_s_r}, we have
        $$\EE [ |L| ] = O\Big(\frac{k_0}{\delta} \log \frac{k_0}{\delta}  \log\frac{1}{ p} \log^2 \frac{1}{\delta p}\Big),$$
    yielding the first statement of the lemma.  
    
    % Note that the formal statement of \textsc{LocateSignal} in Appendix \ref{sec:loc_k} requires that the $(k_1,\delta)$-downsamplings $\{\wh{Z}_X^r\}_{r \in [2k_1]}$ and $\{\wh{Z}_{\chi}^r\}_{r \in [2k_1]}$  of $X$ and $\chi$ satisfy $\|\wh{Z}_X^r - \wh{Z}_{\chi}^r\|_2 \ge \frac{1}{\poly(n)} \|\wh{Z}_{\chi}^r\|_2$.  By Definition \ref{def:downsampling}, this is implied by $|\wh{X}_0 - \wh{\chi}_0|_2 \ge \frac{1}{\poly(n)}\|\wh{\chi}\|_2^2$.  \textbf{[TODO: Check/expand as needed]}
    
    \textbf{Second claim:} Let $X'=X-\chi$, and consider the set $S^*$ given in the lemma statement, and an arbitrary iteration $t$. By Lemma \ref{lem:gammaproperties} in Section \ref{sec:est_downsampled} (also stated above), the approximate energy vector $\gamma$ in any given iteration of the outer loop satisfies
    \begin{gather}
    	\sum_{r \in [2k_1]} \Big| \|\wh{Z}^r_{ {S}^*}\|_2^2 - \gamma^r \Big|_+ \le  40\delta \sum_{r \in [2k_1]}  \|\wh{Z}^r\|_2^2 \nonumber \\
    	\|\gamma\|_1 \le 10 \sum_{r \in [2k_1]} \|\wh{Z}^r\|_2^2 \label{eq:gamma_conds}
    \end{gather}
    with probability at least $\frac{1}{2}$. When this is the case, the vector $\gamma$ meets the requirements of Lemmas \ref{lemm:11} and \ref{lemm:12}. That means that the probability of having an energy estimate $\gamma$ that meets these requirements in at least one iteration is lower bounded by $1-(\frac{1}{2})^{10 \log \frac{1}{ p}} \ge 1-p$.
    
    We now consider an arbitrary iteration in which the above conditions on $\gamma$ are satisfied. We write
    \begin{align}
    	\sum_{j \in S^* \backslash L} \| \wh{X}'_{I_j} \|_2^2
    	= \sum_{j \in (S^* \cap \tilde{S}) \backslash L} \| \wh{X}'_{I_j} \|_2^2 + \sum_{j \in S^* \backslash (\tilde{S} \cup L)} \| \wh{X}'_{I_j} \|_2^2. \label{eq:mb_two_terms}
    \end{align}
    The second term is bounded by
    \begin{equation}
    \sum_{j \in S^* \backslash (\tilde{S} \cup L)} \| \wh{X}'_{I_j} \|_2^2
        \le \sum_{j \in S^* \backslash \tilde{S}} \| \wh{X}'_{I_j} \|_2^2
        \le 100\sqrt{\delta} \|\wh{X}'\|_2^2 \label{eq:120}
    \end{equation}
    by Lemma \ref{lemm:11}, which uses the first condition on $\gamma$ in \eqref{eq:gamma_conds}. 
    
    We continue by calculating the expected value of the first term in \eqref{eq:mb_two_terms} with respect to the randomness of \textsc{BudgetAllocation} and \textsc{LocateReducedSignals}:
    \begin{align}
        \EE \Big[ \sum_{j \in (S^* \cap \tilde{S}) \backslash L} \| \wh{X}'_{I_j} \|_2^2 \Big]
        &= \EE \Big[ \sum_{j \in (S^* \cap \tilde{S})} \| \wh{X}'_{I_j} \|_2^2 \Ic\big[ j \notin L \big] \Big] \nn \\
        &\le \sum_{j \in \tilde{S}} \| \wh{X}'_{I_j} \|_2^2 \cdot \PP \big[ j \notin L \big]. \label{eq:exp_S_star_tilde}
    \end{align}
    We thus consider the probability $\PP [ j \notin L ]$ for an arbitrary $j \in \tilde{S}$. If $j \in \tilde{S}$, then by Lemma \ref{lemm:12} and the choice of the final parameter of $\frac{1}{2}\delta p$ passed to \textsc{BudgetAllocation}, there is at least one $r \in [2k_1]$ such that $j$ is covered, with probability at least $1- \frac{1}{2}\delta p$. We also know from Lemma \ref{lem:loc_k} that the failure probability of \textsc{LocateReducedSignals} for some covered $j$ is at most $\frac{1}{2}\delta p$. A union bound on these two events gives
    $$\PP \big[ j \notin L\big] \le \delta p, \quad \forall j \in    \tilde{S}.$$
    Hence, we deduce from \eqref{eq:exp_S_star_tilde} that
    \begin{equation}
    \EE \Big[ \sum_{j \in (S^* \cap \tilde{S}) \backslash L} \| \wh{X}'_{I_j} \|_2^2 \Big] \le \delta p \cdot \|\wh{X}'\|_2^2, \nn
    \end{equation}
    and Markov's inequality gives
    \begin{equation}
    \sum_{j \in (S^* \cap \tilde{S}) \backslash L} \| \wh{X}'_{I_j} \|_2^2 \le \delta \cdot \|\wh{X}'\|_2^2 \nn
    \end{equation}
    with probability at least $1- p$. Combining this with \eqref{eq:mb_two_terms}--\eqref{eq:120}, and using the assumption $\delta \le \frac{1}{20}$ to write $\delta \le 100\sqrt{\delta}$, we complete the proof.
    
    \textbf{Sample complexity and runtime:} We first consider the sample complexity and runtime as a function of the output $L$. 
    
    There are two operations that cost us samples.  The first is the call to \textsc{EstimateEnergies}, which costs $O (\frac{k_0 k_1}{\delta^2} \log^2 \frac{1}{\delta} )$ by Lemma \ref{lem:gammaproperties}.  The second is the call to \textsc{LocateReducedSignals}; by Lemma \ref{lem:loc_k} in Appendix \ref{sec:loc_k}, with $\delta p$ in place of $p$, this costs $O\big(\sum_{r \in [2k_1]}s^r \log\frac{1}{\delta p} \log\frac{1}{\delta} \log n\big)$ samples (recall that $\wh{\chi}$ is $(O(k_0),k_1)$-block sparse by assumption), which is $O\big(|L| \log \frac{1}{\delta p}\log\frac{1}{\delta} \log n\big)$.  Adding these contributions gives the desired result; the $\log\frac{1}{p}$ and $\log\frac{1}{\delta}$ factors are hidden in the $O^*(\cdot)$ notation, since we have assumed that $\delta$ and $p$ behave as $\Omega\big(\frac{1}{\poly \log n}\big)$.
    
    The time complexity follows by the a similar argument, with \textsc{EstimateEnergies} costing $O (\frac{k_0 k_1}{\delta^2} \log^2 \frac{1}{\delta} \log^2 n)$ by Lemma \ref{lem:gammaproperties}, and the call to \textsc{LocateReducedSignals} costing $O\big( |L|  \log\frac{1}{\delta p} \log\frac{1}{\delta} \log^2 n + \frac{k_0k_1}{\delta}\log\frac{1}{\delta p} \log^3 n\big)$ by Lemma \ref{lem:loc_k} in Appendix \ref{sec:loc_k}.  The complexity of \textsc{EstimateEnergies} dominates that of calling \textsc{BudgetAllocation}, which is $O\big(k_1 + \frac{k_0}{\delta} \log \frac{1}{p}\big)$ by Lemma \ref{lemm:10}.
    
    The \emph{expected} sample complexity and runtime follow directly from those depending on $L$, by simply substituting the expectation of $|L|$ given in the lemma statement.
\end{proof}

%!TEX root = BlockSparseFT-new.tex

\newcommand{\fc}{F}
\newcommand{\nsq}{[n]}
\newcommand{\round}{\mathrm{round}}
\newcommand{\rect}{\mathrm{rect}}

\section{Energy Estimation} \label{sec:prelim}

In this section, we provide the energy estimation procedure used in the \textsc{MultiBlockLocate} procedure in Algorithm \ref{alg:Locate}, and prove its guarantees that were used in the proof of Lemma \ref{lem:multi_locate}.  To do this, we introduce a variety of tools needed, including hashing and the semi-equispaced FFT.  While such techniques are well-established for the standard sparsity setting \cite{IKP}, applying the existing semi-equispaced FFT algorithms {\em separately} for each $Z^r$ in our setting would lead to a runtime of $k_0 k_1^2 \mathrm{poly}(\log n)$.  Our techniques allow us to compute the required FFT values for \emph{all} $r$ in $k_0 k_1 \mathrm{poly}(\log n)$ time, as we detail in Section \ref{sec:semi_equi}.

\subsection{Hashing Techniques} \label{sec:hashing}

The notion of hashing plays a central role in our estimation primitives, and in turn makes use of random permutations.

\begin{defn}[Approximately pairwise-independent permutation] \label{def:permutation}
    Fix $n$, and let $\pi \,:\, [n] \to [n]$ be a random permutation.  We say that $\pi$ is \emph{approximately pairwise-independent} if, for any $i,i' \in [n]$ and any integer $t$, we have $\PP[ |\pi(i) - \pi(i')| \le t ] \le \frac{4 t}{n}$.
\end{defn}

It is well known that such permutations exist in the form of a simple modulo-$n$ multiplication; we will specifically use the following lemma from \cite{IK14a}.

\begin{lem} \emph{(Choice of permutation \cite[Lemma 3.2]{IK14a})} \label{lem:perm_interval}
    Let $n$ be a power of two, and define $\pi(i) = \sigma \cdot i$, where $\sigma$ is chosen uniformly at random from the odd numbers in $[n]$.  Then $\pi$ is an approximately pairwise-independent random permutation.
\end{lem}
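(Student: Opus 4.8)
The plan is to reduce the claim to a one-dimensional counting problem. Since $n$ is a power of two and $\sigma$ is odd, $\sigma$ is a unit modulo $n$, so $i \mapsto \sigma i \bmod n$ is a bijection of $\ZZ_n$ and hence of $[n]$; thus $\pi$ is genuinely a random permutation. Now fix $i \neq i'$ in $[n]$, set $d := (i-i') \bmod n \in \{1,\dots,n-1\}$, and observe $\pi(i)-\pi(i') \equiv \sigma d \pmod n$. Hence it suffices to prove $\PP_\sigma\!\left[\,|\sigma d \bmod n| \le t\,\right] \le \tfrac{4t}{n}$, where $|\cdot|$ denotes the representative in $(-n/2,n/2]$ and $\sigma$ is uniform over the $n/2$ odd residues. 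If $t \ge n/2$ there is nothing to prove since $\tfrac{4t}{n}\ge 2$, so I would assume $t < n/2$.

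Next I would factor $d = 2^a m$ with $m$ odd and $0 \le a \le \log_2 n - 1$ (the upper bound holds because $1 \le d \le n-1$). Because $\sigma$ and $m$ are both units modulo $n$, as $\sigma$ ranges uniformly over the odd residues the product $\sigma d = 2^a(\sigma m) \bmod n$ is uniform over the set $V_a := \{2^a u \bmod n : u \text{ odd}\}$ of residues whose $2$-adic valuation is exactly $a$; moreover the map $\sigma \mapsto \sigma d$ is exactly $2^a$-to-one onto $V_a$ (two odd $\sigma,\sigma'$ give the same product iff $\sigma \equiv \sigma' \pmod{n/2^a}$, and since $n/2^a$ is even every such fiber consists of $2^a$ odd residues). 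Equivalently: the congruence $\sigma d \equiv c \pmod n$ admits an odd solution only when $2^a \mid c$ but $2^{a+1}\nmid c$, and then it has exactly $2^a$ of them. Consequently each element of $V_a$ is hit with probability $\tfrac{2^a}{n/2} = \tfrac{2^{a+1}}{n}$.

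Finally I would count $V_a \cap \{-t,\dots,t\}$. The set $V_a$ consists of $\pm 2^a, \pm 3\cdot 2^a, \pm 5\cdot 2^a,\dots$; in particular $0\notin V_a$, and consecutive elements are $2^{a+1}$ apart. If $2^a > t$ then no nonzero multiple of $2^a$ lies in $\{-t,\dots,t\}$, so the event is impossible and its probability is $0 \le \tfrac{4t}{n}$. If $2^a \le t$, the number of elements of $V_a$ inside $\{-t,\dots,t\}$ is at most $2\lfloor t/2^a\rfloor \le 2t/2^a$, whence $\PP_\sigma[\,|\sigma d \bmod n|\le t\,] \le \tfrac{2t}{2^a}\cdot \tfrac{2^{a+1}}{n} = \tfrac{4t}{n}$, as required.

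The routine part is the modular bookkeeping (valuations, counting odd solutions, the uniformity of $\sigma d$ on $V_a$). The one place that needs care is the interface between $a$ and $t$: the bound is only non-trivial when $2^a \le t$, and it is precisely the factor $2^{a+1}\le 2t$ appearing there — rather than a worst-case factor of size $n$ — that produces the clean constant $4$. I would also sanity-check the boundary cases $d = n/2$ (so $a = \log_2 n - 1$ and $V_a = \{n/2\}$, giving probability $0$ for $t<n/2$) and very small $t$, both of which are covered by the $2^a > t$ branch.
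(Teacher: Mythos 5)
Your proof is correct. The paper itself gives no proof of this lemma (it is imported verbatim from \cite[Lemma 3.2]{IK14a}), and your argument — reducing to $\sigma d \bmod n$, splitting off the $2$-adic valuation $a$ of $d$, showing $\sigma d$ is uniform on the residues of valuation exactly $a$ with point mass $2^{a+1}/n$, and counting at most $2t/2^a$ such residues in $\{-t,\dotsc,t\}$ — is precisely the standard argument behind that cited result, with the boundary cases ($t \ge n/2$, $2^a > t$, $d = n/2$) handled correctly.
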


We now turn to the notion of \emph{hashing} a signal into buckets.  We do this by applying the random permutation from Lemma \ref{lem:perm_interval} along with a random shift in time domain, and then applying a suitable filter according to Definition \ref{def:filterG}.

\begin{defn}[Hashing] \label{def:hashing}
    Given integers $(n,B)$, parameters $\sigma,\Delta \in [n]$, and the signals $X \in \CC^n$ and $G \in \CC^n$, we say that $U \in \CC^B$ is an \emph{$(n,B,G,\sigma,\Delta)$-hashing} of $X$ if 
    \begin{equation}
        U_b = \frac{B}{n} \sum_{i\in [\frac{n}{B}]} X_{\sigma( \Delta + j + B \cdot i)} G_{j + B \cdot i}, \quad j \in [B].  \label{eq:hashing}
    \end{equation} 
    Moreover, we define the following quantities: 
	\begin{itemize}
    	\item $\pi(j) = \sigma\cdot j$, representing the approximately pairwise random permutation; 
    	\item $h(j) = \round\big( j\frac{B}{n} \big)$, representing  the bucket in $[B]$ into which a frequency $j$ hashes; 
    	\item $o_j(j') = \pi(j') - h(j)\frac{n}{B}$, representing the offset associated with two frequencies $(j,j')$.
	\end{itemize}
\end{defn}

\noindent With these definitions, we have the following lemma, proved in Appendix \ref{sec:pf_uhat}.  Note that here we write the exact Fourier transform of $U$ as $\wh{U}^*$, since later we will use $\wh{U}$ for its {\em near-exact} counterpart to simplify notation.

\begin{lem} \label{lem:uhat}
    \emph{(Fourier transform of hashed signal)}
    Fix $(n,B)$ and the signals $X \in \CC^n$ and $G \in \CC^n$ with the latter symmetric about zero. If $U$ is an $(n,B,G,\sigma,\Delta)$-hashing of $X$, then its exact Fourier transform $\wh{U}^*$ is given by
 	$$\wh{U}^*_b = \sum_{f \in [n]} \wh{X}_{f} \wh{G}_{\sigma f-b\frac{n}{B}} \omega_n^{ \sigma\Delta f }, \quad b \in [B]. $$
\end{lem}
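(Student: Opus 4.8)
The plan is to reduce the claim to two standard facts: the duality between subsampling in one domain and aliasing (folding) in the other, and the convolution theorem. First I would introduce the length-$n$ signal $V \in \CC^n$ given by $V_i = X_{\sigma(\Delta + i)}\, G_i$, so that the defining identity \eqref{eq:hashing} reads $U_b = \frac{B}{n}\sum_{i \in [n/B]} V_{b + B i}$; that is, $U$ is (a scaled version of) the folding of $V$ modulo $B$. A direct computation of the $B$-point DFT of such a folded signal --- writing $\wh{U}^*_b = \frac{1}{B}\sum_{b' \in [B]} U_{b'}\,\omega_B^{-b b'}$, grouping the terms by residue class modulo $B$, and using $\omega_B = \omega_n^{n/B}$ together with $\omega_B^{bB}=1$ --- shows that $\wh{U}^*_b = \wh{V}_{b n / B}$ for each $b \in [B]$, the prefactor $\frac{B}{n}$ cancelling the $\frac{n}{B}$ that arises from combining the two DFT normalizations. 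This is exactly the aliasing/subsampling duality alluded to after Definition \ref{def:downsampling}.

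Next I would compute $\wh{V}$. Since $V$ is the entrywise product of $W$ and $G$, where $W_i := X_{\sigma(\Delta+i)}$, the convolution theorem (in the normalization $\|X\|^2 = n\|\wh X\|_2^2$) gives $\wh{V}_f = (\wh{W}\star\wh{G})_f = \sum_{f' \in [n]}\wh{W}_{f'}\,\wh{G}_{f - f'}$. To evaluate $\wh{W}$, note that $\sigma$ is odd and $n$ is a power of two, so $\sigma$ is invertible modulo $n$; substituting $i' = \sigma(\Delta + i)$ in the definition of the DFT (a genuine permutation of $[n]$) yields $\wh{W}_{f'} = \omega_n^{\Delta f'}\,\wh{X}_{\sigma^{-1} f'}$.

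Combining the last two steps, $\wh{U}^*_b = \wh{V}_{bn/B} = \sum_{f' \in [n]} \omega_n^{\Delta f'}\,\wh{X}_{\sigma^{-1}f'}\,\wh{G}_{bn/B - f'}$. Re-indexing the sum with $f' = \sigma f$ (again legitimate since $\sigma$ is a bijection on $[n]$) turns this into $\sum_{f \in [n]}\omega_n^{\sigma\Delta f}\,\wh{X}_f\,\wh{G}_{bn/B - \sigma f}$, and finally the assumed symmetry $\wh{G}_{-x} = \wh{G}_x$ lets us replace $\wh{G}_{bn/B - \sigma f}$ by $\wh{G}_{\sigma f - bn/B}$, giving exactly the claimed expression.

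The only genuinely delicate points are bookkeeping: tracking the $\frac{B}{n}$ prefactor together with the $\frac1n$ and $\frac1B$ in the two DFT normalizations so that everything cancels to leave $\wh{U}^*_b = \wh{V}_{bn/B}$ with no stray constant, and being careful that all index arithmetic is interpreted modulo the appropriate length (so that $G$ is viewed as a length-$n$ signal and both substitutions $i' = \sigma(\Delta+i)$ and $f' = \sigma f$ really are permutations). I do not expect any conceptual obstacle; the proof is essentially an exercise in applying the convolution theorem and aliasing duality in the right order.
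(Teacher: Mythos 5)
Your proposal is correct and follows essentially the same route as the paper's proof: both unfold the $B$-point DFT of the folded signal to recognize $\wh{U}^*_b$ as the length-$n$ Fourier transform of $X_{\sigma(\Delta+\cdot)}G_{(\cdot)}$ sampled at $bn/B$, then apply the convolution theorem, the identity $\wh{W}_{f'}=\omega_n^{\Delta f'}\wh{X}_{\sigma^{-1}f'}$, the change of variable $f'=\sigma f$, and the symmetry of $\wh{G}$. The bookkeeping of normalizations you flag is handled exactly as you describe, so there is nothing to add.
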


We conclude this subsection by stating the following technical lemma regarding approximately pairwise independent permutations and flat filters.

\begin{lem} \label{lem:perm_property}
    \emph{(Additional filter property)}
    Fix $n$, and let $G$ be an $(n,B,F)$-flat filter.  Let $\pi(\cdot)$ be an approximately pairwise-independent random permutation (\emph{cf.}, Definition \ref{def:permutation}), and for $f,f' \in [n]$, define $o_f(f') = \pi(f') - \frac{n}{B}\round\big(\pi(f)\frac{B}{n}\big)$. Then for any $x \in \CC^n$ and $f \in [n]$, we have
    \begin{equation}
        \sum_{f' \ne f} | \wh{X}_{f'} |^2 \EE_{\pi}\big[ |\wh{G}_{o_f(f')}|^2 \big] \le \frac{10}{B} \|\wh{X}\|^2. \label{eq:filter_exp}
   \end{equation}
\end{lem}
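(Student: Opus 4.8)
The plan is to reduce \eqref{eq:filter_exp} to a per-frequency estimate: it suffices to show that $\EE_\pi\big[|\wh G_{o_f(f')}|^2\big] \le \frac{10}{B}$ for every fixed $f' \ne f$, since then the left-hand side of \eqref{eq:filter_exp} is at most $\frac{10}{B}\sum_{f' \ne f}|\wh X_{f'}|^2 \le \frac{10}{B}\|\wh X\|^2$. So the whole argument concerns the single random variable $o_f(f') = \pi(f') - \frac{n}{B}\round\big(\pi(f)\frac{B}{n}\big)$, that is, the (circular) distance between $\pi(f')$ and the center of the bucket into which $f$ hashes.

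The first step is to bound this distance below by $|\pi(f') - \pi(f)|$: since $\round(\cdot)$ perturbs $\pi(f)\frac{B}{n}$ by at most $\frac12$, the bucket center $\frac{n}{B}\round\big(\pi(f)\frac{B}{n}\big)$ lies within $\frac{n}{2B}$ of $\pi(f)$, so by the triangle inequality for the circular distance on $[n]$ one gets $|o_f(f')| \ge |\pi(f') - \pi(f)| - \frac{n}{2B}$. I would then split the expectation according to whether $f'$ hashes near $f$'s bucket. On the ``near'' event $|\pi(f') - \pi(f)| < \frac{3n}{2B}$ I use only the trivial bound $|\wh G_{o_f(f')}|^2 \le 1$ from part (i) of Definition \ref{def:filterG}; by approximate pairwise independence (Definition \ref{def:permutation}) this event has probability at most $\frac{4}{n}\cdot\frac{3n}{2B} = \frac{6}{B}$, so it contributes at most $\frac{6}{B}$.

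On the complementary event I would decompose dyadically over the scale of $|\pi(f') - \pi(f)|$: in the shell $\big\{2^\ell\cdot\frac{3n}{2B} \le |\pi(f') - \pi(f)| < 2^{\ell+1}\cdot\frac{3n}{2B}\big\}$ for $\ell \ge 0$, the displacement bound gives $|o_f(f')| \ge 2^\ell\frac{n}{B} \ge \frac{n}{B}$, so the polynomial-decay bound (iii) of Definition \ref{def:filterG} applies and yields $|\wh G_{o_f(f')}|^2 \le 2^{-2(\ell+2)(F-1)}$, while the shell has probability at most $\frac{4}{n}\cdot 2^{\ell+1}\cdot\frac{3n}{2B} = \frac{24\cdot 2^\ell}{B}$. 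Summing the product over $\ell \ge 0$ is a convergent geometric series because $F \ge 2$ forces $2(F-1)\ge 2 > 1$; its total is a small constant over $B$ (about $\frac{3}{B}$ when $F = 2$, smaller for larger $F$). Adding the two regimes gives $\EE_\pi\big[|\wh G_{o_f(f')}|^2\big] \le \frac{9}{B} \le \frac{10}{B}$, as needed.

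The only delicate point --- more bookkeeping than a genuine obstacle --- is the modular arithmetic: one must check that the triangle inequality really holds for the circular distance on $[n]$, that the dyadic shells do not wrap around (automatic, since every quantity involved has magnitude at most $\frac{n}{2}$), and that Definition \ref{def:permutation} is invoked at integer thresholds (the thresholds $2^\ell\cdot\frac{3n}{2B}$ are integers whenever $B$ divides $n$, as in all our applications, and otherwise rounding them up costs only $O(1/n)$ per shell, comfortably absorbed by the slack between $\frac{9}{B}$ and $\frac{10}{B}$). Everything else is the routine geometric-sum estimate sketched above.
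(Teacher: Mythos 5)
Your proof is correct and follows essentially the same route as the paper's: a dyadic decomposition over $|\pi(f')-\pi(f)|$, approximate pairwise independence to bound each shell's probability, the trivial bound $|\wh{G}|\le 1$ near the bucket and the polynomial decay of Definition \ref{def:filterG}(iii) far from it, summed as a geometric series controlled by $F\ge 2$. (Minor harmless slip: $\frac{4}{n}\cdot 2^{\ell+1}\cdot\frac{3n}{2B}=\frac{12\cdot 2^{\ell}}{B}$, not $\frac{24\cdot 2^{\ell}}{B}$; since you only use it as an upper bound, the total still comes in under $\frac{10}{B}$.)
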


\noindent The proof is given in Appendix \ref{sec:pf_perm}.

\subsection{Semi-Equispaced {FFT}} \label{sec:semi_equi}

\begin{algorithm}
\caption{Semi-equispaced inverse FFT for approximating the inverse Fourier transform, with standard sparsity (top) and block sparsity (bottom)} \label{alg:semi_equi}
    \begin{algorithmic}[1]
    
    \Procedure{SemiEquiInverseFFT}{$\wh{X},n,k,\zeta$}
    
    \State $\wh{G} \gets \textsc{Filter}(n,k,\zeta)$ \Comment See \cite[Sec.~12]{IKP}; same as proof of Lemma \ref{lem:semi_equi}
    \State $\wh{Y}_i \gets (\wh{X} \star \wh{G})_{\frac{in}{2k}}$ for each $i \in [2k]$
    \State ${Y} \gets \textsc{InverseFFT}(\wh{Y})$
    \State {\bf return} $\{ {Y}_j \}_{ |j| \le \frac{k}{2} }$
    \EndProcedure
    
    \Procedure{SemiEquiInverseBlockFFT}{$\wh{X},n,k_0,k_1,c$}
    
    \State $\wh{G} \gets \textsc{Filter}(n,k_1,n^{-c})$ \Comment See proof of Lemma \ref{lem:semi_equi}
    \For{ $j \in \big[ \frac{2n}{k_1} \big]$ such that $(\wh{X} \star \wh{G})_{\frac{k_1}{2} j}$ may be non-zero ($O(ck_0 \log n)$ in total) }
        \State $\widetilde{Y}_j^b \gets \frac{k_1}{2}\sum_{l=1}^{\frac{n}{2k_1}} \wh{X}_{b+2k_1 l} \wh{G}_{\frac{k_1}{2} j - (b+2k_1 l) }$ for each $b \in [2k_1]$
        \State $(\wh{Y}_j^1,\dotsc,\wh{Y}_j^{2k_1}) \gets \textsc{{InverseFFT}}( \widetilde{Y}_j^1, \dotsc, \widetilde{Y}_j^{2k_1}  )$ 
    \EndFor
    
    \For{ $r \in [2k_1]$ }
        \State $\wh{Y}^{r} \gets (\wh{Y}^r_1,\dotsc,\wh{Y}^r_{n/k_1})$
        \State ${Y}^{r} \gets \textsc{SemiEquiInverseFFT}(\wh{Y}^r, \frac{n}{k_1}, k_0, n^{-(c+1)}) $
    \EndFor
    
    \State {\bf return} $\{ {Y}_j^{r} \}_{ r \in [2k_1], |j| \le \frac{k_0}{2} }$
    \EndProcedure
    
    \end{algorithmic}

\end{algorithm}

One of the steps of our algorithm is to take the inverse Fourier transform of our current estimate of the spectrum, so that it can be subtracted off and we can work with the residual.  The \emph{semi-equispaced inverse FFT} provides an efficient method for doing this, and is based on the application of the standard inverse FFT to a filtered and downsampled signal. 

We start by describing an existing technique of this type for \emph{standard} sparsity; the details are shown in the procedure \textsc{SemiEquiInverseFFT} in Algorithm \ref{alg:semi_equi}, and the resulting guarantee from \cite[Sec.~12]{IKP} is stated as follows.\footnote{Note that the roles of time and frequency are reversed here compared to \cite{IKP}.}

\begin{lem} \emph{(\textsc{SemiEquiInverseFFT} guarantees \cite[Lemma 12.1, Cor.~12.2]{IKP})} \label{lem:semi_equi_std}
    (i) Fix $n$ and a parameter $\zeta > 0$.  If $\wh{X} \in \CC^n$ is $k$-sparse for some $k$, then \textsc{SemiEquiInverseFFT}($\wh{X},n,k,\zeta$) returns a set of values $\{ {Y}_j\}_{|j| \le k/2}$ in time $O( k \log \frac{n}{\zeta})$, satisfying
    $$ |{Y}_j - {X}_j| \le \zeta \|X\|_2. $$
    
    (ii) Given two additional parameters $\sigma,\Delta \in [n]$ with $\sigma$ being odd, it is possible to compute a set of values $\{ {Y}_j\}$ for all $j$ equaling $\sigma j' + \Delta$ for some $j'$ with $|j'| \le k/2$, with the same runtime and approximation guarantee.
\end{lem}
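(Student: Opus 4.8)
Since this lemma is quoted verbatim from \cite[Lemma~12.1 and Corollary~12.2]{IKP}, the plan is essentially to reconstruct that argument, noting (as the footnote indicates) that it is the transpose of the flat-filter ideas already used in this paper: here we want a filter that is flat \emph{in time} on the window $\{|j|\le k/2\}$ of output indices, rather than flat in frequency. So the first step is to pin down what \textsc{Filter}$(n,k,\zeta)$ returns: a real symmetric $\wh G$ whose inverse transform $G$ satisfies $G_j\in[1-\zeta_0,1]$ for $|j|\le k/2$, $0\le G_j\le 1$ for all $j$, and $|G_j|\le(1/4)^{F-1}(k/|j|)^{F-1}$ for $|j|\ge k$, with $F=\Theta(\log\frac n\zeta)$ and $\zeta_0=\zeta/\mathrm{poly}(n)$; this is exactly the construction of Lemma~\ref{lem:filter_properties} with the roles of the two domains swapped (take its ``$B$'' parameter equal to $n/k$, so the flat region has half-width $n/(2B)=k/2$ and the decay region starts at $n/B=k$). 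A short calculation with the polynomial tail then gives $\sum_{|j|\ge k}|G_j|^2\le \zeta_0^2$ for this $F$, and one also records that $\wh G$ is concentrated, up to additive error $\zeta_0$, on a band of $O\big(\frac nk\log\frac n\zeta\big)$ frequencies around $0$.

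\textbf{Correctness.} Next I would use the convolution theorem, $\wh X\star\wh G=\widehat{X\cdot G}$, to see that $\wh Y$ is the subsampling of $\widehat{X\cdot G}$ at multiples of $\frac n{2k}$. By the duality of subsampling and aliasing (the same identity proved in Appendix~\ref{sec:pf_uhat}), the length-$2k$ inverse transform $Y$ then satisfies $Y_j=\sum_{\ell}(X\cdot G)_{j+2k\ell}$ for $j\in[2k]$, up to the scalar normalization that \textsc{Filter} is defined to absorb. For $|j|\le k/2$ split this as $Y_j=X_jG_j+\sum_{\ell\ne 0}X_{j+2k\ell}G_{j+2k\ell}$: the first term is $X_j(1\pm\zeta_0)$, hence within $\zeta_0\|X\|_2$ of $X_j$; the tail is bounded by Cauchy--Schwarz by $\|X\|_2\big(\sum_{\ell\ne0}|G_{j+2k\ell}|^2\big)^{1/2}$, and since $|j+2k\ell|\ge 2k|\ell|-k/2\ge k$ for $\ell\ne0$ this is at most $\|X\|_2\big(\sum_{|m|\ge k}|G_m|^2\big)^{1/2}\le\zeta_0\|X\|_2$. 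Combining the two and rescaling $\zeta_0$ by a constant gives $|Y_j-X_j|\le\zeta\|X\|_2$.

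\textbf{Runtime.} The length-$2k$ inverse FFT is $O(k\log k)$. For the formation of $\wh Y$, write $\wh Y_i=\sum_{f\in\mathrm{supp}(\wh X)}\wh X_f\,\wh G_{in/(2k)-f}$; because $\wh G$ is (essentially) supported on a band of width $O\big(\frac nk\log\frac n\zeta\big)$ and the indices $i$ are spaced $\frac n{2k}$ apart in frequency, each of the $k$ nonzeros of $\wh X$ contributes to only $O\big(\log\frac n\zeta\big)$ of the buckets $i$. Hence all the $\wh Y_i$ are computed with $O\big(k\log\frac n\zeta\big)$ arithmetic operations, requiring $\wh G$ at only $O\big(k\log\frac n\zeta\big)$ points, each evaluable in $O(1)$ time from the closed form (so the full band need not be materialised); this yields the claimed $O\big(k\log\frac n\zeta\big)$ bound. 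For part~(ii), apply the whole procedure to the signal whose spectrum is $\wh X'_g:=\wh X_{\sigma^{-1}g}\,\omega_n^{\sigma^{-1}g\Delta}$ — still $k$-sparse since $\sigma$ is odd, hence invertible mod $n$, and of the same $\ell_2$ norm — and observe that $X'_{j}=X_{\sigma j+\Delta}$, so reading off index $j$ with $|j|\le k/2$ estimates $X_{\sigma j+\Delta}$ with the same guarantee and cost.

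\textbf{Main obstacle.} There is no deep difficulty here — the result is standard — but the care-intensive part is the filter step: one must choose $\zeta_0=\zeta/\mathrm{poly}(n)$ so that \emph{both} the in-band flatness error \emph{and} the polynomially decaying out-of-band tail fall below $\zeta\|X\|_2$ simultaneously, check that the decay region of $G$ begins early enough (by $|m|\ge k$, which every $j+2k\ell$ with $\ell\ne0$ and $|j|\le k/2$ reaches), and correctly carry the $n$-versus-$2k$ normalization constant through the subsampling/aliasing identity so that $Y_j$ approximates $X_j$ itself rather than a rescaling of it.
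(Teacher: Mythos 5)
The paper does not actually prove this lemma --- it is imported verbatim from \cite[Lemma 12.1, Cor.~12.2]{IKP} with the roles of time and frequency swapped --- so there is no in-paper argument to compare against; your reconstruction is the standard one and is correct. The convolution-then-subsample-then-alias decomposition, the split of $Y_j$ into $X_jG_j$ plus a Cauchy--Schwarz-controlled aliasing tail (using $|j+2k\ell|\ge 3k/2$ for $\ell\neq 0$), the bucket-counting runtime argument, and the permutation/modulation reduction for part (ii) all match the cited proof. The only point worth tightening is that the $O(k\log\frac{n}{\zeta})$ runtime relies on $\wh{G}$ being \emph{exactly} supported on a band of $O(\frac{n}{k}\log\frac{n}{\zeta})$ frequencies (as the [IKP] filter is), not merely concentrated there up to additive error; you flag this and the $n$-versus-$2k$ normalization yourself, and both are routine.
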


For the block-sparse setting, we need to adapt the techniques of \cite{IKP}, making use of a \emph{two-level} scheme that calls \textsc{SemiEquiInverseFFT}.  The resulting procedure, \textsc{SemiEquiInverseBlockFFT}, is described in Algorithm \ref{alg:semi_equi}.  The main result of the procedure is the following analog of Lemma \ref{lem:semi_equi_std}.

\begin{lem} \label{lem:semi_equi}
    \emph{(\textsc{SemiEquiInverseBlockFFT} guarantees)}
    (i) Fix $(n,k_0,k_1)$, a $(k_0,k_1)$-block sparse signal $\wh{X} \in \CC^n$, and a constant $c \ge 1$.  Define the shifted signals $\{ {X}^r\}_{r \in [2k_1]}$ with $ {X}^r_i = {X}_{i + \frac{nr}{2k_1}}$.  The procedure \textsc{SemiEquiInverseBlockFFT}($\wh{X},n,k_0,k_1,c$) returns a set of values ${Y}^r_j$ for all $r \in [2k_1]$ and $|j| \le \frac{k_0}{2}$ in time $O(c^2 k_0 k_1 \log^2 n)$, satisfying
    \begin{equation}
        |{Y}^{r}_j - {X}^r_j| \le 2n^{-c} \|X\|_2. \label{eq:semi_equi}
    \end{equation}
    
    (ii) Given two additional parameters $\sigma,\Delta \in \big[\frac{n}{k_1}\big]$ with $\sigma$ odd, it is possible to compute a set of values ${Y}^{r}_j$ for all $r \in [2k_1]$ and $j$ equaling $\sigma j' + \Delta$ (modulo $\frac{n}{k_1}$) for some $|j'| \le \frac{k_0}{2}$, with the same runtime and approximation guarantee.
\end{lem}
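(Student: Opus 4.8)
The plan is to read \textsc{SemiEquiInverseBlockFFT} as a two-level Cooley--Tukey factorisation of the length-$n$ inverse DFT and to push two approximation errors through it. Writing $X^r_j = X_{j+a_r}$ with $a_r = \frac{nr}{2k_1}$ and decomposing a frequency $f \in [n]$ as $f = b + 2k_1 l$ with $b \in [2k_1]$ and $l \in [\tfrac{n}{2k_1}]$, a direct computation of $\omega_n^{f(j+a_r)}$ gives the key identity underlying the scheme,
\begin{equation}
X^r_j \;=\; \sum_{b \in [2k_1]} \omega_{2k_1}^{br}\, \omega_n^{bj}\, V^b_j, \qquad \text{where } V^b \in \CC^{n/(2k_1)},\ \ V^b_j = \sum_{l} \wh{X}_{b+2k_1 l}\, \omega_{n/(2k_1)}^{lj}, \nonumber
\end{equation}
i.e.\ an inner length-$\tfrac{n}{2k_1}$ inverse DFT of each coset spectrum, followed by a twiddle factor, followed by an outer length-$2k_1$ inverse DFT over $b$. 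The structural fact that makes this useful is that, since $\wh{X}$ is $(k_0,k_1)$-block sparse and consecutive elements of a residue class mod $2k_1$ are $2k_1 > k_1$ apart, each block $I_{j'}$ meets a given coset in at most one index; hence every coset spectrum $(\wh{X}_{b+2k_1 l})_l$ is $k_0$-sparse, which is precisely what licenses the inner call \textsc{SemiEquiInverseFFT}$(\,\cdot\,,\tfrac{n}{k_1},k_0,\cdot\,)$ of Lemma \ref{lem:semi_equi_std}. The algorithm realises a variant of this factorisation through the filter $\wh{G} = \textsc{Filter}(n,k_1,n^{-c})$ rather than by extracting coset spectra directly, which is what lets it touch only $O(ck_0\log n)$ coarse-grid points and handle the permuted variant (ii) uniformly.

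First I would show that the vectors $\{\wh{Y}^r\}_{r\in[2k_1]}$ assembled by the first two loops equal, up to an additive $n^{-c}\|X\|_2$ error per coordinate, ``coarse'' length-$\tfrac{n}{k_1}$ spectra whose inverse DFT agrees with $j \mapsto X^r_j$ on $|j| \le \tfrac{k_0}{2}$. Here the filter $\wh{G}$ plays its usual role (Definition \ref{def:filterG}, Lemma \ref{lem:filter}, and \cite[Sec.~12]{IKP}): convolving the $(k_0,k_1)$-block-sparse $\wh{X}$ with it and sampling at the multiples $\tfrac{k_1}{2}j$ produces a sequence that is $O(ck_0\log n)$-sparse -- exactly the set the loop iterates over -- up to a tail whose total $\ell_2$ mass is $\le n^{-c}\|X\|_2$ by the decay bound of Definition \ref{def:filterG}(iii); the intervening length-$2k_1$ inverse FFT over $b$ is exact and has $\poly(n)$-bounded operator norm, so this tail reaches $\wh{Y}^r$ without blowing up beyond a $\poly(n)$ factor. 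I would then split each $\wh{Y}^r$ into an $O(k_0)$-sparse head (the one or two dominant coarse samples coming from each of the $k_0$ blocks) plus a residual of $\ell_2$ norm $\le n^{-c}\poly(n)\|X\|_2$, so that Lemma \ref{lem:semi_equi_std} applies to the head and linearity of the inverse DFT controls the residual.

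Next I would feed $\wh{Y}^r$ into the inner call and compose the errors. By Lemma \ref{lem:semi_equi_std}(i) with accuracy parameter $n^{-(c+1)}$, \textsc{SemiEquiInverseFFT}$(\wh{Y}^r,\tfrac{n}{k_1},k_0,n^{-(c+1)})$ returns $Y^r_j$ within $n^{-(c+1)}\|\wh{Y}^r\|_2$ of the inverse DFT of the head; since all intervening maps (coset sub-sampling, filtering, the length-$2k_1$ inverse FFT) have $\poly(n)$-bounded operator norm, $\|\wh{Y}^r\|_2 = O(\poly(n)\,\|X\|_2)$, so the extra factor of $n$ built into $n^{-(c+1)}$ is exactly what makes $n^{-(c+1)}\poly(n) = O(n^{-c})$; adding the first-stage contribution yields $|Y^r_j - X^r_j| \le 2n^{-c}\|X\|_2$, which is \eqref{eq:semi_equi}. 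For the runtime: $\wh{G}$ is precomputed (or costs $O(ck_1\log n)$); each $\widetilde{Y}^b_j$ is a sum over only the $O(c\log n)$ coset points falling inside the support of $\wh{G}$, and there are $O(ck_0\log n)$ relevant $j$ and $2k_1$ values of $b$, for $O(c^2 k_0 k_1 \log^2 n)$ total; the $O(ck_0\log n)$ length-$2k_1$ inverse FFTs cost $O(ck_0 k_1 \log^2 n)$; and the $2k_1$ inner calls cost $2k_1 \cdot O(k_0(c+1)\log n) = O(ck_0 k_1 \log n)$ by Lemma \ref{lem:semi_equi_std}(i); the sum is $O(c^2 k_0 k_1 \log^2 n)$.

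Part (ii) follows by replacing each inner call with its permuted/shifted counterpart from Lemma \ref{lem:semi_equi_std}(ii): the permutation $\sigma$ (odd, on $[\tfrac{n}{k_1}]$) and the shift $\Delta$ act entirely within the coarse domain $[\tfrac{n}{k_1}]$, so the outer radix-$2k_1$ stage and the coset structure are untouched, and the extra twiddle factors introduced are applied in $O(1)$ time per output; the runtime and error bounds are unchanged. \emph{The main obstacle} is the first step: making rigorous the claim that the filtered, $\tfrac{k_1}{2}$-subsampled signal is $O(k_0)$-sparse up to an $n^{-c}\|X\|_2$ tail and, crucially, that the head/tail split of $\wh{Y}^r$ is clean enough that the exact-$k_0$-sparsity hypothesis of Lemma \ref{lem:semi_equi_std} is genuinely met, while keeping the $\poly(n)$ norm bookkeeping tight enough that the single extra factor of $n$ absorbed into the inner accuracy parameter $n^{-(c+1)}$ suffices; the analogous clean statement for standard sparsity is \cite[Sec.~12]{IKP}, but the block structure and the two-level composition force a re-derivation of the tail estimates.
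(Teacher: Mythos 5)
There is a genuine gap, and it is precisely the step you flag as your ``main obstacle'': it arises because you have the wrong filter. \textsc{SemiEquiInverseBlockFFT} does not use the flat filter of Definition \ref{def:filterG} (whose frequency-domain decay is only polynomial, hence your worry about a tail); it uses the \cite{IKP}-style filter in which $\wh{G}$ is \emph{exactly} supported on a window of length $O(ck_1\log n)$ in frequency domain, while $G$ approximates an \emph{ideal rectangle} $G'$ in time domain (equal to $1$ for $|j|\le \frac{n}{2k_1}$, $0$ for $|j|\ge\frac{n}{k_1}$) with $\|G-G'\|_2\le n^{-c}$. With this filter, $\wh{Y}^r_j=\frac{k_1}{2}(\wh{X}^r\star\wh{G})_{\frac{k_1}{2}j}$ is \emph{exactly} $O(ck_0\log n)$-sparse -- no head/tail split is needed, the first two loops are exact, and Lemma \ref{lem:semi_equi_std} applies directly with sparsity parameter $k'=O(ck_0\log n)$ (not $k_0$; the coset-sparsity observation in your Cooley--Tukey identity, while correct, is not what licenses the inner call). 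The $n^{-c}\|X\|_2$ term in \eqref{eq:semi_equi} does not come from a frequency-domain tail at all: it comes from the time-domain aliasing identity $Y^{*r}_j=\sum_{i\in[k_1/2]}(G\cdot X^r)_{j+\frac{2n}{k_1}i}$, which equals $X^r_j$ up to $\|G-G'\|_2\|X\|_2\le n^{-c}\|X\|_2$ because $G'$ kills every aliased copy and is $1$ at $j$. The second error term is $n^{-(c+1)}\|Y^r\|_2$ from the inner call, controlled not by operator-norm bookkeeping of the intermediate maps but by the direct bound $\|Y^r\|_2\le n\|X\|_2$ obtained from the aliasing formula and $|G_j|\le 2$.

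Your runtime accounting and the verification that the length-$2k_1$ inverse FFT over $b$ reassembles the shifted convolutions are essentially the paper's, and the Cooley--Tukey identity is a correct and illuminating way to see why the scheme is two-level. But as written, the proposal cannot be completed: the head/tail decomposition you would need to make rigorous does not exist for the filter actually used, and chasing it with the flat filter of Definition \ref{def:filterG} would not yield an $n^{-c}$ error (the polynomial decay is far too weak). One smaller omission: in part (ii), Lemma \ref{lem:semi_equi_std}(ii) produces indices modulo $\frac{2n}{k_1}$ (the length of $\wh{Y}^r$), whereas the claim is about indices modulo $\frac{n}{k_1}$; the paper resolves this with two calls, one incorporating a universal shift of $\frac{n}{k_1}$, a detail your ``the coset structure is untouched'' argument skips.
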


\noindent The proof is given in Appendix \ref{sec:pf_semi}.

\begin{rem}
    When applying the preceding lemmas, the signal sparsity and the number of values we wish to estimate will not always be identical.  However, this can immediately be resolved by letting the parameter $k$ or $k_0$ therein equal the maximum of the two.
\end{rem}

\subsection{Combining the Tools} \label{sec:hash2bins}

In Algorithm \ref{alg:hash2bins}, we describe two procedures combining the above tools.  The first, \textsc{HashToBins}, accepts the signal $\wh{X}$ and its current estimate $\wh{\chi}$, uses \textsc{SemiEquiInverseFFT} to approximate the relevant entries of $\chi$, and computes a hashing of $X - \chi$ as per Definition \ref{def:hashing}.  The second, \textsc{HashToBinsReduced}, is analogous, but instead accepts a $(k_1,\delta)$-downsampling of $X$, and uses \textsc{SemiEquiInverseBlockFFT}.  It will prove useful to allow the function to hash into a different number of buckets for differing $r$ values, and hence accept $\{G^r\}_{r\in[2k_1]}$ and $\{B^r\}_{r\in[2k_1]}$ as inputs.  For simplicity, Algorithm \ref{alg:hash2bins} states the procedures without precisely giving the parameters passed to the semi-equispaced FFT, but the details are given in the proof of the following.

\begin{algorithm}
\caption{Hash to bins functions for original signal (top) and reduced signals (bottom)} \label{alg:hash2bins}
    \begin{algorithmic}[1]
    
    \Procedure{HashToBins}{$X,\wh{\chi},G,n,B,\sigma,\Delta$}

    \State Compute $\{\chi_i\}$ using $\textsc{SemiEquiInverseFFT}$ with input $(\wh{\chi},n,O(FB),n^{-c'})$ \\ 
        \Comment See Lemma \ref{lem:semi_equi_std}; $F$ equals the parameter of filter $G$, and $c'$ is a large constant
    \State $U_X \gets (n,B,G,\sigma,\Delta)$-hashing of $X$ \Comment See Definition \ref{def:hashing}
    \State $U_{\chi} \gets (n,B,G,\sigma,\Delta)$-hashing of $\chi$
    \State $\wh{U} \gets \text{ FFT of }{U}_X - {U}_{\chi}$
    \State \textbf{return} $\wh{U}$
    
    \EndProcedure
    
    \Procedure{HashToBinsReduced}{$\{Z_X^r\}_{r\in[2k_1]},\wh{\chi},\{G^r\}_{r\in[2k_1]},n,k_1,\{B^r\}_{r\in[2k_1]},\sigma,\Delta$}
    
    \State $\Bmax \gets \max_{r \in [2k_1]} B^r$
    \State $k_0 \gets$ minimal value such that $\wh{\chi}$ is $(k_0,k_1)$-block sparse
    \State Compute $\{\chi_i\}$ using $\textsc{SemiEquiInverseBlockFFT}$;  input $(\wh{\chi},n,O(\Fmax\Bmax + k_0),k_1,c')$  \\
         \Comment See Lemma \ref{lem:semi_equi}; $\Fmax$ equals the maximal parameter of the filters $\{G^r\}$, and $c'$ is a large constant
    \State $\{Z_{\chi}^r\}_{r \in [2k_1]} \gets (k_1,\delta)$-downsampling of $\chi$ \Comment See Definition \ref{def:downsampling}
        
    \For{ $r \in [2k_1]$ }
        \State ${U}_X^r \gets \big(\frac{n}{k_1}, B^r, G^r,\sigma,\Delta)$-hashing of ${Z}_X^r$ \Comment See Definition \ref{def:hashing}
        \State ${U}_{\chi}^r \gets \big(\frac{n}{k_1}, B^r, G^r,\sigma,\Delta)$-hashing of ${Z}_{\chi}^r$ \label{line:hash}
        \State $\wh{U}^r \gets \text{ FFT of }{U}^r_X - {U}^r_{\chi}$
    \EndFor
    
    \State {\bf return} $\{ \wh{U}^r \}_{r \in [2k_1]}$
    \EndProcedure
    
    \end{algorithmic}

\end{algorithm}

\begin{lem} \label{lem:hash2bins}
    \emph{(\textsc{HashToBins} and \textsc{HashToBinsReduced} guarantees)}
    (i) Fix $(n,k,B,F)$, an $(n,B,F)$-flat filter $G$ supported on an interval of length $O(FB)$, a signal $X \in \CC^n$, a $k$-sparse signal $\wh{\chi}$  For any $(\sigma,\Delta)$, the procedure $\textsc{HashToBins}(X,\wh{\chi},G,n,B,\sigma,\Delta)$ returns a sequence $\wh{U}$ such that 
        $$ \| \wh{U} - \wh{U}^* \|_{\infty} \le n^{-c} \|\wh{\chi}\|_2, $$
    where $\wh{U}^*$ is the exact Fourier transform of the $(n,B,G,\sigma,\Delta)$-hashing of $X - \chi$ (see Definition \ref{def:hashing}), and $c = c' + O(1)$ for $c'$ in Algorithm \ref{alg:hash2bins}.  Moreover, the sample complexity is $O(FB)$, and the runtime is $O(c F (B+k) \log n)$.
    
    (ii) Fix $(n,k_0,k_1)$ and the parameters $(\{B^r\}_{r\in[2k_1]},F,\delta)$.  For each $r \in [2k_1]$, fix an $\big(\frac{n}{k_1},B^r,F\big)$-flat filter $G^r$ supported on an interval of length $O(FB^r)$. Moreover, fix a signal $X \in \CC^n$ and its $(k_1,\delta)$-downsampling $\{Z^r\}_{r \in [2k_1]}$ with $\delta \in \big(\frac{1}{n},\frac{1}{20}\big)$, and a $(k_0,k_1)$-block sparse signal $\wh{\chi}$.  For any $(\sigma,\Delta)$, the procedure $\textsc{HashToBinsReduced}(\{Z^r\}_{r \in [2k_1]},\wh{\chi},\{G^r\}_{r\in[2k_1},n,k_1,\{B^r\}_{r\in[2k_1},\sigma,\Delta)$ returns a set of sequences $\{\wh{U}^r\}_{r\in[2k_1]}$ such that 
            $$ \| \wh{U}^r - \wh{U}^{*r} \|_{\infty} \le n^{-c} \|\wh{\chi}\|_2, \quad r \in [2k_1], $$
    where $\wh{U}^{*r}$ is the exact Fourier transform of the $\big(\frac{n}{k_1},B^r,G^r,\sigma,\Delta\big)$-hashing for the $(k_1,\delta)$-downsampling of $X - \chi$, and $c = c' + O(1)$ for $c'$ in Algorithm \ref{alg:hash2bins}. Moreover, the sample complexity is $O\big(F \sum_{r\in[2k_1]}B^r \log \frac{1}{\delta}\big)$, and the runtime is $O\big(c^2 (\Bmax F + k_0) k_1 \log^2 n )$ with $\Bmax = \max_{r \in [2k_1]} B^r$.
\end{lem}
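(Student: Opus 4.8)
Both parts follow the same logic: everything in \textsc{HashToBins}/\textsc{HashToBinsReduced} is computed \emph{exactly} except for the time-domain values of $\chi$ (resp.\ of the shifts of $\chi$ that feed the downsamplings), which are reconstructed from $\wh{\chi}$ via the semi-equispaced inverse FFT; the reconstruction error is amplified only by $\poly(n)$ factors, which are absorbed by taking the constant $c'$ large enough. So the plan is: (a) isolate the exact parts and bound their cost; (b) invoke Lemma \ref{lem:semi_equi_std} (resp.\ Lemma \ref{lem:semi_equi}) for the $\chi$-reconstruction error and cost; (c) propagate that error through the (linear) downsampling/hashing/DFT maps, bounding the amplification by $\poly(n)$; (d) choose $c'$ to kill the amplification.

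\textbf{Part (i).} First I would observe that $U_X$, the $(n,B,G,\sigma,\Delta)$-hashing of $X$, is computed exactly from $O(FB)$ samples of $X$, since $G$ is supported on a length-$O(FB)$ window and \eqref{eq:hashing} only touches $X$ at the corresponding shifted indices. By contrast $U_\chi$ uses the approximate values $\widetilde{\chi}_i$ from \textsc{SemiEquiInverseFFT}; applying Lemma \ref{lem:semi_equi_std}(i) with sparsity parameter $O(FB+k)$ (the max of $k$ and the $O(FB)$ values we need, per the remark after Lemma \ref{lem:semi_equi}) and $\zeta=n^{-c'}$ gives $|\widetilde{\chi}_i-\chi_i|\le n^{-c'}\|\chi\|_2=n^{-c'}\sqrt{n}\,\|\wh{\chi}\|_2$ for every index used, in time $O(c'F(B+k)\log n)$. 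Each bucket of $U_\chi$ minus its exact value is a linear combination of these errors with weights $\tfrac{B}{n}G_{\cdot}$, and only $O(F)$ summands per bucket are nonzero, so Cauchy--Schwarz together with the energy bound $\|G\|_2^2=n\|\wh{G}\|_2^2\le 3n^2/B$ from Lemma \ref{lem:filter} bounds every bucket error by $\poly(n)\cdot n^{-c'}\|\wh{\chi}\|_2$. Finally $\wh{U}-\wh{U}^{*}$ is the $B$-point DFT of (exact $U_\chi$) $-$ $U_\chi$, which costs at most one extra factor of $B$ in $\ell_\infty$; picking $c'$ a large enough absolute constant yields $\|\wh{U}-\wh{U}^{*}\|_\infty\le n^{-c}\|\wh{\chi}\|_2$ with $c$ and $c'$ differing by an additive constant (as stated). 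The sample complexity is $O(FB)$ (only $X$ is sampled), and the runtime is $O(FB)$ to form $U_X,U_\chi$, plus $O(B\log B)$ for the DFT, plus the $O(c'F(B+k)\log n)$ for \textsc{SemiEquiInverseFFT}, i.e.\ $O(cF(B+k)\log n)$.

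\textbf{Part (ii).} The same argument applies to the length-$n/k_1$ reduced signals $Z^r$, with two modifications. On the $X$ side, the hashings $U_X^r$ are built from entries of $Z_X^r$; by Lemma \ref{lem:downsamp-cost-unit-access} each such entry costs $O(\log\tfrac1\delta)$ samples of $X$, and hashing $Z_X^r$ into $B^r$ buckets uses $O(FB^r)$ entries (the support of $G^r$), so the sample complexity is $O\big(F\log\tfrac1\delta\sum_r B^r\big)$. On the $\chi$ side, we must reconstruct the time-domain shifts $\chi^r$ feeding the $Z_\chi^r$: the key claim is that, over all $r\in[2k_1]$, the $\chi$-entries needed for the $O(F\Bmax)$ relevant positions of the $Z_\chi^r$ lie within windows of width $O(\Fmax\Bmax+k_0)$ about the offsets $a_r$, which is exactly the output set of \textsc{SemiEquiInverseBlockFFT}$(\wh{\chi},n,O(\Fmax\Bmax+k_0),k_1,c')$, computed with error $2n^{-c'}\|\chi\|_2$ and runtime $O(c'^2(\Bmax F+k_0)k_1\log^2 n)$ by Lemma \ref{lem:semi_equi}. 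Propagating this error through the downsampling (each $Z_\chi^r$-entry is an $O(\log\tfrac1\delta)$-term combination weighted by $\tfrac1{k_1}G$), the hashing, and the $B^r$-point DFTs --- exactly as in part (i) --- costs only $\poly(n)$, absorbed by $c'$, giving $\|\wh{U}^r-\wh{U}^{*r}\|_\infty\le n^{-c}\|\wh{\chi}\|_2$. Summing runtimes: forming all $Z_X^r,Z_\chi^r,U_X^r,U_\chi^r$ and their DFTs is $O\big(F\log\tfrac1\delta\cdot\Bmax k_1+\Bmax k_1\log n\big)$, dominated by the $O(c'^2(\Bmax F+k_0)k_1\log^2 n)$ of \textsc{SemiEquiInverseBlockFFT}, for a total $O(c^2(\Bmax F+k_0)k_1\log^2 n)$.

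\textbf{Where the difficulty is.} The error propagation and cost accounting are routine bookkeeping. The one genuinely non-obvious step is in part (ii): showing that a \emph{single} call to \textsc{SemiEquiInverseBlockFFT} with parameter $O(\Fmax\Bmax+k_0)$ produces \emph{all} the $\chi$-values needed to form the downsampled-and-hashed signal for \emph{every} $r\in[2k_1]$ --- equivalently, that the union over $r$ of these index sets sits inside width-$O(\Fmax\Bmax+k_0)$ windows centered at the $a_r$. This is precisely where the block structure and the additive $k_0$ term come from, and it relies on the construction details in the proof of Lemma \ref{lem:semi_equi} (Appendix \ref{sec:pf_semi}).
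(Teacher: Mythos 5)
Your proposal is correct and follows essentially the same route as the paper's proof: the hashing of $X$ is exact, the only error comes from reconstructing time-domain values of $\chi$ via the (block) semi-equispaced inverse FFT with $\zeta=n^{-c'}$, that error survives the linear hashing and DFT maps up to a $\poly(n)$ amplification absorbed by taking $c'=c+O(1)$, and the sample/runtime accounting matches (including the single call with sparsity $O(F\Bmax+k_0)$ serving all $r$ in part (ii)). One small correction: because the hashing accesses $\chi$ at the permuted, shifted indices $\sigma(\Delta+i)$ rather than at an interval centered at zero, you must invoke part (ii) of Lemma \ref{lem:semi_equi_std} (and of Lemma \ref{lem:semi_equi}) rather than part (i), and the reason one block semi-equispaced call covers every $r$ is the aliasing identity $X^r_{j+i\frac{n}{k_1}}=X^{r+2i}_j$, which your "windows about the offsets $a_r$" description glosses over.
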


\noindent The proof is given in Appendix \ref{sec:pf_hash2bins}.

\begin{rem} \label{rem:k0}
    Throughout the paper, we consider $c$ in Lemma \ref{lem:hash2bins} to be a large absolute constant.  Specifically, various results make assumptions such as $\|\wh{X} - \wh{\chi}\|_2 \ge \frac{1}{\poly(n)} \|\wh{\chi}\|_2$, and the results hold true when $c$ is sufficiently large compared to implied exponent in the $\poly(n)$ notation.  Essentially, the $n^{-c}$ error term is so small that it can be thought of as zero, but we nevertheless handle it explicitly for completeness.
\end{rem}

\subsection{Estimating the Downsampled Signal Energies} \label{sec:est_downsampled}

We now come to the main task of this section, namely, approximating the energy of each $\wh{Z}^r$.  To do this, we hash into $B = \frac{4}{\delta^2} \cdot k_0$ buckets (\emph{cf.}, Definition \ref{def:hashing}), and form the estimate as the energy of the hashed signal.  The procedure is shown in Algorithm \ref{alg:est_energy}. 

Before stating the guarantees of Algorithm \ref{alg:est_energy}, we provide the following lemma characterizing the approximation quality for an \emph{exact} hashing of a signal, as opposed to the approximation returned by \textsc{HashToBinsReduced}.  Intuitively, the first part states that we can accurately estimate the top coefficients well without necessarily capturing the noise, and the second part states that, in expectation, we do not over-estimate the total signal energy by more than a small constant factor.

\begin{algorithm}
\caption{Procedure for estimating energies of downsampled signals} \label{alg:est_energy}
    \begin{algorithmic}[1]
    
    \Procedure{EstimateEnergies}{$X,\wh{\chi},n,k_0,k_1$,$\delta$}
    
    \State $B \gets \frac{4}{\delta^2} \cdot k_0$
    \State $F \gets 10 \log \frac{1}{\delta}$
    \State $H \gets (\frac{n}{k_1}, B, F)$-flat filter \Comment See Definition \ref{def:filterG}

    \State $\Delta \gets$ uniform random sample from $[\frac{n}{k_1}]$
    \State $\sigma \gets$ uniform random sample from odd numbers in  $[\frac{n}{k_1}]$
    % \State $\pi \gets$ approximately pairwise independent random permutation over $[\frac{n}{k_1}]$
    \State $\{Z^r\}_{r \in [2k_1]} \gets (k_1,\delta)$-downsampling of $X - \chi$  \Comment See Definition \ref{def:downsampling}
    % \State $\gamma \gets \text{ empty vector of length } (2k_1)$
    \State $\mathbf{H} \gets (H,\dotsc,H)$
    \State $\mathbf{B} \gets (B,\dotsc,B)$
    \State $\{\wh{U}^r\}_{r \in [2k_1]} \gets \textsc{HashToBinsReduced}(\{Z^r\}_{r\in[2k_1]},\wh{\chi},\mathbf{H},n,k_1,\mathbf{B},\sigma,\Delta)$ \label{line:mbl_hash} \Comment See Section \ref{sec:hashing}
    \For{\texttt{$r \in [2k_1]$}}
        \State $\gamma^r \gets \|\wh{U}^r\|_2^2$
    \EndFor
    
    \State {\bf return} $\gamma$ \Comment Length-$2k_1$ vector of $\gamma^r$ values
    \EndProcedure
    
    \end{algorithmic}

\end{algorithm}

\begin{lem} \label{lemm:14}
    \emph{(Properties of exact hashing)}
	Fix the integers $(m,B)$, the parameters $\delta \in \big(0,\frac{1}{20}\big)$ and $F' \ge 10\log\frac{1}{\delta}$, and the signal $Y \in \CC^m$ and $(m,B,F')$-flat filter $H$ (\emph{cf.}, Definition \ref{def:filterG}).  Let $U$ be an $(m,B,H,\sigma,\Delta)$-hashing of $Y$ for uniformly random $\sigma,\Delta \in [m]$ with $\sigma$ odd, and let $\pi(\cdot)$ be defined as in Definition \ref{def:hashing}.  Then, letting $\wh{U}^*$ denote the exact Fourier transform of $U$, we have the following:
\begin{enumerate}
    \item For any set $S \subset [m]$,
    \begin{equation}
    \EE_{\Delta,\pi} \Big[ \Big| \|\wh{Y}_S\|_2^2 - \|\wh{U}^*\|_2^2 \Big|_+ \Big] \le \bigg(10\sqrt{\frac{|S|}{B}} + 15 \frac{|S|}{B} +2\delta^2 \bigg) \|\wh{Y}\|_2^2, \nn %\label{eq:lem14_cond1}
    \end{equation}
    where $\|\wh{Y}_S\|_2^2$ denotes $\sum_{j \in S}|\wh{Y}_j|^2$.
    \item We have
    \begin{equation}
    \EE_{\Delta,\pi} \big[ \|\wh{U}^*\|_2^2 \big] \le 3 \|\wh{Y}\|_2^2. \nn %\label{eq88}
    \end{equation}
\end{enumerate}
\end{lem}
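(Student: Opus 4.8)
The plan is to analyze the exact Fourier transform $\wh{U}^*$ using the closed form from Lemma \ref{lem:uhat}, which gives $\wh{U}^*_b = \sum_{f \in [m]} \wh{Y}_f \wh{H}_{\sigma f - b\frac{m}{B}} \omega_m^{\sigma\Delta f}$. By Parseval on the length-$B$ signal $U$, we have $\|\wh{U}^*\|_2^2 = B \|U\|_2^2 \cdot \frac{1}{B} = \ldots$; more usefully, I will expand $\|\wh{U}^*\|_2^2 = \sum_{b \in [B]} |\wh{U}^*_b|^2$ and take the expectation over the random shift $\Delta$. The key point is that $\EE_\Delta[\omega_m^{\sigma\Delta(f - f')}] = \openone[f = f']$ (since $\sigma$ is odd and $m$ is a power of two, $\sigma$ is invertible mod $m$, so $\sigma(f-f') \equiv 0$ iff $f = f'$). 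Hence the cross terms vanish in expectation, and
\[
\EE_\Delta \big[ \|\wh{U}^*\|_2^2 \big] = \sum_{f \in [m]} |\wh{Y}_f|^2 \sum_{b \in [B]} |\wh{H}_{\sigma f - b \frac{m}{B}}|^2 = \sum_{f \in [m]} |\wh{Y}_f|^2 \sum_{b \in [B]} |\wh{H}_{o_f(f) + (h(f) - b)\frac{m}{B}}|^2,
\]
rewriting the filter argument via the hashing notation $\pi, h, o$ of Definition \ref{def:hashing}. The inner sum over $b$ is a sum of $|\wh{H}|^2$ over an arithmetic progression of spacing $\frac{m}{B}$; I will bound it by relating it to the total filter energy $\sum_{f} |\wh{H}_f|^2 \le \frac{3m}{B}$ from Lemma \ref{lem:filter_properties}, after rescaling by the $\frac{B}{m}$ normalization that appears implicitly. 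This should yield $\EE_{\Delta,\pi}[\|\wh{U}^*\|_2^2] \le 3\|\wh{Y}\|_2^2$, giving part 2.

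For part 1, I would split $\|\wh{U}^*\|_2^2$ into the contribution of frequencies in $S$ and those outside. For $f \in S$, the "diagonal" contribution to $\wh{U}^*_{h(f)}$ is $\wh{Y}_f \wh{H}_{o_f(f)} \omega_m^{\sigma\Delta f}$, where $\wh{H}_{o_f(f)} \approx 1$ since $o_f(f)$ is within $\frac{m}{2B}$ of zero (property (ii) of a flat filter) — so these "should" reconstruct $\|\wh{Y}_S\|_2^2$ up to the filter leakage $(\frac14)^{F'-1} \le \delta^2$-type terms. The error $\big|\|\wh{Y}_S\|_2^2 - \|\wh{U}^*\|_2^2\big|_+$ is controlled by: (a) the collision/leakage terms where an $f' \ne f$ hashes near $h(f)$, handled in expectation by Lemma \ref{lem:perm_property}, which gives $\sum_{f' \ne f} |\wh{Y}_{f'}|^2 \EE_\pi[|\wh{H}_{o_f(f')}|^2] \le \frac{10}{B}\|\wh{Y}\|^2$; and (b) the deficit $1 - |\wh{H}_{o_f(f)}|^2 \le 2\delta^2$. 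The $\sqrt{|S|/B}$ term will come from a Cauchy–Schwarz step when bounding the cross terms $\sum_{f \in S, f' \ne f}$, since those are not sign-definite: one pairs a factor carrying $|\wh{Y}_f|$ (summing to $\|\wh{Y}_S\|_2$, hence a $\sqrt{|S|}$ after a second Cauchy–Schwarz over $S$) with a factor carrying the filter leakage (summing to $O(\frac1B \|\wh{Y}\|^2)$ by Lemma \ref{lem:perm_property}), producing $\sqrt{\frac{|S|}{B}}\|\wh{Y}\|_2^2$; the $\frac{|S|}{B}$ term is the pure collision contribution among pairs both hashing to the same bucket.

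The main obstacle is the careful bookkeeping of the cross terms in part 1: $\|\wh{U}^*\|_2^2$ is a double sum over pairs $(f, f')$, and while the expectation over $\Delta$ kills terms with $f \ne f'$ at the level of a \emph{single} bucket, the sum over buckets $b$ couples different frequencies through the filter tails, so one must simultaneously track which pairs land in a common bucket and bound the filter decay for the rest. The use of $|\cdot|_+$ rather than $|\cdot|$ is essential and must be exploited: over-estimation from spiky out-of-$S$ energy cannot be controlled, so one only bounds the amount by which $\|\wh{U}^*\|_2^2$ falls below $\|\wh{Y}_S\|_2^2$, which is where the one-sided nature of the bound does the work. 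I would organize the argument by writing $\|\wh{Y}_S\|_2^2 - \|\wh{U}^*\|_2^2 \le \sum_{f \in S}\big(|\wh{Y}_f|^2 - |\wh{U}^*_{h(f)}|^2 \openone[\text{no collision}]\big)_+ + (\text{terms requiring no positive part})$, bound the first piece by the deficit $2\delta^2$ plus collision probability, and absorb the rest into the leakage estimates above, invoking Lemma \ref{lem:perm_property} at each application of Cauchy–Schwarz.
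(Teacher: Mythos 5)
Your proposal follows essentially the same route as the paper's proof: the one-sided reduction (exploiting $|\cdot|_+$) to the buckets hit by the non-colliding elements of $S$, the split into a collision term $\sum_{j\in\Scoll}|\wh{Y}_j|^2$ bounded via approximate pairwise independence, a per-bucket leakage error $\err_f$ with $\EE[|\err_f|^2]\le\frac{10}{B}\|\wh{Y}\|_2^2$ from Lemma \ref{lem:perm_property}, Cauchy--Schwarz over $S$ for the cross term producing the $\sqrt{|S|/B}$ factor, and the $2\delta^2$ filter deficit, with part 2 handled by killing the cross terms in expectation over $\Delta$ and averaging the filter energy over $\pi$. The only (cosmetic) difference is in the accounting of the $\frac{|S|}{B}$ term, which in the paper collects both the collision contribution and the summed $|\err_f|^2$ leakage, not collisions alone.
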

\noindent The proof is given in Appendix \ref{sec:pf_energy_est}.

% We now formally define our energy estimates according to the hashed signals.

%\begin{defn}[Energy estimates]
%    The approximate energy vector $\gamma$ has size $2k_1$, and is constructed by stacking the $\ell_2$ norm of vectors $\wh{U}^r$, where $U^r$ is an $\big(\frac{n}{k_1},B,H)$-hashing of $\wh{Z}^r$ with $B=\frac{4}{\delta^2} k_0$, where $H$ is an $\big(\frac{n}{k_1},B,F'\big)$-flat filter (\emph{cf.}, Definition \ref{def:filterG}) with $F' \ge 10\log\frac{1}{\delta}$ for suitable $\delta \in \big[0,\frac{1}{10}\big]$.  That is, the $r$-th element of $\gamma$ is given by $\gamma^r = \|\wh{U}^r\|_2^2$. 
%    \label{def:energy_approximation}
%\end{defn}

We now present the following lemma, showing that the procedure \textsc{EstimateEnergies} provides us with an estimator satisfying the preconditions of Lemmas \ref{lemm:11} and \ref{lemm:12}.

\medskip
\noindent \textbf{Lemma \ref{lem:gammaproperties}} (\textsc{EstimateEnergies} guarantees -- re-stated from Section \ref{sec:complete_loc}) {\em 
	Given $(n,k_0,k_1)$, the signals $X \in \CC^n$ and $\wh{\chi} \in \CC^n$ with $\|\wh{X} - \wh{\chi}\|_2^2 \ge \frac{1}{\poly(n)} \|\wh{\chi}\|_2$, and the parameter $\delta \in \big(\frac{1}{n},\frac{1}{20}\big)$, the procedure \textsc{EstimateEnergies}$(X,\wh{\chi},n,k_0,k_1$,$\delta)$ returns a vector $\gamma \in \RR^{2k_1}$ such that, for any given set $S^*$ of cardinality at most $10k_0$, we have the following with probability at least $\frac{1}{2}$:
	\begin{enumerate}
		\item $\sum_{r \in [2k_1]} \Big| \|\wh{Z}^r_{S^*}\|_2^2 - \gamma^r \Big|_+ \le  40 \delta \sum_{r \in [2k_1]}  \|\wh{Z}^r\|_2^2$;
		\item $\|\gamma\|_1 \le 10 \sum_{r \in [2k_1]} \|\wh{Z}^r\|_2^2$;
	\end{enumerate}
	where $\{Z^r\}_{r \in [2k_1]}$ is the $(k_1,\delta)$-downsampling of $X - \chi$ (see Definition \ref{def:downsampling}).
	
	Moreover, if $\wh{\chi}$ is $(O(k_0),k_1)$-block sparse, then the sample complexity is $O (\frac{k_0 k_1}{\delta^2} \log^2 \frac{1}{\delta} )$, and the runtime is $O (\frac{k_0 k_1}{\delta^2} \log^2 \frac{1}{\delta} \log^2 n)$.
}
\begin{proof}
    \noindent\textbf{Analysis for the exact hashing sequence:} We start by considering the case that the call to  \textsc{HashToBinsReduced} is replaced by an evaluation of the exact hashing sequence $\wh{U}^{*r}$, i.e., Definition \ref{def:hashing} applied to $Z^r$ resulting from the $(k_1,\delta)$-downsampling of $X - \chi$.  In this case, by applying Lemma \ref{lemm:14} with $\wh{Y} = \wh{Z}^r$, $B = \frac{4}{\delta^2} k_0$ and $S = S^*$ (and hence $|S| \le 10k_0$), the right-hand side of the first claim therein becomes $(5\delta + (\frac{15}{4} + 2)\delta^2)\|\wh{Z}^r\|_2^2 \le 6\delta \|\wh{Z}^r\|_2^2$, since $\delta \le \frac{1}{20}$. By applying the lemma separately for each $r \in [2k_1]$ with $\wh{Y} = \wh{Z}^r$, and summing the corresponding expectations in the two claims therein over $r$, we obtain $\sum_{r \in [2k_1]} \EE\big[ \big| \|\wh{Z}^r_{S^*}\|_2^2 - \|\wh{U}^*\|_2^2 \big|_+ \big] \le 6\delta \sum_{r \in [2k_1]} \|\wh{Z}^r\|_2^2$ and $\sum_{r \in [2k_1]}  \EE\big[ \|\wh{U}^*\|_2^2 \big] \le 3 \sum_{r \in [2k_1]} \|\wh{Z}^r\|_2^2$.  We apply Markov's inequality with a factor of $6$ in the former and $3$ in the latter, to conclude that the quantities $\gamma^{*r} = \|\wh{U}^{*r}\|_2^2$ satisfy
    \begin{gather}
        \sum_{r \in [2k_1]} \Big| \|\wh{Z}^r_{S^*}\|_2^2 - \gamma^{*r} \Big|_+ \le  36 \delta \sum_{r \in [2k_1]}  \|\wh{Z}^r\|_2^2 \label{eq:gamma_exact_1} \\
        \|\gamma^*\|_1 \le 9 \sum_{r \in [2k_1]} \|\wh{Z}^r\|_2^2, \label{eq:gamma_exact_2} 
    \end{gather}
    with probability at least $1/2$. 
    
    \textbf{Incorporating $\bf \frac{1}{n^c}$ error from use of semi-equispaced FFT in \textsc{HashToBinsReduced}:} Since $\wh{U}^r$ is computed using \textsc{HashToBinsReduced}, the energy vector $\gamma$ is different from the exact one $\gamma^*$, and we write
    \begin{equation}
    \begin{split}
    \sum_{r \in [2k_1]} \Big| \|\wh{Z}^r_{S^*}\|_2^2 - \gamma^r \Big|_+ 
    &\le \sum_{r \in [2k_1]} \Big| \|\wh{Z}^r_{S^*}\|_2^2 - \gamma^{*r} \Big|_+ + \big| \gamma^{r} - \gamma^{*r} \big|.
    \end{split}
    \end{equation}
    By substituting $\gamma^r = \|\wh{U}^r\|_2^2$ and $\gamma^{*r} = \|\wh{U}^{*r}\|_2^2$, and using the identity $\big|\|a\|_2^2 - \|b\|_2^2\big| \le 2\|a-b\|_2\cdot\|b\|_2 +  \|a-b\|_2^2$, we can write
    \begin{equation}
    \begin{split}
    \sum_{r \in [2k_1]} \big| \gamma^{r} - \gamma^{*r} \big|
    &\le \sum_{r \in [2k_1]} \Big( 2\|\wh{U}^r - \wh{U}^{*r}\|_2 \|\wh{U}^{*r}\|_2 + \|\wh{U}^r - \wh{U}^{*r}\|_2^2 \Big).
    \end{split} \label{eq:g_diff_init}
    \end{equation}
    Upper bounding the $\ell_2$ norm by the $\ell_{\infty}$ norm times the vector length, we have $\|\wh{U}^r - \wh{U}^{*r}\|_2 \le \sqrt{n}\|\wh{U}^r - \wh{U}^{*r}\|_{\infty} \le n^{-c+1/2} \|\wh{\chi}\|_2$, where the second inequality follows from Lemma \ref{lem:hash2bins}. Moreover, from the definition of $\wh{U}^{*r}$ in Definition \ref{def:hashing} applied to $Z^r$, along with the filter property $\|\wh{G}\|_{\infty}$ in Definition \ref{def:filterG}, it follows that $\|\wh{U}^{*r}\|_2 \le \|\wh{G}\|_{\infty} \|\wh{Z}^r\|_1 \le \sqrt{n} \|\wh{Z}^r\|_2$.  Combining these into \eqref{eq:g_diff_init} gives
    \begin{align}
    \sum_{r \in [2k_1]} \big| \gamma^{r} - \gamma^{*r} \big|
        &\le \sum_{r \in [2k_1]} \Big( 2 n^{-c+1}\|\wh{\chi}\|_2 \|\wh{Z}^r\|_2 +  n^{-2c+1}\|\wh{\chi}\|_2^2 \Big) \nn \\
        &\le 2n^{-c + 2} \sqrt{ \sum_{r \in [2k_1]} \|\wh{\chi}\|_2^2 \cdot \sum_{r \in [2k_1]} \|\wh{Z}^r\|_2^2 } + n^{-2c+1} k_1\|\wh{\chi}\|_2^2 \nn  \\
        &\le 2n^{-c + 3} \|\wh{\chi}\|_2 \sqrt{ \sum_{r \in [2k_1]} \|\wh{Z}^r\|_2^2 } + n^{-2c+2} \|\wh{\chi}\|_2^2. \label{eq:gamma-err}
    \end{align}
    where the second line is by Cauchy-Schwarz, and the third by $k_1 \le n$.
    
    By the second part of Lemma \ref{claim:1} and the assumption $\delta \le \frac{1}{20}$, we have $\sum_{r \in [2k_1]}  \|\wh{Z}^r\|_2^2 \ge \frac{1}{4} \|\wh{X} - \wh{\chi}\|_2^2 \ge \frac{1}{4n^{c'}} \|\wh{\chi}\|_2^2$, where the second equality holds for some $c' > 0$ by the assumption $\|\wh{X} - \wh{\chi}\|_2^2 \ge \frac{1}{\poly(n)} \|\wh{\chi}\|_2$.  Hence, \eqref{eq:gamma-err} gives
	\begin{equation}
    	\sum_{r \in [2k_1]} \big| \gamma^{r} - \gamma^{*r} \big| \le 4\big(n^{-c + 3}n^{{c'}/2} + n^{-2c+2} n^{c'}\big) \sum_{r \in [2k_1]} \|\wh{Z}^r\|_2^2.  \label{eq:gamma-err2}
	\end{equation}
	Since we have chosen $\delta > 1/n$, the coefficient to the summation is upper bounded by $4\delta$ when $c$ is sufficiently large, thus yielding the first part of the lemma upon combining with \eqref{eq:gamma_exact_1}. 
    
    To prove the second part, note that by the triangle inequality,
    \begin{equation}
    \begin{split}
    \|\gamma\|_1 
    &\le \|\gamma^*\|_1 + \Big| \|\gamma\|_1 - \|\gamma^*\|_1 \Big|\\
    &\le 9 \sum_{r \in [2k_1]} \|\wh{Z}^r\|_2^2 + \sum_{r \in [2k_1]} \big| \gamma^{r} - \gamma^{*r} \big|,
    \end{split}
    \end{equation}
    where we have applied \eqref{eq:gamma_exact_2}.  Again applying \eqref{eq:gamma-err2} and noting that the coefficient to the summation is less than one for sufficiently large $c$, the second claim of the lemma follows.
    
    \textbf{Sample complexity and runtime:} The only step that uses samples is the call to \textsc{HashToBinsReduced}.  By Lemma \ref{lem:hash2bins} and the choices $B = \frac{4}{\delta^2} k_0$ and $F = 10\log\frac{1}{\delta}$, this uses $O\big(k_1 F B \log\frac{1}{\delta}\big)= O (\frac{k_0k_1}{\delta^2} \log^2 \frac{1}{\delta})$ samples per call.  
    % Since this line is called $O\big(\log\frac{1}{p}\big)$ times in a loop, this amounts to a total of $O (\frac{k_0 k_1}{\delta^2} \log \frac{1}{\delta} \log \frac{1}{\delta p})$. 
    The time complexity follows by the same argument along the assumption that $\wh{\chi}$ is $(O(k_0),k_1)$-block sparse, with an additional $\log^2 n$ factor following from Lemma \ref{lem:hash2bins}.  Note that the call to \textsc{HashToBinsReduced} dominates the computation of $\gamma^r$, which is $O(k_1 B)$,
\end{proof}

%!TEX root = BlockSparseFT-new.tex

\section{The Block-Sparse Fourier Transform} \label{sec:full_alg}

In this section, we combine the tools from the previous sections to obtain the full sublinear-time block sparse FFT algorithm, and provide its guarantees.

\subsection{Additional Estimation Procedures} \label{sec:utilities}

Before stating the final algorithm, we note the main procedures that it relies on: \textsc{MultiBlockLocate},  \textsc{PruneLocation}, and \textsc{EstimateValues}.  We presented the first of these in Section \ref{sec:ep_sampling}.  The latter two are somewhat more standard, and hence we relegate them to the appendices.  However, for the sake of readability, we provide some intuition behind them here, and state their guarantees.  

We begin with \textsc{PruneLocation}.  The procedure \textsc{MultiBlockLocate} gives us a list of block indices containing the dominant signal blocks with high probability, with a list size $L = O^*\big(k_0 \log k_0 \big)$.  Estimating the values of all of these blocks in every iteration would not only cost $O^*(k_0k_1\log k_0 )$ samples, but would also destroy the sparsity of the input signal: Most of the blocks correspond to noise, and thus the estimation error may dominate the values being estimated.  The \textsc{PruneLocation} primitive is designed to alleviate these issues, pruning $L$ to a list that contains mostly ``signal'' blocks, i.e., blocks that contain a large amount of energy. Some false positives and false negatives occur, but are controlled by Lemma~\ref{lem:prune} below.  The  procedure is given in Algorithm \ref{alg:prune} in Appendix \ref{sec:prune}.

%Hence, we prune the list in order to exclude the blocks that are not signal blocks. Specifically, for every element $j \in L$, we perform a test to determine whether it is a signal block or not.  See Algorithm \ref{alg:prune} in Appendix \ref{sec:prune} for the details.

The following lemma shows that with high probability, the pruning algorithm retains most of the energy in the head elements, while removing most tail elements.

\begin{lem}
    \emph{(\textsc{PruneLocation} guarantees)}
	Given $(n,k_0,k_1)$, a list  of block indices $L$, the parameters $\theta > 0$, $\delta \in \big(\frac{1}{n}, \frac{1}{20}\big)$ and $p \in(0,1)$, and the signals $X \in \CC^n$ and $\wh{\chi} \in \CC^n$ with $\|\wh{X} - \wh{\chi}\|_2 \ge \frac{1}{\poly(n)}\|\wh{\chi}\|_2$, the output  list $L'$ of \textsc{PruneLocation}$(X,\wh{\chi}, L, n, k_0, k_1, \delta , p, \theta)$ has the following properties:
	
	\begin{enumerate}[label={\bf \alph*.}]
		\item Let $\Stail$ denote the tail elements in the signal $\wh{X} - \wh{\chi}$, defined as
		$$\Stail = \Big \{ j \in \Big[\frac{n}{k_1}\Big] \, : \, \|(\wh{X} - \wh{\chi})_{I_j}\|_2 \leq \sqrt{\theta} - \sqrt{\frac{\delta}{k_0}}\|\wh{X} - \wh{\chi}\|_2 \Big \},$$
		where $I_j$ is defined in Definition \ref{def:block}.    Then, we have
		$$\EE \Big[ \big| L' \cap \Stail \big| \Big] \le \delta p \cdot |L \cap \Stail|.$$
		\item Let $\Shead$ denote the head elements in the signal $\wh{X} - \wh{\chi}$, defined as
		$$\Shead = \Big \{ j \in \Big[\frac{n}{k_1}\Big] \, : \, \|(\wh{X} - \wh{\chi})_{I_j}\|_2 \geq \sqrt{\theta} + \sqrt{\frac{\delta}{k_0}} \|\wh{X} - \wh{\chi}\|_2 \Big \}.$$
		Then, we have
		$$\EE \Big[ \sum_{j \in (L \cap \Shead) \backslash L' } \| (\wh{X}-\wh{\chi})_{I_j} \|_2^2 \Big] \leq  \delta p \sum_{j \in L \cap \Shead} \|(\wh{X}-\wh{\chi})_{I_j}\|_2^2.$$
	\end{enumerate}
	Moreover, provided that $\|\wh{\chi}\|_0 = O(k_0k_1)$, the sample complexity is $O(\frac{k_0 k_1}{\delta} \log \frac{1}{\delta p} \log \frac{1}{\delta})$, and the runtime is $O(\frac{k_0 k_1}{\delta} \log \frac{1}{\delta p} \log \frac{1}{\delta} \log n + k_1 \cdot |L| \log\frac{1}{\delta p})$.
    \label{lem:prune}
\end{lem}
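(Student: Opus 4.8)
The plan is to read \textsc{PruneLocation} (Algorithm~\ref{alg:prune}) as a simple thresholding scheme built on top of a block-energy estimator. The procedure computes, for every candidate index $j \in L$, an approximation $\wh{E}_j$ of the block energy $\|(\wh{X}-\wh{\chi})_{I_j}\|_2^2$, and keeps $j$ in $L'$ precisely when $\wh{E}_j$ exceeds $\theta$. The estimate is obtained by hashing the downsampled signals of $X-\chi$ (via \textsc{HashToBinsReduced}) into $B = \Theta(k_0/\delta)$ buckets, repeating $\Theta(\log\frac{1}{\delta p})$ times with fresh randomness, and combining the per-bucket energies across $r \in [2k_1]$ and across repetitions with a median. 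The heart of the argument is a per-block accuracy claim: for each \emph{fixed} $j$, with probability at least $1-\delta p$ over the hashing randomness,
\[
\Big| \sqrt{\wh{E}_j} - \big\| (\wh{X}-\wh{\chi})_{I_j} \big\|_2 \Big| \;\le\; \sqrt{\tfrac{\delta}{k_0}}\;\big\| \wh{X}-\wh{\chi} \big\|_2 .
\]

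To establish this, I would first pass from the exact hashing to the one actually returned by \textsc{HashToBinsReduced}, absorbing the $n^{-c}\|\wh{\chi}\|_2$ error via Lemma~\ref{lem:hash2bins} together with the hypothesis $\|\wh{X}-\wh{\chi}\|_2 \ge \frac{1}{\poly(n)}\|\wh{\chi}\|_2$ — verbatim the argument already used in the proof of Lemma~\ref{lem:gammaproperties}. For the exact hashing, for each $r$ I would use the isolation property (Lemma~\ref{lem:uhat}) to write the bucket value seen by $j$ as $\wh{Z}^r_j$ plus a collision term, bound the expected squared collision term by $O(\tfrac1B)\|\wh{Z}^r\|_2^2$ via Lemma~\ref{lem:perm_property}, and then sum over $r \in [2k_1]$. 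Invoking both parts of Lemma~\ref{claim:1} converts $\tfrac1{2k_1}\sum_r|\wh{Z}^r_j|^2$ into $\|\wh{X}_{I_j}\|_2^2$ up to a multiplicative $(1\pm O(\delta))$ factor plus additive cross-block leakage, and $\tfrac1{2k_1}\sum_r\|\wh{Z}^r\|_2^2$ into $\Theta(\|\wh{X}-\wh{\chi}\|_2^2)$, so the aggregate of collision noise and downsampling distortion is $O(\tfrac{\delta}{k_0})\|\wh{X}-\wh{\chi}\|_2^2$ in squared $\ell_2$; taking square roots gives the stated $\ell_2$ perturbation after fixing constants. A standard Chernoff/median argument over the $\Theta(\log\frac1{\delta p})$ independent repetitions upgrades the constant single-hashing success probability to $1-\delta p$.

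Given the accuracy claim, parts \textbf{a} and \textbf{b} follow by conditioning on the good event for a fixed $j$ and applying linearity of expectation. If $j \in \Stail$ then $\|(\wh{X}-\wh{\chi})_{I_j}\|_2 \le \sqrt{\theta} - \sqrt{\delta/k_0}\,\|\wh{X}-\wh{\chi}\|_2$, so $\sqrt{\wh{E}_j}\le\sqrt\theta$ and $j$ is dropped; hence $\PP[j \in L'] \le \delta p$ for every $j \in L \cap \Stail$, giving $\EE[|L' \cap \Stail|] \le \delta p\,|L \cap \Stail|$. If $j \in \Shead$ then $\|(\wh{X}-\wh{\chi})_{I_j}\|_2 \ge \sqrt{\theta} + \sqrt{\delta/k_0}\,\|\wh{X}-\wh{\chi}\|_2$, so $\sqrt{\wh{E}_j}\ge\sqrt\theta$ and $j$ is kept; hence $\PP[j \notin L'] \le \delta p$ for every $j \in L \cap \Shead$, and summing $\|(\wh{X}-\wh{\chi})_{I_j}\|_2^2\,\PP[j\notin L']$ over $j \in L \cap \Shead$ yields part \textbf{b}. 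The sample complexity and runtime then come from substituting $B = \Theta(k_0/\delta)$, filter parameter $F = \Theta(\log\frac1\delta)$, and $\Theta(\log\frac1{\delta p})$ repetitions into Lemma~\ref{lem:hash2bins} (using that $\wh{\chi}$ is $(O(k_0),k_1)$-block sparse), plus the extra $O(k_1\,|L|\log\frac1{\delta p})$ term for reading off, for each $j \in L$, one bucket value from each of the $2k_1$ hashed signals in each repetition.

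The step I expect to be the main obstacle is the bookkeeping in the per-block accuracy claim: tracking how the multiplicative $(1\pm O(\delta))$ distortion and the additive neighbour-leakage terms of Lemma~\ref{claim:1} combine with the hashing-collision noise, and verifying that the combined error — measured as an additive perturbation of $\|(\wh{X}-\wh{\chi})_{I_j}\|_2$ rather than of its square — stays below $\sqrt{\delta/k_0}\,\|\wh{X}-\wh{\chi}\|_2$ uniformly over $j$. In particular one must check that leakage from a few heavy neighbouring blocks cannot swamp the estimate for a light block, which is where the polynomial decay $|j'-j|^{-(F-1)}$ afforded by the choice $F = \Theta(\log\frac1\delta)$ is used.
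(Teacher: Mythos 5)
Your proposal replaces the paper's estimator with a different one, and the replacement does not work. The actual \textsc{PruneLocation} (Algorithm \ref{alg:prune}) never touches the downsampled signals $Z^r$: it calls \textsc{HashToBins} on the full-length residual with $B = 160\frac{k_0k_1}{\delta}$ buckets, recovers each individual frequency $f\in I_j$ from its own bucket as $\wh{X}'_f+\err^{(t)}_f+\errtilde^{(t)}_f$ with $\EE\big[|\err^{(t)}_f|^2\big]\le\frac{20}{B}\|\wh{X}'\|_2^2$ (Lemma \ref{lem:err_terms}), and sets $W_j^{(t)}=\sum_{f\in I_j}|\wh{X}'_f+\err^{(t)}_f+\errtilde^{(t)}_f|^2$. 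The only contamination of block $j$ is thus the hashing collision noise, whose expected total over the $k_1$ frequencies of the block is $\frac{20k_1}{B}\|\wh{X}'\|_2^2=\frac{\delta}{8k_0}\|\wh{X}'\|_2^2$; Markov's inequality, the $\ell_2$ triangle inequality over the block, and the median over $T=10\log\frac{1}{\delta p}$ independent trials then deliver exactly the additive slack $\sqrt{\delta/k_0}\,\|\wh{X}'\|_2$ appearing in the definitions of $\Stail$ and $\Shead$.

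Your route --- hashing the $(k_1,\delta)$-downsampling into $\Theta(k_0/\delta)$ buckets and converting $\frac{1}{2k_1}\sum_r|\wh{Z}^r_j|^2$ back to $\|\wh{X}'_{I_j}\|_2^2$ via Lemma \ref{claim:1} --- fails at precisely the step you flagged, and not for a fixable bookkeeping reason. The downsampling filter is an $(n,\frac{n}{k_1},F)$-flat filter, so it is only guaranteed to decay for $|f-jk_1|\ge k_1$; frequencies in the \emph{adjacent} blocks $I_{j\pm1}$ within distance $k_1$ of the centre are passed at constant gain. This is why the upper-bound direction of Lemma \ref{claim:1} (see \eqref{eqq:18} in its proof) carries the term $\|\wh{X}_{I_j\cup I_{j-1}\cup I_{j+1}}\|_2^2$ with an $O(1)$ coefficient rather than an $O(\delta)$ one: the $|j'-j|^{-(F-1)}$ decay you invoke only controls $|j'-j|\ge 2$. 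Consequently a tail block $j$ adjacent to a head block is over-estimated by a constant fraction of the neighbour's energy and is retained with probability near $1$, destroying the per-element bound $\PP[j\in L']\le\delta p$ needed for part~(a). A secondary problem is that, even absent leakage, the multiplicative $(1\pm O(\delta))$ distortion in Lemma \ref{claim:1} perturbs $\|\wh{X}'_{I_j}\|_2$ additively by up to $O(\delta)\|\wh{X}'\|_2$ for a heavy block, which exceeds the allowed $\sqrt{\delta/k_0}\,\|\wh{X}'\|_2$ whenever $\delta\gg 1/k_0$ --- the relevant regime, since $\delta$ is an absolute constant in \textsc{ReduceSNR} while $k_0$ is unbounded. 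Both issues disappear when each frequency is estimated at full resolution and the block energy is obtained by summing exactly over $f\in I_j$, which is what the algorithm actually does.
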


\noindent The proof is given in Appendix \ref{sec:prune}.

We are left with the procedure \textsc{EstimateValues}, which is a standard procedure for estimating the values at the frequencies within the blocks after they have been located.  The details are given in Algorithm \ref{alg:est_values} in Appendix \ref{sec:est_vals}.

\begin{lem}
    \emph{(\textsc{EstimateValues} guarantees)}
	For any integers $(n,k_0,k_1)$, list of block indices $L$, parameters $\delta \in \big(\frac{1}{n},\frac{1}{20}\big)$ and $p \in (0,1/2)$, and signals $X \in \CC^n$ and $\wh{\chi} \in \CC^n$ with $\| \wh{X} - \wh{\chi} \|_2 \ge \frac{1}{\poly(n)} \| \wh{\chi} \|_2$, the output $W$ of the function \textsc{EstimateValues}$(X,\wh{\chi}, L, n, k_0, k_1, \delta , p)$ has the following property:
		
		\begin{equation}
		\sum_{f \in \bigcup_{j \in L} I_j} |W_f - (\wh{X}-\wh{\chi})_f|^2 \le \delta \frac{|L|}{3k_0} \| \wh{X}-\wh{\chi} \|_2^2 \nn
		\end{equation}
		with probability at least $1-p$, where $I_j$ is the $j$-th block.  Moreover, provided that $\|\wh{\chi}\|_0 = O(k_0k_1)$, the sample complexity is $O(\frac{k_0 k_1}{\delta} \log \frac{1}{p} \log\frac{1}{\delta})$, and the runtime is $O(\frac{k_0 k_1}{\delta} \log \frac{1}{p} \log\frac{1}{\delta} \log n + k_1 \cdot |L|\log \frac{1}{p})$.
	\label{lem:estimate}
\end{lem}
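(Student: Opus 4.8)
The plan is to estimate the values by the usual hash-and-median scheme, tuned so that each of the (at most $|L|k_1$) frequencies of $\bigcup_{j\in L}I_j$ is isolated in its own bucket with good probability. Concretely, I would set $B=\Theta\big(\frac{k_0k_1}{\delta}\big)$ and $F=10\log\frac1\delta$, take $G$ to be an $(n,B,F)$-flat filter with $O(FB)$ time-domain support (Lemma~\ref{lem:filter_properties}), and repeat the following $R=\Theta\big(\log\frac1p\big)$ times: draw an odd $\sigma_\ell\in[n]$ and a uniform $\Delta_\ell\in[n]$, call $\textsc{HashToBins}(X,\wh\chi,G,n,B,\sigma_\ell,\Delta_\ell)$ to obtain $\wh U^{(\ell)}$, and for each $f\in\bigcup_{j\in L}I_j$ record $W_f^{(\ell)}=\omega_n^{-\sigma_\ell\Delta_\ell f}\,\wh U^{(\ell)}_{h_\ell(f)}\big/\wh G_{o_{\ell,f}(f)}$, where $h_\ell$ and $o_{\ell,f}$ are the hash and offset maps of Definition~\ref{def:hashing} associated with $\sigma_\ell$. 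Finally I would let $W_f$ be the coordinate-wise median (real and imaginary parts separately) of $\{W_f^{(\ell)}\}_{\ell\in[R]}$, and set all remaining entries of $W$ to zero.

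For correctness, fix $f$ and a single repetition. Lemma~\ref{lem:uhat} (applied to $X-\chi$, which is what \textsc{HashToBins} hashes) gives $\wh U^*_{h(f)}=\sum_{f'\in[n]}(\wh X-\wh\chi)_{f'}\,\wh G_{\sigma f'-h(f)\frac nB}\,\omega_n^{\sigma\Delta f'}$, and the $n^{-c}\|\wh\chi\|_2$ discrepancy between $\wh U$ and $\wh U^*$ from Lemma~\ref{lem:hash2bins} is negligible under $\|\wh X-\wh\chi\|_2\ge\frac1{\poly(n)}\|\wh\chi\|_2$ and absorbed exactly as in the proof of Lemma~\ref{lem:gammaproperties}. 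Since $|o_f(f)|\le\frac n{2B}$ by construction of $h$, property~(ii) of Definition~\ref{def:filterG} gives $\wh G_{o_f(f)}\ge 1-(\tfrac14)^{F-1}\ge\tfrac12$, so $W_f^{(\ell)}=(\wh X-\wh\chi)_f+E_f$ where $E_f$ equals $\tfrac1{\wh G_{o_f(f)}}$ times a phased sum of the leakage terms $(\wh X-\wh\chi)_{f'}\wh G_{o_f(f')}$ over $f'\ne f$. Averaging over $\Delta$ kills all cross terms (the phases $\omega_n^{\sigma\Delta(f'-f'')}$ are mean-zero for $f'\ne f''$, as $\sigma$ is odd and $n$ a power of two), leaving $\EE_\Delta|E_f|^2\le 4\sum_{f'\ne f}|(\wh X-\wh\chi)_{f'}|^2|\wh G_{o_f(f')}|^2$; averaging over $\sigma$ and applying Lemma~\ref{lem:perm_property} to the signal $\wh X-\wh\chi$ then yields $\mu:=\EE_{\sigma,\Delta}|E_f|^2\le\frac{40}{B}\|\wh X-\wh\chi\|_2^2$, uniformly in $f$.

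The last and main step is to convert this per-frequency expectation into the required high-probability bound on the \emph{sum} over $\bigcup_{j\in L}I_j$ while paying only a $\log\frac1p$ factor; a union bound over the up to $|L|k_1$ frequencies would instead cost $\log(|L|k_1)=O(\log n)$, so I would argue at the level of moments. Markov gives $\PP[|E_f^{(\ell)}|^2>8\mu]\le\tfrac18$, and since the repetitions are independent, a Chernoff bound on the number of bad repetitions yields a tail $\PP[\,|W_f-(\wh X-\wh\chi)_f|^2>s\,]\le(C\mu/s)^{R/4}$ for an absolute constant $C$; integrating gives $\EE[\,|W_f-(\wh X-\wh\chi)_f|^{2\kappa}\,]=O(\mu^\kappa)$ for every $\kappa\le R/4$. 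Combining this with $\big(\sum_f a_f\big)^\kappa\le(|L|k_1)^{\kappa-1}\sum_f a_f^\kappa$ gives $\EE\big[\big(\sum_{f\in\bigcup_j I_j}|W_f-(\wh X-\wh\chi)_f|^2\big)^\kappa\big]^{1/\kappa}=O\big(|L|k_1\mu\big)=O\big(\frac{\delta|L|}{k_0}\|\wh X-\wh\chi\|_2^2\big)$, and Markov at order $\kappa=\Theta(\log\frac1p)$ (which is $\le R/4$ for the chosen $R=\Theta(\log\frac1p)$) delivers $\sum_{f\in\bigcup_j I_j}|W_f-(\wh X-\wh\chi)_f|^2\le\delta\frac{|L|}{3k_0}\|\wh X-\wh\chi\|_2^2$ with probability at least $1-p$, after fixing $B$ to be a large enough constant multiple of $\frac{k_0k_1}{\delta}$. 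The resource bounds are then bookkeeping: the only samples come from the $R$ calls to \textsc{HashToBins}, which by Lemma~\ref{lem:hash2bins}(i) each cost $O(FB)=O\big(\frac{k_0k_1}{\delta}\log\frac1\delta\big)$ samples and $O\big(\frac{k_0k_1}{\delta}\log\frac1\delta\log n\big)$ time (using that $\wh\chi$ is $(O(k_0),k_1)$-block sparse, hence $O(k_0k_1)$-sparse, so the $k$ term in that lemma is absorbed), and forming the $|L|k_1$ medians adds a further $O(k_1|L|\log\frac1p)$; summing over the $R=\Theta(\log\frac1p)$ repetitions gives exactly the claimed complexities. The one genuinely delicate point is the moment argument of this paragraph — everything else is a routine adaptation of standard sparse-FFT value estimation.
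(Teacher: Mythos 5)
Your proposal is correct, and the algorithm and the per-repetition analysis coincide with the paper's: the same choice $B=\Theta(k_0k_1/\delta)$, $F=10\log\frac1\delta$, $O(\log\frac1p)$ independent hashings, the decomposition $W_f^{(\ell)}=(\wh X-\wh\chi)_f+\err_f+\errtilde_f$ with $\EE_{\sigma,\Delta}[|\err_f|^2]=O(\|\wh X-\wh\chi\|_2^2/B)$ via Parseval and Lemma~\ref{lem:perm_property}, the coordinate-wise median, and the absorption of the $n^{-c}\|\wh\chi\|_2$ semi-equispaced error. The one place where you genuinely diverge is the step you correctly flag as delicate: converting the per-frequency median tail $\PP[|W_f-(\wh X-\wh\chi)_f|^2>s]\le(C\mu/s)^{\Omega(R)}$ into a probability-$(1-p)$ bound on the sum over the $|L|k_1$ frequencies without a union bound. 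You do this by extracting $\kappa$-th moments ($\kappa=\Theta(\log\frac1p)$) from the tail, combining them via $(\sum_f a_f)^\kappa\le(|L|k_1)^{\kappa-1}\sum_f a_f^\kappa$, and applying Markov at order $\kappa$. The paper instead fixes a single threshold $\tau=\Theta(\|\wh X-\wh\chi\|_2^2/B)$, writes each error as $\tau+\big||W_f-(\wh X-\wh\chi)_f|^2-\tau\big|_+$, integrates the tail to show the expected excess is only $O(p\tau)$ per frequency, and applies first-moment Markov to the sum of excesses. Both routes exploit the same polynomial tail decay and pay only the $\log\frac1p$ factor; the paper's is slightly more elementary (no moment inequalities), while yours is a standard and equally valid alternative. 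Two small points to tighten in your version: the tail integral $\int s^{\kappa-1}(C\mu/s)^{R/4}\,ds$ converges only for $\kappa$ strictly below $R/4$, so take $R$ a slightly larger multiple of $\log\frac1p$ than $4\kappa$; and the implied constant in $\EE[|W_f-(\wh X-\wh\chi)_f|^{2\kappa}]=O(\mu^\kappa)$ must be verified to be of the form $C^\kappa$ (it is, from the geometric tail) so that the $\kappa$-th root remains $O(\mu)$.
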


\noindent The proof is given in Appendix \ref{sec:est_vals}.

\subsection{Statement of the Algorithm and Main Result} \label{sec:final_statement}

Our overall block-sparse Fourier transform algorithm is given in Algorithm \ref{alg:final}.  It first calls \textsc{ReduceSNR}, which performs an iterative procedure that picks up high energy components of the signal, subtracts them from the original signal, and then recurses on the residual signal $X^{(i)} = X - \chi^{(i)}$. Once this is done, the procedure \textsc{RecoverAtConstSNR} performs a final ``clean-up'' step to obtain the $(1+O(\e))$-approximation guarantee.

\begin{algorithm}
	\caption{Block-sparse Fourier transform.}\label{euclid-1}
	\begin{algorithmic}[1]
		
		\Procedure{BlockSparseFT}{$X, n, k_0, k_1 , \SNR', \nu^2 , \e$}  \\
    		\Comment $X \in \CC^n$ is approximately $(k_0,k_1)$-block sparse \\
    		\Comment $(\SNR',\nu^2)$ are upper bounds on $(\SNR,\mu^2)$ from Definition \ref{def:SNR} \\
    		\Comment $\e$ is the parameter for $(1+O(\e))$-approximate recovery
		
		\State $\wh{\chi} \gets$ \textsc{ReduceSNR}($X, n, k_0, k_1, \SNR', \nu^2$). 
		
		\State $\wh{\chi} \gets$ \textsc{RecoverAtConstSNR}($X, \wh{\chi} , n,  k_0, k_1 , \nu^2, \e$).
		
		\State \textbf{return} $\wh{\chi}$
		\EndProcedure
		
		\Procedure{ReduceSNR}{$X, n, k_0, k_1, \SNR', \nu^2$} \Comment Iteratively locate/estimate to reduce SNR
		
		\State $T \gets \log \SNR'$
		\State $\delta \gets $ small absolute constant
		\State $p \gets \frac{\delta}{\log^2 \frac{k_0}{\delta} \log^4 \SNR'}$ \Comment{Failure probability for subroutines}
		\State $\wh{\chi}^{(0)} \gets 0$ \Comment $\wh{\chi}^{(t)}$ is our current estimate of $\wh{X}$
		\For{$t\in\{1,\dotsc,T\}$}
		\State $L \gets$ \textsc{MultiBlockLocate}($X, \wh{\chi}^{(t-1)}, n, k_1, k_0 , \delta, p$)
		
		\State $\theta \gets 10 \cdot 2^{-t}\cdot \nu^2 \SNR' $ \Comment{Threshold for pruning}
		\State $L' \gets$ \textsc{PruneLocation}($X, \wh{\chi}^{(t-1)}, L, n, k_0, k_1 , \delta , p, \theta$)
		
		\State $\wh{\chi}^{(t)} \gets \wh{\chi}^{(t-1)} +$ \textsc{EstimateValues}($X, \wh{\chi}^{(t-1)}, L', n, k_0, k_1, \delta , p$)
		
		\EndFor

		\State \textbf{return} $\wh{\chi}^{T}$
		\EndProcedure
		
		\Procedure{RecoverAtConstSNR}{$X, \wh{\chi}, n, k_0, k_1, \e$} \Comment A final ``clean-up'' step
		
		\State $\eta \gets \text{ small absolute constant }$
		\State $p \gets \frac{ \eta \e } {\log^2 \frac{k_0}{\e} }$\Comment{Upper bound on failure probability for subroutines}
		
		\State $L \gets$ \textsc{MultiBlockLocate}($X, \wh{\chi}, n, k_1, k_0 , \e^2, p$)
		
		\State $\theta \gets 200 \e \nu^2 $
		\State $L' \gets$ \textsc{PruneLocation}($X, \wh{\chi}, L, n, k_0, k_1 , \e , p, \theta$)
		
		\State $W \gets $ \textsc{EstimateValues}($X, \wh{\chi}, L', n, 3k_0/\e, k_1, \e , p$)
		\State $\wh{\chi}' \gets W + \wh{\chi}$

		\State \textbf{return} $\wh{\chi}'$
		\EndProcedure
		
	\end{algorithmic}
	\label{alg:final}
	
\end{algorithm}

With these definitions in place, we can now state our final result, which formalizes Theorem \ref{thm:final}.

\medskip
\noindent \textbf{Theorem \ref{thm:final}} (Upper bound -- formal version) {\em 
	Given $(n,k_0,k_1)$, the parameter $\e \in \big(\frac{1}{n}, \frac{1}{20}\big)$, and the signal $X \in \CC^n$, if $X$, $\SNR'$, $\mu^2$, and $\nu^2$ satisfy the following for $(\mu^2,\SNR)$ given in Definition \ref{def:SNR}:
	\begin{enumerate}
		\item $\mu^2 \le  \nu^2$;
		\item $\|\wh{X}\|_2^2 \le (k_0 \nu^2) \cdot \SNR'$;
		\item $\SNR' = O(\poly(n))$;
		\item $ \mu^2 \ge \frac{\|\wh{X}\|_2^2}{\poly(n)}$;
	\end{enumerate}
	then with probability at least $0.8$, the procedure \textsc{BlockSparseFT}$(X, n, k_0, k_1 , \SNR', \nu^2 , \e)$ satisfies the following: 
	
	(i) The output $\wh{\chi}$ satisfies
    \begin{equation}
	\|\wh{X}-\wh{\chi}\|_2^2  \le k_0 (\mu^2 + O(\e \nu^2)). \nonumber
	\end{equation}
	
	(ii) The sample complexity is is $O^*(k_0 \log (1+k_0) \log \SNR' \log n + k_0 k_1 \log\SNR' + \frac{k_0}{\e^2} \log (1+k_0) \log n + \frac{k_0 k_1}{\e^4} )$, and the runtime is $O^*( k_0\log (1+k_0) \log \SNR' \log^2 n + k_0k_1 \log\SNR' \log^3 n + \frac{k_0}{\e^2}  \log (1+k_0) \log^2 n + \frac{k_0 k_1}{\e^4} \log^2 n + \frac{k_0k_1}{\e^2} \log^3 n)$.
}

The assumptions of the theorem are essentially that we know upper bounds on the tail noise $\mu^2$ and SNR.  Moreover, in order to get the $(1+O(\e))$-approximation guarantee, the former upper bound should be tight to within a constant factor.

In the remainder of the section, we provide the proof of Theorem \ref{thm:final}, deferring the technical details to the appendices.

\paragraph{Guarantees for \textsc{ReduceSNR} and \textsc{RecoverAtConstSNR}.}

The following lemma proves the success of the function \textsc{ReduceSNR}.  We again recall the definitions of $\Err^2$, $\mu^2$, and $\SNR$ in Definition \ref{def:SNR}.

\begin{lem} \label{lem:reduceSNR}
    \emph{(\textsc{ReduceSNR} guarantees)}
	Given $(n,k_0,k_1)$, parameters $(\nu,\SNR')$, and a signal $X \in \CC^n$, if $X$, $\SNR'$, and $\nu^2$ satisfy the following for $(\mu^2,\SNR)$ given in Definition \ref{def:SNR}:
	\begin{enumerate}
		\item $\mu^2 \le  \nu^2$;
		\item $\|\wh{X}\|_2^2 \le (k_0 \nu^2) \cdot \SNR'$;
        \item $\SNR' = O(\poly(n))$;
        \item $ \nu^2 \ge \frac{\|\wh{X}\|_2^2}{\poly(n)}$;
	\end{enumerate}
	then the procedure \textsc{ReduceSNR}$(X, n, k_0, k_1 , \SNR', \nu^2)$ satisfies the following guarantees with probability at least $0.9$ when the constant $\delta$ therein is sufficiently small: 
	
	(i) The output $\wh{\chi}^T$ satisfies
	\begin{gather}
		\wh{\chi}^T \text{ \em is } (3k_0,k_1)\text{\em-block sparse} \nn \\ 
		\|\wh{X}-\wh{\chi}^T\|_2^2  \le 100 k_0 \nu^2. \nn
	\end{gather}
	
	(ii) The number of samples used is $O^*(k_0 \log (1+ k_0) \log \SNR'  \log n + k_0 k_1 \log\SNR')$, and the runtime is $O^*( k_0\log(1+ k_0) \log \SNR' \log^2 n + k_0k_1 \log\SNR' \log^3 n )$.
    % Moreover, the sample complexity of the algorithm is $O(\frac{k_0}{\delta} \log^2 \SNR' \cdot  \log \frac{k_0}{\delta}  \log^4 \frac{1}{\delta p} + \frac{k_0 k_1}{\delta^2} \log \frac{1}{\delta p} \log \frac{1}{\delta} \log\SNR')$, and the runtime is $O( \frac{k_0 k_1}{\delta^2} \log\frac{1}{\delta p} \log^2 n + \frac{k_0}{\delta}  \log \frac{k_0}{\delta} \log \SNR' \cdot \log^4 \frac{1}{\delta p} \log^2 n)$, where $p$ and $\delta$ are local parameters defined inside the procedure.
\end{lem}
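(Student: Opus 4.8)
The plan is to analyze \textsc{ReduceSNR} by induction over its $T=\log\SNR'$ iterations, tracking the current estimate $\wh\chi^{(t)}$ and the residual $\wh X^{(t)}:=\wh X-\wh\chi^{(t)}$. Write $\theta^{(t)}=10\cdot 2^{-t}\nu^2\SNR'$ for the iteration-$t$ pruning threshold, fix a set $S_0$ of $k_0$ blocks achieving $\Err^2(\wh X,k_0,k_1)$ (so $\sum_{j\notin S_0}\|\wh X_{I_j}\|_2^2=k_0\mu^2\le k_0\nu^2$ by assumption~1), and let $L^{(t)},L'^{(t)}$ be the lists returned by \textsc{MultiBlockLocate} and \textsc{PruneLocation}. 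I will prove, for suitable absolute constants $\alpha,C,c_L$ with $\alpha+C\le 100$, the joint invariant: (a) $\wh\chi^{(t)}$ is $(3k_0,k_1)$-block sparse; (b) $\|\wh X^{(t)}\|_2^2\le \alpha k_0\, 2^{-t}\nu^2\SNR'+C k_0\nu^2$; (c) $|L'^{(t)}|\le c_L k_0$. The base case $t=0$ is immediate from $\wh\chi^{(0)}=0$ and assumption~2, and at $t=T$ the invariant yields conclusion~(i): (a) is the block-sparsity claim, and since $2^{-T}\nu^2\SNR'=\nu^2$, (b) gives $\|\wh X^{(T)}\|_2^2\le(\alpha+C)k_0\nu^2\le 100k_0\nu^2$. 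Throughout I condition on the event $\mathcal G$ that in every iteration $t$ the relevant success events of Lemmas~\ref{lem:multi_locate}(2), \ref{lem:prune}, and \ref{lem:estimate} hold (the in-expectation parts being promoted to probability-$(1-O(p))$ statements via Markov with slack $1/p$); since assumption~3 forces $p=\Omega(1/\poly\log n)$, a union bound over the $T$ iterations gives $\PP[\mathcal G]\ge 1-o(1)$. (The remaining preconditions — $\|\wh X-\wh\chi^{(t-1)}\|_2\ge \|\wh\chi^{(t-1)}\|_2/\poly(n)$ and a dithered DC coefficient — are maintained by the construction; if the former ever fails, the residual is already negligible compared to $k_0\nu^2$ by assumption~4, and one checks no later iteration re-inflates it, so the invariant holds trivially.)

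For the inductive step on $\mathcal G$, decompose $\|\wh X^{(t)}\|_2^2=\sum_{j\in L'^{(t)}}\|(\wh X^{(t-1)}-W^{(t)})_{I_j}\|_2^2+\sum_{j\notin L'^{(t)}}\|\wh X^{(t-1)}_{I_j}\|_2^2$, where $W^{(t)}$ (the output of \textsc{EstimateValues}) is supported on the blocks of $L'^{(t)}$. The first sum is at most $\delta\,\frac{|L'^{(t)}|}{3k_0}\|\wh X^{(t-1)}\|_2^2$ by Lemma~\ref{lem:estimate}. For the second, take $S^*=S_0\cup\supp_{\mathrm{block}}(\wh\chi^{(t-1)})$, which has size at most $4k_0\le 10k_0$ by (a), and bound its parts: blocks outside $S^*$ contribute $\sum_{j\notin S^*}\|\wh X_{I_j}\|_2^2\le k_0\mu^2\le k_0\nu^2$ (since $\wh X=\wh X^{(t-1)}$ off $\supp(\wh\chi^{(t-1)})$ and $S^*\supseteq S_0$); blocks in $S^*\setminus L^{(t)}$ contribute at most $200\sqrt\delta\|\wh X^{(t-1)}\|_2^2$ by Lemma~\ref{lem:multi_locate}(2) (valid for this $S^*$ by obliviousness to $S^*$); the head blocks of $(S^*\cap L^{(t)})\setminus L'^{(t)}$ contribute at most $\delta\|\wh X^{(t-1)}\|_2^2$ by Lemma~\ref{lem:prune}(b); and each remaining (non-head) block of $(S^*\cap L^{(t)})\setminus L'^{(t)}$ has energy at most $(\sqrt{\theta^{(t)}}+\sqrt{\delta/k_0}\,\|\wh X^{(t-1)}\|_2)^2\le 2\theta^{(t)}+2\delta\|\wh X^{(t-1)}\|_2^2/k_0$ with at most $|S^*|\le 4k_0$ of them, contributing at most $8k_0\theta^{(t)}+8\delta\|\wh X^{(t-1)}\|_2^2$. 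Using (c) to replace $|L'^{(t)}|/k_0$ by $c_L$, substituting the inductive bound (b) for $\|\wh X^{(t-1)}\|_2^2$ together with $\theta^{(t)}=10\cdot 2^{-t}\nu^2\SNR'$, and choosing $\alpha$ a large enough absolute constant and $\delta$ small enough, the total is at most $\alpha k_0\, 2^{-t}\nu^2\SNR'+C k_0\nu^2$, establishing (b) at step $t$. For (c): every $j\in L'^{(t)}$ is a head or gap block of residual energy at least $\theta^{(t)}/4$ — using $\sqrt{\delta/k_0}\,\|\wh X^{(t-1)}\|_2\le\tfrac12\sqrt{\theta^{(t)}}$, which follows since (b) at $t-1$ gives $\|\wh X^{(t-1)}\|_2^2=O(k_0\theta^{(t)})$ for $t\le T$ — or one of the $O(\delta p\, k_0\log k_0)$ tail blocks retained in expectation (Lemma~\ref{lem:prune}(a) and Markov); hence $|L'^{(t)}|\le 4\|\wh X^{(t-1)}\|_2^2/\theta^{(t)}+o(k_0)\le c_L k_0$.

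The block-sparsity step (a) is the heart of the proof. Since $\wh\chi^{(t)}=\wh\chi^{(t-1)}+W^{(t)}$ with $W^{(t)}$ supported on the blocks of $L'^{(t)}$, it suffices to bound the size of $\bigcup_{t\le T}(L'^{(t)}\setminus\supp_{\mathrm{block}}(\wh\chi^{(t-1)}))$. If $j$ is newly added at iteration $t$, then $j\notin\supp(\wh\chi^{(t-1)})$ forces its residual energy to equal $\|\wh X_{I_j}\|_2^2$, and $j\in L'^{(t)}$ forces this (by the head/gap bound from (c)) to be at least $\theta^{(t)}/4\ge\theta^{(T)}/4=\tfrac52\nu^2$. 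Hence every newly added block lies in the \emph{fixed} set $S_0\cup\{\,j\notin S_0:\|\wh X_{I_j}\|_2^2\ge\tfrac52\nu^2\,\}$, whose size is at most $k_0+\frac{\sum_{j\notin S_0}\|\wh X_{I_j}\|_2^2}{(5/2)\nu^2}\le k_0+\frac{k_0\mu^2}{(5/2)\nu^2}\le\tfrac75 k_0$ by assumption~1; adding the $o(k_0)$ tail blocks that may be retained over all iterations and the $O(1)$ dithered DC coordinate keeps $\wh\chi^{(t)}$ within $(3k_0,k_1)$-block sparsity, closing the induction. For part~(ii): on $\mathcal G$ the invariant gives $\|\wh\chi^{(t-1)}\|_0=O(k_0k_1)$, so per iteration Lemma~\ref{lem:multi_locate} costs $O^*(k_0\log(1+k_0)\log n+k_0k_1)$ expected samples and $O^*(k_0\log(1+k_0)\log^2 n+k_0k_1\log^3 n)$ expected time (for constant $\delta$), while each of Lemmas~\ref{lem:prune} and \ref{lem:estimate} costs $O^*(k_0k_1)$ samples and $O^*(k_0k_1\log n)$ time (using $\EE|L|=O^*(k_0\log(1+k_0))$); summing over $T=\log\SNR'$ iterations and applying Markov to the totals gives the stated sample and runtime bounds, with a further small constant loss in probability, so that all conclusions hold with probability at least $0.9$ once $\delta$ is a sufficiently small absolute constant.

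\textbf{Main obstacle.} The delicate point is exactly the block-sparsity invariant~(a): naively, each of the $\log\SNR'$ iterations could introduce $\Theta(k_0)$ fresh support blocks, yielding only $(k_0\log\SNR',k_1)$-block sparsity. The fix is the observation above that a freshly added block must carry original energy $\Omega(\theta^{(T)})=\Omega(\nu^2)$, and — crucially using that $X$ is approximately block sparse — that only $O(\mu^2 k_0/\nu^2)=O(k_0)$ blocks outside $S_0$ clear that bar; this is what makes the threshold schedule $\theta^{(t)}=10\cdot 2^{-t}\nu^2\SNR'$ interlock with the tail-noise hypothesis $\mu^2\le\nu^2$. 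A secondary difficulty is bookkeeping: several subroutine guarantees are only in expectation, so each per-iteration energy-loss term must be promoted to a high-probability bound with slack $1/p$ and union-bounded across all $T$ iterations without the accumulated failure probability, or the accumulated $O(\delta)$ and $O(k_0\theta^{(t)})$ terms, overrunning the final $100k_0\nu^2$ budget.
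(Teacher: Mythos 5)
Your proposal is correct and is structurally the same as the paper's proof: an induction over the $T=\log\SNR'$ iterations with a three-part invariant (block sparsity of $\wh{\chi}^{(t)}$, geometric decay of $\|\wh{X}-\wh{\chi}^{(t)}\|_2^2$, and an $O(k_0)$ bound on the pruned list), the same four-way decomposition of $\sum_{j\notin L'}\|(\wh{X}-\wh{\chi}^{(t-1)})_{I_j}\|_2^2$ into ``outside $S^*$'', ``missed by location'', ``head blocks dropped by pruning'', and ``non-head blocks below threshold'', and the same probability bookkeeping. The one genuine (minor) divergence is in how the support growth is bounded. The paper folds every block with $\|\wh{X}_{I_j}\|_2^2\ge\mu^2$ into $S_0$ up front (so $|S_0|\le 2k_0$) and then shows that every block outside $S_t$ is a tail block for \textsc{PruneLocation}, so the \emph{only} new blocks are tail retentions, at most $k_0/T$ per iteration. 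You instead argue that a newly added non-tail block must carry original energy at least $\theta^{(T)}/4=\tfrac52\nu^2$, and count such blocks via $\sum_{j\notin S_0}\|\wh{X}_{I_j}\|_2^2=k_0\mu^2\le k_0\nu^2$; this is a valid alternative and buys a slightly tighter constant on that piece ($0.4k_0$ rather than $k_0$ extra blocks), at the cost of still needing the tail-retention count separately. On that last point your write-up is too casual: the retained tail blocks are \emph{not} $o(k_0)$ for free. One must actually carry out the computation $\EE[|L'\cap\Stail|]\le\delta p\cdot\EE[|L|]$ with $\EE[|L|]=O\big(\tfrac{k_0}{\delta}\log\tfrac{k_0}{\delta}\log\tfrac{1}{p}\log^2\tfrac{1}{\delta p}\big)$, sum over $T$ iterations, and verify that the specific choice $p=\delta/(\log^2\tfrac{k_0}{\delta}\log^4\SNR')$ beats the $T\cdot\log^3\tfrac{1}{\delta p}$ blow-up --- this is precisely where the algorithm's unusual parameter choice earns its keep, and it yields a total of $O(\delta k_0)$ rather than $o(k_0)$; your final count $\tfrac{7}{5}k_0+O(\delta k_0)+O(1)\le 3k_0$ still closes, but the step should be made explicit. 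The handling of the $1/\poly(n)$ preconditions by dithering is likewise glossed over but matches the paper's treatment in spirit.
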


\noindent The proof is given in Appendix \ref{sec:pf_reduce_snr}

The following lemma proves the success of the function \textsc{RecoverAtConstSNR}.  

\begin{lem} 	\label{lem:constSNR}
    \emph{(\textsc{RecoverAtConstSNR} guarantees)}
	Given $(n,k_0,k_1)$, parameters $\nu^2 \ge \frac{\|\wh{X}\|_2^2}{\poly(n)}$ and $\e \in \big(\frac{1}{n}, \frac{1}{20}\big)$, and the signals $X \in \CC^n$ and $\wh{\chi} \in \CC^n$ satisfying
	\begin{enumerate}
		\item $\Err^2 (\wh{X} - \wh{\chi} , 10k_0 , k_1) \le  k_0 \nu^2$;
		\item $\|\wh{X}- \wh{\chi}\|_2^2 \le 100k_0 \nu^2$;
	\end{enumerate}
	the procedure  \textsc{RecoverAtConstSNR}$(X, \wh{\chi}, n, k_0, k_1, \e)$ satisfies the following guarantees with probability at least $0.9$ when the constant $\eta$ therein is sufficiently small: (i) The output $\wh{\chi}'$ satisfies
    \begin{equation}
	\|\wh{X}-\wh{\chi}'\|_2^2 \le \Err^2 (\wh{X} - \wh{\chi} , 10k_0 , k_1) + (4 \cdot 10^5) \e k_0 \nu^2.
	\end{equation}
	(ii) If $\wh{\chi}$ is $(O(k_0),k_1)$-block sparse, then the number of samples used is $ O^*( \frac{k_0}{\e^2} \log(1+ k_0) \log n + \frac{k_0 k_1}{\e^4} )$ and the runtime is $O^*( \frac{k_0}{\e^2}  \log(1+k_0) \log^2 n + \frac{k_0 k_1}{\e^4} \log^2 n + \frac{k_0k_1}{\e^2} \log^3 n )$.	
\end{lem}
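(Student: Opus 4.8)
The plan is to treat $\wh{X}-\wh{\chi}$ as a fresh constant-SNR input and show that a single location/prune/estimate pass drives the residual energy down to $\Err^2(\wh{X}-\wh{\chi},10k_0,k_1)+O(\e k_0\nu^2)$. Write $\wh{X}':=\wh{X}-\wh{\chi}$, let $S^*$ be the set of $10k_0$ blocks attaining $\Err^2(\wh{X}',10k_0,k_1)$ (so $|S^*|\le 10k_0$ and $\sum_{j\notin S^*}\|\wh{X}'_{I_j}\|_2^2=\Err^2(\wh{X}',10k_0,k_1)\le k_0\nu^2$ by assumption~1), and recall $\|\wh{X}'\|_2^2\le 100k_0\nu^2$ from assumption~2. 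Since $\wh{\chi}'=\wh{\chi}+W$ with $W$ supported on $\bigcup_{j\in L'}I_j$,
\[
\|\wh{X}-\wh{\chi}'\|_2^2 = \sum_{j\notin L'}\|\wh{X}'_{I_j}\|_2^2 + \sum_{f\in\bigcup_{j\in L'}I_j}|\wh{X}'_f - W_f|^2 \le \Err^2(\wh{X}',10k_0,k_1) + (A) + (B),
\]
where $(A):=\sum_{j\in S^*\backslash L'}\|\wh{X}'_{I_j}\|_2^2$ and $(B):=\sum_{f\in\bigcup_{j\in L'}I_j}|\wh{X}'_f-W_f|^2$; it thus suffices to prove $(A)+(B)=O(\e k_0\nu^2)$ with probability $\ge 0.9$.

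First I would control the list size. The call to \textsc{PruneLocation} uses $\delta=\e$ and threshold $\theta=200\e\nu^2$; using $\|\wh{X}'\|_2^2\le 100k_0\nu^2$ one checks $\sqrt{\theta}-\sqrt{\e/k_0}\,\|\wh{X}'\|_2\ge c\sqrt{\e}\,\nu>0$ for an absolute constant $c$, so every block outside $\Stail$ has energy $\ge c^2\e\nu^2$ and hence there are only $O(\|\wh{X}'\|_2^2/(\e\nu^2))=O(k_0/\e)$ of them. Combined with the structure of \textsc{PruneLocation} (which keeps only above-threshold blocks) and Lemma~\ref{lem:prune}(a) together with Markov's inequality to control the surviving $\Stail$ blocks, this gives $|L'|=O(k_0/\e)$ with high probability. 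Feeding this into Lemma~\ref{lem:estimate}, invoked with sparsity parameter $3k_0/\e$ and accuracy $\e$, yields $(B)\le \frac{\e^2|L'|}{9k_0}\|\wh{X}'\|_2^2 = O(\e\|\wh{X}'\|_2^2) = O(\e k_0\nu^2)$ with probability $\ge 1-p$.

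Next I would bound $(A)$ by splitting $S^*\backslash L' = (S^*\backslash L)\cup\big((S^*\cap L\cap\Shead)\backslash L'\big)\cup\big((S^*\cap L\backslash\Shead)\backslash L'\big)$. The call \textsc{MultiBlockLocate}$(\cdot,\e^2,p)$ and Lemma~\ref{lem:multi_locate}(2) with $\delta=\e^2$ give $\sum_{j\in S^*\backslash L}\|\wh{X}'_{I_j}\|_2^2\le 200\e\,\|\wh{X}'\|_2^2 = O(\e k_0\nu^2)$ with probability $\ge 1-p$. Lemma~\ref{lem:prune}(b) with $\delta=\e$ gives expected head loss $\le \e p\,\|\wh{X}'\|_2^2$, so $\sum_{j\in(S^*\cap L\cap\Shead)\backslash L'}\|\wh{X}'_{I_j}\|_2^2\le \e\,\|\wh{X}'\|_2^2 = O(\e k_0\nu^2)$ with probability $\ge 1-p$ by Markov. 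Finally, every $j\in S^*\backslash\Shead$ has $\|\wh{X}'_{I_j}\|_2 < \sqrt{\theta}+\sqrt{\e/k_0}\,\|\wh{X}'\|_2 = O(\sqrt{\e}\,\nu)$, so there are at most $|S^*|=10k_0$ such blocks and their total energy is $O(\e k_0\nu^2)$ deterministically. Summing and taking a union bound over the $O(1)$ failure events — each of probability $O(p)$ or a constant-factor Markov failure that can be made $\le 0.01$, with $p=\Theta(\e/\log^2\tfrac{k_0}{\e})$ negligible — gives $\|\wh{X}-\wh{\chi}'\|_2^2\le \Err^2(\wh{X}',10k_0,k_1)+(4\cdot10^5)\e k_0\nu^2$ with probability $\ge 0.9$; the generous constant absorbs all implied constants. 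Part~(ii) then follows by adding the sample and runtime costs from Lemmas~\ref{lem:multi_locate}, \ref{lem:prune}, \ref{lem:estimate} under the chosen parameters (using $\|\wh{\chi}\|_0=O(k_0k_1)$, $\EE[|L|]=O^*(\tfrac{k_0}{\e^2}\log(1+k_0))$, and $|L'|=O(k_0/\e)$), with the \textsc{MultiBlockLocate} contribution $O^*(\tfrac{k_0}{\e^2}\log(1+k_0)\log n + \tfrac{k_0k_1}{\e^4})$ dominating the sample complexity and the analogous expression dominating the runtime.

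I expect the main obstacle to be the bookkeeping around $(A)$ and $|L'|$: the pruning threshold $\theta=\Theta(\e\nu^2)$ must be low enough that essentially none of the head energy located by \textsc{MultiBlockLocate} is discarded, yet high enough that the surviving list has size $O(k_0/\e)$ — which is exactly what $(B)$ requires after \textsc{EstimateValues} is run with sparsity $3k_0/\e$ — and the ``gray-zone'' blocks of $S^*$ lying between $\Shead$ and $\Stail$ must be shown individually $O(\e\nu^2)$-light so that the $O(k_0)$ of them in $S^*$ contribute only $O(\e k_0\nu^2)$. The remaining minor points — the precondition of Lemma~\ref{lem:multi_locate} that $\wh{\chi}_0$ be spread over a tiny interval, and the $\poly(n)$ lower bounds on $\|\wh{X}-\wh{\chi}\|_2$ needed by the semi-equispaced FFT steps — are handled exactly as in the preceding sections.
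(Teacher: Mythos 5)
Your proposal is correct and follows essentially the same route as the paper's proof: the same three success events for \textsc{MultiBlockLocate}, \textsc{PruneLocation}, and \textsc{EstimateValues} with the same parameter choices, and the same decomposition of the residual into the estimation error on $L'$, the head energy missed by location/pruning (split across $S^*\backslash L$, the $\Shead$ blocks lost in pruning, and the $O(k_0)$ ``gray-zone'' blocks each of energy $O(\e\nu^2)$), and the tail $\Err^2$. The only cosmetic difference is that you bound $|L'|=O(k_0/\e)$ by directly counting non-$\Stail$ blocks via their $\Omega(\e\nu^2)$ energy, whereas the paper routes this through an auxiliary set $S_\e$; both yield the $O(k_0/\e)$ list size needed for the \textsc{EstimateValues} step.
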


\noindent The proof is given in Appendix \ref{sec:pf_constant_snr}.

\paragraph{Proof of Theorem \ref{thm:final}.}
We are now in a position to prove Theorem \ref{thm:final} via a simple combination of Lemmas \ref{lem:reduceSNR} and \ref{lem:constSNR}.

\paragraph {Success event associated with \textsc{ReduceSNR}:} Define a successful run of \textsc{ReduceSNR}($X, n, k_0, k_1 , \SNR, \nu^2$) to mean mean the following conditions on the output $\wh{\chi}^T$:
\begin{gather}
	\|\wh{X}-\wh{\chi}^T\|_2^2  \le 100 k_0 \nu^2 \nn \\
	\Err^2(\wh{X}-\wh{\chi}^T, 10k_0 , k_1) \le k_0\mu^2. \nn
\end{gather}
By Lemma \ref{lem:reduceSNR}, it follows that the probability of having a successful run of \textsc{ReduceSNR} is at least $0.9$.  Note that the second condition is not explicitly stated in Lemma \ref{lem:reduceSNR}, but it follows by using $3k_0$ blocks to cover the parts where $\wh{\chi}^T$ is non-zero, and $k_0$ blocks to cover the dominant blocks of $\wh{X}$, in accordance with Definition \ref{def:SNR}.

\paragraph {Success event associated with \textsc{RecoverAtConstSNR}:} Define a successful run of \textsc{RecoverAtConstSNR}($X, \wh{\chi}, k_0, k_1 , n, \e$) to mean the following conditions on the output $\wh{\chi}$:
$$	\|\wh{X}-\wh{\chi}'\|_2^2 \le \Err^2 (\wh{X} - \wh{\chi} , 10k_0 , k_1) + (4\cdot10^5) \e k_0 \nu^2.$$
Conditioning on event of having a successful run of \textsc{ReduceSNR}, by Lemma \ref{lem:constSNR}, it follows that the probability of having a successful run to \textsc{RecoverAtConstSNR} is at least $0.9$.

By a union bound, the aforementioned events occur simultaneously with probability at least $0.8$, as desired.  Moreover, the sample complexity and runtime are a direct consequence of summing the contributions from Lemmas \ref{lem:reduceSNR} and \ref{lem:constSNR}.

%\begin{lem}
%At each iteration $i$ of the algorithm we have the following:
%$$\|\wh{X} - \wh{\chi}^{i+1}\| \le \beta \|\wh{X} - \wh{\chi}^i\|_2^2$$
%with probability $1-\frac{a}{\log \SNR}$ if $\|\wh{X} - \wh{\chi}^i\|_2^2 \ge \frac{\|\wh{X}\|_2^2}{2^i}$ otherwise we will have:
%$$\|\wh{X} - \wh{\chi}^{i+1}\| \le (1+\lambda) \|\wh{X} - \wh{\chi}^i\|_2^2$$
%with probability $1-\frac{a}{\log \SNR}$.
%\end{lem}
%!TEX root = BlockSparseFT-new.tex

\section{Lower Bound} \label{sec:lower}

Our upper bound in Theorem \ref{thm:final}, in several scaling regimes, provides a strict improvement over standard sparse FFT algorithms in terms of sample complexity.  The corresponding algorithm is inherently \emph{adaptive}, which raises the important question of whether adaptivity is necessary in order to achieve these improvements.  In this section, we show that the answer is affirmative, by proving the following formalization of Theorem \ref{thm:lb}.

\medskip
\noindent \textbf{Theorem \ref{thm:lb}} (Lower bound -- formal version) {\em 
    Fix $(n,k_0,k_1)$ and $C > 0$, and suppose that there exists a non-adaptive algorithm that, when given a signal $Y$ with Fourier transform $\wh{Y}$, outputs a signal $\wh{Y}'$ satisfying the following $\ell_2/\ell_2$-guarantee with probability at least $\frac{1}{2}$:
    \begin{equation}
        \|\wh{Y} - \wh{Y}'\|_2^2 \le C \min_{\wh{Y}^*\mathrm{~is~}(k_0,k_1)\mathrm{-block~sparse}} \|\wh{Y} - \wh{Y}^*\|_2^2. \label{eq:algo_guarantee}
    \end{equation}
    Then the number of samples taken by the algorithm must behave as $\Omega\big( k_0 k_1 \log \frac{n}{k_0k_1} \big)$.
}

Hence, for instance, if $k_0 = O(1)$ and $\SNR= O(1)$ then our adaptive algorithm uses $O( k_1 + \log n )$ samples, whereas any non-adaptive algorithm must use $\Omega\big(k_1 \log\frac{n}{k_1}\big)$ samples.

The remainder of this section is devoted to the proof of Theorem \ref{thm:lb}.  Throughout the section, we let $k = k_0k_1$ denote the total sparsity.

\textbf{High-level overview:}
Our analysis follows the information-theoretic framework of~\cite{PW}.  However, whereas \cite{PW} considers a signal with $k$ arbitrary dominant frequency locations and uniform noise, we consider signals where the $k = k_0k_1$ dominant frequencies are (nearly) contiguous, and both the noise and signal are concentrated on an $O\big(\frac{1}{k}\big)$ fraction of the time domain.  

As a result, while the difficulty in~\cite{PW} arises from the fact that the algorithm needs to recover roughly $\log\frac{n}{k}$ bits per frequency location for $k$ such locations, our source of difficulty is different. In our signal, there are only roughly $\log\frac{n}{k}$ bits to be learned about the location of \emph{all} the blocks in frequency domain, but the signal is tightly concentrated on an $O\big(\frac{1}{k}\big)$ fraction of the input space. As a consequence, any non-adaptive algorithm is bound to waste most of its samples on regions of the input space where the signal is zero, and only an $O\big(\frac{1}{k}\big)$ fraction of its samples can be used to determine the single frequency that conveys the location of the blocks.  In the presence of noise, this results in a lower bound on sample complexity of $\Omega\big(k \log \frac{n}{k}\big)$. 

\textbf{Information-theoretic preliminaries:}
We will make use of standard results from information theory, stated below.  Here and subsequently, we use the notations $H(X)$, $H(Y|X)$, $I(X;Y)$ and $I(X;Y|U)$ for the (conditional) Shannon entropy and (conditional) mutual information (e.g., see \cite{Cov01}).  % Note that the random variables $(X,Y,Z)$ here are not related to random variables in the rest of the section, nor are they necessarily related among the different lemmas stated here.
We first state Fano's inequality, a commonly-used tool for proving lower bounds by relating a conditional entropy to an error probability.

\begin{lem} \emph{(Fano's Inequality \cite[Lemma 7.9.1]{Cov01})} \label{thm:Fano}
    Fix the random variables $(X,Y)$ with $X$ being discrete, let $X'$ be an estimator of $X$ such that $X \to Y \to X'$ forms a Markov chain (i.e., $X$ and $X'$ are conditionally independent given $Y$), and define $\pe := \PP[X' \ne X]$.  Then
    \begin{equation}
        H(Y|X) \le 1 + \pe \log|\Xc|, \nn
    \end{equation}
    where $\Xc = \supp(X)$.  Consequently, if $X$ is uniformly distributed, then
    \begin{equation}
        I(X;Y) \ge -1 + (1-\pe) \log|\Xc|. \nn
    \end{equation}
\end{lem}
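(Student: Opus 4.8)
The plan is to reproduce the classical proof of Fano's inequality via an error indicator and the chain rule for entropy, and then transfer the bound through the Markov chain by the data-processing inequality. Introduce the binary random variable $E = \openone[X' \ne X]$, and observe at the outset that $E$ is a deterministic function of the pair $(X,X')$, so $H(E \mid X, X') = 0$. (We will establish the first bound in the form $H(X \mid Y) \le 1 + \pe\log|\Xc|$, which is the quantity needed for the mutual-information consequence.)

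First I would expand $H(E,X \mid X')$ in two ways using the chain rule for conditional entropy. On one hand, $H(E,X \mid X') = H(X \mid X') + H(E \mid X, X') = H(X \mid X')$. On the other hand, $H(E,X \mid X') = H(E \mid X') + H(X \mid E, X')$. Here $H(E \mid X') \le H(E) \le \log 2 = 1$ since $E$ takes two values, and for the remaining term I would condition on the value of $E$: when $E=0$ we have $X = X'$, so $H(X \mid X', E=0) = 0$; when $E=1$ we have $H(X \mid X', E=1) \le \log(|\Xc|-1) \le \log|\Xc|$. Averaging with weights $\PP[E=0] = 1-\pe$ and $\PP[E=1] = \pe$ gives $H(X \mid E, X') \le \pe \log|\Xc|$. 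Combining the two expansions of $H(E,X\mid X')$ yields $H(X \mid X') \le 1 + \pe\log|\Xc|$.

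Next I would invoke the assumption that $X \to Y \to X'$ is a Markov chain: the data-processing inequality gives $I(X;X') \le I(X;Y)$, and since $H(X)$ is common to both mutual informations this is equivalent to $H(X \mid Y) \le H(X \mid X')$. Chaining with the bound from the previous step gives $H(X \mid Y) \le 1 + \pe\log|\Xc|$, the first claim. Finally, when $X$ is uniform over $\Xc$ we have $H(X) = \log|\Xc|$, so $I(X;Y) = H(X) - H(X \mid Y) \ge \log|\Xc| - \bigl(1 + \pe\log|\Xc|\bigr) = -1 + (1-\pe)\log|\Xc|$, which is the second claim.

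There is no substantive obstacle; this is the textbook argument (cf.\ the cited \cite[Lemma 7.9.1]{Cov01}). The only points that warrant a line of care are: that $E$ is determined by $(X,X')$ so that $H(E\mid X,X')=0$; the crude bound $\log(|\Xc|-1) \le \log|\Xc|$ used to absorb the $E=1$ case (one can keep the sharper $\log(|\Xc|-1)$ if desired, but it is not needed downstream); and the appeal to data processing to replace the conditioning on the unobserved estimator $X'$ by conditioning on the actual observation $Y$. Everything else is a direct application of the chain rule and nonnegativity of entropy.
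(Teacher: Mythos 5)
Your proof is correct and is exactly the classical argument behind the cited result \cite[Lemma 7.9.1]{Cov01}, which the paper simply quotes without proof: error indicator, chain rule in two orders, and data processing across the Markov chain $X \to Y \to X'$. You were also right to read the first inequality as a bound on $H(X\mid Y)$ (the $H(Y\mid X)$ in the statement is a typo), since that is the form needed for the mutual-information consequence.
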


The next result gives the formula for the capacity of a complex-valued additive white Gaussian noise channel, often referred to as the Shannon-Hartley theorem.  Here and subsequently, $\CN(\mu,\sigma^2)$ denotes the complex normal distribution.

\begin{lem} \emph{(Complex Gaussian Channel Capacity \cite[Thm.~2.8.1]{Cov01})} \label{thm:shannon} % \cite[Sec.~5.2.1]{Tse05}
    For $Z \sim \CN(0,\sigma_z^2)$ and  any complex random variable $X$ with $\EE[|X|^2] = \sigma_x^2$, we have 
    \begin{equation}
        I(X;X+Z) \le \log\bigg(1+\frac{\sigma_x^2}{\sigma_z^2}\bigg), \nn
    \end{equation}
    with equality if $X \sim \CN(0,\sigma_x^2)$.
\end{lem}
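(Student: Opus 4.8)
The plan is to follow the standard information-theoretic argument, reducing everything to differential entropy and a maximum-entropy inequality. Write $Y = X + Z$ with $Z \sim \CN(0,\sigma_z^2)$ independent of $X$, and adopt the usual convention that a complex random variable is identified with the corresponding two-dimensional real vector, so that differential entropy is that of the real pair and $h\big(\CN(0,\sigma^2)\big) = \log(\pi e \sigma^2)$. First I would write $I(X;Y) = h(Y) - h(Y \mid X)$; since conditioned on $X = x$ the variable $Y$ is just the translate $x + Z$ of $Z$, and differential entropy is translation invariant, $h(Y \mid X) = h(Z) = \log(\pi e \sigma_z^2)$. Hence $I(X;Y) = h(Y) - \log(\pi e \sigma_z^2)$, and the whole problem reduces to bounding $h(Y)$.

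Next I would bound $h(Y)$ by a maximum-entropy inequality. By independence and $\EE[Z] = 0$ we have $\EE[|Y|^2] = \EE[|X|^2] + \EE[|Z|^2] = \sigma_x^2 + \sigma_z^2 =: P$. Let $g$ be the density of $\CN(0,P)$, so that $-\log g(y)$ is an affine function of $|y|^2$. Non-negativity of relative entropy gives $0 \le D(P_Y \,\|\, g) = -h(Y) - \EE[\log g(Y)]$, and since $-\EE[\log g(Y)]$ depends on $P_Y$ only through $\EE[|Y|^2] \le P$, it is at most $\log(\pi e P)$; therefore $h(Y) \le \log(\pi e (\sigma_x^2 + \sigma_z^2))$. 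Combining with the previous step, $I(X;Y) \le \log(\pi e (\sigma_x^2 + \sigma_z^2)) - \log(\pi e \sigma_z^2) = \log\!\big(1 + \sigma_x^2/\sigma_z^2\big)$.

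For the equality claim, if $X \sim \CN(0,\sigma_x^2)$ independent of $Z$, then $Y = X + Z \sim \CN(0,\sigma_x^2 + \sigma_z^2)$ as a sum of independent circular Gaussians, so $h(Y)$ attains the maximum-entropy value and every inequality above is tight, giving $I(X;Y) = \log(1 + \sigma_x^2/\sigma_z^2)$.

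The one genuine obstacle is purely technical: for an arbitrary complex $X$ the differential entropies $h(Y)$ and $h(Z)$ need not be finite, and the identity $I(X;Y) = h(Y) - h(Y \mid X)$ then requires care. I would sidestep this entirely with the fully rigorous variational route: for any reference distribution $Q$ one has $I(X;Y) = \EE_X\big[D(P_{Y\mid X}\,\|\,Q)\big] - D(P_Y\,\|\,Q) \le \EE_X\big[D(P_{Y\mid X}\,\|\,Q)\big]$, and choosing $Q = \CN(0,\sigma_x^2 + \sigma_z^2)$ and $P_{Y\mid X=x} = \CN(x,\sigma_z^2)$, the Gaussian-to-Gaussian divergence $D\big(\CN(x,\sigma_z^2)\,\|\,\CN(0,\sigma_x^2+\sigma_z^2)\big)$ evaluates in closed form to $\log(1+\sigma_x^2/\sigma_z^2) + \tfrac{\sigma_z^2 + |x|^2}{\sigma_x^2+\sigma_z^2} - 1$; taking $\EE_X$ and using only $\EE[|X|^2] = \sigma_x^2$ collapses this exactly to $\log(1+\sigma_x^2/\sigma_z^2)$, which recovers the bound with no integrability assumptions and pinpoints equality as the case $P_Y = \CN(0,\sigma_x^2+\sigma_z^2)$.
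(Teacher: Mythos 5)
Your proof is correct. The paper does not prove this lemma at all---it is imported verbatim from \cite{Cov01}---and your argument is exactly the standard textbook derivation (write $I(X;X+Z)=h(Y)-h(Z)$ by translation invariance, then apply the maximum-entropy bound $h(Y)\le\log(\pi e(\sigma_x^2+\sigma_z^2))$ under the second-moment constraint, with equality for circularly symmetric Gaussian $X$), and your variational reformulation via $I(X;Y)\le\EE_X[D(P_{Y|X}\,\|\,Q)]$ with $Q=\CN(0,\sigma_x^2+\sigma_z^2)$ is a clean way to dispose of the integrability caveat.
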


The following lemma states the data processing inequality, which formalizes the statement that processing a channel output cannot increase the amount of information revealed about the input.

\begin{lem} \emph{(Data Processing Inequality \cite[Thm.~2.8.1]{Cov01})} \label{thm:dpi}
    For any random variables $(X,Y,Z)$ such that $X \to Y \to Z$ forms a Markov chain, we have $I(X;Z) \le I(X;Y)$.
\end{lem}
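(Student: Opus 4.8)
\textbf{Proof proposal for Lemma \ref{thm:dpi} (Data Processing Inequality).}
The plan is the standard two-way chain-rule expansion of the mutual information between $X$ and the pair $(Y,Z)$. First I would write down the chain rule in both orders:
\begin{gather}
    I(X;Y,Z) = I(X;Y) + I(X;Z\mid Y), \nn \\
    I(X;Y,Z) = I(X;Z) + I(X;Y\mid Z). \nn
\end{gather}
Both identities are immediate from the definition $I(X;Y,Z) = H(X) - H(X\mid Y,Z)$ together with $H(X\mid Y,Z) = H(X\mid Y) - I(X;Z\mid Y)$ (and symmetrically in $Z$).

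Next I would invoke the Markov assumption $X \to Y \to Z$, which by definition means that $X$ and $Z$ are conditionally independent given $Y$; hence $H(Z\mid X,Y) = H(Z\mid Y)$, equivalently $I(X;Z\mid Y) = 0$. Substituting into the first identity gives $I(X;Y,Z) = I(X;Y)$.

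Finally I would use the fact that conditional mutual information is always nonnegative, so $I(X;Y\mid Z) \ge 0$; combining this with the second identity and the equality just obtained yields
\begin{equation}
    I(X;Z) \le I(X;Z) + I(X;Y\mid Z) = I(X;Y,Z) = I(X;Y), \nn
\end{equation}
which is the claim. There is no real obstacle here: the only facts used are the chain rule for mutual information, the nonnegativity of (conditional) mutual information, and the definition of the Markov chain; all are standard and may be cited from \cite{Cov01}. The one point to state carefully is that ``$X \to Y \to Z$ forms a Markov chain'' is precisely what licenses $I(X;Z\mid Y) = 0$, so I would make that step explicit rather than leave it implicit.
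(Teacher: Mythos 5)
Your proof is correct and is exactly the standard argument (two-way chain-rule expansion of $I(X;Y,Z)$, the Markov condition giving $I(X;Z\mid Y)=0$, and nonnegativity of conditional mutual information); the paper itself does not prove this lemma but simply cites \cite[Thm.~2.8.1]{Cov01}, and your argument is the same one given there. Nothing further is needed.
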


Finally, the following lemma bounds the mutual information between two vectors in terms of the individual mutual information terms between components of those vectors.

\begin{lem} \emph{(Mutual Information for Vectors \cite[Lemma 7.9.2]{Cov01})} \label{thm:mi_vec}
    For any random vectors $\Xv = (X_1,\dotsc,X_n)$ and $\Yv  = (Y_1,\dotsc,Y_n)$, if the entries of $\Yv$ are conditionally independent given $\Xv$, then $I(\Xv;\Yv) \le \sum_{i=1}^n I(X_i;Y_i)$.
\end{lem}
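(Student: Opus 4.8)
The plan is to prove the bound by the standard entropy-decomposition argument, exploiting the product structure of the conditional law of $\Yv$ given $\Xv$ (which in our applications arises because $Y_i = X_i + Z_i$ with $(Z_i)$ mutually independent and independent of $\Xv$, so that indeed $p(\yv\mid\xv)=\prod_i p(y_i\mid x_i)$). First I would start from
\[
    I(\Xv;\Yv) = H(\Yv) - H(\Yv\mid\Xv),
\]
and expand the conditional entropy by the chain rule:
\[
    H(\Yv\mid\Xv) = \sum_{i=1}^n H\big(Y_i \mid Y_1,\dotsc,Y_{i-1},\Xv\big).
\]

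Second, I would simplify each summand. Because the $Y_i$ are conditionally independent given $\Xv$, conditioning additionally on $Y_1,\dotsc,Y_{i-1}$ does not change the conditional distribution of $Y_i$, so $H(Y_i\mid Y_1,\dotsc,Y_{i-1},\Xv) = H(Y_i\mid\Xv)$; and because the conditional law of $Y_i$ given $\Xv$ depends on $\Xv$ only through $X_i$, this further equals $H(Y_i\mid X_i)$. Hence $H(\Yv\mid\Xv)=\sum_{i=1}^n H(Y_i\mid X_i)$.

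Third, I would bound $H(\Yv)$ from above by subadditivity of entropy, $H(\Yv)\le\sum_{i=1}^n H(Y_i)$ (itself a consequence of the chain rule together with the fact that conditioning does not increase entropy). Combining the three displays gives
\[
    I(\Xv;\Yv) \le \sum_{i=1}^n H(Y_i) - \sum_{i=1}^n H(Y_i\mid X_i) = \sum_{i=1}^n I(X_i;Y_i),
\]
as claimed.

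The only point requiring care — and the step I expect to be the main obstacle to a fully rigorous write-up — is the reduction $H(Y_i\mid Y_1,\dotsc,Y_{i-1},\Xv)=H(Y_i\mid X_i)$. The first equality there uses the conditional-independence hypothesis, while the second uses that the noise affecting coordinate $i$ depends on the input only through $X_i$; both hold in the additive-noise model used in our lower bound, which is the setting in which the lemma is invoked. If one assumed only conditional independence of the $Y_i$ given $\Xv$ but not this ``memoryless'' property, the same argument would still yield the weaker bound $I(\Xv;\Yv)\le\sum_{i=1}^n I(\Xv;Y_i)$, which already suffices for some applications; the sharper form above is what we use.
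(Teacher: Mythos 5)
Your proof is correct and is exactly the standard argument behind the cited result \cite[Lemma 7.9.2]{Cov01}; the paper itself offers no proof, simply quoting the lemma, so there is nothing different to compare against. You are also right to flag the subtlety that conditional independence of the $Y_i$ given $\Xv$ alone only yields $I(\Xv;\Yv)\le\sum_{i=1}^n I(\Xv;Y_i)$: the sharper bound with $I(X_i;Y_i)$ additionally needs the memoryless property that the law of $Y_i$ given $\Xv$ depends only on $X_i$, which is the setting of the cited lemma and does hold for the channel $Y_t=X_t+W_t$ with independent noise, the only place the paper invokes it.
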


\textbf{A communication game:}
We consider a communication game consisting of channel coding with a state known at both the encoder (Alice) and decoder (Bob), where block-sparse recovery is performed at the decoder.  Recalling that we are in the non-adaptive setting, by Yao's minimax principle, we can assume that the samples are deterministic and require probability-$\frac{1}{2}$ recovery over a random ensemble of signals, as opposed to randomizing the samples and requiring constant-probability recovery for any given signal in the ensemble.  Hence, we denote the \emph{fixed} sampling locations by $\Ac$.

We now describe our hard input distribution.  Each signal in the ensemble is indexed by two parameters $(u,f^*)$, and is given by
\begin{equation}
    X_t = 
    \begin{cases}
        \omega^{f^* t} & t \in \{ u, \dotsc, u + \frac{C'n}{k} - 1\} \\
        0 & \mathrm{otherwise},
    \end{cases} \label{eq:lb_xt}
\end{equation}
where $C' > 0$ is a constant that will be chosen later, and where all indices are modulo-$n$.  Hence, each signal is non-zero only in a window of length $\frac{C'n}{k}$, and within that window, the signal oscillates at a rate dictated by $f^*$.  Specifically, $u$ specifies where the signal is non-zero in time domain, and $f^*$ specifies where the energy is concentrated in frequency domain.  We restrict the values of $u$ and $f^*$ to the following sets:
\begin{gather}
    \Uc = \Big\{ \frac{C'n}{k}, \frac{2C'n}{k} \dotsc, \Big(\frac{k}{C'} - 1\Big)\frac{C'n}{k}, n \Big\}\nn \\
    \Fc = \Big\{ k, 2k, \dotsc, \Big(\frac{n}{k} - 1\Big)k, n \Big\}. \nn
\end{gather}
The communication game is as follows:
\begin{enumerate}
    \item Nature selects a \emph{state} $U$ and a \emph{message} $F$ uniformly from $\Uc$ and $\Fc$, respectively.  
    \item An \emph{encoder} maps $(U,F)$ to the signal $X$ according to \eqref{eq:lb_xt}.
    \item A \emph{state-dependent channel} adds independent $\CN(0,\alpha)$ noise to $X_t$ for each $t \in \{ u, \dotsc, u + \frac{C'n}{k} - 1\}$, while keeping the other entries noiseless.  This is written as $Y_t = X_t + W_t$, where
    \begin{equation}
        W_t \sim
        \begin{cases}
            \CN(0,\alpha) & \{ u, \dotsc, u + \frac{C'n}{k} - 1\} \\
            0 & \mathrm{otherwise}.
        \end{cases} \label{eq:Wt}
    \end{equation}
    The channel output is given by $Y = \{Y_t\}_{t \in \Ac}$ for the sampling locations $\Ac$.
    \item A \emph{decoder} receives $U$ and $Y$, applies $(k_0,k_1)$-block sparse recovery to $Y$ to obtain a signal $\wh{Y}'$, and then selects $F'$ to be the frequency $f' \in \Fc$ such that the energy in $\wh{Y}'$ within the length-$k$ window centered at $f'$ is maximized:
    \begin{equation}
        F' = \argmax_{f' \in \Fc} \|\wh{Y}'_{I_k(f')}\|_2^2, \label{eq:F_decoder}
    \end{equation}
    where $I_k(f') = \{f' + \Delta\,:\, \Delta \in [k]\}$.
\end{enumerate}
We observe that if adaptivity were allowed, then the knowledge of $U$ at the decoder would make the block-sparse recovery easy -- one could let all of the samples lie within the window given in \eqref{eq:lb_xt}.  The problem is that we are in the \emph{non-adaptive} setting, and hence we must take enough samples to account for all of the possible choices of $U$.

We denote the subset of $\Ac$ falling into $\{ u, \dotsc, u + \frac{C'n}{k} - 1\}$ by $\Ac_u$, its cardinality by $m_u$ and the total number of measurements by $m = |\Ac| = \sum_{u} m_u$.  Moreover, we let $X_{\Ac_u}$ and $Y_{\Ac_u}$ denote the sub-vectors of $X$ and $Y$ indexed by $\Ac_u$. 

% Conditioned on $U = u$, we have the Markov chain $F \to X \to Y \to Y' \to F'$, but $U$ also affects the transitions $F \to X$, $X \to Y$, and $Y' \to F'$.

\textbf{Information-theoretic analysis:}
We first state the following lemma.

\begin{lem} \label{lem:mi_upper}
    \emph{(Mutual information bound)}
    In the setting described above, the conditional mutual information $I(F;Y|U)$ satisfies $I(F;Y|U) \le \frac{C'm}{k} \log\big(1+\frac{1}{\alpha}\big)$, where $\alpha$ is variance of the additive Gaussian noise within $\Ac_u$.
\end{lem}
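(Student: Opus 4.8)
The plan is to condition on the state $U$, observe that only the samples falling inside the active time-window are informative, and single-letterize using Lemmas \ref{thm:mi_vec} and \ref{thm:shannon}.

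I would first write $I(F;Y|U)=\sum_{u\in\Uc}\PP[U=u]\,I(F;Y\mid U=u)$. Fix $u\in\Uc$ and recall $\Ac_u=\Ac\cap\{u,\dots,u+\tfrac{C'n}{k}-1\}$ with $m_u=|\Ac_u|$. For $t\in\Ac\setminus\Ac_u$ we have $Y_t=X_t+W_t=0$ deterministically, since both $X_t$ and $W_t$ vanish outside the window, so these coordinates carry no information about $F$ and $I(F;Y\mid U=u)=I(F;\{Y_t\}_{t\in\Ac_u}\mid U=u)$. Conditioned on $U=u$ and on $F$, the coordinates $\{Y_t\}_{t\in\Ac_u}$ are mutually independent — each equals $\omega^{Ft}$ plus its own independent $\CN(0,\alpha)$ term — so applying Lemma \ref{thm:mi_vec} in the probability space conditioned on $U=u$ gives $I(F;\{Y_t\}_{t\in\Ac_u}\mid U=u)\le\sum_{t\in\Ac_u}I(F;Y_t\mid U=u)$.

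Next I would bound each scalar term. For $t\in\Ac_u$, $Y_t=\omega^{Ft}+W_t$ depends on $F$ only through $\omega^{Ft}$, and since $W_t$ is independent of $(F,U)$ the conditional law of $Y_t$ given $F$ equals that given $\omega^{Ft}$, namely $\CN(\omega^{Ft},\alpha)$; hence $I(F;Y_t\mid U=u)=I(\omega^{Ft};Y_t\mid U=u)$. Because $|\omega^{Ft}|=1$ we have $\EE[|\omega^{Ft}|^2]=1$, so Lemma \ref{thm:shannon} with input power $1$ and noise variance $\alpha$ yields $I(\omega^{Ft};Y_t\mid U=u)\le\log\!\big(1+\tfrac1\alpha\big)$. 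Summing over the $m_u$ indices in $\Ac_u$ gives $I(F;Y\mid U=u)\le m_u\log\!\big(1+\tfrac1\alpha\big)$.

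Finally I would reassemble. The windows $\{u,\dots,u+\tfrac{C'n}{k}-1\}$, $u\in\Uc$, are disjoint and tile the $n$ time-domain indices, so $\sum_{u\in\Uc}m_u=|\Ac|=m$; and $|\Uc|=k/C'$ gives $\PP[U=u]=C'/k$ for every $u$. Therefore $I(F;Y|U)\le\frac{C'}{k}\sum_{u\in\Uc}m_u\log\!\big(1+\tfrac1\alpha\big)=\frac{C'm}{k}\log\!\big(1+\tfrac1\alpha\big)$, as claimed. I do not anticipate a genuine obstacle: the only points requiring a little care are the bookkeeping that ties $\sum_u m_u$ to the global sample budget $m$ (which relies on the windows tiling the index set) and the routine check that Lemma \ref{thm:mi_vec} may be invoked conditionally on $U=u$; the rest is a direct invocation of the stated information-theoretic lemmas.
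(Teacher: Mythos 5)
Your proof is correct and follows essentially the same route as the paper's: condition on $U$, discard the deterministic zero samples outside $\Ac_u$, single-letterize via Lemma \ref{thm:mi_vec}, and bound each per-sample term by Shannon--Hartley, using $\sum_u m_u = m$ at the end. The only (immaterial) difference is the order of operations — you single-letterize $I(F;Y_{\Ac_u}\mid U=u)$ directly and invoke the data processing inequality at the scalar level via $F\to\omega^{Ft}\to Y_t$, whereas the paper first applies DPI at the vector level ($F\to X_{\Ac_u}\to Y_{\Ac_u}$) and then single-letterizes in terms of $I(X_t;Y_t\mid U=u)$.
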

\begin{proof}
    We have { \allowdisplaybreaks
    \begin{align}
        I(F;Y|U) 
            &= \frac{C'}{k}\sum_u I(F;Y|U=u) \nn \\ % \label{eq:mi_bound1} \\
            &= \frac{C'}{k}\sum_u I(F;Y_{\Ac_u}|U=u) \nn \\ % \label{eq:mi_bound2} \\
            &\le \frac{C'}{k}\sum_u I(X_{\Ac_u};Y_{\Ac_u}|U=u) \nonumber \\ % \label{eq:mi_bound3} \\
            &\le \frac{C'}{k}\sum_u \sum_{t \in \Ac_u} I(X_t;Y_t|U=u) \nn \\ % \label{eq:mi_bound4} \\
            &\le \frac{C'}{k}\sum_u m_u \log\Big(1+\frac{1}{\alpha}\Big) \nn \\ % \label{eq:mi_bound5} \\
            &= \frac{C'm}{k} \log\Big(1+\frac{1}{\alpha}\Big), \label{eq:mi_bound6}
    \end{align} }
    where:
    \begin{itemize}
        \item Line 1 follows since $U$ is uniform on a set of cardinality $\frac{k}{C'}$;
        \item Line 2 follows since given $U=u$, only the entries of $Y$ indexed by $\Ac_u$ are dependent on $F$ (\emph{cf.}, \eqref{eq:lb_xt});
        \item Line 3 follows by noting that given $U=u$ we have the Markov chain $F \to X_{\Ac_u} \to Y_{\Ac_u}$, and applying the data processing inequality (Lemma \ref{thm:dpi});
        \item Line 4 follows from Lemma \ref{thm:mi_vec} and \eqref{eq:Wt}, where the conditional independence assumption holds because we have assumed the random variables $W_t$ are independent;
        \item Line 5 follows from the Shannon-Hartley Theorem (Lemma \ref{thm:shannon});  in our case, the signal power is exactly one by \eqref{eq:lb_xt}, and the average noise energy is $\alpha$ by construction.  
    \end{itemize}
\end{proof}

Next, defining $\delta_u := \PP[ F' \ne F \,|\, U=u ]$, Fano's inequality (Lemma \ref{thm:Fano}) gives
\begin{equation}
    I(F;Y|U=u) \ge -1 + (1-\delta_u)\log\frac{n}{k}, \nn
\end{equation}
and averaging both sides over $U$ gives
\begin{equation}
    I(F;Y|U) \ge -1 + (1-\delta)\log\frac{n}{k}, \nn
\end{equation}
where $\delta := \EE[\delta_U] = \PP[ F' \ne F ]$.

Hence, and by Lemma \ref{lem:mi_upper}, if we can show that our $\ell_2/\ell_2$-error guarantee \eqref{eq:algo_guarantee} gives $F' = F$ constant probability, then we can conclude that
\begin{equation}
    m \ge \frac{k\big( (1-\delta)\log\frac{n}{k} - 1 \big)}{ C' \log\big(1+\frac{1}{\alpha}\big) } = \Omega\bigg( k \log \frac{n}{k} \bigg). \nn
\end{equation}
We therefore conclude the proof of Theorem \ref{thm:lb} by proving the following lemma.

\begin{lem}
    \emph{(Probability of error characterization)}
    Fix $(n,k_0,k_1)$ and $C > 0$.  If the $(k_0,k_1)$-block sparse recovery algorithm used in the above communication game satisfies \eqref{eq:algo_guarantee} with probability at least $\frac{1}{2}$, then there exist choices of $C'$ and $\alpha$ such that the decoder's estimate of $F'$ according to \eqref{eq:F_decoder} satisfies $F' = F$ with probability at least $\frac{1}{4}$.
\end{lem}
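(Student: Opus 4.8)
The plan is to exhibit a ``good event'' $\mathcal{G}$ with $\PP[\mathcal{G}]\ge\frac14$ on which the decoder's rule \eqref{eq:F_decoder} is \emph{forced} to output $f^*$. I would take $\mathcal{G}$ to be the intersection of the recovery event \eqref{eq:algo_guarantee} (probability $\ge\frac12$ by hypothesis) with the event $\{\|\wh{W}\|_2^2\le 4\,\EE[\|\wh{W}\|_2^2]\}$ (probability $\ge\frac34$ by Markov, uniformly over $(U,F)$ since $\|\wh{W}\|_2^2=\frac1n\|W\|_2^2$ has a distribution depending only on the window length $L=\frac{C'n}{k}$); a union bound then gives $\PP[\mathcal{G}]\ge\frac14$, and it remains to show $F'=F$ on $\mathcal{G}$. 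Everything rests on a precise description of $\wh{X}$.

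First I would analyze the Fourier transform of the hard instance. Since $X_t=\omega^{f^*t}$ on a length-$L$ window and zero elsewhere, evaluating the geometric sum gives $|\wh{X}_f|^2=\frac{1}{n^2}\cdot\frac{\sin^2(\pi(f-f^*)L/n)}{\sin^2(\pi(f-f^*)/n)}$, a shifted Dirichlet kernel, and by Parseval $\|\wh{X}\|_2^2=\frac{L}{n}=\frac{C'}{k}$. Using $|\sin(\pi g/n)|\ge 2|g|/n$ for $|g|\le n/2$ together with $\sum_{|g|\ge k/4}g^{-2}=O(1/k)$, I would show that for every interval $W$ of $\Theta(k)$ consecutive frequencies containing $f^*$ near its middle --- in particular for $I_k(f^*)$ and for the union of the $k_0$ blocks closest to $f^*$ --- we have $\sum_{f\notin W}|\wh{X}_f|^2=O(1/k)=O(1/C')\cdot\|\wh{X}\|_2^2$. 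Hence the best $(k_0,k_1)$-block sparse approximation of $\wh{X}$ has squared error $\le\frac{c_0}{C'}\cdot\frac{C'}{k}=\frac{c_0}{k}$ for an absolute constant $c_0$, and approximating $\wh{Y}=\wh{X}+\wh{W}$ by $\wh{X}$ restricted to those $k_0$ blocks, the best $(k_0,k_1)$-block sparse approximation of $\wh{Y}$ has squared error $\le 2\|\wh{W}\|_2^2+\frac{2c_0}{k}$.

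Next I would fix the free constants in the right order: given $C$, choose $C'$ to be a large absolute-constant multiple of $C$, then $\alpha$ a small absolute constant over $C$. Using $\EE[\|\wh{W}\|_2^2]=\frac1n\EE[\|W\|_2^2]=\frac{\alpha L}{n}=\frac{\alpha C'}{k}$, on $\mathcal{G}$ we obtain
\[
\|\wh{Y}-\wh{Y}'\|_2^2 \;\le\; C\Big(2\|\wh{W}\|_2^2+\tfrac{2c_0}{k}\Big)\;\le\;\frac{C(8\alpha C'+2c_0)}{k}\;\le\;\frac{1}{100}\cdot\frac{C'}{k}\;=\;\frac{1}{100}\|\wh{X}\|_2^2,
\]
once $C'$ is large enough to kill the $2c_0/C'$ term and $\alpha$ small enough to kill the $8C\alpha$ term. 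I would then push this through the triangle inequality on each of the pairwise-disjoint windows $\{I_k(f')\}_{f'\in\Fc}$ (consecutive elements of $\Fc$ differ by $k$): concentration of $\wh{X}$ gives $\|\wh{X}_{I_k(f^*)}\|_2\ge(1-O(1/C'))\sqrt{C'/k}$, while $\|\wh{X}_{I_k(f')}\|_2^2\le\frac{c_0}{C'}\cdot\frac{C'}{k}$ for $f'\ne f^*$; combining with $\|\wh{W}\|_2\le 2\sqrt{\alpha}\sqrt{C'/k}$ and $\|\wh{Y}-\wh{Y}'\|_2\le\frac{1}{10}\sqrt{C'/k}$ yields $\|\wh{Y}'_{I_k(f^*)}\|_2\ge\frac12\sqrt{C'/k}$ and $\|\wh{Y}'_{I_k(f')}\|_2\le\frac14\sqrt{C'/k}$ for every $f'\ne f^*$, so the maximizer in \eqref{eq:F_decoder} is $f^*$, i.e.\ $F'=F$, on $\mathcal{G}$.

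I expect the main obstacle to be the Dirichlet-kernel tail estimate combined with the careful ordering of the three nested constants --- the affordable relative tail energy $\epsilon=\Theta(1/C)$, which dictates $C'=\Theta(1/\epsilon)$, which in turn constrains $\alpha$ --- together with the harmless but fiddly mismatch between $I_k(f^*)$ (and $I_k(f')$) and the block grid, which costs at most one extra block on each side and can be absorbed into $C'$.
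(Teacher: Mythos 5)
Your proposal is correct and follows essentially the same route as the paper's proof: the same good event (the $\ell_2/\ell_2$ guarantee intersected with a Markov bound on $\|\wh{W}\|_2^2$, giving probability $\ge \frac14$), the same sinc/Dirichlet concentration argument to pick $C'$ so that $\wh{X}$ puts all but an $O(1/C')$ fraction of its energy in $I_k(f^*)$, the same use of a block-sparse witness concentrated near $f^*$ to bound $\|\wh{Y}-\wh{Y}'\|_2^2$, and a triangle-inequality finish. The only (immaterial) differences are that you use $\wh{X}$ restricted to the blocks near $f^*$ as the witness where the paper uses $\wh{Y}$ restricted to $I_k(f^*)$, and you compare per-window energies of $\wh{Y}'$ directly rather than showing the $f^*$-window captures more than half of $\|\wh{Y}'\|_2^2$.
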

\begin{proof}
    By the choice of estimator in \eqref{eq:F_decoder}, it suffices to show that $\wh{Y}'$, the output of the block-sparse Fourier transform algorithm, has more than half of its energy within the length-$k$ window $I_k(f^*)$ centered of $f^*$.  We show this in three steps.
    
    \textbf{Characterizing the energy of $\wh{X}$ within $I_k(f^*)$: }
    The Fourier transform of $X$ in \eqref{eq:lb_xt} is a shifted sinc function of ``width'' $\frac{k}{C'}$ centered at $f^*$ when the time window is centered at zero, and more generally, has the same magnitude as this sinc function.  Hence, by letting $C'$ be suitably large, we can ensure that an arbitrarily high fraction of the energy of $\wh{X}$ falls within the length-$k$ window centered at $f^* \in \Fc$.  Formally, we have
    \begin{equation}
        \| \wh{X}_{I_k(f^*)} \|_2^2 \ge (1-\eta) \|\wh{X}\|_2^2 \label{eq:XB*}
    \end{equation}
    for $\eta \in (0,1)$ that we can make arbitrarily small by choosing $C'$ large.
    
    \textbf{Characterizing the energy of $\wh{Y}$ within $I_k(f^*)$:} 
    We now show that, when the noise level $\alpha$ in \eqref{eq:Wt} it sufficiently small, the energy in $\wh{Y}$ within $I_k(f^*)$ is also large with high probability:
    \begin{equation}
        \sum_{f \in I_k(f^*)} |\wh{Y}_f|^2 \ge (1-2\eta) \|\wh{X}\|_2^2. \label{eq:sum_f_lb0}
    \end{equation}
    To prove this, we first note that $|\wh{Y}_f|^2 = |\wh{X}_f + \wh{W}_f|^2$ for all $f \in [n]$, from which it follows that
    \begin{align}
        \bigg| \sum_{f \in I_k(f^*)} |\wh{Y}_f|^2 - \sum_{f \in I_k(f^*)} |\wh{X}_f|^2\bigg| \le  \sum_{f \in I_k(f^*)} |\wh{W}_f|^2 + 2\sum_{f \in I_k(f^*)} |\wh{X}_f|\cdot|\wh{W}_f|. \nn
    \end{align}
    Upper bounding the summation over $|\wh{W}_f|^2$ by the total noise energy, and upper bounding the summation over $|\wh{X}_f|\cdot|\wh{W}_f|$ using the Cauchy-Schwarz inequality, we obtain
    \begin{equation}
        \bigg|\sum_{f \in I_k(f^*)} |\wh{Y}_f|^2 - \sum_{f \in I_k(f^*)} |\wh{X}_f|^2 \bigg| \le \|\wh{W}\|_2^2 + 2\|\wh{X}\|_2\cdot\|\wh{W}\|_2. \label{eq:block_noise_2}
    \end{equation}
    We therefore continue by bounding the \emph{total} noise energy $\|\wh{W}\|_2^2$; the precise distribution of the noise across different frequencies is not important for our purposes.
    
    Recall that every non-zero entry of $X$ has magnitude one, and every non-zero time-domain entry of $W$ is independently distributed as $\CN(0,\alpha)$.  Combining these observations gives $\EE[\|W\|_2^2] = \alpha\|X\|_2^2$, or equivalently $\EE[\|\wh{W}\|_2^2] = \alpha\|\wh{X}\|_2^2$ by Parseval.  Therefore, by Markov's inequality, we have $\|\wh{W}\|_2^2 \le 4\alpha \|\wh{X}\|_2^2$ with probability at least $\frac{3}{4}$.  When this occurs, \eqref{eq:block_noise_2} gives
    \begin{equation}
        \bigg|\sum_{f \in I_k(f^*)} |\wh{Y}_f|^2 - \sum_{f \in I_k(f^*)} |\wh{X}_f|^2\bigg| \le 4(\alpha + \sqrt{\alpha}) \|\wh{X}\|_2^2. \label{eq:block_noise_3}
    \end{equation}
    If we choose $\alpha = \frac{\eta^2}{100}$, then we have $4(\alpha + \sqrt{\alpha}) = \frac{\eta^2}{25} + \frac{4\eta}{10} \le \eta$.  In this case, by \eqref{eq:XB*} and \eqref{eq:block_noise_3}, the length-$k$ window $I_k(f^*)$ centered at $f^*$ satisfies \eqref{eq:sum_f_lb0}.

%    and similarly, the window $I_k(f')$ centered at any other $ f^{\prime} \in \Fc \backslash \{f^*\}$ satisfies
%    \begin{equation}
%        \sum_{f \in I_k(f')} |\wh{Y}_f|^2 \le 2\eta \|\wh{X}\|_2^2. \nn
%    \end{equation}
    \textbf{Characterizing the energy of $\wh{Y}'$ within $I_k(f^*)$:} 
    The final step is to prove that \eqref{eq:sum_f_lb0} and \eqref{eq:algo_guarantee} imply the following with constant probability for a suitable choice of $\eta$:
    \begin{equation}
        \sum_{f \in I_k(f^*)} |\wh{Y}'_{f}|^2 > \frac{1}{2} \|\wh{Y}'\|_2^2, \label{eq:lb_final}
    \end{equation}
    where $\wh{Y}'$ is the output of the block-sparse recovery algorithm.  This clearly implies that $F = F'$, due to our choice of estimator in \eqref{eq:F_decoder}.
        
    As a first step towards establishing \eqref{eq:lb_final}, we rewrite \eqref{eq:sum_f_lb0} as 
        \begin{equation}
            \sum_{f \in [n] \backslash I_k(f^*)} |\wh{Y}_f|^2 \le \|\wh{Y}\|_2^2 - (1-2\eta) \|\wh{X}\|_2^2. \label{eq:sum_f_lb}
        \end{equation}
    We can interpret \eqref{eq:sum_f_lb} as an error term $\|\wh{Y} - \wh{Y}^*\|_2^2$ for a signal $\wh{Y}^*$ coinciding with $\wh{Y}$ within $I_k(f^*)$ and being zero elsewhere.  Since $I_k(f^*)$ contains $k$ contiguous elements, this signal is $(k_0,k_1)$-block sparse, and hence if the guarantee in \eqref{eq:algo_guarantee} holds, then combining with \eqref{eq:sum_f_lb} gives
    \begin{equation}
        \|\wh{Y} - \wh{Y}'\|_2^2 \le C \Big( \|\wh{Y}\|_2^2 - (1-2\eta) \|\wh{X}\|_2^2 \Big). \label{eq:y_est_lb}
    \end{equation}
    We henceforth condition on both \eqref{eq:algo_guarantee} and the above-mentioned event $\|\wh{W}\|_2^2 \le 4\alpha \|\wh{X}\|_2^2$.  Since the former occurs with probability at least $\frac{1}{2}$ by assumption, and the latter occurs with probability at least $\frac{3}{4}$, their intersection occurs with probability at least $\frac{1}{4}$.

    Next, we write the conditions in \eqref{eq:sum_f_lb0} and \eqref{eq:y_est_lb} in terms of $\|\wh{Y}\|_2^2$, rather than $\|\wh{X}\|_2^2$.  Since $\wh{X} = \wh{Y} - \wh{W}$, we can use the triangle inequality to write $\|\wh{X}\|_2 \ge \|\wh{Y}\|_2  - \|\wh{W}\|_2$, and combining this with  $\|\wh{W}\|_2^2 \le 4\alpha \|\wh{X}\|_2^2$, we obtain $\|\wh{X}\|_2 \ge \frac{\|\wh{Y}\|_2}{1 + 2\sqrt{\alpha}}$.   Hence, we can weaken \eqref{eq:sum_f_lb0} and \eqref{eq:y_est_lb} to
    \begin{gather}
        \sum_{f \in I_k(f^*)} |\wh{Y}_f|^2 \ge \frac{1-2\eta}{  (1+2\sqrt{\alpha})^2 } \|\wh{Y}\|_2^2 \ge 0.99 \|\wh{Y}\|_2^2 \label{eq:lb_final1a} \\
        \|\wh{Y} - \wh{Y}'\|_2^2 \le C\Big(1 - \frac{1-2\eta}{ (1 + 2\sqrt{\alpha})^2 }\Big)  \|\wh{Y}\|_2^2 \le 0.01 \|\wh{Y}\|_2^2, \label{eq:lb_final2a} 
    \end{gather}
    where the second step in each equation holds for sufficiently small $\eta$ due to the choice $\alpha = \frac{\eta^2}{100}$.
    
    It only remains to use \eqref{eq:lb_final1a}--\eqref{eq:lb_final2a} to bound the left-hand side of \eqref{eq:lb_final}.  To do this, we first note that by interpreting both \eqref{eq:lb_final1a} and \eqref{eq:lb_final2a} as bounds on $\|\wh{Y}\|_2^2$, and using $\|\wh{Y} - \wh{Y}'\|_2^2 \ge \|(\wh{Y} - \wh{Y}')_{I_k(f^*)}\|_2^2$ in the latter, we have
    \begin{equation}
        \sum_{f \in I_k(f^*)} |\wh{Y}_f - \wh{Y}'_f|^2 \le 0.02 \sum_{f \in I_k(f^*)} |\wh{Y}_f|^2, \nonumber
    \end{equation}
    since $\frac{0.01}{0.99} \le 0.02$.  Taking the square root and applying the triangle inequality to the $\ell_2$-norm on the left-hand side, we obtain
    \begin{equation}
        \sum_{f \in I_k(f^*)} |\wh{Y}'_f|^2 \ge (1- \sqrt{0.02})^2 \sum_{f \in I_k(f^*)} |\wh{Y}_f|^2. \label{eq:lb_final2b}
   \end{equation}
   
   Next, writing $\|\wh{Y}'\|_2 = \|\wh{Y} + (\wh{Y}' - \wh{Y})\|_2$, and applying the triangle inequality followed by \eqref{eq:lb_final2a}, we have $\|\wh{Y}'\|_2 \le 1.1 \|\wh{Y}\|_2$, and hence $\|\wh{Y}\|_2 \ge 0.9 \|\wh{Y}'\|_2$.  Squaring and substituting into \eqref{eq:lb_final1a}, we obtain
    \begin{gather}
        \sum_{f \in I_k(f^*)} |\wh{Y}_f|^2 \ge 0.8 \|\wh{Y}'\|_2^2. \label{eq:lb_final1b}
    \end{gather}
    Finally, combining \eqref{eq:lb_final2b} and \eqref{eq:lb_final1b} yields \eqref{eq:lb_final}, and we have thus shown that \eqref{eq:lb_final} holds (and hence $F = F'$) with probability at least $\frac{1}{4}$.
\end{proof}

\section{Acknowledgements}
MK would like to thank Piotr Indyk for helpful discussions.  VC and JS are supported by the European Commission (ERC Future Proof), SNF (200021-146750 and CRSII2-147633), and `EPFL Fellows' program (Horizon2020 665667).

\appendix 

%!TEX root = BlockSparseFT-new.tex

\section{Omitted Proofs from Section \ref{sec:overview}}

\subsection{Proof of Lemma \ref{lem:filter_properties}} \label{sec:pf_filter}

Our filter construction is similar to \cite{IK14a}, but we prove and utilize different properties, and hence provide the details for completeness.  

\begin{definition}[Rectangular pulse]
    For an even integer $B'$, let $\rect^{B'}$ denote the rectangular pulse of width $B'-1$, i.e. 
    \begin{equation*}
    \rect^{B'}_t=\left\{
    \begin{array}{ll}
    1,&\text{~if~}|t|< \frac{B'}{2} \\
    0&\text{otherwise.}
    \end{array}
    \right.
    \end{equation*}
\end{definition}

For an integer $B'>0$ a power of $2$, define the length-$n$ signal
\begin{equation}\label{eq:boxcar-time}
    {W}=\left(\frac{{n}}{B'-1}\cdot \rect^{B'}\right) \star \cdots  \star \left(\frac{{n}}{B'-1}\cdot \rect^{B'}\right),
\end{equation}
where the convolution is performed $F$ times. As noted in \cite{IK14a}, we have $\supp( W^{\fc}) \subseteq [-\fc \cdot B', \fc\cdot B']$, and the Fourier transform is given by
\begin{equation}\label{eq:boxcar-freq}
\begin{split}
\wh{W}_f&=\left(\frac{1}{B'-1}\sum_{|f'|< \frac{B'}{2}} \omega_n^{ff'}\right)^{\fc}=\left(\frac{\sin(\pi (B'-1) f/n)}{(B'-1)\sin (\pi f/n)}\right)^{\fc}
\end{split}
\end{equation}
for $f \ne 0$, and $W_0 = 1$.

\begin{lem}\label{cl:filter-h-prop}
    \emph{(Properties of $W$)}
    For every  even $\fc\geq 2$, the following hold for the signal $W$ defined in \eqref{eq:boxcar-time}--\eqref{eq:boxcar-freq}:
    \begin{description}
    \item[1] $\wh{W}_f\in [0, 1]$ for all $f \in \nsq$;
    \item[2] There exists an absolute constant $C\geq 0$ such that for every $\lambda>1$, 
    $$
    \sum_{f\in \nsq,\, |f|\geq \frac{\lambda\cdot n}{2B'}} \wh{W}_f\leq (C/\lambda)^{F-1} \sum_{f\in \nsq} \wh{W}_f.
    $$
    \end{description}
\end{lem}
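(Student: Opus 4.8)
The plan is to work directly from the product formula $\wh{W}_f = D(f)^F$ supplied by \eqref{eq:boxcar-freq}, where I abbreviate $D(f) := \frac{1}{B'-1}\sum_{|f'|<B'/2}\omega_n^{ff'}$. Note $D(f)$ is real (pair $f'$ with $-f'$), coincides with the Dirichlet-type kernel in \eqref{eq:boxcar-freq}, and $\wh{W}_0 = D(0)^F = 1$. For property~1 I would simply observe that $D(f)$ is the average of the $B'-1$ unit-modulus numbers $\{\omega_n^{ff'}\}_{|f'|<B'/2}$, so $|D(f)|\le 1$ by the triangle inequality; since $F$ is even, $\wh{W}_f = D(f)^F = |D(f)|^F \in [0,1]$, and in particular $\wh W_0 = 1$.

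For property~2 I would first record two elementary pointwise estimates, valid for all $f\in\nsq\setminus\{0\}$: (A) the upper bound $|D(f)| \le \frac{n}{2(B'-1)|f|}$, obtained from $|\sin(\pi f/n)| \ge \tfrac{2|f|}{n}$ (concavity of $\sin$ on $[0,\pi]$, applied with argument $|f|/n\le\tfrac12$) and $|\sin(\pi(B'-1)f/n)|\le 1$ in \eqref{eq:boxcar-freq}; and (B) the lower bound $|D(f)| \ge \tfrac{2}{\pi}$ for $|f|\le\tfrac{n}{2B'}$, obtained from $\sin y \ge \tfrac{2}{\pi}y$ on $[0,\tfrac\pi2]$ in the numerator (here $\pi(B'-1)|f|/n\le\pi/2$) and $\sin y\le y$ in the denominator. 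Next I would reduce to the case $\lambda > C$ for a large absolute constant $C$ to be pinned down at the end: when $\lambda\le C$ the claimed inequality is trivial because $(C/\lambda)^{F-1}\ge 1$ and all summands are nonnegative, so the left side is at most $\sum_{f\in\nsq}\wh W_f$ (and the sum is vacuous when $\lambda>B'$, so we may also assume $\lambda\le B'$).

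Then I would decompose the tail $\{f:|f|\ge \lambda n/(2B')\}$ into dyadic shells $R_j = \{f\in\nsq : 2^j\lambda\tfrac{n}{2B'}\le |f| < 2^{j+1}\lambda\tfrac{n}{2B'}\}$, $j\ge 0$. On $R_j$, bound (A) together with $\tfrac{B'}{B'-1}\le 2$ gives $|D(f)| \le \tfrac{2}{2^j\lambda} < 1$, while $|R_j| \le 2^{j+1}\lambda\tfrac{n}{B'}$, so
\[
  \sum_{f\in R_j}\wh W_f \;\le\; 2^{j+1}\lambda\tfrac{n}{B'}\Big(\tfrac{2}{2^j\lambda}\Big)^F \;=\; \frac{2^{F+1}n}{B'}\cdot\frac{1}{2^{j(F-1)}\lambda^{F-1}},
\]
and summing the geometric series over $j\ge 0$ (ratio $2^{-(F-1)}\le\tfrac12$) yields $\sum_{|f|\ge\lambda n/(2B')}\wh W_f \le \frac{2^{F+2}n}{B'\lambda^{F-1}}$. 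For the denominator I would lower-bound the total mass by the main lobe: by (B) there are at least $\tfrac{n}{2B'}$ integers $f$ (or just $f=0$, with $\wh W_0=1$, in the degenerate range $B'<n<2B'$) with $\wh W_f\ge(2/\pi)^F$, hence $\sum_{f\in\nsq}\wh W_f \ge \tfrac12\tfrac{n}{B'}(2/\pi)^F$. Dividing the two displays, the $n/B'$ factors cancel and the ratio is at most $\frac{2^{F+3}\pi^F}{2^F\lambda^{F-1}} = 8\pi(\pi/\lambda)^{F-1} \le (C/\lambda)^{F-1}$, provided $C$ is chosen so that $C^{F-1}\ge 8\pi^F$ for every even $F\ge 2$ (e.g.\ $C=8\pi^2$); this is also the constant for which the initial "$\lambda>C$" reduction is harmless.

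The main obstacle is making $C$ genuinely absolute — independent of $F$, $B'$, $n$, and $\lambda$. The naive argument, using only $\sum_f\wh W_f\ge \wh W_0 = 1$ in the denominator, fails, because the dyadic tail estimate unavoidably carries a factor $n/B'$ (the number of integer frequencies in each shell). The fix is to notice that the total mass is itself of order $\tfrac{n}{B'}(2/\pi)^F$ — the main lobe has width $\asymp n/B'$ and height $\Theta(1)$ — so the $n/B'$ cancels, leaving only a clean competition between the geometric decay $\lambda^{-(F-1)}$ across shells and the $(\pi/2)^F$ loss incurred in lower-bounding the denominator; this is what can be absorbed into the constant $C$ uniformly in $F$. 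A secondary technical point is the reduction to large $\lambda$, which is what lets one assume $2/(2^j\lambda)<1$ already on the first shell $j=0$ so that the per-shell bound really does decay geometrically.
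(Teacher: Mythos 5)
Your proof is correct and takes essentially the same route as the paper's: the same pointwise bounds $|\wh{W}_f|\le (n/(2(B'-1)|f|))^{F}$-type decay and $\wh{W}_f\ge(2/\pi)^F$ on the main lobe, a tail estimate of order $\lambda^{-(F-1)}\cdot n/B'$ (you via dyadic shells, the paper by summing $|f|^{-F}$ directly), the same lower bound $\gtrsim (2/\pi)^F\cdot n/B'$ on the total mass so the $n/B'$ cancels, and the same absorption of the $(\pi/2)^F$ loss into $(C/\lambda)^{F-1}$ using $F\ge 2$. No substantive differences worth noting.
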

\begin{proof}
    First note that the maximum of $\wh{W}_f$  is achieved at $0$ and equals $1$. Since $F$ is even by assumption, we have from \eqref{eq:boxcar-freq} that $\wh{W}_f\geq 0$ for all $f$. These two facts establish the first claim.
    
    To prove the second claim, note that for all $f \in \nsq$, we have
    \begin{align}
        \wh{W}_f = \left|\frac{\sin(\pi (B'-1) f/n)}{(B'-1)\sin (\pi f/n)}\right|^{\fc}&\leq \left|\frac{1}{(B'-1)\sin (\pi f/n)}\right|^{\fc} \text{~~~~(since $|\sin (\pi x)|\leq 1$)} \nonumber \\
        &\leq \left|\frac{1}{(B'-1)2 |f|/n}\right|^{\fc} \text{~~~~(since $|\sin (\pi x)|\geq 2|x|$ for $|x|\leq 1/2$)}. \label{eq:filter-ub} %\\
        %&= \left|\frac{B'}{2(B'-1)}\right|^{\fc} \cdot \left|\frac{n}{B' |i|}\right|^{\fc}, \\
    \end{align}
    We claim that this can be weakened to
    \begin{equation}
        \wh{W}_f\leq \left(\frac{n}{B'|f|}\right)^{\fc}. \label{eq:W2_weakened}
    \end{equation}
    For $f\in [-n/B', n/B']$ the right-hand side is at least one, and hence this claim follows directly from the first claim above.  On the other hand, if $|f|\geq n/B'$, we have
    \begin{equation}
        \begin{split}
        2(B'-1)|f|/n&=2B'|f|/n-2|f|/n\\
        &\geq 2B'|f|/n-1\text{~~~(since $|f|\leq n/2$)}\\
        &\geq B'|f|/n\text{~~~(since $|f|\geq n/B'$)},
        \end{split} \nn
    \end{equation}
    and hence \eqref{eq:W2_weakened} follows from \eqref{eq:filter-ub}.
    
    Using \eqref{eq:W2_weakened}, we have
    \begin{equation}\label{eq:h-sum-tail-ub}
    \begin{split}
    \sum_{|f|\geq \frac{\lambda\cdot n}{2B'}} \wh{W}_f
        &\leq \sum_{|f| \geq \frac{\lambda\cdot n}{2B'}} \left(\frac{n}{B'|f|}\right)^F = O(\lambda^{-F+1})\cdot \frac{n}{B'}.
    \end{split}
    \end{equation}
    At the same time, for any $f \in [-\frac{n}{2B'}, \frac{n}{2B'}]$, we have
    \begin{equation*}
    \begin{split}
    \wh{W}_f&=\left|\frac{\sin(\pi (B'-1) f/n)}{(B'-1)\sin (\pi f/n)}\right|^{\fc}\\
    &\geq \left|\frac{2(B'-1) f/n}{(B'-1)\sin (\pi f/n)}\right|^{\fc}\text{~~~~(since $|\sin (\pi x)|\geq 2|x|$ for $|x|\leq 1/2$)}\\
    &\geq \left|\frac{2(B'-1) f/n}{(B'-1) \pi (f/n)}\right|^{\fc}\text{~~~~(since $|\sin (\pi x)|\leq \pi|x|$)}\\
    &= \left(\frac{2}{\pi}\right)^{\fc}.
    \end{split}
    \end{equation*}
    This means that 
    \begin{equation}\label{eq:h-sum-lb}
    \sum_{f\in \nsq} \wh{W}_f\geq \sum_{f\in [-\frac{n}{2B'}, \frac{n}{2B'}]} \wh{W}_f\geq \left(\frac{2}{\pi}\right)^{\fc} \cdot \frac{n}{B'}.
    \end{equation}
    Putting \eqref{eq:h-sum-tail-ub} together with \eqref{eq:h-sum-lb}, we get 
    \begin{equation*}
        \begin{split}
        \sum_{f\in \nsq,\, |f|\geq \frac{\lambda\cdot n}{2B'}} \wh{W}_f &= O(\lambda^{-F+1})\cdot \frac{n}{B'}\leq C'\cdot \Big(\frac{\pi}{2}\Big)^F\lambda^{-F+1}\cdot \sum_{f \in \nsq} W_f
        \end{split}
    \end{equation*}
    for an absolute constant $C'>0$. The desired claim follows since  $C' (\pi/2)^F=((C')^{1/(F-1)}\cdot (\pi/2)^{F/(F-1)}))^{F-1}\leq (C'\cdot (\pi/2)^{2})^{F-1}$ (due to the assumption that $F \ge 2$).
\end{proof}

We now fix an integer $B$, and define $\wh{G}$ by 
$$
    \wh{G}_f = \frac{1}{Z}\sum_{\Delta=-\frac{3n}{4B}}^{\frac{3n}{4B}} \wh{W}_{f-\Delta}.
$$
where $Z=\sum_{f\in \nsq} \wh{W}_f$.  By interpreting this as a convolution with a rectangle, we obtain that the inverse Fourier transform $G_t$ is obtained via the multiplication of $W_t$ with a sinc pulse.
%\begin{equation}\label{eq:g-def}
%{G}_i=\frac{1}{Z}\cdot {W}_f\cdot \left(\frac{\sin(\pi (4B/3-1) f/n)}{(4B/3-1)\sin (\pi f/n)}\right).
%\end{equation}

We proceed by showing that, upon identifying $B' = 8C B$ (where $B'$ was used in defining $\wh{W}$, and $C$ is the implied constant in Lemma \ref{cl:filter-h-prop}), this filter satisfies the claims of Lemma \ref{lem:filter_properties}.  We start with the three properties in Definition \ref{def:filterG}.
  
\begin{proofof}{[Lemma \ref{lem:filter_properties} (filter property 1)]} 
    For every $f$, we have
    $$
    \wh{G}_f=\frac{1}{Z}\cdot\sum_{\Delta=-\frac{3n}{4B}}^{\frac{3n}{4B}} \wh{W}_{f-\Delta}\leq \frac{1}{Z}\sum_{\Delta \in [n]} \wh{W}_{f-\Delta}=1.
    $$
    Similarly, the non-negativity of $\wh{G}$ follows directly from that of $\wh{W}$.
\end{proofof}

\begin{proofof}{[Lemma \ref{lem:filter_properties} (filter property 2)]}
    For every $f\in \nsq$ with $|f|\leq \frac{n}{2B}$, we have 
    \begin{equation*}
        \begin{split}
        \wh{G}_f &=\frac{1}{Z}\cdot\sum_{\Delta=-\frac{3n}{4B}}^{\frac{3n}{4B}} \wh{W}_{f-\Delta}\\
        &=1-\frac{1}{Z}\sum_{|\Delta|>\frac{3n}{4B}} \wh{W}_{f-\Delta}\\
        &\ge 1-\frac{2}{Z}\sum_{f'>\frac{n}{4B}} \wh{W}_{f'}\text{~~~~(since $|f|\leq \frac{n}{2B}$ and $W$ is symmetric)}\\
        &=1-\frac{2}{Z}\sum_{f'>\frac{B'}{2B}\cdot \frac{n}{2B'}} \wh{W}_{f'} \\
        &\ge 1-\Big(2C\cdot \frac{B}{B'}\Big)^{F-1}. \text{~~~~(by Lemma~\ref{cl:filter-h-prop})}
        \end{split}
    \end{equation*}
    Since $B'/B = 8C$ by our choice of $B'$ above, we get $\wh{G}_f \geq 1-(1/4)^{F-1}$, as required.
\end{proofof}

\begin{proofof}{[Lemma \ref{lem:filter_properties} (filter property 3)]}
    For every $f \in \nsq$ with $|f| \ge \frac{n}{B}$, we have 
    \begin{equation*}
        \begin{split}
        \wh{G}_f &=\frac{1}{Z}\cdot\sum_{\Delta=-\frac{3n}{4B}}^{\frac{3n}{4B}} \wh{W}_{f-\Delta}\\
            &\le \frac{1}{Z}\cdot\sum_{f' \,:\, |f'| \ge |f| - \frac{3n}{4B} } \wh{W}_{f'}. ~~~~\text{(by $|f| \ge \frac{n}{B}$)}
        \end{split}
    \end{equation*}
    Defining $\zeta \ge 1$ such that $|f| = (3+\zeta)\frac{n}{4B}$, this becomes
    \begin{equation*}
        \begin{split}
        \wh{G}_f &\le \frac{1}{Z}\cdot\sum_{f' \,:\,|f'| \ge \frac{\zeta n}{4B} } \wh{W}_{f'} \\
            &= \frac{1}{Z}\cdot\sum_{f' \,:\,|f'| \ge \frac{\zeta B'}{2B} \cdot \frac{n}{2B'} } \wh{W}_{f'} \\
            &\le \Big( \frac{2C B}{\zeta B'}\Big)^{F-1} \text{~~~~(by Lemma~\ref{cl:filter-h-prop})} \\
            &= \Big( \frac{1}{4 \zeta}\Big)^{F-1} \text{~~~~(since~$B' = 8CB$)}.
        \end{split}
    \end{equation*}
    Rearranging the definition of $\zeta$, we obtain $\zeta = \frac{4B |f|}{n} - 3$, and hence $\zeta \ge \frac{B|f|}{n} $ due to the fact that $|f| \ge \frac{n}{B}$.  Therefore, $\wh{G}_f \le \big( \frac{n}{4 B|f|}\big)^{F-1}$.
\end{proofof}

\begin{proofof}{[Lemma \ref{lem:filter_properties} (additional property 1)]}
    We have already shown that ${W}$ is supported on a window of length $O(F B') = O( F B )$ centered at zero.  The same holds for ${G}$ since it is obtained via a pointwise multiplication of ${W}$ with a sinc pulse.
\end{proofof}

\begin{proofof}{[Lemma \ref{lem:filter_properties} (additional property 2)]}
    Since $\wh{G}_f \in [0,1]$, the total energy across $|f| < \frac{n}{B}$ is at most $\frac{2n}{B}$.  On the other hand, we have from the third property in Definition \ref{def:filterG} that
    \begin{equation*}
        \begin{split}
            \sum_{|f| \ge \frac{n}{B}} |\wh{G}_f|^2
                & \le 2 \sum_{f \ge \frac{n}{B}} \Big(\frac{1}{4}\Big)^{2(F-1)}   \Big( \frac{n}{Bf} \Big)^{2(F-1)} \\
                &\le \frac{1}{8} \sum_{f \ge \frac{n}{B}} \Big( \frac{n}{Bf} \Big)^{2} \text{~~~~(since~$F \ge 2$)} \\
                &\le \frac{1}{8}\cdot\frac{n}{B} \sum_{f=1}^{\infty} \frac{1}{f^2} \\
                &\le \frac{n}{B} \text{~~~~(since~$\sum_{f=1}^{\infty} \frac{1}{f^2} < 8$)}.
        \end{split}
    \end{equation*}
    Combining this with the contribution from $|f| < \frac{n}{B}$ concludes the proof.
\end{proofof}

\subsection{Proof of Lemma \ref{claim:1}} \label{sec:pf_downsampling}

We are interested in the behavior of $\sum_{r \in [2k_1]} |\wh{Z}^r_{j}|^2$ for each $j$ (first part), and summed over all $j$ (second part).  We therefore begin with the following lemma, bounding this summation in terms of the signal $X$ and the filter $G$.

\begin{lem} \label{lem:pre1}
    \emph{(Initial downsampling bound)}
    For any integers $(n,k_1)$, parameter $\delta \in \big(0, \frac{1}{20}\big)$, signal $X \in \CC^n$ and its corresponding $(k_1,\delta)$-downsampling $\{Z^r\}_{r \in [2k_1]}$, the following holds for all $j \in [\frac{n}{k_1}]$:
    \begin{equation*}
    \bigg| \frac{1}{2k_1}\sum_{r \in [2k_1]} |\wh{Z}^r_{j}|^2 - \sum_{f=1}^{n} |\wh{G}_{f-k_1j}|^2 \cdot |\wh{X}_f|^2 \bigg| \le 3\delta \sum_{f=1}^{n} |\wh{G}_{f-k_1j}| \cdot |\wh{X}_f|^2. 
    \end{equation*}
\end{lem}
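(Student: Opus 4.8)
Fix $j$, and set $y_f := \wh{G}_{f-k_1 j}\,\wh{X}_f$ for $f\in[n]$, so that by \eqref{equation:4} we have $\wh{Z}^r_j = \sum_{f\in[n]} y_f\, e^{2\pi i r f/(2k_1)}$ (using $\omega_n^{a_r f}=e^{2\pi i r f/(2k_1)}$ with $a_r=\frac{nr}{2k_1}$). Since this exponential depends on $f$ only through $f\bmod 2k_1$, grouping frequencies by residue class gives $\wh{Z}^r_j=\sum_{c\in[2k_1]}Y_c\, e^{2\pi i r c/(2k_1)}$, where $Y_c:=\sum_{f\equiv c\ (\bmod\ 2k_1)}y_f$. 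Averaging $|\wh{Z}^r_j|^2$ over $r\in[2k_1]$ and using orthogonality of the characters $r\mapsto e^{2\pi i r c/(2k_1)}$ (i.e.\ Parseval) yields
\begin{equation}
\frac{1}{2k_1}\sum_{r\in[2k_1]}|\wh{Z}^r_j|^2=\sum_{c\in[2k_1]}|Y_c|^2=\sum_{f\in[n]}|y_f|^2+\sum_{c\in[2k_1]}\Big(|Y_c|^2-\sum_{f\equiv c}|y_f|^2\Big). \nonumber
\end{equation}
The first term equals $\sum_{f}|\wh{G}_{f-k_1 j}|^2|\wh{X}_f|^2$, which is exactly the main term in the statement, so the whole task reduces to bounding the modulus of the second sum by $3\delta\,W$, where $W:=\sum_{f}|\wh{G}_{f-k_1 j}|\,|\wh{X}_f|^2$.

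\textbf{Near/far decomposition of each residue class.} The key point is that, since $2k_1\mid n$, the values $\{f-k_1 j : f\equiv c\}$ form a $2k_1$-spaced coset of $\ZZ/n\ZZ$, so \emph{at most one} of them lies in $(-k_1,k_1)$; call the corresponding frequency $f_0(c)$ (if it exists), and call every other frequency of the class \emph{far}, so that $|f-k_1 j|\ge k_1$ and the decay bound~(iii) of Definition~\ref{def:filterG} applies with $B=\frac{n}{k_1}$, namely $|\wh{G}_{f-k_1 j}|\le(\tfrac14)^{F-1}(\tfrac{k_1}{|f-k_1 j|})^{F-1}$. Writing $A_c:=|y_{f_0(c)}|$ (with $A_c:=0$ if no near frequency exists), $B_c:=\sum_{f\equiv c,\ \mathrm{far}}|y_f|$, and $Y_c=y_{f_0(c)}+R_c$ with $|R_c|\le B_c$, the expansion $|Y_c|^2=A_c^2+2\,\mathrm{Re}(y_{f_0(c)}\overline{R_c})+|R_c|^2$ together with $\sum_{f\equiv c,\ \mathrm{far}}|y_f|^2\le B_c^2$ and the triangle inequality gives the elementary bound
\begin{equation}
\Big| |Y_c|^2-\sum_{f\equiv c}|y_f|^2\Big|\le 2A_cB_c+2B_c^2. \nonumber
\end{equation}

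\textbf{Summation over residue classes and choice of $F$.} Because $\wh{G}\in[0,1]$, one has $A_c^2\le|\wh{G}_{f_0(c)-k_1 j}|\,|\wh{X}_{f_0(c)}|^2$, and since the $f_0(c)$ are distinct frequencies, $\sum_cA_c^2\le W$. Summing the decay bound over the far frequencies of a fixed class (at most two per distance band $[(2\ell-1)k_1,(2\ell+1)k_1)$, $\ell\ge1$) gives $\sum_{f\equiv c,\ \mathrm{far}}|\wh{G}_{f-k_1 j}|\le C_0(\tfrac14)^{F-1}$ for an absolute constant $C_0$; hence by Cauchy--Schwarz $B_c^2\le C_0(\tfrac14)^{F-1}\sum_{f\equiv c,\ \mathrm{far}}|\wh{G}_{f-k_1 j}|\,|\wh{X}_f|^2$, and summing over $c$ (the far frequencies of distinct classes being disjoint and forming a subset of $[n]$) gives $\sum_cB_c^2\le C_0(\tfrac14)^{F-1}W$. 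A final application of Cauchy--Schwarz then bounds $\sum_c(2A_cB_c+2B_c^2)$ by $2\sqrt{\sum_cA_c^2}\,\sqrt{\sum_cB_c^2}+2\sum_cB_c^2\le\big(2\sqrt{C_0}(\tfrac14)^{(F-1)/2}+2C_0(\tfrac14)^{F-1}\big)W$. Since $F=10\log\tfrac1\delta$ and $\delta<\tfrac1{20}$, the quantity $(\tfrac14)^{(F-1)/2}$ is a large power of $\delta$, so the bracketed factor is $\le 3\delta$ with a great deal of room, which completes the proof.

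\textbf{Where the difficulty lies.} The only genuinely delicate step is the off-diagonal estimate. The naive bound $|y_f||y_{f'}|\le\frac12(|y_f|^2+|y_{f'}|^2)$ is far too lossy, because a frequency lying \emph{inside} block $j$ (where $\wh{G}_{f-k_1 j}\approx1$) may share its residue class modulo $2k_1$ with a far frequency carrying arbitrarily large mass, and charging that cross term symmetrically inflates the bound to $\Omega(W)$; even a per-$f$ treatment of the quantity $\sum_{f'\ne f,\ f'\equiv f}|\wh{G}_{f'-k_1 j}|$ fails for the same reason. The fix above is to isolate the unique near frequency of each residue class, and to push everything else through the filter's polynomial tail decay only after summing over $c$. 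Some minor care is also required for the boundary case $|f-k_1 j|=k_1$ (treated as ``far'', which is consistent with (iii) since $\tfrac{n}{B}=k_1$) and for wrap-around on $\ZZ/n\ZZ$ when summing the filter tail over a $2k_1$-spaced coset.
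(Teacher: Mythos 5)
Your proof is correct, and its skeleton matches the paper's: both start from the observation that $\sum_{r\in[2k_1]}\omega_n^{a_r(f'-f)}$ vanishes unless $f\equiv f'\pmod{2k_1}$, so that averaging $|\wh{Z}^r_j|^2$ over $r$ produces the diagonal term $\sum_f|\wh{G}_{f-k_1j}|^2|\wh{X}_f|^2$ plus cross terms confined to residue classes modulo $2k_1$, and both kill the cross terms using the filter's polynomial tail together with Cauchy--Schwarz. Where you diverge is in the bookkeeping of the cross terms. The paper's route is pairwise: for two frequencies separated by $2k_1j'$ it proves (its Lemma on the additional filter property) that the \emph{geometric mean} $|\wh{G}_f|^{1/2}|\wh{G}_{f-2k_1j'}|^{1/2}\le(\tfrac12)^{F-1}|j'|^{-(F-1)/2}$ --- since at least one of the two must sit at distance $\ge k_1|j'|$ from the block center --- and then, for each fixed separation $j'$, applies Cauchy--Schwarz over $f$ (using that the shifted weight sequence is a permutation of the unshifted one) before summing $\sum_{j'\ne0}|j'|^{-(F-1)/2}\le 3$. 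Your route instead isolates the unique near frequency of each residue class, charges the entire remainder of the class to the filter tail via $\sum_cB_c^2\le C_0(\tfrac14)^{F-1}W$, and applies Cauchy--Schwarz over classes rather than over frequencies. These are two organizations of the same underlying fact (a $2k_1$-spaced coset meets $(-k_1,k_1)$ at most once), and your ``difficulty'' paragraph correctly identifies why the naive symmetric splitting of cross terms fails --- which is precisely what the paper's geometric-mean lemma is engineered to avoid. Your version is arguably a bit cleaner in that it needs only one application of Cauchy--Schwarz per class and no per-separation accounting, at the cost of the explicit constant $C_0$; both land comfortably inside the $3\delta$ budget since $(\tfrac12)^{F-1}$ is a high power of $\delta$.
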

\begin{proof}
\textbf{Directly evaluating the sum:} Using the definition of the signals $\wh{Z}^r$ in \eqref{equation:4}, we write
\begin{align}
    \sum_{r \in [2k_1]} |\wh{Z}^r_{j}|^2
        &=\sum_{r \in [2k_1]} \bigg (\sum_{f=1}^{n} \wh{G}_{f-k_1 j} \cdot \wh{X}_f \cdot \omega_n^{a_r f}\bigg)^{\dagger} \bigg(\sum_{f'=1}^{n}\wh{G}_{f'-kj} \cdot \wh{X}_{f'} \cdot \omega_n^{a_r f'}\bigg) \nn \\
    	& =\sum_{f=1}^{n} \sum_{f'=1}^{n} \wh{G}^*_{f-k_1 j} \cdot \wh{X}^*_f \cdot \wh{G}_{f'-k_1 j} \cdot \wh{X}_{f'} \cdot \Big( \sum_{r \in [2k_1]}\omega_n^{a_r (f'-f)} \Big) \nn
\end{align}
where $(\cdot)^{\dagger}$ denotes the complex conjugate. Since $a_r = \frac{nr}{2k_1}$, the term $\sum_{r \in [2k_1]}\omega_n^{a_r (f'-f)}$ is equal to $2k_1$ if $f-f'$ is a multiple of $2k_1$ (including $f = f'$) and zero otherwise, yielding
\begin{align}
    \sum_{r \in [2k_1]} |\wh{Z}^r_{j}|^2
	&= 2k_1 \cdot \sum_{f=1}^{n}\Bigg( |\wh{G}_{f-k_1j}|^2 \cdot |\wh{X}_f|^2 + \sum_{\substack{j' \in [\frac{n}{2k_1}] \\ j' \neq 0}} \wh{G}^*_{f-k_1 j} \cdot \wh{X}^*_f \cdot \wh{G}_{f - 2k_1 j' -k_1 j} \cdot \wh{X}_{f - 2k_1 j'} \Bigg). \label{eq:energydiff3}
\end{align}
Without loss of generality, we consider $j=0$; otherwise, we can simply consider a version of $X$ shifted in frequency domain by $k_1 j$. Setting $j = 0$ in \eqref{eq:energydiff3} and applying the triangle inequality, we obtain
\begin{equation}
    \bigg| \sum_{r \in [2k_1]} |\wh{Z}^r_{0}|^2 - 2k_1 \cdot \sum_{f=1}^{n} |\wh{G}_{f}|^2 \cdot |\wh{X}_f|^2 \bigg| \le 2k_1 \cdot \sum_{\substack{|j'| \le \frac{n}{2(2k_1)} \\ j' \neq 0}}\sum_{f=1}^{n} \bigg| \wh{G}^*_{f} \cdot \wh{X}^*_f \cdot \wh{G}_{f - 2k_1 j'} \cdot \wh{X}_{f - 2k_1 j'}\bigg|. \label{eq:two_terms}
\end{equation}
\textbf{Bounding the right-hand side of \eqref{eq:two_terms}:} We write
\begin{align}
    &\sum_{\substack{|j'| \le \frac{n}{2(2k_1)} \\ j' \neq 0}}\sum_{f=1}^{n} \bigg| \wh{G}^*_{f} \cdot \wh{X}^*_f \cdot \wh{G}_{f - 2k_1 j'} \cdot \wh{X}_{f - 2k_1 j'}\bigg| \nonumber \\
    & \qquad = \sum_{\substack{|j'| \le \frac{n}{2(2k_1)} \\ j' \neq 0}}\sum_{f=1}^{n} \Big(|\wh{G}_{f}|^{1/2} \cdot |\wh{G}_{f - 2k_1 j'}|^{1/2}\Big) \cdot |\wh{G}_{f}|^{1/2} \cdot |\wh{X}^*_f| \cdot |\wh{G}_{f - 2k_1 j'}|^{1/2} \cdot |\wh{X}_{f - 2k_1 j'}|. \label{eq:secondterm3}
\end{align}
In Lemma \ref{lemm:2} below, we show that
$$|\wh{G}_{f}|^{1/2} \cdot |\wh{G}_{f - 2k_1 j'}|^{1/2} \leq \frac{(\frac{1}{2})^{F-1}}{|j'|^{(F-1)/2}}$$
for all $f \in [n]$ and all $|j'| \le \frac{n}{2(2k_1)}$ with $j' \neq 0$.  Definition \ref{def:downsampling} ensures that $\big(\frac{1}{2}\big)^{F-1} \le \delta$, and substitution into \eqref{eq:secondterm3} gives
\begin{multline}
    \sum_{\substack{|j'| \le \frac{n}{2(2k_1)} \\ j' \neq 0}}\sum_{f=1}^{n} \bigg| \wh{G}^*_{f} \cdot \wh{X}^*_f \cdot \wh{G}_{f - 2k_1 j'} \cdot \wh{X}_{f - 2k_1 j'}\bigg|
        \\ \leq \delta \cdot \sum_{\substack{|j'| \le \frac{n}{2(2k_1)} \\ j' \neq 0}} \frac{1}{|j'|^{(F-1)/2}}  \sum_{f=1}^{n} |\wh{G}_{f}|^{1/2} \cdot |\wh{X}^*_f| \cdot |\wh{G}_{f - 2k_1 j'}|^{1/2} \cdot |\wh{X}_{f - 2k_1 j'}|. \label{eq:secondterm5}
\end{multline}
Next, we apply the Cauchy-Schwarz inequality to upper bound the inner summation over $f$ above for any fixed $j' \in [\frac{n}{2k_1}]$, yielding
\begin{align}
    \sum_{f=1}^{n} |\wh{G}_{f}|^{1/2} \cdot |\wh{X}^*_f| \cdot |\wh{G}_{f - 2k_1 j'}|^{1/2} \cdot |\wh{X}_{f - 2k_1 j'}| 
    &\leq \sqrt{\sum_{f=1}^{n} |\wh{G}_{f}| \cdot |\wh{X}^*_f|^2} \cdot \sqrt{\sum_{f=1}^n |\wh{G}_{f - 2k_1 j'}| \cdot |\wh{X}_{f - 2k_1 j'}|^2} \nn \\
    &=\sum_{f=1}^{n} |\wh{G}_{f}| \cdot |\wh{X}^*_f|^2, \label{eq:secondterm7}
\end{align}
where we used the fact that $\big \{ |\wh{G}_{f - 2k_1 j'}| \cdot |\wh{X}_{f - 2k_1 j'}|^2 \big \}_{f=1}^n$ is a permutation of $\big \{ |\wh{G}_{f}| \cdot |\wh{X}^*_f|^2 \big \}_{f=1}^n$. % and therefore has the same $\ell_2$ norm.

\textbf{Wrapping up:} Substituting \eqref{eq:secondterm7} into \eqref{eq:secondterm5} gives
\begin{align}
	&\sum_{\substack{|j'| \le \frac{n}{2(2k_1)} \\ j' \neq 0}}\sum_{f=1}^{n} \bigg| \wh{G}^*_{f} \cdot \wh{X}^*_f \cdot \wh{G}_{f - 2k_1 j'} \cdot \wh{X}_{f - 2k_1 j'}\bigg|  \nn \\
	&\qquad\leq \delta \cdot \sum_{\substack{|j'| \le \frac{n}{2(2k_1)} \\ j' \neq 0}} \frac{1}{|j'|^{(F-1)/2}}  \sum_{f=1}^{n} |\wh{G}_{f}| \cdot |\wh{X}_f|^2 \nn  \\
	&\qquad\leq 3\delta \sum_{f=1}^{n} |\wh{G}_{f}| \cdot |\wh{X}_f|^2, \nn
\end{align}
where the last inequality follows from the fact that $\sum_{\substack{|j'| \le \frac{n}{2(2k_1)} \\ j' \neq 0}} \frac{1}{|j'|^{(F-1)/2}} \le 2 \sum_{j'=1}^{\infty} \frac{1}{|j'|^{(F-1)/2}}$, which is upper bounded by $3$ for $F \ge 8$, a condition guaranteed by Definition \ref{def:downsampling}. We therefore obtain the following bound from \eqref{eq:two_terms}:
\begin{equation*}
\bigg| \frac{1}{2k_1} \cdot \sum_{r \in [2k_1]} |\wh{Z}^r_{0}|^2 - \sum_{f=1}^{n} |\wh{G}_{f}|^2 \cdot |\wh{X}_f|^2 \bigg| \le 3\delta \sum_{f=1}^{n} |\wh{G}_{f}| \cdot |\wh{X}_f|^2 .
\end{equation*}
The lemma follows by recalling that the choice $j=0$ was without loss of generality, with the general case amounting to replacing $\wh{Z}_0$ by $\wh{Z}_j$ and $\wh{G}_f$ by $\wh{G}_{f-k_1j}$.
\end{proof}

In the preceding proof, we made use of the following technical result bounding the product of the filter $G$ evaluated at two locations separated by some multiple of $2k_1$.

\begin{lem}
    \emph{(Additional filter property)}
    Given $(n,k_1)$ and a parameter $F \ge 2$, if $G$ is an $(n, \frac{n}{k_1}, F)$-flat filter, then the following holds for all $f \in [n]$ and all $j\in\big[\frac{n}{k_1}\big]$ with $|j'| \le \frac{n}{2(2k_1)}$ and $ j' \neq 0$:
    $$|\wh{G}_{f}|^{1/2} \cdot |\wh{G}_{f - 2k_1j'}|^{1/2} \leq \frac{(\frac{1}{2})^{F-1}}{|j'|^{(F-1)/2}}. $$
    \label{lemm:2}
\end{lem}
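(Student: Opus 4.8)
The plan is to exploit the polynomial tail decay of the flat filter (property (iii) of Definition \ref{def:filterG}) at whichever of the two indices $f$ and $f - 2k_1 j'$ is far from the origin, while using only the trivial bound $|\wh{G}_{\cdot}| \le 1$ (property (i)) at the other. Since $B := \frac{n}{k_1}$, property (iii) reads $\wh{G}_g \le (\tfrac14)^{F-1}\big(\tfrac{k_1}{|g|}\big)^{F-1}$ for $|g| \ge k_1$, and the target bound is exactly the geometric mean of this (evaluated at $|g| \ge k_1|j'|$) with $1$.

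First I would pin down the separation of the two indices under modulo-$n$ arithmetic. Because $|j'| \le \frac{n}{2(2k_1)}$, the integer $2k_1 j'$ lies in $[-\tfrac n2,\tfrac n2]$, so its reduction into $[n] = (-\tfrac n2,\tfrac n2]$ leaves its circular distance to $0$ equal to $|2k_1 j'| = 2k_1|j'|$; hence the circular distance between $f$ and $f - 2k_1 j' \pmod n$ is exactly $2k_1|j'|$. By the triangle inequality for circular distance, at least one of these two indices — call it $g$, with $|g|$ denoting its representative in $(-\tfrac n2,\tfrac n2]$ — satisfies $|g| \ge k_1|j'|$. Since $j' \ne 0$, this forces $|g| \ge k_1 = \tfrac nB$, so property (iii) applies and yields $\wh{G}_g \le (\tfrac14)^{F-1}\big(\tfrac{k_1}{|g|}\big)^{F-1} \le (\tfrac14)^{F-1}\,|j'|^{-(F-1)}$, using $|g| \ge k_1|j'|$ in the last step.

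Bounding the filter at the remaining index by $1$ via property (i), I would then conclude
$$|\wh{G}_{f}|^{1/2}\,|\wh{G}_{f - 2k_1 j'}|^{1/2} \le 1^{1/2}\cdot\Big((\tfrac14)^{F-1}\,|j'|^{-(F-1)}\Big)^{1/2} = \frac{(\tfrac12)^{F-1}}{|j'|^{(F-1)/2}},$$
which is precisely the claimed inequality. The only point needing care — and the closest thing to an obstacle here — is the bookkeeping with modulo-$n$ arithmetic: one must verify that the hypothesis $|j'| \le \frac{n}{2(2k_1)}$ rules out a wraparound of the shift $2k_1 j'$ that would shrink the separation between $f$ and $f - 2k_1 j'$, which is exactly the reason that range is imposed in the statement. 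Everything else is a two-line application of the defining properties of a flat filter.
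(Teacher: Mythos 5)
Your proof is correct and follows essentially the same route as the paper's: identify which of the two indices is at (circular) distance at least $k_1|j'|$ from the origin, apply the tail-decay property of the flat filter there, bound the filter at the other index by $1$, and take square roots. The paper phrases the separation step via the regular-arithmetic distance $\Delta \ge 2k_1|j'|$ between the two representatives and the observation that one of them has magnitude at least $\Delta/2$, which is equivalent to your triangle-inequality argument for circular distance.
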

\begin{proof}
    For clarity, let $f_1$ and $f_2$ denote the frequencies corresponding to $f$ and $f - 2k_1 j'$ respectively, defined in the range $(-n/2 , n/2]$ according to modulo-$n$ arithmetic.  By definition, $f_1 - f_2$ is equal to $2k_1 j'$ modulo-$n$, and since $|j'| \le \frac{n}{2(2k_1)}$, we have $|2k_1 j'| \le \frac{n}{2}$.  This immediately implies that the distance $\Delta = |f_1 - f_2|$ according to \emph{regular arithmetic} is lower bounded by the distance according to modulo-$n$ arithmetic: $\Delta \ge 2k|j'|$.
    
    Since $f_1$ and $f_2$ are at a distance $\Delta$ according to regular arithmetic, it must be the case that either $|f_1| \ge \frac{\Delta}{2}$ or $|f_2| \ge \frac{\Delta}{2}$. Moreover, since $j' \ne 0$, we have, from the above-established fact $\Delta \ge 2k_1|j'|$, that $\frac{\Delta}{2} \ge k_1$, and hence we can apply the third filter property in Definition \ref{def:filterG} to conclude that
    $ |G_{f_{\nu}}| \le \big(\frac{1}{4}\big)^{F-1} \big( \frac{2k_1}{\Delta} \big)^{F-1}$
    for either $\nu = 1$ or $\nu = 2$.  Substituting $\Delta \ge 2k_1|j'|$, upper bounding $G_{f_{\nu'}} \le 1$ (\emph{cf.}, Definition \ref{def:filterG}) for the index $\nu' \in \{1,2\}$ differing from $\nu$, and taking the square root, we obtain the desired result.
\end{proof}

We are now in a position to prove the claims of Lemma \ref{claim:1}

\begin{proofof}{first part of Lemma \ref{claim:1}}
    Recall from Lemma \ref{lem:pre1} that
    \begin{equation}
    \bigg|\frac{1}{2k_1}\sum_{r \in [2k_1]} |\wh{Z}^r_{j}|^2 - \sum_{f=1}^{n} |\wh{G}_{f-k_1j}|^2 \cdot |\wh{X}_f|^2\bigg| \le 3\delta \sum_{f=1}^{n} |\wh{G}_{f-k_1j}| \cdot |\wh{X}_f|^2. \label{eqq:14}
    \end{equation}
    We proceed by lower bounding $\sum_{f=1}^{n} |\wh{G}_{f-k_1j}|^2 \cdot |\wh{X}_f|^2$ and upper bounding  $\sum_{f=1}^{n} |\wh{G}_{f-k_1j}| \cdot |\wh{X}_f|^2$.  Starting with the former, recalling that $I_j = \big((j-1/2)k_1,(j+1/2)k_1\big] \cap \ZZ$, we have
    \begin{align}
         \sum_{f=1}^{n} |\wh{G}_{f-k_1j}|^2 \cdot |\wh{X}_f|^2 
             &\ge \sum_{f \in I_j} |\wh{G}_{f-k_1j}|^2 \cdot |\wh{X}_f|^2 \nn \\
             &\ge \bigg( 1 - \Big(\frac{1}{4}\Big)^{F-1} \bigg)^2 \|\wh{X}_{I_{j}}\|_2^2 \nn \\
             &\ge ( 1- \delta ) \|\wh{X}_{I_{j}}\|_2^2, \label{eqq:25}
    \end{align}
    where the second line is by the second filter property in Definition \ref{def:filterG}, and the third line is by the choice of $F$ in Definition \ref{def:downsampling}.
    
    Next, we upper bound  $\sum_{f=1}^{n} |\wh{G}_{f-k_1j}| \cdot |\wh{X}_f|^2$ as follows:
    \begin{equation}
         \sum_{f=1}^{n} |\wh{G}_{f-k_1j}| \cdot |\wh{X}_f|^2 \le \sum_{f \in I_j \cup I_{j-1} \cup I_{j+1}} |\wh{G}_{f-k_1j}| \cdot |\wh{X}_f|^2 + \sum_{f\in [n] \,:\, |f-k_1j| \geq \frac{3k_1}{2}} |\wh{G}_{f-k_1j}| \cdot |\wh{X}_f|^2. \label{equation:15}
    \end{equation}
    By the third property in Definition \ref{def:filterG}, the filter decays as $|\wh{G}_f| \leq (\frac{1}{4})^{F-1}(\frac{k_1}{|f|})^{F-1}$ for $|f| \ge k_1$, and therefore the second term in \eqref{equation:15} is bounded by
    \begin{align}
    \sum_{|f| \geq k_1j + \frac{3k_1}{2}} |\wh{G}_{f - k_1j}| \cdot |\wh{X}_f|^2 
        &\leq \Big(\frac{1}{4}\Big)^{F-1} \cdot \sum _{j'\in[\frac{n}{k_1}] \,:\, |j'-j|  \ge 2}\frac{\|\wh{X}_{I_{j'}}\|_2^2}{(|j' - j| - 1)^{F-1}} \nn  \\
        &\le \Big(\frac{1}{2}\Big)^{F-1} \cdot \sum_{j'\in[\frac{n}{k_1}]\backslash \{j\}}\frac{\|\wh{X}_{I_{j'}}\|_2^2}{|j' - j|^{F-1}} \nn  \\
        &\le \delta \cdot \sum _{j'\in[\frac{n}{k_1}]\backslash \{j\}} \frac{\|\wh{X}_{I_{j'}}\|_2^2}{|j' - j|^{F-1}}, \label{eqq:16}
    \end{align}
    where the second line follows from $|j'-j| -1 \ge \frac{|j'-j|}{2}$, and the third line follows since the choice of $F$ in Definition \ref{def:downsampling} ensures that $\big(\frac{1}{2}\big)^{F-1} \le \delta$.  We bound the term $\sum_{f \in I_j \cup I_{j-1} \cup I_{j+1}} |\wh{G}_{f-k_1j}| \cdot |\wh{X}_f|^2$ in \eqref{equation:15} using the first property in Definition \ref{def:filterG}, namely, $\wh{G}_f\leq 1$:
    \begin{equation}
    \sum_{f \in I_j \cup I_{j-1} \cup I_{j+1}} |\wh{G}_{f-k_1j}| \cdot |\wh{X}_f|^2 \le \|\wh{X}_{I_{j} \cup I_{j-1} \cup I_{j+1}}\|_2^2 . \label{eqq:24}
    \end{equation}
    Hence, combining \eqref{equation:15}--\eqref{eqq:24}, we obtain
    \begin{equation}
         \sum_{f=1}^{n} |\wh{G}_{f-k_1j}| \cdot |\wh{X}_f|^2 \le \|\wh{X}_{I_{j} \cup I_{j-1} \cup I_{j+1}}\|_2^2 + \delta \cdot \sum _{j'\in[\frac{n}{k_1}]\backslash \{j\}}\frac{\|\wh{X}_{I_{j'}}\|_2^2}{|j' - j|^{F-1}}. \label{eq:combined_abs}
    \end{equation}
    The first claim of the lemma follows by combining \eqref{eqq:14}, \eqref{eqq:25}, and \eqref{eq:combined_abs}.
\end{proofof}

\begin{proofof}{second part of Lemma \ref{claim:1}}
    %\textbf{Second part of lemma:} In order to prove the second claim, we recall that the upper bound on $|\wh{Z}_0^r|^2$ arising from \eqref{eqq:14} also extends to other $j \ne 0$; summing over all $j \notin \Spm$ gives
    %\begin{align}
    %\sum_{j \notin \Spm} \sum_{r \in [2k_1]} |\wh{Z}^r_{j}|^2 \le 2k_1 \sum_{j \notin \Spm} \bigg( (1 + 4\delta) \|\wh{X}_{I_{j} \cup I_{j-1} \cup I_{j+1}}\|_2^2 + 3\delta \sum_{j'\in[\frac{n}{k_1}]\backslash \{j\}} \frac{\|\wh{X}_{I_{j'}}\|_2^2}{|j'-j|^{F-1}} \bigg). \label{eq:upper_Spm_sum}
    %\end{align}
    %Note that because $j \notin \Spm$, we also have $j-1, j+1 \notin S$, and hence $\sum_{j \notin \Spm} \|\wh{X}_{I_{j} \cup I_{j-1} \cup I_{j+1}}\|_2^2 \le 3 \sum_{j \ne S} \|\wh{X}_{I_j}\|_2^2$.  Hence, and trivially expanding the sum $j \notin \Spm$ to $j \notin S$ to handle the second term, this expression is upper bounded by that given in the second claim of the Lemma.
    By following the same steps as those used to handle \eqref{equation:15}, we obtain the following analog of \eqref{eq:combined_abs} with $|\wh{G}_f|^2$ in place of $|\wh{G}_f|$:
    \begin{equation}
    \sum_{f=1}^{n} |\wh{G}_{f-k_1j}|^2 \cdot |\wh{X}_f|^2  \leq \|\wh{X}_{I_j \cup I_{j-1} \cup I_{j+1}}\|_2^2 + \delta \cdot \sum_{j'\in[\frac{n}{k_1}]\backslash \{j\}}\frac{\|\wh{X}_{I_{j'}}\|_2^2}{|j' - j|^{2(F-1)}}. \label{eqq:18}
    \end{equation}
    Combining \eqref{eqq:14}, \eqref{eq:combined_abs}, and \eqref{eqq:18}, we obtain
    \begin{align*}
        \frac{\sum_{r\in[2k_1]} |\wh{Z}^r_{j}|^2 }{2k_1} &\le \|\wh{X}_{I_j \cup I_{j-1} \cup I_{j+1}}\|_2^2 + \delta \cdot \sum_{j'\in[\frac{n}{k_1}]\backslash \{j\}}\frac{\|\wh{X}_{I_{j'}}\|_2^2}{|j'|^{2(F-1)}} \\
        &\quad + 3\delta \cdot \bigg( \|\wh{X}_{I_j \cup I_{j-1} \cup I_{j+1}}\|_2^2 + \delta \sum _{j'\in[\frac{n}{k_1}]\backslash \{j\}} \frac{\|\wh{X}_{I_{j'}}\|_2^2}{|j'-j|^{F-1}}\bigg),
    \end{align*}
    and summing over $j \in [n]$ gives
    \begin{align}
        \frac{1}{2k_1}\sum_{r \in [2k_1]} \|\wh{Z}^r\|^2 
            &\le \sum_{j \in [n]} \bigg( (1 + 3\delta) \|\wh{X}_{I_{j} \cup I_{j-1} \cup I_{j+1}}\|_2^2 + (3\delta^2+\delta) \sum_{j'\in[\frac{n}{k_1}]\backslash \{j\}} \frac{\|\wh{X}_{I_{j'}}\|_2^2}{|j'-j|^{F-1}} \bigg) \nn \\
            &= 3(1 + 3\delta) \|\wh{X}\|_2^2 + (3\delta^2+\delta) \sum_{j \in [n]} \sum_{j'\in[\frac{n}{k_1}]\backslash \{j\}} \frac{\|\wh{X}_{I_{j'}}\|_2^2}{|j'-j|^{F-1}} \label{eq:Zstep2}
    \end{align}
    The double summation is upper bounded by $\sum_{j' \in [n]} \|\wh{X}_{I_{j'}}\|_2^2 \cdot 2\sum_{\Delta=1}^{\infty} \frac{1}{\Delta^{F-1}} = 2 \|\wh{X} \|_2^2 \cdot \sum_{\Delta=1}^{\infty} \frac{1}{\Delta^{F-1}}$, which in turn is upper bounded by $3 \|\wh{X}\|_2^2 $ for $F \ge 4$, a condition guaranteed by Definition \ref{def:downsampling}.  We can therefore upper bound \eqref{eq:Zstep2} by $ \|\wh{X}\|_2^2 (3(1+3\delta) + 3(3\delta^2 + \delta))$, which is further upper bounded by $6 \|\wh{X}\|_2^2$ for $\delta \le \frac{1}{20}$, as is assumed in Definition \ref{def:downsampling}.
    
    For the lower bound, we sum the first part of the lemma over all $j$, yielding
    \begin{equation}
    \begin{split}
        \sum_{r \in [2k_1]} \|\wh{Z}^r\|^2 
            &\ge \frac{\sum_{r\in[2k_1]} |\wh{Z}^r_{j}|^2 }{2k_1}\ge ( 1-\delta) \|\wh{X}\|_2^2 - 3\delta \cdot \bigg( 3\|\wh{X}\|_2^2 + \delta \sum _{j\in[\frac{n}{k_1}]}\sum _{j'\in[\frac{n}{k_1}]\backslash \{j\}} \frac{\|\wh{X}_{I_{j'}}\|_2^2}{|j'-j|^{F-1}}\bigg). 
    \end{split} \nn
    \end{equation}
    We showed above that the double summation is upper bounded by $3\|\wh{X}\|_2^2$, yielding an lower bound of $(1-\delta - 9\delta - 3\delta^2) \|\wh{X}\|_2^2$.  This is lower bounded by $(1-12\delta)\|\wh{X}\|_2^2$ for $\delta \le \frac{1}{20}$.
\end{proofof}

\section{Omitted Proofs from Section \ref{sec:ep_sampling}} 

\subsection{Proof of Lemma \ref{lemm:11}} \label{sec:pf_active}

Note that for any $j \in \big[\frac{n}{k_1}\big]$, solving the first part of Lemma \ref{claim:1} for $\|\wh{X}_{I_j}\|_2^2$ gives
\begin{equation}
     \|\wh{X}_{I_{j}}\|_2^2 \le \frac{1}{1-\delta}\Bigg(\frac{1}{2k_1} \sum_{r \in [2k_1]} |\wh{Z}^r_{j}|^2 +  3\delta \cdot \bigg( \|\wh{X}_{I_j \cup I_{j-1} \cup I_{j+1}}\|_2^2 + \delta \sum _{j'\in[\frac{n}{k_1}]\backslash \{j\}} \frac{\|\wh{X}_{I_{j'}}\|_2^2}{|j'-j|^{F-1}} \bigg) \Bigg). \label{eq:after_lem_app}
\end{equation}
We will sum both sides over $j \in S^* \backslash \tilde{S}$; we proceed by analyzing the resulting terms.

\textbf{Second term in \eqref{eq:after_lem_app} summed over $j \in S^* \backslash \tilde{S}$:} We have
\begin{align}
    &\sum_{j \in S^* \backslash \tilde{S}} 3\delta \cdot \bigg( \|\wh{X}_{I_j \cup I_{j-1} \cup I_{j+1}}\|_2^2 + \delta \sum _{j'\in[\frac{n}{k_1}]\backslash \{j\}} \frac{\|\wh{X}_{I_{j'}}\|_2^2}{|j'-j|^{F-1}} \bigg) \nn \\
    &\qquad \le 9\delta \|\wh{X}\|_2^2 + 3\delta^2  \sum_{j \in S^* \backslash \tilde{S}} \sum _{j'\in[\frac{n}{k_1}]\backslash \{j\}} \frac{\|\wh{X}_{I_{j'}}\|_2^2}{|j'-j|^{F-1}} \nn  \\
    &\qquad \le 9\delta  \|\wh{X}\|_2^2 + 10\delta^2 \|\wh{X}\|_2^2 \le 10\delta \|\wh{X}\|_2^2, \label{eq:active_lem_final1.5}
\end{align}
where the last line follows by expanding the double summation to all $j,j' \in \big[\frac{n}{k_1}\big]$ with $j \ne j'$, noting that $2\sum_{\Delta=1}^{\infty} \frac{1}{\Delta^{F-1}} \le 2.5$ for $F \ge 4$ (a condition guaranteed by Definition \ref{def:downsampling}), and then applying the assumption $\delta \le \frac{1}{20}$. 

\textbf{First term in \eqref{eq:after_lem_app} summed over $j \in S^* \backslash \tilde{S}$:} We first rewrite the sum of squares in terms of a weighted sum of fourth moments:
\begin{align}
\sum_{j \in S^* \backslash \tilde{S}} \frac{1}{2k_1} \sum_{r \in [2k_1]} |\wh{Z}^r_{j}|^2 
    &= \frac{1}{2k_1} \sum_{r \in [2k_1]} \|\wh{Z}^r_{S^* \backslash \tilde{S}}\|_2^2 = \sum_{r \in [2k_1]} \|\wh{Z}^r\|_2 \cdot \frac{\|\wh{Z}^r_{S^* \backslash \tilde{S}}\|_2^2}{\|\wh{Z}^r\|_2} \nn \\
    &\le \frac{1}{2k_1} \sqrt{ \bigg( \sum_{r \in [2k_1]} \|\wh{Z}^r\|_2^2 \bigg) \bigg( \sum_{r \in [2k_1]} \frac{\|\wh{Z}^r_{S^* \backslash \tilde{S}}\|_2^4}{\|\wh{Z}^r\|_2^2} \bigg)}, \label{eq:active_lem_final2}
\end{align}
by Cauchy-Schwarz applied to the length-$2k_1$ vectors containing entries $\|\wh{Z}^r\|_2$ and $\frac{\|\wh{Z}^r_{S^* \backslash \tilde{S}}\|_2^2}{\|\wh{Z}^r\|_2}$.

The second summation inside the square root is upper bounded as
\begin{align}
    \sum_{r \in [2k_1]} \frac{\|\wh{Z}^r_{S^* \backslash \tilde{S}}\|_2^4}{\|\wh{Z}^r\|_2^2} 
        &\le \sum_{r \in [2k_1]} \|\wh{Z}^r_{S^*}\|_2^2 \cdot \frac{\|\wh{Z}^r_{S^* \backslash \tilde{S}}\|_2^2 }{\|\wh{Z}^r\|_2^2} \nn \\
        &\le \sum_{r \in [2k_1]} \gamma^r \cdot \frac{\|\wh{Z}^r_{S^* \backslash \tilde{S}}\|_2^2}{\|\wh{Z}^r\|_2^2} +  \sum_{r \in [2k_1]} \Big| \|\wh{Z}^r_{S^*}\|_2^2 - \gamma^r \Big|_+ \cdot \frac{\|\wh{Z}^r_{S^* \backslash \tilde{S}}\|_2^2}{\|\wh{Z}^r\|_2^2}, \label{eq:second_sum_bound}
\end{align}
where the first inequality follows since $\|\wh{Z}^r_{S^* \backslash \tilde{S}}\|_2^2\leq \|\wh{Z}^r_{S^*}\|_2^2$ and the second inequality uses $\|\wh{Z}^r_{S^*}\|_2^2 \le \gamma^r  + \big| \|\wh{Z}^r_{S^*}\|_2^2 - \gamma^r \big|_+$.  

Now observe that by definition of $\tilde{S}$ (Definition~\ref{eq:s-tilde}), for every $j \notin \tilde{S}$, we have
    $$\sum_{r \in [2k_1]} \Big( |\wh{Z}^r_j|^2 \cdot \frac{\gamma^r}{\|\wh{Z}^r\|_2^2} \Big) \leq \delta \cdot \frac{\sum_{r \in [2k_1]} \|\wh{Z}^r\|_2^2}{k_0},$$
and summing both sides over all $j \in S^* \backslash \tilde{S}$ gives
\begin{equation}
    \sum_{r \in [2k_1]} \gamma^r \cdot \frac{\|\wh{Z}^r_{S^* \backslash \tilde{S}}\|_2^2}{\|\wh{Z}^r\|_2^2} \le \frac{\delta |S^* \backslash \tilde{S}|}{k_0} \sum_{r \in [2k_1]} \|\wh{Z}^r\|_2^2 \le 10\delta \sum_{r \in [2k_1]} \|\wh{Z}^r\|_2^2, \nn
\end{equation}
since $|S^*| \le 10k_0$ by assumption.
Applying this to the first term in \eqref{eq:second_sum_bound}, as well as $\frac{\|\wh{Z}^r_{S^* \backslash \tilde{S}}\|_2^2}{\|\wh{Z}^r\|_2^2} \le 1$ for the second term, we obtain 
\begin{align}
    \sum_{r \in [2k_1]} \frac{\|\wh{Z}^r_{S^* \backslash \tilde{S}}\|_2^4}{\|\wh{Z}^r\|_2^2}  
        &\le 10\delta \sum_{r \in [2k_1]} \|\wh{Z}^r\|_2^2 + \sum_{r \in [2k_1]} \Big| \|\wh{Z}^r_{S^*}\|_2^2 - \gamma^r \Big|_+ \nn \\
        &\le 50\delta \sum_{r \in [2k_1]} \|\wh{Z}^r\|_2^2, \label{eq:b2to23bto23t}
\end{align}
where we have applied the assumption (*) of the lemma.

Finally, substituting \eqref{eq:b2to23bto23t} into \eqref{eq:active_lem_final2} yields
\begin{align}
    \sum_{j \in S^* \backslash \tilde{S}} \frac{1}{2k_1} \sum_{r \in [2k_1]} |\wh{Z}^r_{j}|^2     
        &\le \frac{1}{2k_1} \sqrt{ \bigg( \sum_{r \in [2k_1]} \|\wh{Z}^r\|_2^2 \bigg) \bigg( \sum_{r \in [2k_1]} \frac{\|\wh{Z}^r_{S^* \backslash \tilde{S}}\|_2^4}{\|\wh{Z}^r\|_2^2} \bigg)} \nn \\
     &\le \frac{1}{2k_1} \sqrt{ \bigg( \sum_{r \in [2k_1]} \|\wh{Z}^r\|_2^2 \bigg) \bigg( 50\delta \sum_{r \in [2k_1]} \|\wh{Z}^r\|_2^2 \bigg)}\text{~~~~~(by ~\eqref{eq:b2to23bto23t})} \nn \\
    &\le \frac{ \sqrt{50\delta} }{2k_1} \sum_{r \in [2k_1]} \|\wh{Z}^r\|_2^2 \le 43\sqrt{\delta}\|\wh{X}\|_2^2, \label{eq:4huiaehsiutbs}
\end{align}
where the last inequality uses the fact that $\frac{\sum_{r \in [2k_1]} \|\wh{Z}^r\|_2^2 }{2k_1}\le 6 \|X\|_2^2$ by the second part of Lemma \ref{claim:1}.  The proof is concluded by substituting \eqref{eq:active_lem_final1.5} and \eqref{eq:4huiaehsiutbs} into \eqref{eq:after_lem_app}, and using the assumption $\delta \le \frac{1}{20}$ to deduce that $\frac{1}{1-\delta}\big( 43\sqrt{\delta} + 10\delta\big) \le 100\sqrt{\delta}$.

\section{Omitted Proofs from Section \ref{sec:prelim}} 

\subsection{Proof of Lemma \ref{lem:uhat}} \label{sec:pf_uhat}

The (exact) Fourier transform of $U$, denoted by $\wh{U}^*$, is given by
\begin{align}
    \wh{U}^*_j 
        &= \frac{1}{B} \sum_{b \in [B]} U_b \omega_{B}^{-bj} \nn \\
        &= \frac{1}{n} \sum_{b \in [B]} \sum_{i'\in [\frac{n}{B}]} X_{\sigma( \Delta + b + B \cdot i')} G_{b + B \cdot i'} \omega_{B}^{-bj}  \nn \\
        &= \frac{1}{n} \sum_{i \in [n]} X_{\sigma( \Delta + i)} G_{i} \omega_{n}^{-ij n/B}, \label{eq:uhat_init}
\end{align} 
where we used the fact that $\omega_B^{(\cdot)}$ is periodic with period $B$, and then applied $\omega_B = \omega_n^{n/B}$.  We see that \eqref{eq:uhat_init} is the Fourier transform of the signal $\{X_{\sigma( \Delta + i)} G_{i}\}_{i \in [n]}$ evaluated at frequency $jn/B$, and hence, since multiplication and convolution are dual under the Fourier transform, we obtain
\begin{equation}
    \wh{U}^*_j = ( \wh{Y} \star \wh{G} )_{jn/B}, \label{eq:uhat_conv}
\end{equation}
where $Y_i = X_{\sigma( \Delta + i)}$.  Now, by standard Fourier transform properties, we have $\wh{Y}_f = \wh{X}_{\sigma^{-1} f} \omega_n^{\Delta f}$, and substitution into \eqref{eq:uhat_conv} gives
\begin{equation}
\begin{split}
    \wh{U}^*_j &= \sum_{f \in [n]} \wh{X}_{\sigma^{-1} f} \wh{G}_{j\frac{n}{B} - f } \omega_n^{\Delta f } \\
        &=\sum_{f \in [n]} \wh{X}_f \wh{G}_{\sigma f - \frac{n}{B} j } \omega_n^{\sigma \Delta f },
\end{split} \nn
\end{equation}
where we have used the assumed symmetry of $G$ about zero.

\subsection{Proof of Lemma \ref{lem:perm_property}} \label{sec:pf_perm}

For brevity, let $\Psi = \sum_{f' \ne f} | \wh{X}_{f'} |^2 \EE_{\pi}\big[ |G_{o_f(f')}|^2 \big]$ denote the left-hand side of \eqref{eq:filter_exp}.  Following the approach of \cite[Lemma 3.3]{IK14a}, we define the intervals $\Fc_t = \big(\pi(f) -\frac{n}{B}2^{t}, \pi(f) + \frac{n}{B}2^{t} \big]$ for $t = 1,\dotsc,\log_2 b$, and write
\begin{align}
        \Psi &\le \sum_{f' \ne f} |\wh{X}_{f'}|^2 \sum_{t=1}^{\log_2 B} \PP[ \pi(f') \in \Fc_t \backslash \Fc_{t-1} ] \max_{f'' \,:\, \pi(f'') \in \Fc_t \backslash \Fc_{t-1}} |\wh{G}_{o_f(f'')}|^2 \nn \\ 
            &\le \frac{4}{B} \sum_{f' \ne f} |\wh{X}_{f'}|^2 \bigg( 2 + \sum_{t=2}^{\log_2 B} 2^t \max_{f'' \,:\, \pi(f'') \in \Fc_t \backslash \Fc_{t-1}} |\wh{G}_{o_f(f'')}|^2 \bigg),  \label{eq:Psi_init}
\end{align}
where the second line follows by (i) upper bounding $\PP[ \pi(f') \in \Fc_t \backslash \Fc_{t-1} ] \le \PP[ \pi(f') \in \Fc_t ]$ and applying the approximate pairwise independence property (\emph{cf.}, Definition \ref{def:permutation}); (ii) using the fact that there are at most $\frac{n}{B} \cdot 2^{t+1}$ integers within $\Fc_t$, and applying $|\wh{G}_f| \le 1$ for the case $t=1$.

To handle the term containing $|\wh{G}_{o_f(f'')}|^2$, we use the triangle inequality to write
\begin{equation}
    \begin{split}
        |o_f(f'')| &\ge |\pi(f) - \pi(f'')| - \Big|\pi(f) - \frac{n}{b}\mathrm{round}\big(\pi(f)\frac{B}{n}\big)\Big| \\
            &\ge |\pi(f) - \pi(f'')| - \frac{n}{B}.
    \end{split} \nn
\end{equation}
For any $f''$ with $\pi(f'') \notin \Fc_{t-1}$, we have $|\pi(f) - \pi(f'')| \ge \frac{n}{B} 2^{t-1}$, and hence $|o_f(f'')| \ge \frac{n}{B} (2^{t-1} - 1)$. As a result, for $t \ge 2$, the third property in Definition \ref{def:filterG} gives 
$$\wh{G}_{o_f(f'')} \le \Big(\frac{1}{4}\Big)^{F-1} \Big( \frac{1}{2^{t-1}-1} \Big)^{F-1} \le \Big(\frac{1}{4}\Big)^{F-1} \Big(\frac{1}{2^{t-2}} \Big)^{F-1}  = \Big( \frac{1}{2^t} \Big)^{F-1},$$ 
and hence
\begin{equation}
    \begin{split}
        \sum_{t=2}^{\log_2 B} 2^t \max_{f'' \,:\, \pi(f'') \in \Fc_t \backslash \Fc_{t-1}} |\wh{G}_{o_f(f'')}|^2
            &\le \sum_{t=2}^{\infty} \Big( \frac{1}{2^t} \Big)^{2F - 1}. %\\
            %&\le \frac{1}{2^{12}} \sum_{t=2}^{\infty} (2^{-7})^t, \\
    \end{split} \nn
\end{equation}
This sum is less than $\frac{1}{2}$ for all $F \ge 2$, and hence substitution into \eqref{eq:Psi_init} gives $\Psi \le \frac{10}{B} \|\wh{X}\|_2^2$, as desired.

\subsection{Proof of Lemma \ref{lem:semi_equi}} \label{sec:pf_semi}

We use techniques resembling those used for a $(k_1,\e)$-downsampling in Section \ref{sec:overview}, but with the notable difference of using a more rapidly-decaying filter with bounded support in {\em frequency} domain.

\textbf{Choice of filter:} We let $G \in \RR^n$ be the filter used in \cite{IKP} (as opposed to that used in Definition \ref{def:filterG}), satisfying the following:
\begin{itemize}
    \item There exists an ideal filter ${G}'$ satisfying $G^{\prime}_f \in [0,1]$ for all $f$, and
    \begin{equation}
        G^{\prime}_f = 
        \begin{cases}
            1 & |f| \le \frac{n}{2k_1}\\
            0 & |f| \ge \frac{n}{k_1},
        \end{cases} \label{eq:G'}
    \end{equation}
    such that $\| {G} - {G}' \|_2 \le n^{-c}$;
    \item $\wh{G}$ is supported on a window of length $O(c k_1 \log n)$ centered at zero;
    \item Each entry of $\wh{G}$ can be computed in time $O(1)$.
\end{itemize}

\textbf{Intuition behind the proof.} Before giving the details, we provide the intuition for the proof. Recall that our goal is to compute $X^r_j$ for $|j|\leq k_0/2$, for all $r\in [2k_1]$. To do this, we first note that  $X^0_j$, for $|j|\leq k_0/2$ (i.e., for only one value of $r$, namely $0$), can be computed via a reduction to standard semi-equispaced FFT (Lemma~\ref{lem:semi_equi_std}) on an input signal of length $2n/k_1$. To achieve this, consider the signal $X\cdot G$ aliased to length $2n/k_1$, which is close to $X$ on all points $j$ such that $|j|\leq n/(2k_1)$. In order to compute $X^0_j$ for $|j|\leq k_0/2$, it essentially suffices (modulo boundary issues; see below) to calculate $(X\cdot G)_j$ for $|j|\leq k_0/2$. We show below that this can be achieved using Lemma~\ref{lem:semi_equi_std}, because multiplication followed by aliasing are dual to convolution and subsampling: the input $(k_0, k_1)$-block sparse signal of length $n$ can be naturally mapped to an $O(k_0 \log n)$-sparse signal in a reduced space with $\approx n/k_1$ points, in which standard techniques (Lemma~\ref{lem:semi_equi_std}) can be applied. 

This intuition only shows how to compute the values of $X^r_j$ for $r=0$ and $|j|\leq k_0/2$, but we need the values for all $r\in [2k_1]$. As we show below, the regular structure of the set of shifts that we are interested in allows us use the standard FFT on a suitably defined set of length-$2k_1$ signals, without increasing the runtime by a $k_1$ factor. It is interesting to note that our runtime is $O(\log n)$ worse than the runtime of Lemma~\ref{lem:semi_equi_std} due to the two-level nature of our scheme; this is for reasons similar to the $\log^d n$ scaling of runtime of high-dimensional semi-equispaced FFT (e.g.~\cite{GhaziHIKPS13, K16}). 

We now give the formal proof of the lemma.

\textbf{Computing a convolved signal:} Here we show that we can efficiently compute the values $\wh{Y}_j^r = (\wh{X}^r \star \wh{G})_{\frac{k_1}{2} j}$ at all $j \in \big[ \frac{2n}{k_1} \big]$ where it is non-zero, for all values of $r\in [2k_1]$.   We will later show that applying Lemma \ref{lem:semi_equi_std} to these signals (as a function of $j$) gives accurate estimates of the desired values of ${X}$.

Note that in the definition of $\wh{Y}_j^r$, each non-zero block is convolved with a filter of support $O(c k_1 \log n)$, and so contributes to at most $O(c \log n)$ values of $j$.  Since there are $k_0$ non-zero blocks, there are $O(c k_0\log n)$ values of $j$ for which the result is non-zero. 

The procedure is as follows:
\begin{enumerate}
    \item For all $j$ such that $\wh{Y}_j^r$ may be non-zero ($O(c k_0 \log n)$ in total), compute $\widetilde{Y}_j^b = \frac{k_1}{2}\sum_{l=1}^{\frac{n}{2k_1}} \wh{X}_{b+2k_1 l} \wh{G}_{\frac{k_1}{2} j - (b+2k_1 l) }$ for $b \in [2k_1]$.  That is, alias the signal $\{\wh{X}_f \wh{G}_{\frac{k_1}{2} j - f}\}$ down to length $2k_1$, and normalize by $\frac{2}{k_1}$ (for later convenience). Since $\wh{G}$ is supported on an interval of length $O(c k_1\log n)$, this can be done in time $O(c \log n)$ per entry, for a total of $O(c k_1 \log n)$ per $j$ value, or $O(c^2 k_0 k_1 \log^2 n)$ overall.
    \item Compute the length-$2k_1$ inverse FFT of $\widetilde{Y}_j = (\widetilde{Y}_j^1,\dotsc,\widetilde{Y}_j^{2k_1})$ to obtain $\wh{Y}_j \in \CC^{2k_1}$. This can be done in time $O(k_1 \log(1+k_1))$ per $j$ value, or $O(k_0 k_1 \log(1+k_1)  \log n)$ overall.
\end{enumerate}

We now show that $\wh{Y}_j^r =\frac{k_1}{2} ({X}^r \star {G})_{\frac{k_1}{2} j}$ for $r=1,\ldots, 2k_1$. By the definition of the inverse Fourier transform, we have
\begin{equation}
\begin{split}
    \wh{Y}_j^r &= \sum_{b=1}^{2k_1} \widetilde{Y}_j^b \omega_{2k_1}^{rb} \\
        &= \frac{k_1}{2}\sum_{b=1}^{2k_1} \sum_{l=1}^{\frac{n}{2k_1}} \wh{X}_{b+2k_1 l} \wh{G}_{\frac{k_1}{2} j - (b+2k_1 l) } \omega_{2k_1}^{rb} \\
        &= \frac{k_1}{2}\sum_{f=1}^{n} \wh{X}_{f} \wh{G}_{\frac{k_1}{2}j - f}\omega_{2k_1}^{rf} \\
        &= \frac{k_1}{2}\sum_{f=1}^{n} \wh{X}_{f} \wh{G}_{\frac{k_1}{2}j - f}\omega_n^{ rf\cdot \frac{n}{2k_1}} \\
        &= \frac{k_1}{2}\sum_{f=1}^{n} \wh{X}_{f}^r \wh{G}_{\frac{k_1}{2}j - f} \text{~~~(since $X^r_{(\cdot)}=X_{(\cdot) +\frac{n r}{2k_1}}$ by Definition~\ref{def:downsampling})},
\end{split} \nn
\end{equation}
where the second line is by the definition of $\widetilde{Y}^b$, the third by the periodicity of $\omega_{2k_1}$, and the fifth since translation and phase shifting are dual under the Fourier transform.  Hence, $\wh{Y}_j^r =\frac{k_1}{2} (\wh{X}^r \star \wh{G})_{\frac{k_1}{2} j}$.

\textbf{Applying the standard semi-equispaced FFT:} For $r \in [2k_1]$, define $\wh{Y}^r = (\wh{Y}^r_1,\dotsc,\wh{Y}^r_{n/k_1})$.  We have already established that the support of each $\wh{Y}^r$ is a subset of a set having size at most $k' = O(c k_0 \log n)$.  We can therefore apply Lemma \ref{lem:semi_equi_std} with $\zeta=n^{-(c+1)}$ to conclude that we can evaluate ${Y}^{r}_j$ for $|j| \le \frac{k'}{2}$ satisfying
    \begin{equation}
        |{Y}^{r}_j - {Y}^{* r}_j| \le n^{-(c+1)} \|{Y}^r\|_2,  \label{eq:semi_appr1}
    \end{equation}
where ${Y}^{* r}$ is the exact inverse Fourier transform of $\wh{Y}^r$.  Moreover, this can be done in time $O\big(k' \log \frac{n/k_1}{n^{-(c+1)}}\big) = O(c^2 k_0 \log^2 n)$ per $r$ value, or $O(c^2 k_0 k_1 \log^2 n)$ overall.

\noindent \textbf{Proof of first part of lemma:} It remains to show that the above procedure produces estimates of the desired $X$ values of the form \eqref{eq:semi_equi}.  

Recall that $\wh{Y}_j^r = \frac{k_1}{2} (\wh{X}^r \star \wh{G})_{\frac{k_1}{2} j}$. By the convolution theorem and the fact that subsampling and aliasing are dual (e.g., see Appendix \ref{sec:pf_uhat}), the inverse Fourier transform of $\wh{Y}^r$ satisfies the following when $|j| \le \frac{n}{k_1}$:
\begin{align}
    {Y}^r_j 
        &= \sum_{i \in [\frac{k_1}{2}]} ({G} \cdot {X}^r)_{j+\frac{2n}{k_1} i} \nn \\
        &= {G}_j {X}^r_j + \sum_{i \in [\frac{k_1}{2}], \, i \ne 0} ({G} \cdot {X}^r)_{j+\frac{2n}{k_1} i} \nn \\
        &=  \bigg( {G}^{\prime}_j {X}^r_j + \sum_{i \in [\frac{k_1}{2}], \, i \ne 0} ({G}^{\prime} \cdot {X}^r)_{j+\frac{2n}{k_1} i} \bigg) \pm \|G - G'\|_2 \|X\|_2 \nn\\
        &= {X}^r_j \pm n^{-c} \|{X}\|_2, \label{eq:semi_appr2}
\end{align}
where the last line follows from the definition of $G'$ in \eqref{eq:G'} and the assumption $\|G - G'\|_2 \le n^{-c}$.

Combining \eqref{eq:semi_appr1} and \eqref{eq:semi_appr2} and using the triangle inequality, we obtain
\begin{equation}
    |{Y}^{r}_j - {X}^{r}_j| \le n^{-(c+1)} \|{Y}^r\|_2 + n^{-c} \|{X}\|_2. \nn
\end{equation}
Since we have already shown that we can efficiently compute ${Y}^{r}_j$ for $|j| \le \frac{k'}{2}$ with $k' = O(c k_0 \log n)$, it only remains to show that $\|{Y}^r\|_2 \le n \|{X}\|^2$.  To do this, we use the first line of \eqref{eq:semi_appr2} to write
\begin{align}
    |{Y}_j^r| 
        &\le \sum_{i \in [\frac{k_1}{2}]} |{G}_{j+\frac{2n}{k_1} i}| \cdot |{X}^r_{j+\frac{2n}{k_1} i}| \nn \\
        &\le 2\sum_{i \in [\frac{k_1}{2}]} |{X}^r_{j+\frac{2n}{k_1} i}| \nn \\
        &\le \sqrt{2k_1 \sum_{i \in [\frac{k_1}{2}]} |{X}^r_{j+\frac{2n}{k_1} i}|^2 }, \label{eq:se_abs_bound}
\end{align}
where the first line is the triangle inequality, the second line follows since the first filter assumption above ensures that $|G_j| \le 2$ for all $j$, and the third line follows since the squared $\ell_1$-norm is upper bounded by the squared $\ell_2$-norm times the vector length.

Squaring both sides of \eqref{eq:se_abs_bound} and summing over all $j$ gives $\|{Y}^r\|_2^2 \le 2k_1 \|{X}\|^2 \le n^2 \|{X}\|^2$ (under the trivial assumption $n \ge 2$), thus completing the proof.

\textbf{Proof of second part of lemma:} 
In the proof of the first part, we applied Lemma \ref{lem:semi_equi_std} to signals of length $\frac{2n}{k_1}$.  It follows directly from the arguments in \cite[Cor.~12.2]{IKP} that since we can approximate the entries of $X^r_j$ for all $|j| \le \frac{k_0}{2}$, we can do the same for all $j$ equaling $\sigma j' + b$ modulo-$\frac{2n}{k_1}$ for some $|j'| \le \frac{k_0}{2}$.  Specifically, this follows since the multiplication by $\sigma$ and shift by $b$ simply amounts to a phase shift and a linear change of variables $f\to \sigma^{-1} f$ in frequency domain, both of which can be done in constant time.
    
However, the second part of the lemma regards indices modulo-$\frac{n}{k_1}$, as opposed to modulo-$\frac{2n}{k_1}$.  To handle the former, we note that for any integer $a$, we either have $a\text{ mod } \frac{n}{k_1} = a\text{ mod } \frac{2n}{k_1}$ or  $a\text{ mod } \frac{n}{k_1} = \big(a + \frac{n}{k_1}\big) \text{ mod } \frac{2n}{k_1}$.  Hence, we obtain the desired result by simply performing two calls to the first part, one with a universal shift of $\frac{n}{k_1}$.

\subsection{Proof of Lemma \ref{lem:hash2bins}} \label{sec:pf_hash2bins}

    \subsubsection{First Part}

    Since $U_X$ is computed according to $X$ itself in Algorithm \ref{alg:hash2bins}, we only need to compute the error in $U_{\chi}$.  
    
    In the definition of hashing in Definition \ref{def:hashing}, since $G$ has support $O(FB)$, we see that the values of $X$ used correspond to a permutation of an interval having length $k' = O(FB)$.  We can therefore apply the second part of Lemma \ref{lem:semi_equi_std} with sparsity $k'$ and parameter $\zeta = n^{-c'}$ for some $c' > 0$, ensuring an $\ell_{\infty}$-guarantee of $n^{-c'} \| \chi \|_2$ for the signal $\chi$.  
    
    Since $\wh{U}$ is computed from these values using \eqref{eq:hashing} followed by the FFT, we readily obtain via the relation $\|v\|_{\infty} \le \|v\|_2 \le \sqrt{m}\|v\|_{\infty}$ (for $v \in \CC^m$) and Parseval's theorem that $\wh{U}$ has an $\ell_{\infty}$-guarantee of $n^{-(c' - O(1))} \|\chi\|_2$, which can be made to equal $n^{-c}\|\wh{\chi}\|_2$ by choosing $c' = c + O(1)$.
    
    \paragraph{Sample complexity and runtime:} The only operation that consumes samples from the signal $X$ is the hashing operation applied to $X$. From the definition of hashing in Definition \ref{def:hashing}, and the fact that the filter $G$ has support $O(FB)$, we find that the sample complexity is also $O(FB)$.  
    
    The runtime is dominated by the application of the semi-equispaced FFT, which is  $O(c F (\|\wh{\chi}\|_0+B) \log n)$ by Lemma \ref{lem:semi_equi_std}.  In particular, this dominates the $O(B\log B)$ time to perform the FFT in Algorithm \ref{alg:hash2bins}, and the hashing operation, whose time complexity is the same as the sample complexity.

    \subsubsection{Second Part}
%    
%    We first determine which values of the downsampled signals are used in the hashing on line \ref{line:hash} of Algorithm \ref{alg:hash2bins}.  Recall the definition of a $(k_1,\delta)$-downsampling of a signal $X$ from \eqref{equation:4}:
%    \begin{equation}
%        Z^r_j= \frac{1}{k_1} \sum_{i\in [k_1]}(G \cdot X^r)_{j+\frac{n}{k_1} \cdot i}, \, j \in \Big[\frac{n}{k_1}\Big]. \nonumber 
%    \end{equation}
%    We see that the values of $X$ used are those at the indices where we use $Z^r$, plus all multiples of $k_1$.  However, from Lemma \ref{lem:semi_equi}, when we call \textsc{SemiEquiInverseBlockFFT}, we get the same permuted indices of $X^r$ for all $r \in [2k_1]$, and hence the multiples of $k_1$ come for free!  Thus, it suffices to call \textsc{SemiEquiInverseBlockFFT} with block sparsity $(O(\max\{k_0,B\}),k_1)$, since $\wh{\chi}$ is $(O(k_0),k_1)$-block sparse by assumption. 
    
    Recall the definition of a $(k_1,\delta)$-downsampling of a signal $X$ from \eqref{equation:4}:
    \begin{equation}
        Z^r_j= \frac{1}{k_1} \sum_{i\in [k_1]}(G \cdot X^r)_{j+\frac{n}{k_1} \cdot i}, \quad j \in \Big[\frac{n}{k_1}\Big], \, r\in[2k_1]. \nonumber 
    \end{equation}
    In order to compute the $\big(\frac{n}{k_1},B^r,G^r,\sigma,\Delta\big)$-hashing of $\wh{Z}^r$ (\emph{cf.}, Definition \ref{def:hashing}), we use the samples of $Z^r_j$ at the locations $j=\sigma(j'+\Delta) \mod \frac{n}{k_1}$ for $|j'| \le FB^r$; this is because $G^r$ is supported on $[-FB^r,+FB^r]$. Note that $FB^r$ is further upper bounded by $O(F\Bmax)$.  
    
    We claim that in the second part of Lemma \ref{lem:semi_equi}, it suffices to set the sparsity level to $O(F\Bmax+k_0)$.  To see this, first note that $k_0$ is added in accordance with Remark \ref{rem:k0} and the fact that $\wh{\chi}$ is $(k_0,k_1)$-block sparse.  Moreover, note that $\wh{Z}^r$ has length $\frac{n}{k_1}$, and one sample of $Z^r_j$ can be computed from $X^r_{j+i\frac{n}{k_1}}=X^{r+2i}_j$ for $|i|\le F$ as per Definition~\ref{def:downsampling} and the fact that the filter $G$ is supported on $[-F\frac{n}{k_1},+F\frac{n}{k_1}]$. Therefore, all we need is $X^{r'}_{j'}$ for each $r' \in [2k_1]$ and for all $j'=\sigma(j+\Delta) \mod \frac{n}{k_1}$ with $|j| \le F\Bmax$.
    
    Applying the second part of Lemma \ref{lem:semi_equi} with sparsity $O(F \Bmax + k_0)$ and parameter $\zeta = n^{-c'}$ for some $c' > 0$, ensuring an $\ell_{\infty}$-guarantee of $2n^{-c'} \| \chi \|_2$ on the computed values of $\chi$.  By an analogous argument to the first case, this implies an $\ell_{\infty}$-guarantee of $n^{-c} \| \chi \|_2$ on the FFT $\wh{U}^r$ of the hashing of $Z_{\chi}^r$, with $c = c' + O(1)$.

    \paragraph{Sample complexity and runtime:} We take $O(FB^r)$ samples of the $r$-th downsampled signal each time we do the hashing, separately for each $r \in [2k_1]$. By Lemma \ref{lem:downsamp-cost-unit-access}, accessing a single sample of $Z^r_{X}$ costs us $O(\log \frac{1}{\delta})$ samples of $X$.  Hence, the  sample complexity is $O\big(F \sum_{r\in[2k_1]}B^r \log \frac{1}{\delta}\big)$.  
    
    We now turn to the runtime.  By Lemma \ref{lem:semi_equi}, the call to $\textsc{SemiEquiInverseBlockFFT}$ with  $O(F\Bmax+k_0)$ in place of $k_0$ takes time $O\big(c^2 (F\Bmax + k_0) k_1 \log^2 n )$.  The hashing operation's runtime matches its sample complexity, and since we have assumed $\delta \ge \frac{1}{n}$, its contribution is dominated by the preceding term.

\subsection{Proof of Lemma \ref{lemm:14}} \label{sec:pf_energy_est}

\textbf{First part of lemma:} We start with the following bound on the expression inside the expectation:
\begin{equation}
\Big| \|\wh{Y}_S\|_2^2 - \|\wh{U}^*\|_2^2 \Big|_+ \le \Big| \|\wh{Y}_S\|_2^2 - \|\wh{U}^*_{h(S \backslash S_{coll})}\|_2^2 \Big|_+ \nn
\end{equation}
where $h(S)=\{ h(j) : j \in S\}$ with $h(j) = \mathrm{round}\big(\pi(j)\frac{B}{m}\big)$, denoting the bucket into which element $j$ hashes.  We define $\Scoll$ to be a subset of $S$ containing the elements that collide with each other, i.e., $\Scoll = \{ j \in S \, | \, h(j) \cap h(S\backslash \{j\}) \neq \emptyset \}$, yielding
\begin{align}
    \Big| \|\wh{Y}_S\|_2^2 - \|\wh{U}^*\|_2^2 \Big|_+ 
    &\le \Big| \sum_{j \in S} |\wh{Y}_j|^2 - \sum_{b \in h(S \backslash \Scoll)}|\wh{U}^*_b|^2 \Big|_+ \nn \\
    &= \Big| \sum_{j \in \Scoll} |\wh{Y}_j|^2 + \sum_{j \in S\backslash \Scoll} \Big( |\wh{Y}_j|^2-|\wh{U}^*_{h(j)}|^2 \Big) \Big|_+ \nn  \\
    &\le \sum_{j \in \Scoll} |\wh{Y}_j|^2 + \sum_{j \in S} \Big| |\wh{Y}_j|^2 - |\wh{U}^*_{h(j)}|^2 \Big|_+,  \label{eq94}
\end{align}
where the final line follows from the inequality $[a + b]_+ \le |a| + [b]_+$.

\textbf{Bounding the first term in \eqref{eq94}:} We start by evaluating the expected value of the term corresponding to $\Scoll$ over the random permutation $\pi$:
\begin{align}
\EE_{\pi} \Big[ \sum_{j\in \Scoll} |\wh{Y}_j|^2 \Big]
	&\le \EE_{\pi} \Big[\sum_{j\in S} |\wh{Y}_j|^2 \Ic[j \in \Scoll] \Big] \nn  \\ 
	&\le \sum_{j\in S} |\wh{Y}_j|^2 \sum_{j'\in S \backslash \{j\}} \PP[h(j)=h(j')] \nn  \\    
    &\le \sum_{j\in S} \sum_{j'\in S} |\wh{Y}_j|^2 \frac{4}{B}  \nn \\
    &= \frac{4|S|}{B} \sum_{j\in S} |\wh{Y}_j|^2, \label{eq:Scoll_term}
\end{align}
where the second line follows from the union bound, and the third line follows since $\pi$ is approximately pairwise independent as per Definition \ref{def:permutation}. 

\textbf{Bounding the second term in \eqref{eq94}:} We apply Lemma \ref{lem:uhat} to obtain $\wh{U}^*_{h(j)} = \sum_{j' \in [m] } \wh{Y}_{j'} \wh{H}_{o_j(j')} \omega_m^{\sigma \Delta j'}$ with $o_j(j')=\pi(j')-h(j)\frac{m}{B}$.  We write this as $\wh{U}^*_{h(j)} = \wh{Y}_{j} \wh{H}_{o_j(j)} \omega_m^{\sigma \Delta j} + \err_j$ with $\err_j := \sum_{j' \in [m] \backslash \{j\}} \wh{Y}_{j'} \wh{H}_{o_j(j')} \omega_m^{\sigma \Delta j'}$, yielding
\begin{align}
    \sum_{j\in S} \Big|  |\wh{Y}_j|^2 - |\wh{U}^*_{h(j)}|^2 \Big|_+  
        &\le \sum_{j\in S} \Big| |\wh{Y}_j|^2  - |\wh{Y}_{j} \wh{H}_{o_j(j)} \omega_m^{\sigma \Delta j} + \err_j|^2 \Big| \nn \\
        &\le  \sum_{j\in S} \bigg( \big| |\wh{Y}_j|^2  - |\wh{Y}_j H_{o_j(j)}|^2\big| + |\err_j|^2 + 2|\err_j|\cdot |\wh{Y}_j \wh{H}_{o_j(j)}| \bigg)  \label{eq:err_j_init} 
\end{align}
by $|\xi|_+ \le |\xi|$ and the triangle inequality.  We have by definition that $|o_j(j)| \le \frac{m}{2B}$, and hence item 2 in Definition \ref{def:filterG} yields $H_{o_j(j)} \ge 1 - \big(\frac{1}{4}\big)^{F'-1}$, which in turn implies $H_{o_j(j)}^2 \ge 1 - 2\big(\frac{1}{4}\big)^{F'-1}$.  Combining this with $H_f \le 1$ from item 1 in Definition \ref{def:filterG}, we can weaken \eqref{eq:err_j_init} to
\begin{align}
    \sum_{j\in S} \Big|  |\wh{Y}_j|^2 - |\wh{U}^*_{h(j)}|^2 \Big|_+ 
    &\le \sum_{j\in S} \Big( 2|\err_j| \cdot |\wh{Y}_j| + |\err_j|^2 + 2\Big(\frac{1}{4}\Big)^{F'-1} |\wh{Y}_j|^2 \Big). \label{eqq:93}
\end{align}
We proceed by bounding the expected value of $|\err_j|^2$.  We first take the expectation over $\Delta$, using Parseval's theorem to write
\begin{align}
    \EE_\Delta [|\err_j|^2] 
    &= \sum_{j' \in [m] \backslash \{j\}} |\wh{Y}_{j'}|^2 |\wh{H}_{o_j(j')}|^2. \nn
\end{align}
Taking the expectation over $\pi$, we obtain
\begin{align}
    \EE_{\Delta,\pi} [|\err_j|^2] 
    &= \EE_{\pi}\Big[\sum_{j' \in [m] \backslash \{j\}} |\wh{Y}_{j'}|^2 |\wh{H}_{o_j(j')}|^2 \Big] \nn\\
    &= \sum_{j' \in [m] \backslash \{j\}} |\wh{Y}_{j'}|^2 \EE_{\pi}[ |\wh{H}_{o_j(j')}|^2\nn\\
    &\le \frac{10}{B} \sum_{j' \in [m] \backslash \{j\}} |\wh{Y}_{j'}|^2 \le \frac{10}{B} \|\wh{Y}\|_2^2. \nn
\end{align}
where the final line follows from Lemma \ref{lem:perm_property}.

Substituting into \eqref{eqq:93}, and using Jensen's inequality to write $\EE[|\err_j|] \le \sqrt{ \EE[|\err_j|^2] }$, we obtain
\begin{align}
    \EE_{\Delta,\pi} \Big[ \sum_{j\in S} \Big|  |\wh{Y}_j|^2 - |\wh{U}^*_{h(j)}|^2 \Big|_+ \Big] 
    &\le 2 \sum_{j\in S} \sqrt{\frac{10}{B} \|\wh{Y}\|_2^2} \cdot |\wh{Y}_j| + \frac{10}{B} \sum_{j\in S} \|\wh{Y}\|_2^2 + 2\Big(\frac{1}{4}\Big)^{F'-1} \sum_{j\in S}|\wh{Y}_j|^2 \nn \\
    &\le 10\sqrt{\frac{|S|}{B}}\|\wh{Y}\|_2^2 + \bigg(10\frac{|S|}{B} + 2\delta^2 \bigg) \|\wh{Y}\|_2^2, \label{eqq:102}
\end{align}
where the second line follows from the fact that $\|v\|_1 \le \sqrt{|S|}\|v\|_2$ for any $v \in \CC^{|S|}$, as well as $\Big(\frac{1}{4}\Big)^{F'-1} \le \delta$ by the choice of $F'$.  The claim follows by substituting \eqref{eq:Scoll_term} and \eqref{eqq:102} into \eqref{eq94}.

\textbf{Second part of lemma:} By the definition of $\wh{U}^*$ ({\em cf.}, Definition \ref{def:hashing}), we have
\begin{align}
\EE_{\Delta} \Big[ \|\wh{U}^*\|_2^2 \Big]
&= \EE_{\Delta} \Big[ \sum_{b \in [B]} \Big| \sum_{j \in [m]} \wh{Y}_{j} \wh{H}_{\pi(j)-b\frac{m}{B}} \omega_m^{ \Delta j} \Big|^2 \Big] \nn \\
&= \sum_{b \in [B]} \sum_{j \in [m]} |\wh{Y}_{j}|^2 |\wh{H}_{\pi(j)-b\frac{m}{B}}|^2 \nn
\end{align}
by Parseval. Taking the expectation with respect to $\pi$, we obtain
\begin{align}
\EE_{\Delta,\pi} \Big[ \|\wh{U}^*\|_2^2 \Big] 
&= \sum_{b \in [B]} \EE_{\pi}\Big[\sum_{j \in [m]} |\wh{Y}_{j}|^2 |\wh{H}_{\pi(j)-b\frac{m}{B}}|^2 \Big] \nn \\
&= \sum_{b \in [B]} \sum_{j \in [m]} |\wh{Y}_{j}|^2 \EE_{\pi}\Big[ |\wh{H}_{\pi(j)-b\frac{m}{B}}|^2 \Big] \nonumber\\
&\le \sum_{b \in [B]} \frac{3}{B} \sum_{j \in [m]} |\wh{Y}_{j}|^2 = 3\|\wh{Y}\|_2^2. \nn
\end{align}
where the final line follows by noting that $\pi(j)-b\frac{m}{B}$ is uniformly distributed over $[m]$, and applying the second part of Lemma \ref{lem:filter_properties}.

\section{Omitted Proofs from Section \ref{sec:full_alg}}

\subsection{Proof of Lemma \ref{lem:reduceSNR}} \label{sec:pf_reduce_snr}

\paragraph{Note on $\frac{1}{\poly(n)}$ assumptions in lemmas:} Throughout the proof, we apply Lemmas \ref{lem:multi_locate}, \ref{lem:prune}, and \ref{lem:estimate}.  The first of these assumes that $\wh{\chi}_0$ uniformly distributed over an arbitrarily length-$\Omega\big(\frac{\|\wh{\chi}\|^2}{\poly(n)}\big)$ interval, and the latter two use the assumption $\|\wh{X} - \wh{\chi}\|_2 \ge \frac{1}{\poly(n)}\|\wh{\chi}\|_2^2$.  

We argue that these assumptions are trivial and can be ignored.  To see this, we apply a minor technical modification to the algorithm as follows.  Suppose the implied exponent to the $\poly(n)$ notation is $c'$.  By adding a noise term to $\wh{\chi}_0$ on each iteration uniform in $[-n^{-c'+10} \|\wh{\chi}\|^2,n^{c'+10} \|\wh{\chi}\|^2]$, we immediately satisfy the first assumption above, and we also find that the probability of $\|\wh{X} - \wh{\chi}\|_2 < \frac{1}{\poly(n)}\|\wh{\chi}\|_2^2$  is at most $O(n^{-10})$, and the additional error in the estimate is $O(n^{c'+10} \|\wh{\chi}\|^2)$.  Since we only do $O(\log\SNR') = O(n)$ iterations (by the assumption $\SNR' = O(\poly (n))$), this does not affect the result because the accumulated noise added to $\wh{\chi}_0$ which we denote by $\err(\wh{\chi}_0)$, does not exceed  $\frac{\|\wh{X}\|_2^2}{\poly(n)}$ which by the final assumption of the lemma implies that $\err(\wh{\chi}_0) \le \nu^2$.

\paragraph{Overview of the proof:} We introduce the \emph{approximate support} set of the input signal $\wh{X}$, given by the union of the top $k_0$ blocks of the signal and the blocks whose energy is more than the tail noise level:
\begin{equation}
S_0 := \Big \{ j \in \Big[\frac{n}{k_1}\Big] \, : \, \|\wh{X}_{I_j}\|_2^2 \geq  \mu^2 \Big \} \cup \bigg( \argmin_{\substack{S \subset [\frac{n}{k_1}] \\  |S| = k_0}} \sum_{j \in [\frac{n}{k_1}] \backslash S} \|\wh{X}_{I_j}\|_2^2\bigg). \label{eq:def_S0}
\end{equation}
From the definition of $\mu^2$ in Definition \ref{def:SNR}, we readily obtain $|S_0| \le 2k_0$. For each $t = 1, 2, ..., T$, define the set $S_t$ as
\begin{equation}
S_t = S_{t-1} \cup L'_t, \nonumber
\end{equation}
where $L'_t$ is the output of \textsc{PruneLocation} at iteration $t$ of \textsc{ReduceSNR}. $S_t$ contains the set of the head elements of $\wh{X}$, plus every element that is modified by the algorithm so far. % which can be both potentially large in the residual. 

We prove by induction on the iteration number $t=1,\dotsc,T$ that there exist events $\Ec_0 \supseteq \Ec_{1} \supseteq ... \supseteq \Ec_{T}$ such that conditioned on $\Ec_t$, the following conditions hold true:
\begin{enumerate}[label={\bf \alph*.}]
	\item $|S_t| \le 2k_0 + \frac{tk_0}{T}$;
	\item $\|\wh{\chi}^{(t)}_{I_j}\|_2^2 = 0$ for all $j \in \big[\frac{n}{k_1}\big] \backslash S_t$;
	\item $\|\wh{X} - \wh{\chi}^{(t)}\|_2^2 \le 99 \cdot \SNR' (k_0 \nu^2) / 2^{t}$;
\end{enumerate}
and for each $t \le T$, we have $\PP [ \Ec_{t+1} | \Ec_t ] \ge 1-\frac{1}{10T}$.

\paragraph{Base case of the induction:} We have already deduced that $|S_0| \le 2k_0$ and defined $\wh{\chi}^{(0)}=0$, and we have $\|\wh{X} - \wh{\chi}^{(0)}\|_2^2 = \|\wh{X}\|_2^2 \le \SNR' \cdot (k_0 \mu^2) / 2^{0}$ by the two assumptions of the lemma. Hence, we can let $\Ec_0$ be the trivial event satisfying $\PP[\Ec_0] = 1$.

\paragraph {Inductive step:} We seek to define an event $\Ec_{t+1}$ that occurs with probability at least $1-\frac{1}{10T}$ conditioned on $\Ec_{t}$, and such that the induction hypotheses \textbf{a}, \textbf{b}, and \textbf{c} are satisfied for $t+1$ conditioned on $\Ec_{t+1}$. To do this, we will introduce three events $\Eloct$, $\Eprunet$, and $\Eestt$, and set $\Ec_{t+1} = \Eloct \cap \Eprunet \cap \Eestt \cap \Ec_t$. Throughout the following, we let $\delta$, $\theta$, and $p$ be chosen as in Algorithm \ref{alg:final}

\paragraph {Success event associated with  \textsc{MultiBlockLocate}:} Let $\Eloct$ be the event of having a successful run of \textsc{MultiBlockLocate}($X, \wh{\chi}^{(t)}, n, k_1, k_0 , \delta, p$) at iteration $t+1$ of the algorithm, meaning the following conditions on the output $L$:
\begin{gather}
      	|L| \le C \cdot T \cdot \frac{k_0}{\delta}  \log \frac{k_0}{\delta}  \log^3 \frac{1}{\delta p} \label{eq:list-size-before-pruning} \\
      	\sum_{j \in S_t \backslash L} \| (\wh{X}-\wh{\chi}^{(t)})_{I_j} \|_2^2 \leq  0.1 \|\wh{X}-\wh{\chi}^{(t)}\|_2^2, \label{eqq:129}
\end{gather}
where $C$ is a constant to be specified shortly, for a small enough $\delta$. To show this, we invoke Lemma \ref{lem:multi_locate} with $S^*=S_t$. Note that inductive hypothesis (\textbf{a}) implies $|S_t| \le (2+\frac{t}{\log\SNR'})k_0 \le 3k_0$. By the first part of Lemma \ref{lem:multi_locate}, we have $\EE \big[ |L| \big] \le C' \frac{k_0}{\delta} \log \frac{k_0}{\delta}\log\frac{1}{ p} \log^2 \frac{1}{\delta p}$ for an absolute constant $C'$, and hence \eqref{eq:list-size-before-pruning} follows with $C = 100C'$ and probability at least $1 - \frac{1}{100T}$, by Markov's inequality.

By the second part of Lemma \ref{lem:multi_locate}, \eqref{eqq:129} holds with probability at least $1-p$ provided that $\delta \le \big(\frac{0.1}{200}\big)^2$, so by the union bound, the event $\Eloct$ occurs with probability at least $\PP[ \Eloct | \Ec_{t}] \ge 1- p - \frac{1}{100T}$.

\paragraph {Success event associated with \textsc{PruneLocation}:} Let $\Eprunet$ be the event of having a successful run of \textsc{PruneLocation}($X, \wh{\chi}^{(t)}, L, k_0, k_1 , \delta, p, n, \theta$) at iteration $t+1$ of the algorithm, meaning the following conditions on the output $L'$:
    \begin{gather}
   		|L' \backslash S_t| \le \frac{k_0}{T} \label{eq:131} \\
   		\sum_{j \in [\frac{n}{k_1}] \backslash L'} \| (\wh{X}-\wh{\chi}^{(t)})_{I_j} \|_2^2
       		\leq 0.2 \|\wh{X}-\wh{\chi}^{(t)}\|_2^2 + k_0( \mu^2 + 33\nu^2 \SNR'/2^{t+1}). \label{eq:132}
	\end{gather}
\paragraph {The probability of \eqref{eq:131} holding:} In order to bound $|L' \backslash S_t|$, first recall that the set $\Stail$, defined in Lemma \ref{lem:prune} part (\textbf{a}), has the following form:
   	$$\Stail = \Big \{ j \in \Big[\frac{n}{k_1}\Big] \, : \, \|(\wh{X} - \wh{\chi}^{(t)})_{I_j}\|_2 \leq \sqrt{\theta} - \sqrt{\frac{\delta}{k_0}}\|\wh{X} - \wh{\chi}^{(t)}\|_2 \Big \}.$$
By substituting $\theta=10 \cdot 2^{-(t+1)}\cdot \nu^2 (\SNR')$ and using $\|\wh{X} - \wh{\chi}^{(t)}\|_2^2 \le 99 \cdot \SNR' (k_0 \nu^2) / 2^{t}$ from part \textbf{c} of the inductive hypothesis, we have 
\begin{align*}
    & \sqrt{\theta} - \sqrt{\frac{\delta}{k_0}}\|\wh{X} - \wh{\chi}^{(t)}\|_2 \\
    &\qquad \ge \sqrt{10 \cdot 2^{-(t+1)}\cdot \nu^2 (\SNR')} - \sqrt{\frac{\delta}{k_0}} \sqrt{ 99 \cdot \SNR' (k_0 \nu^2) / 2^{t} } \\
    &\qquad \ge \sqrt{9 \cdot \nu^2 (\SNR')/ 2^{t+1}},
\end{align*}
where the last inequality holds when $\delta$ is sufficiently small.  Hence,
\begin{equation}
    \Stail \supseteq \Big \{ j \in \Big[\frac{n}{k_1}\Big] \, : \, \|(\wh{X} - \wh{\chi}^{(t)})_{I_j}\|_2^2 \leq 9 \cdot \nu^2 (\SNR')/ 2^{t+1} \Big \}. \label{eq:Stail_bound}
\end{equation}
Now, to prove that \eqref{eq:131} holds with high probability, we write
\begin{equation}
|L' \backslash S_t| = |(L' \cap \Stail) \backslash S_t| + |L' \backslash (\Stail \cup S_t)|. \label{eq:listsize}
\end{equation}
To upper bound the first term, note that by the first part of Lemma \ref{lem:prune}, we have
   	$$\EE \Big[ \big| L' \cap \Stail \big| \Big] \le  \delta p \cdot |L|,$$
and hence by Markov's inequality, the following holds with probability at least $1-\frac{1}{100T}$:
\begin{align}
    \big| (L' \cap \Stail) \backslash S_t \big| 
        &\le \big| L' \cap \Stail \big|   \nn \\
        &\le 100 T \delta p \cdot |L|  \nn \\
        &\le 100 T \delta p \cdot C  T  \frac{k_0}{\delta}  \log \frac{k_0}{\delta}  \log^3 \frac{1}{\delta p}  \nn \\
        &= 100 C T^2  p \cdot k_0 \log \frac{k_0}{\delta}  \log^3 \frac{1}{\delta p}  \nn \\
        &= \frac{100 C \delta \cdot k_0  \log^3 \frac{1}{\delta p}}{\log \frac{k_0}{\delta} \cdot \log^2 \SNR'}, \nn
\end{align}
where the third line follows from \eqref{eq:list-size-before-pruning} (we condition on $\Eloct$), and the fifth line follows from $T = \log\SNR'$ and the choice $p = \frac{\delta}{\log^2 \frac{k_0}{\delta} \log^4 \SNR'}$ in Algorithm \ref{alg:final}.  Again using this choice of $p$, we claim that $\frac{100 C \delta \log^3 \frac{1}{\delta p}}{\log \frac{k_0}{\delta} \cdot \log \SNR'} \le 1$ for sufficiently small $\delta$ regardless of the values $(k_0,\SNR')$; this is because the dependence of $1/p$ on $k_0$ and $\SNR'$ is logarithmic, so in the numerator contains $\log^3\log k_0$ and $\log^3\log \SNR'$ while the denominator contains $\log k_0$ and $\log^2 \SNR'$. Hence,
\begin{equation}
   	\big| (L' \cap \Stail) \backslash S_t \big| \le \frac{k_0}{\log \SNR'} \nn
\end{equation}
with probability at least $1-\frac{1}{100T}$.

We now show that the second term in \eqref{eq:listsize} is zero, by showing that $\Stail \cup S_t = [\frac{n}{k_1}]$. To see this, note that the term $\nu^2 (\SNR')/ 2^{t+1}$ in the bound on $\Stail$ in \eqref{eq:Stail_bound} satisfies
\begin{equation}
   	\nu^2 (\SNR')/ 2^{t+1} \ge \frac{1}{2} \nu^2 \ge \frac{1}{2}\mu^2, \label{eq:nu_bound}
\end{equation}
by applying $t \le T = \log\SNR'$, followed by the first assumption of the lemma.  Hence,
\begin{equation}
\Stail \backslash S_t \supseteq \bigg\{ j \in \Big[\frac{n}{k_1}\Big] \big\backslash S_t \, : \, \|(\wh{X} - \wh{\chi}^{(t)})_{I_j}\|_2^2 \leq 4 \mu^2 \bigg\}. \nn
\end{equation}
By part \textbf{b} of the inductive hypothesis, we have $\|(\wh{X} - \wh{\chi}^{(t)})_{I_j}\|_2^2 = \|\wh{X}_{I_j}\|_2^2$ for all $j \notin S_t$, and hence
\begin{equation}
\Stail \backslash S_t \supseteq \bigg\{ j \in \Big[\frac{n}{k_1}\Big] \big\backslash S_t \, : \, \|\wh{X}_{I_j}\|_2^2 \leq 4 \mu^2 \bigg\}. \nn
\end{equation}
But from \eqref{eq:def_S0}, we know that $S_0$ (and hence $S_t$) contains all $j$ with $\|\wh{X}_{I_j}\|_2^2 > 4 \mu^2$, so we obtain $\Stail \backslash S_t \supset \big[\frac{n}{k_1}\big] \backslash S_t$, and hence $\Stail \cup S_t = (\Stail \backslash S_t) \cup S_t = \big[\frac{n}{k_1}\big] $, as required.

Therefore, the probability of \eqref{eq:131} holding conditioned on $\Ec_{t}$ and $\Eloct$ is at least $1-\frac{1}{100T}$.

\paragraph {The probability of \eqref{eq:132} holding:} To show \eqref{eq:132}, we use the second part of Lemma \ref{lem:prune}.  The set $\Shead$ therein is defined as
   	$$\Shead = \Big \{ j \in \Big[\frac{n}{k_1}\Big] \, : \, \|(\wh{X} - \wh{\chi}^{(t)})_{I_j}\|_2 \geq \sqrt{\theta} + \sqrt{\frac{\delta}{k_0}} \|\wh{X} - \wh{\chi}^{(t)}\|_2 \Big \}.$$
By substituting $\theta=10 \cdot 2^{-(t+1)}\cdot \nu^2 (\SNR')$ and using $\|\wh{X} - \wh{\chi}^{(t)}\|_2^2 \le 99 \cdot \SNR' (k_0 \nu^2) / 2^{t}$ from part \textbf{c} of the inductive hypothesis, we have
\begin{align*}
    &\sqrt{\theta} + \sqrt{\frac{\delta}{k_0}}\|\wh{X} - \wh{\chi}\|_2 \\
    &\qquad = \sqrt{10 \cdot 2^{-(t+1)}\cdot \nu^2 (\SNR')} + \sqrt{\frac{\delta}{k_0}} \sqrt{99 \cdot \SNR' (k_0 \nu^2) / 2^{t}} \\
    & \qquad \le \sqrt{11 \cdot \nu^2 (\SNR')/ 2^{t+1}}
\end{align*}
for sufficiently small $\delta$, and hence
\begin{equation}
   	\Shead \supseteq \Big \{ j \in \Big[\frac{n}{k_1}\Big] \, : \, \|(\wh{X} - \wh{\chi})_{I_j}\|_2^2 \geq 11 \cdot \nu^2 (\SNR')/ 2^{t+1} \Big \}. \label{shead}
\end{equation}
Next, we write
\begin{align}
\sum_{j \in [\frac{n}{k_1}] \backslash L'} \| (\wh{X}-\wh{\chi}^{(t)})_{I_j} \|_2^2 
%&= \| (\wh{X}-\wh{\chi}^{(t)})_{S_t \backslash L'} \|_2^2 + \| (\wh{X}-\wh{\chi}^{(t)})_{[\frac{n}{k_1}] \backslash (L' \cup S_t)} \|_2^2 \\
%&= \| (\wh{X}-\wh{\chi}^{(t)})_{(S_t \cap \Shead) \backslash L'} \|_2^2 + \| (\wh{X}-\wh{\chi}^{(t)})_{S_t \backslash (\Shead \cup L') } \|_2^2 \\
%&+ \| (\wh{X}-\wh{\chi}^{(t)})_{[\frac{n}{k_1}] \backslash (L' \cup S_t)} \|_2^2 \\
&= \sum_{j \in (S_t \cap \Shead \cap L) \backslash L'}  \| (\wh{X}-\wh{\chi}^{(t)})_{I_j}\|_2^2 +  \sum_{j \in (S_t \cap \Shead) \backslash (L' \cup L)} \| (\wh{X}-\wh{\chi}^{(t)})_{I_j} \|_2^2 \nonumber \\
&\quad + \sum_{j \in S_t \backslash (\Shead \cup L') } \| (\wh{X}-\wh{\chi}^{(t)})_{I_j} \|_2^2 + \sum_{j \in [\frac{n}{k_1}] \backslash (L' \cup S_t)} \| (\wh{X}-\wh{\chi}^{(t)})_{I_j} \|_2^2, \label{eq:four_terms}
\end{align}
and we proceed by upper bounding the four terms.

\paragraph {Bounding the first term in \eqref{eq:four_terms}:} By the second part of Lemma \ref{lem:prune} and the use of Markov, we have
\begin{equation}
\sum_{j \in (S_t \cap \Shead \cap L) \backslash L'}\| (\wh{X}-\wh{\chi}^{(t)})_{I_j} \|_2^2 \leq  \delta \sum_{j \in L \cap \Shead}\|(\wh{X}-\wh{\chi}^{(t)})_{I_j}\|_2^2 \le \delta \|\wh{X}-\wh{\chi}^{(t)}\|_2^2 \nn
\end{equation}
with probability at least $1-p$.

\paragraph {Bounding the second term in \eqref{eq:four_terms}:} Conditioned on $\Eloct$, we have
\begin{equation}
\begin{split}
\sum_{j \in (S_t \cap \Shead) \backslash (L \cup L') } \| (\wh{X}-\wh{\chi}^{(t)})_{I_j} \|_2^2 
\le \sum_{j \in S_t \backslash L} \| (\wh{X}-\wh{\chi}^{(t)})_{I_j} \|_2^2
\le 0.1 \|\wh{X}-\wh{\chi}^{(t)}\|_2^2,
\end{split} \nn
\end{equation}
where we have applied \eqref{eqq:129}.

\paragraph {Bounding the third term in \eqref{eq:four_terms}:} We have
\begin{equation}
\begin{split}
\sum_{j \in S_t \backslash (\Shead \cup L') } \| (\wh{X}-\wh{\chi}^{(t)})_{I_j} \|_2^2 
&\le \sum_{j \in S_t \backslash \Shead }\| (\wh{X}-\wh{\chi}^{(t)})_{I_j} \|_2^2 \\
&\le | S_t \backslash \Shead | \cdot \max_{j \in  S_t \backslash \Shead } \|(\wh{X}-\wh{\chi}^{(t)})_{I_j}\|^2_{2} \\
&\le |S_t| (11 \cdot \nu^2 \SNR'/ 2^{t+1})
\end{split} \nn
\end{equation}
by \eqref{shead}. Part \textbf{a} of the inductive hypothesis implies that $|S_t| \le 3k_0$, and hence
\begin{equation}
\begin{split}
\sum_{j \in (L \cap S_t) \backslash (\Shead \cup L') }\| (\wh{X}-\wh{\chi}^{(t)})_{I_j} \|_2^2 
&\le 33k_0 \cdot \nu^2 \SNR'/ 2^{t+1}.
\end{split} \nn
\end{equation}

\paragraph {Bounding the fourth term in \eqref{eq:four_terms}:}
\begin{equation}
\begin{split}
\sum_{j \in [\frac{n}{k_1}] \backslash (L' \cup S_t)}\| (\wh{X}-\wh{\chi}^{(t)})_{I_j} \|_2^2
& \le \sum_{[\frac{n}{k_1}] \backslash S_t} \| (\wh{X}-\wh{\chi}^{(t)})_{I_j} \|_2^2\\
&= \sum_{[\frac{n}{k_1}] \backslash S_t} \| \wh{X}_{I_j} \|_2^2 \le k_0\mu^2,
\end{split} \nn
\end{equation}
where the equality follows from part \textbf{b} of the inductive hypothesis, and the final step holds since $S_t$ contains all elements with $\|\wh{X}_{I_j}\|_2^2 \geq  \mu^2$ (\emph{cf.}, \eqref{eq:def_S0}).

Adding the above four contributions and applying the union bound, we find that conditioned on $\Ec_{t}$ and $\Eloct$, \eqref{eq:132} holds with probability at least $\PP[\Eprunet | \Ec_{t} \cap \Eloct] \ge 1 - p - \frac{1}{100T}$, provided that $\delta$ is a sufficiently small constant ($\delta \le 0.1$).

\paragraph {Success event associated with \textsc{EstimateValues}:} Let $\Eestt$ be the event of having a successful run of \textsc{EstimateValues}($X, \wh{\chi}^{(t)}, L', k_0, k_1 , \delta, p$) at iteration $t+1$ of the algorithm conditioned on $\Ec_{t}$, meaning the following conditions on the output signal $W$:
\begin{gather}
W_f = 0 \text{ for all }f \notin \Fc  \nonumber  \\
\sum_{j \in L'} \|(\wh{X}-\wh{\chi}^{(t)} - W)_{I_j}\|_2^2 \le \delta \| \wh{X}-\wh{\chi}^{(t)} \|_2^2, \label{eq58xx}
\end{gather}
where $\Fc$ contains the frequencies within the blocks indexed by $L'$.
By Lemma \ref{lem:estimate} and the fact that $|L'| \le 3k_0$ conditioned on $\Eprunet$, $\Eloct$, and $\Ec_{t}$, it immediately follows that $\Eestt$ occurs with probability at least $\PP[\Eestt | \Eprunet \cap \Eloct \cap \Ec_{t}] \ge 1- p$.

\paragraph{Combining the events:} We can now wrap everything up as follows:
\begin{equation}
\begin{split}
\PP \big[ \Ec_{t+1} \big| \Ec_t \big] 
&= \PP \big[ \Eloct \cap \Eprunet \cap \Eestt \cap \Ec_t \big| \Ec_t \big]\\
&= \PP \big[ \Eestt \big| \Eloct \cap \Eprunet \cap \Ec_t \big] \PP \big[ \Eprunet \big| \Eloct \cap \Ec_t \big] \PP \big[ \Eloct \big| \Ec_t \big].
\end{split} \nn
\end{equation}
% Note that by assumption, $\Ec_T \subseteq \Ec_{T-1} \subseteq \dotsc \subseteq \Ec_1 \subset \Ec_0$, and hence when we condition on $\Ec_t$, we also condition on every $\Ec_{t'}$, $t'<t$.  
Substituting the probability bounds into the above equation, we have
\begin{equation}
\PP \big[ \Ec_{t+1} \big| \Ec_t \big] \ge 1- 3p-\frac{2}{100T} \ge 1-\frac{1}{20T}, \nn
\end{equation}
by the choice of $p$ in Algorithm \ref{alg:final} along with $T = \log \SNR$.

Now we show that the event $\Ec_{t+1} = \Eloct \cap \Eprunet \cap \Eestt \cap \Ec_t$ implies the induction hypothesis. Conditioned on $\Eprunet \cap \Ec_t$, we have \eqref{eq:131}, which immediately gives part \textbf{a}. Conditioned on $\Eloct \cap \Eestt$, from the definition $S_{t+1} = S_t \cup L'$, part \textbf{b} of the inductive hypothesis follows from the fact that only elements in $L'$ are updated. Finally, conditioned on $\Ec_t \cap \Eprunet \cap \Eestt$, we have
\begin{align}
\|\wh{X} - \wh{\chi}^{(t+1)}\|_2^2
&= \sum_{j \in L'} \|(\wh{X} - \wh{\chi}^{(t+1)})_{I_j}\|_2^2 + \sum_{j \in [\frac{n}{k_1}] \backslash L'}\|(\wh{X} - \wh{\chi}^{(t+1)})_{I_j}\|_2^2 \nn \\
&= \sum_{j \in L'} \|(\wh{X} - \wh{\chi}^{(t)} - W)_{I_j}\|_2^2 + \sum_{j \in [\frac{n}{k_1}] \backslash L'} \|(\wh{X} - \wh{\chi}^{(t+1)})_{I_j}\|_2^2 \nn \\
&\le (0.2 + \delta) \|\wh{X}-\wh{\chi}^{(t)}\|_2^2 + k_0( \mu^2 + 33\nu^2 \SNR'/2^{t+1}) \nn \\
&\le 99 \nu^2 k_0 \SNR'/2^{t+1}, \nn
\end{align} 
where the second line holds since $W$ is non-zero only for the blocks indexed by $L'$, the third line follows from \eqref{eq:132} and \eqref{eq58xx}, and the last line holds for sufficiently small $\delta$ from part \textbf{c} of the induction hypothesis, and the upper bound $\mu^2 \le 2\nu^2(\SNR')/2^{t+1}$ given in \eqref{eq:nu_bound}.

 The first part of the lemma now follows from a union bound over the $T$ iterations and the fact that $\err(\wh{\chi}_0) \le \nu^2$, and by noting that the three parts of the induction hypothesis immediately yield the two claims therein. We conclude by analyzing the sample complexity and runtime.
 
\paragraph{Sample complexity:} We have from Lemma \ref{lem:multi_locate} that the \emph{expected} sample complexity of \textsc{MultiBlockLocate} in a given iteration is $O^*\big(\frac{k_0}{\delta} \log(1+k_0) \log n + \frac{k_0 k_1}{\delta^2} \big)$, and multiplying by the number  $T = O(\log \SNR')$ of iterations gives a total of $O^*\big(\log\SNR' \big(\frac{k_0}{\delta} \log(1+k_0) \log n + \frac{k_0 k_1}{\delta^2} \big)\big)$.  Hence, by Markov's inequality, this is also the total sample complexity across all calls to \textsc{MultiBlockLocate} with probability at least $1 - \frac{1}{100}$; this probability can be combined with the union bound that we applied above.  Since $\delta = \Omega(1)$, the above sample complexity simplifies to $O^*(k_0 \log(1+k_0) \log\SNR' \log n+ k_0 k_1\log\SNR')$.

By Lemma \ref{lem:prune}, the sample complexity of \textsc{PruneLocation} is $O(\frac{k_0 k_1}{\delta} \log \frac{1}{\delta p} \log \frac{1}{\delta})$, and by Lemma \ref{lem:estimate}, the sample complexity of \textsc{EstimateValues} is $O(\frac{k_0 k_1}{\delta} \log \frac{1}{p} \log\frac{1}{\delta})$.  Substituting the choices of $\delta$ and $p$, these behave as $O^*(k_0 k_1)$ per iteration, or  $O^*(k_0 k_1 \log\SNR')$ overall. 

\paragraph{Runtime:} 
By Lemma \ref{lem:multi_locate}, the expected runtime of \textsc{MultiBlockLocate} in a given iteration is $O^*\big( \frac{k_0}{\delta} \log(1+k_0)  \log^2 n + \frac{k_0 k_1}{\delta^2} \log^2 n + \frac{k_0 k_1}{\delta} \log^3 n \big)$.  Moreover, by Lemma \ref{lem:prune}, the runtime of  \textsc{PruneLocation} as a function of $|L|$ is $O(\frac{k_0 k_1}{\delta} \log \frac{1}{\delta p} \log \frac{1}{\delta} \log n + k_1 \cdot |L| \log\frac{1}{\delta p})$, and substituting $\EE \big[ |L| \big] = O \big( \frac{k_0}{\delta} \log \frac{k_0}{\delta}  \log\frac{1}{ p} \log^2 \frac{1}{\delta p} \big) $ from Lemma \ref{lem:multi_locate}, this becomes  $O^*\big(\frac{k_0 k_1}{\delta} \log n\big)$ in expectation, by absorbing the $\log\frac{1}{\delta}$ and $\log\frac{1}{p}$ factors into the $O^*(\cdot)$ notation.

Summing the preceding per-iteration expected runtimes, multiplying by the number of iterations $T$, and substituting the choices of $T$, $p$ and $\delta$, we find that their combined expectation is $O^*( k_0 \log k_0 \log\SNR' \log^2 n  + k_0 k_1 \log\SNR' \log^3 n  )$.  Hence, by Markov's inequality, this is also the total sample complexity across all calls to \textsc{MultiBlockLocate} and  \textsc{PruneLocation} with probability at least $1 - \frac{1}{100}$. Applying the union bound over this failure event, $\bar \Ec_T$, and the $1/\poly(n)$ probability failure event arising from random perturbations of $\wh{\chi}_0$, we obtain the required bound of $0.9$ on the success probability.

By Lemma \ref{lem:estimate}, the runtime of \textsc{EstimateValues} is $O(\frac{k_0 k_1}{\delta} \log \frac{1}{p} \log\frac{1}{\delta} \log n + k_1 \cdot |L'|\log \frac{1}{p})$, which behaves as $O(\frac{k_0 k_1}{\delta} \log \frac{1}{p} \log\frac{1}{\delta} \log n)$ conditioned on $\Eprunet \cap \Ec_t$ (see \eqref{eq:131} and recall that $|S_t| \le 3k_0$).  By our choices of $p$ and $\delta$, this simplifies to $O^*(k_0 k_1 \log n)$ per iteration, or  $O^*(k_0 k_1 \log\SNR' \log n)$ overall.

\subsection{Proof of Lemma \ref{lem:constSNR}} \label{sec:pf_constant_snr}

The proof resembles that of Lemma \ref{lem:reduceSNR}, but is generally simpler, and has some differing details.  We provide the details for completeness.

\paragraph{Note on $\frac{1}{\poly(n)}$ assumptions in lemmas:} We use an analogous argument to the start of Section \ref{sec:pf_reduce_snr} to handle the assumptions $|\wh{X}_0 - \wh{\chi}_0|_2 \ge \frac{1}{\poly(n)}\|\wh{\chi}\|_2^2$ and $\|\wh{X} - \wh{\chi}\|_2 \ge \frac{1}{\poly(n)}\|\wh{\chi}\|_2^2$ in Lemmas \ref{lem:multi_locate}, \ref{lem:prune}, and \ref{lem:estimate}.  Specifically, we add a noise term to $\wh{\chi}_0$ uniform in $[-n^{-c'+10} \|\wh{\chi}\|^2,n^{c'+10} \|\wh{\chi}\|^2]$. This does not affect the result, since the noise added to $\wh{\chi}_0$ which we denote by $\err(\wh{\chi}_0)$, does not exceed  $\frac{\|\wh{X}\|_2^2}{\poly(n)}$ which by the assumptions of the lemma implies that $\err(\wh{\chi}_0) \le \e \nu^2$.

\paragraph{Overview of the proof:} We first introduce the \emph{approximate support} set of the input signal $\wh{X} - \wh{\chi}$, given by the top $10k_0$ blocks of the signal:
\begin{align}
	S_0 & := \argmin_{\substack{S \subset [\frac{n}{k_1}] \\ |S| = 10 k_0}} \sum_{j \notin S} \|(\wh{X} - \wh{\chi})_{I_j}\|_2^2 \label{eq:S0}
\end{align}
We also introduce another set indexing blocks whose energy is sufficiently large:
\begin{equation}
S_{\e} = \Big \{ j \in \Big[\frac{n}{k_1}\Big] \, : \, \|(\wh{X}-\wh{\chi})_{I_j}\|_2^2 \geq  \e \frac{\Err^2 (\wh{X} - \wh{\chi} , 10k_0 , k_1)}{k_0} \Big \} \cup S_0. \label{eq:S_eps}
\end{equation}
It readily follows from this definition and Definition \ref{def:SNR} that $|S_{\e} \backslash S_0| \le k_0/\e$.

The function calls three other primitives, and below, we show that each of them succeeds with high probability by introducing suitable success events.  Throughout, we let $\theta$, $p$, and $\eta$ be as chosen in Algorithm \ref{alg:final}

\paragraph {Success event of the location primitive:} Let $\Eloc$ be the event of having a successful run of \textsc{MultiBlockLocate}($X, \wh{\chi}, k_1, k_0 , n, \e^2, p$), meaning the following conditions on the output $L$:
\begin{gather}
	|L| \le C \frac{k_0}{\e^2} \log \frac{k_0}{\e^2}  \log^3 \frac{1}{\e^2 p} \label{eq:list-size-before-pruning-constsnr} \\
	\sum_{j \in S_0 \backslash L} \| (\wh{X}-\wh{\chi})_{I_j} \|_2^2 \leq  200 \e \|\wh{X}-\wh{\chi}\|_2^2, \label{eqq:67}
\end{gather}
where $C$ is a constant to be specified shortly. To verify these conditions, we invoke Lemma \ref{lem:multi_locate} with $S^*=S_0$. By the first part of Lemma \ref{lem:multi_locate}, we have $\EE \big[ |L| \big] \le C' \frac{k_0}{\e^2} \log \frac{k_0}{\e} \log^3 \frac{1}{\e p}$ for an absolute constant $C'$, and hence \eqref{eq:list-size-before-pruning-constsnr} follows with $C = 100C'$ and probability at least $1 - \frac{1}{100}$, by Markov's inequality.

By the second part of Lemma \ref{lem:multi_locate} with $\delta = \epsilon^2$, \eqref{eqq:67} holds with probability at least $1-p$, so by the union bound, the event $\Eloc$ occurs with probability at least $1- p - \frac{1}{100}$.

\paragraph {Success event of the pruning primitive:} 
Let $\Eprune$ be the event of having a successful run of \textsc{PruneLocation}($X, \wh{\chi}, L, n, k_0, k_1 , \e, p, \theta$), meaning the following conditions on the output $L'$:
\begin{gather}
	|L' \backslash S_0| \le \frac{2k_0}{\e} \label{eq:68} \\
	\sum_{j \in [\frac{n}{k_1}] \backslash L'} \| (\wh{X}-\wh{\chi})_{I_j} \|_2^2
	\leq 300\e \|\wh{X}-\wh{\chi}\|_2^2 + 6000\e \nu^2 k_0 + \Err^2 (\wh{X} - \wh{\chi} , 10k_0 , k_1). \label{eq:69}
\end{gather}

\paragraph {Bounding the probability of \eqref{eq:68}:} In order to bound $|L' \backslash S_0|$, first note that the set $\Stail$, defined in Lemma \ref{lem:prune} part (\textbf{a}), has the following form:
$$\Stail = \Big \{ j \in \Big[\frac{n}{k_1}\Big] \, : \, \|(\wh{X} - \wh{\chi})_{I_j}\|_2 \leq \sqrt{\theta} - \sqrt{\frac{\e}{k_0}}\|\wh{X} - \wh{\chi}\|_2 \Big \}.$$
By substituting $\theta=200 \cdot \e \nu^2$ (\emph{cf.}, Algorithm \ref{alg:final}) and using the assumption $\|\wh{X}- \wh{\chi}\|_2^2 \le 100k_0 \nu^2$ in the lemma, we have 
\begin{align}
	\sqrt{\theta} - \sqrt{\frac{\e}{k_0}}\|\wh{X} - \wh{\chi}\|_2 &= \sqrt{200 \cdot \e \nu^2 } - \sqrt{\frac{\e}{k_0}}\|\wh{X} - \wh{\chi}\|_2 \nonumber\\
	&\ge \sqrt{200 \cdot \e \nu^2 } -\sqrt{100 \cdot \e \nu^2 } \nonumber \\
	&\ge \sqrt{16 \cdot \e \nu^2}.
\end{align}
   	
Hence,
$$\Stail \supseteq \Big \{ j \in \Big[\frac{n}{k_1}\Big] \, : \, \|(\wh{X} - \wh{\chi}^{(t)})_{I_j}\|_2^2 \leq 16 \e \cdot \nu^2 \Big \}.$$
Now, to prove that \eqref{eq:68} holds with high probability, we write
\begin{equation}
\begin{split}
|L' \backslash S_0| 
&= |(L' \cap S_{\e}) \backslash S_0| + |L' \backslash (S_0 \cup S_{\e})|\\ 
&\le |(L' \cap S_{\e}) \backslash S_0| + |(L' \cap \Stail) \backslash S_{\e}| + |L' \backslash (\Stail \cup S_{\e})|. \label{eqq:70}
\end{split}
\end{equation}
We first upper bound the first term as follows:
$$|(L' \cap S_{\e}) \backslash S_0| \le |S_{\e} \backslash S_0| \le k_0 / \e,$$ 
which follows directly from the definition of $S_{\e}$.
To upper bound the second term in \eqref{eqq:70}, note that by Lemma \ref{lem:prune} part (\textbf{a}) with $\delta = \epsilon$,
   	$$\EE \Big[ \big| L' \cap \Stail \big| \Big] \le  \e p \cdot |L|,$$
and hence by Markov's inequality, the following holds with probability at least $1-\frac{1}{100}$:
	\begin{align}
	\big| (L' \cap \Stail) \backslash S_{\e} \big|
	&\le \big| L' \cap \Stail \big|  \nn \\
	&\le 100 \e p \cdot |L| \nn \\
	&\le 100  \e p \cdot C   \frac{k_0}{\e}  \log \frac{k_0}{\e}  \log^3 \frac{1}{\e p} \nn \\
	&= 100 C p \cdot k_0 \log \frac{k_0}{\e}  \log^3 \frac{1}{\e p} \nn \\
	&= \frac{100 C \eta \e \cdot k_0  \log^3 \frac{1}{\e p}}{\log \frac{k_0}{\e} } ,\nn
	\end{align}
	where the third line follows from \eqref{eq:list-size-before-pruning-constsnr} (we condition on $\Eloc$), and the fifth line follows from and the choice $p = \frac{ \eta \e } {\log^2 \frac{k_0}{\e} }$ in Algorithm \ref{alg:final}.  Again using this choice of $p$, we claim that $\frac{100 C \eta \e \log^3 \frac{1}{\e p}}{\log \frac{k_0}{\delta}} \le 1$ for sufficiently small $\eta$ regardless of the value of $k_0$; this is because the dependence of $1/p$ on $k_0$ is logarithmic, so the numerator contains $\log^3\log k_0$, while the denominator contains $\log k_0$ which means that the ratio is upper bounded and can be made arbitrarily small by choosing a small enough constant $\eta$. Hence
	\begin{equation}
	\big| (L' \cap \Stail) \backslash S_{\e} \big| \le k_0 \nonumber
	\end{equation}
	with probability at least $1-\frac{1}{100}$.

We now show that the third term in \eqref{eqq:70} is zero, by showing that $\Stail \cup S_{\e} = [\frac{n}{k_1}]$. To see this, note that the term $\nu^2$ in the definition of $\Stail$ is more than $\frac{\Err^2 (\wh{X} - \wh{\chi} , 10k_0 , k_1)}{k_0}$ by the first assumption of the lemma, and hence
\begin{equation}
\Stail \backslash S_{\e} \supset \Big\{ j \in \Big[\frac{n}{k_1}\Big] \big\backslash S_{\e} \, : \, \|(\wh{X} - \wh{\chi})_{I_j}\|_2^2 \leq 16 \e \frac{\Err^2 (\wh{X} - \wh{\chi} , 10k_0 , k_1)}{k_0} \Big\} \nn
\end{equation}
However, the definition of $S_{\e}$ in \eqref{eq:S_eps} reveals that the condition upper bounding $\|(\wh{X} - \wh{\chi})_{I_j}\|_2^2$ is redundant, and $\Stail \backslash S_{\e} \supset \big[\frac{n}{k_1}\big] \backslash S_{\e}$, and hence $\Stail \cup S_{\e} = (\Stail \backslash S_{\e}) \cup S_{\e} = \big[\frac{n}{k_1}\big] $.

\paragraph {Bounding the probability of \eqref{eq:69}:} To show \eqref{eq:69}, we use the second part of Lemma \ref{lem:prune}. The set $\Shead$ therein is defined as
$$\Shead = \Big \{ j \in \Big[\frac{n}{k_1}\Big] \, : \, \|(\wh{X} - \wh{\chi})_{I_j}\|_2 \geq \sqrt{\theta} + \sqrt{\frac{\e}{k_0}} \|\wh{X} - \wh{\chi}\|_2 \Big \}.$$
   	By substituting $\theta=200 \e \cdot \nu^2 $ (\emph{cf.}, Algorithm \ref{alg:final}) and using the assumption $\|\wh{X}- \wh{\chi}\|_2^2 \le 100k_0 \nu^2$ in the lemma, we have 
       	$$\sqrt{\theta} + \sqrt{\frac{\e}{k_0}}\|\wh{X} - \wh{\chi}\|_2 = \sqrt{200 \e \cdot \nu^2} + \sqrt{\frac{\e}{k_0}}\|\wh{X} - \wh{\chi}\|_2 \le \sqrt{600 \e \cdot \nu^2},$$
and hence
\begin{equation}
\Shead \supseteq \Big \{ j \in \Big[\frac{n}{k_1}\Big] \, : \, \|(\wh{X} - \wh{\chi})_{I_j}\|_2^2 \geq 600\e \cdot \nu^2 \Big \}. \label{shead-constsnr}
\end{equation}
Next, we write
\begin{align}
\sum_{j \in [\frac{n}{k_1}] \backslash L'} \| (\wh{X}-\wh{\chi})_{I_j} \|_2^2 
%&= \| (\wh{X}-\wh{\chi}^{(t)})_{S_t \backslash L'} \|_2^2 + \| (\wh{X}-\wh{\chi}^{(t)})_{[\frac{n}{k_1}] \backslash (L' \cup S_t)} \|_2^2 \\
%&= \| (\wh{X}-\wh{\chi}^{(t)})_{(S_t \cap \Shead) \backslash L'} \|_2^2 + \| (\wh{X}-\wh{\chi}^{(t)})_{S_t \backslash (\Shead \cup L') } \|_2^2 \\
%&+ \| (\wh{X}-\wh{\chi}^{(t)})_{[\frac{n}{k_1}] \backslash (L' \cup S_t)} \|_2^2 \\
&= \sum_{j \in (S_0 \cap \Shead \cap L) \backslash L'} \| (\wh{X}-\wh{\chi})_{I_j} \|_2^2 + \sum_{j \in (S_0 \cap \Shead) \backslash (L' \cup L)}\| (\wh{X}-\wh{\chi})_{I_j} \|_2^2 \nonumber \\
&\quad+ \sum_{j \in S_0 \backslash (\Shead \cup L') }\| (\wh{X}-\wh{\chi})_{I_j} \|_2^2 + \sum_{j \in [\frac{n}{k_1}] \backslash (L' \cup S_0)}\| (\wh{X}-\wh{\chi})_{I_j} \|_2^2, \label{eq:four_terms2}
\end{align}
and we proceed by upper bounding the four terms.

\paragraph {Bounding the first term in \eqref{eq:four_terms2}:} By part \textbf{b} of Lemma \ref{lem:prune}, the choice $\delta = \e$, and the use of Markov, we have
\begin{equation}
\sum_{j \in (S_0 \cap \Shead \cap L) \backslash L'}\| (\wh{X}-\wh{\chi})_{I_j} \|_2^2 \leq  \e \sum_{j \in L \cap \Shead}\|(\wh{X}-\wh{\chi})_{I_j}\|_2^2 \le \e \|\wh{X}-\wh{\chi}\|_2^2 \nn
\end{equation}
with probability at least $1-p$.

\paragraph {Bounding the second term in \eqref{eq:four_terms2}:} Conditioned on $\Eloc$, we have
\begin{align}
\sum_{j \in (S_0 \cap \Shead) \backslash (L \cup L') }\| (\wh{X}-\wh{\chi})_{I_j} \|_2^2  \nn
&\le  \sum_{j \in S_0 \backslash L} \| (\wh{X}-\wh{\chi})_{I_j} \|_2^2 \nn\\
&\le 200 \e \|\wh{X}-\wh{\chi}\|_2^2, \nn
\end{align}
where we have applied \eqref{eqq:67}.

\paragraph {Bounding the third term in \eqref{eq:four_terms2}:} We have
\begin{equation}
\begin{split}
\sum_{j \in S_0 \backslash (\Shead \cup L') }\| (\wh{X}-\wh{\chi})_{I_j} \|_2^2 
&\le \sum_{j \in S_0 \backslash \Shead }\| (\wh{X}-\wh{\chi})_{I_j} \|_2^2 \\
&\le | S_0 \backslash \Shead | \cdot \max_{j \in S_0 \backslash \Shead} \|(\wh{X}-\wh{\chi})_{I_j}\|^2_{2} \\
&\le |S_0| (600 \e \cdot \nu^2 )
\end{split} \nn
\end{equation}
by \eqref{shead-constsnr}. We have by definition that $|S_0| = 10k_0$ (\emph{cf.}, \eqref{eq:S0}), and hence
\begin{equation}
\begin{split}
\sum_{j \in (L \cap S_0) \backslash (\Shead \cup L')}\| (\wh{X}-\wh{\chi})_{I_j} \|_2^2 
&\le 6000 k_0 \e  \cdot \nu^2. 
\end{split} \nn
\end{equation}

\paragraph {Bounding the fourth term in \eqref{eq:four_terms2}:} We have
\begin{equation}
\begin{split}
\sum_{j \in [\frac{n}{k_1}] \backslash (L' \cup S_0)}\| (\wh{X}-\wh{\chi})_{I_j} \|_2^2
& \le \sum_{j \in [\frac{n}{k_1}] \backslash S_0}\| (\wh{X}-\wh{\chi})_{I_j} \|_2^2\\
&= \Err^2 (\wh{X} - \wh{\chi} , 10k_0 , k_1),
\end{split} \nn
\end{equation}
which follows from the definition of $S_0$ in \eqref{eq:S0}, along with Definition \ref{def:SNR}.

Hence by the union bound, it follows that $\Eprune$ holds with probability at least $1 - p - \frac{1}{1000}$ conditioned on $\Eloc$.

\paragraph {Success event of estimation primitive:} Let $\Eest$ be the event of having a successful run of \textsc{EstimateValues}($X, \wh{\chi}, L, n, 3k_0/\e, k_1 , \e, p$), meaning the following conditions on the output, $W$:
\begin{gather}
	W_f = 0 \text{ for all } f \notin \Fc \nn \\
	\sum_{j \in L'} \|(\wh{X}-\wh{\chi} - W)_{I_j}\|_2^2 \le \e \| \wh{X}-\wh{\chi} \|_2^2, \label{eq79}
\end{gather}
where $\Fc$ contains the frequencies within the blocks indexed by $L'$. Since the assumption of the theorem implies that $\|\wh{\chi}\|_0 = O(k_0k_1)$, by Lemma \ref{lem:estimate} (with $\delta = \epsilon$ and $3k_0/\e$ in place of $k_0$) and the fact that conditioned on $\Eprune$ we have $|L'| \le 3k_0/\e$ (\emph{cf.}, \eqref{eq:68}), it follows that $\Eest$ occurs with probability at least $1- p$.

\paragraph{Combining the events:} We can now can wrap everything up.  

Letting $\Ec$ denote the overall success event corresponding to the claim of the lemma, we have
\begin{equation}
\begin{split}
\PP [ \Ec ] 
&= \PP \big[ \Eloc \cap \Eprune \cap \Eest \big]\\
&= \PP \big[ \Eest \Big| \Eloc \cap \Eprune \big] \PP \big[ \Eprune \big| \Eloc \big] \PP \big[ \Eloc \big].
\end{split} \nn
\end{equation}
By the results that we have above, along with the union bound, it follows that
\begin{equation}
\PP \big[ \Ec  \big] \ge 1- 2/100 - 3p \ge 0.95 \nn
\end{equation}
for sufficiently small $\eta$ in Algorithm \ref{alg:final}. Applying the union bound over $\bar \Ec$ and the $1/\poly(n)$ probability failure event arising from random perturbations of $\wh{\chi}_0$, we obtain the required bound of $0.9$ on the success probability.

What remains is to first show that the statement of the lemma follows from $\Eloc \cap \Eprune \cap \Eest$. To do this, we observe that, conditioned on these events,
\begin{equation}
\begin{split}
\|\wh{X} - \wh{\chi}'\|_2^2
&=  \sum_{j \in L'} \|(\wh{X} - \wh{\chi}')_{I_j}\|_2^2 + \sum_{j \in [\frac{n}{k_1}] \backslash L'}\|(\wh{X} - \wh{\chi}')_{I_j}\|_2^2 + \err(\wh{\chi}_0)\\
&= \sum_{j \in L'} \|(\wh{X} - \wh{\chi} - W)_{I_j}\|_2^2 + \sum_{j \in [\frac{n}{k_1}] \backslash L'}\|(\wh{X} - \wh{\chi})_{I_j}\|_2^2 + \err(\wh{\chi}_0)\\
&\le \e \| \wh{X}-\wh{\chi} \|_2^2 + 300\e \|\wh{X}-\wh{\chi}\|_2^2 + 6000\e \nu^2 k_0 + \Err^2 (\wh{X} - \wh{\chi} , 10k_0 , k_1)+ \e \nu^2\\
&\le (4 \cdot 10^5) \e \nu^2 k_0+ \Err^2 (\wh{X} - \wh{\chi} , 10k_0 , k_1),
\end{split} \nn
\end{equation}
where the second line follows since $\wh{\chi}' = \wh{\chi} + W$ and $W$ is non-zero only within the blocks indexed by $L'$, the third line follows from \eqref{eq:69} and \eqref{eq79}, and the final line follows from the assumption $\| \wh{X}-\wh{\chi} \|_2^2 \le 100k_0 \nu^2$ in the lemma.

	\paragraph{Sample complexity:} We condition on $\Eloc \cap \Eprune \cap \Eest$ and calculate the number of samples used. 
	
	By Lemma \ref{lem:multi_locate} with $\delta = \e^2$, the sample complexity of \textsc{MultiBlockLocate} is $O^*(|L| \cdot \log n + \frac{k_0 k_1}{\e^4} )$ which behaves $ O^*( \frac{k_0}{\e^2} \log (1+k_0) \log n + \frac{k_0 k_1}{\e^4} \log \frac{1}{p})$ conditioned on $\Eloc$ (see \eqref{eqq:67}).
	Moreover, by Lemma \ref{lem:prune} with $\delta = \e$, the sample complexity of \textsc{PruneLocation} is $O(\frac{k_0 k_1}{\e} \log \frac{1}{\e p} \log \frac{1}{\e})$. Moreover, by Lemma \ref{lem:estimate} with $\delta = \e$, the sample complexity of \textsc{EstimateValues} is $O(\frac{k_0 k_1}{\e^2} \log \frac{1}{p} \log\frac{1}{\e})$. 
	
	The claim follows by summing the three terms and noting from the choice of $p$ in Algorithm \ref{alg:final} that, up to $\log\log\frac{k_0}{\e}$ factors, we can replace $p$ by $\e$ in the above calculations.
	
	\paragraph{Runtime:} As in sample complexity analysis, we condition on $\Eloc \cap \Eprune \cap \Eest$. 
	
	First, by Lemma \ref{lem:multi_locate} with $\delta = \e^2$, the runtime of \textsc{MultiBlockLocate} is $O^*\big( |L| \cdot\log^2 n + \frac{k_0 k_1}{\e^4} \log^2 n + \frac{k_0 k_1}{\e^2} \log^3 n \big)$, which behaves as  $O^*\big( \frac{k_0}{\e^2}  \log \frac{k_0}{\e}\cdot\log^2 n + \frac{k_0 k_1}{\e^4} \log^2 n + \frac{k_0 k_1}{\e^2} \log^3 n  \big)$ conditioned on $\Eloc$  (see \eqref{eq:list-size-before-pruning-constsnr}).
	Second, by Lemma \ref{lem:prune} with $\delta = \e$, the runtime of \textsc{PruneLocation} is $O(\frac{k_0 k_1}{\e} \log \frac{1}{\e p} \log \frac{1}{\e} \log n + k_1 \cdot |L| \log\frac{1}{\e p})$, which behaves as  $O(\frac{k_0 k_1}{\e} \log \frac{1}{\e p} \log \frac{1}{\e} \log n + k_1  \cdot \frac{k_0}{\e^2}  \log \frac{k_0}{\e}  \log^4 \frac{1}{\e p})$ conditioned on $\Eloc$ (see \eqref{eqq:67}).
	Finally, by Lemma \ref{lem:estimate} with $\delta = \e$, the runtime of \textsc{EstimateValues} is $O(\frac{k_0 k_1}{\e} \log \frac{1}{p} \log\frac{1}{\e} \log n + k_1 \cdot |L'|\log \frac{1}{p})$, which behaves as $O(\frac{k_0 k_1}{\e} \log \frac{1}{p} \log\frac{1}{\e} \log n)$ conditioned on $\Eprune$ (see \eqref{eq:68} and recall that $|S_0| = 10k_0$).
	
	The claim follows by summing the above terms and replacing $p$ by $\e$, with the $\log \log n$, $\log \log \SNR'$ and $\log\frac{1}{\epsilon}$ terms absorbed into the $O^*(\cdot)$ notation.

\section{Discussion on Energy-Based Importance Sampling} \label{sec:discussion}

Here we provide further discussing on the adaptive energy-based importance sampling scheme described in Sections \ref{sec:overview}--\ref{sec:ep_sampling}.  Recall from Definition \ref{def:downsampling} that given the signal $X$ and filter $G$, we are considering downsampled signals of the form $\wh{Z}^r_j = (\wh{X}^r \star \wh{G} )_{j k_1}$ with $X^r_i = X_{i + \frac{nr}{2k_1}}$ for $r \in [2k_1]$, and recall from \eqref{eq:budget_alloc} that the goal of energy-based importance sampling is to approximately solve the covering problem
\begin{equation}
    \text{Minimize}_{ \{s^r\}_{r\in[2k_1]} } ~~~ \sum_{r \in [2k_1]} s^r ~~~  \text{ subject to } ~~~  \sum_{\substack{j \,:\, |\wh{Z}^r_j|^2 \geq \frac{\|\wh{Z}^r\|_2^2}{s^r} \\ \text{ for some }r\in[2k_1]}} \|\wh{X}_{I_j}\|_2^2 \ge (1 - \alpha) \|\wh{X}^*\|_2^2 \label{eq:budget_alloc2}
\end{equation}
for suitable $\alpha \in (0,1)$, where $\wh{X}^*$ is the best $(k_0.k_1)$-block sparse approximation of $\wh{X}$.

To ease the discussion, we assume throughout this appendix that the filter $G$ is a width-$\frac{n}{k_1}$ rectangle in time domain, corresponding to a sinc pulse of ``width'' $k_1$ in frequency domain.  Such a filter is less tight than the one we use (see the proof of Lemma \ref{lem:filter_properties}), but similar enough for the purposes of the discussion.

\subsection{Examples -- Flat vs.~Spiky Energies} \label{sec:sampling_examples}

We begin by providing two examples for the $1$-block sparse case, demonstrating how the energies can vary with $r$. An illustration of the energy in each $\wh{Z}^r$ is illustrated in Figure \ref{fig:sinc_example} in two different cases -- one in which $X$ is a sinc pulse (i.e., rectangular in frequency domain), and one in which $X$ is constant (i.e., a delta function in frequency domain).  Both of the signals are $(1,k_1)$-block sparse with $k_1 = 16$, and the signal energy is the same in both cases.  However, the sinc pulse gives significantly greater variations in $|\wh{Z}^r_j|^2$ as a function of $r$.  In fact, these examples demonstrate two extremes that can occur -- in one case, the energy exhibits no variations, and in the other case, the energy is $O(k_1)$ times its expected value for an $O\big(\frac{1}{k_1}\big)$ fraction of the $r$ values.

The second example above is, of course, an extreme case of a $(1,k_1)$-block sparse signal, because it is also $(1,1)$-block sparse.  Nevertheless, one also observes a similar flatness in time domain for other signals; e.g., one could take the first example above and randomize the signs, as opposed to letting them all be positive.

\begin{figure}
	\centering
    \includegraphics[width=1\textwidth]{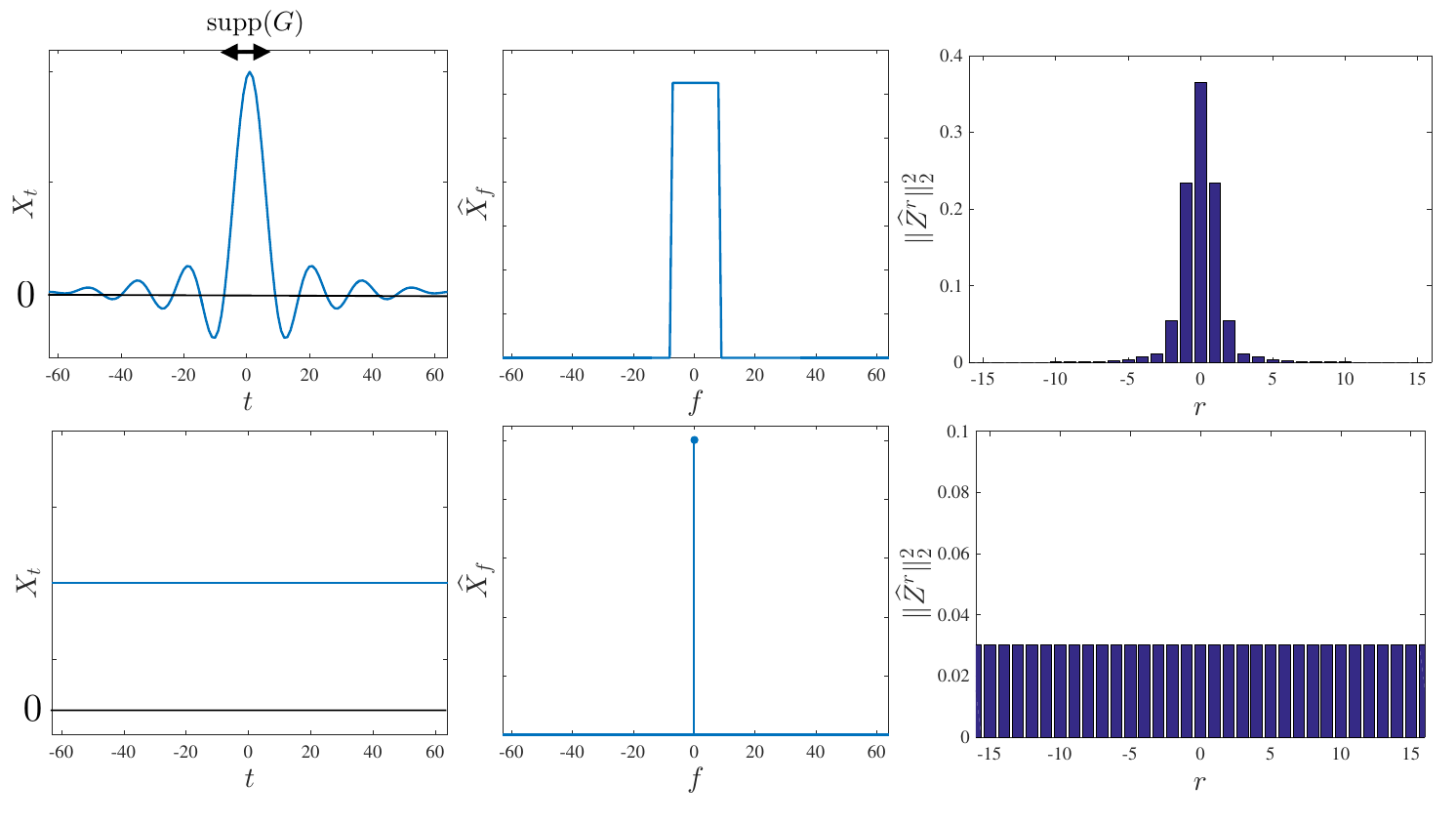}
    \par
    
	\caption{Behavior of $\|\wh{Z}^r\|_2^2$ as a function of $r$ for a sinc function (top) and a rectangular function (bottom), both of which are $(1,16)$-block sparse. \label{fig:sinc_example}} \vspace*{-2ex}
\end{figure}

\subsection{The $\log(1+k_0)$ factor} \label{app:logk0}

Here we provide an example demonstrating that, as long as we rely solely on frequencies being covered according to Definition \ref{def:covered}, after performing the budget allocation, the extra $\log(1+k_0)$ factor in our analysis is unavoidable. Specifically, we argue that for a certain signal $X$, the \emph{optimal} solution to \eqref{eq:budget_alloc} satisfies $\sum_{r \in [2k_1]} s^r = \Omega(k_0 \log(1+k_0))$.  However, we do not claim that this $\log(1+k_0)$ factor is unavoidable for \emph{arbitrary} sparse FFT algorithms.

 We consider a scenario where $k_0 = \Theta(k_1) = o(n)$, and for concreteness, we let both $k_0$ and $k_1$ behave as $\Theta(n^{0.1})$; hence, $\log(1+k_0) = O(\log k_0)$

\begin{figure}
    \centering
    \includegraphics[width=0.7\textwidth]{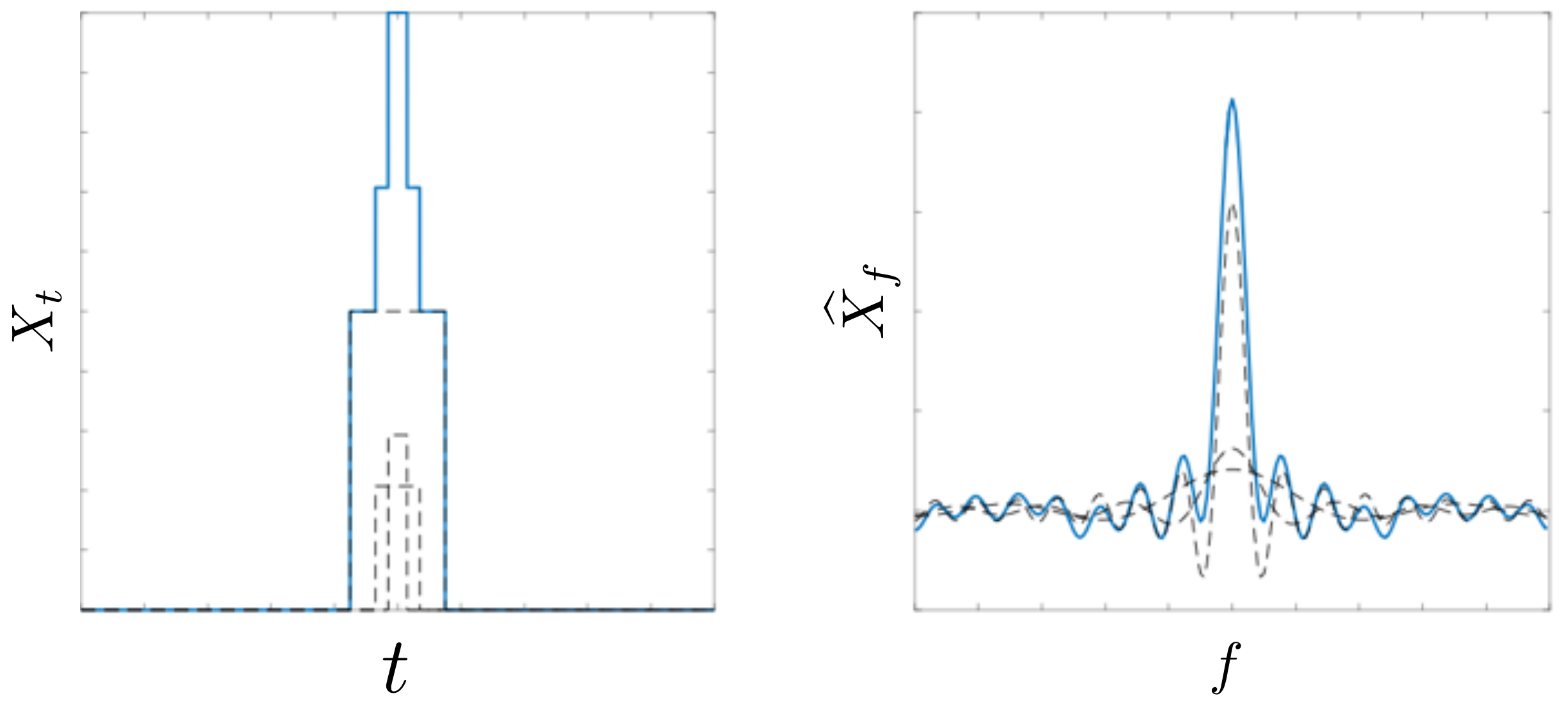}
    \par
    
    \caption{Base signal and its Fourier transform, for constructing a signal where a $\log k_0$ loss is unavoidable with our techniques. \label{fig:logk0}} \vspace*{-2ex}
\end{figure}

\textbf{Constructing a base signal:} We first specify a base signal $W \in \CC^{n}$ that will be used to construct the approximately $(k_0,k_1)$-block sparse signal.  Specifically, we fix the integers $C$ and $L$, and set
\begin{equation}
    W_t = 
    \begin{cases}
        \sqrt{2^{L-1}} & |t| \le \frac{Cn}{2 k_1} \\
        \sqrt{2^{L-2}} & \frac{Cn}{2 k_1} < |t| \le \frac{3Cn}{2 k_1} \\
        % \sqrt{2^{L-3}} & \frac{3n}{2C k_1} \le |t| < \frac{7n}{2C k_1} \\
        \vdots & \vdots \\
        \sqrt{2^{L - \ell}} & \frac{(2^{\ell-1} - 1)Cn}{2 k_1} < |t| \le \frac{(2^{\ell} - 1)Cn}{2 k_1} \\
        \vdots & \vdots \\
        1 & \frac{(2^{L-1} - 1)Cn}{2 k_1} < |t| \le \frac{(2^{L} - 1)Cn}{2 k_1}  \\
        0 & |f| > \frac{(2^{L +1} - 1)Cn}{2 k_1}.  \\
    \end{cases} \label{eq:logk0_W}
\end{equation}
Hence, the signal contains $L$ regions of exponentially increasing width but exponentially decreasing magnitude.  See Figure \ref{fig:logk0} for an illustration ($L = 3$), and observe that we can express this function as a sum of rectangles having geometrically decreasing magnitudes.  Hence, we can specify its Fourier transform as a sum of sinc functions.  

The narrowest of the rectangles has width $\frac{Cn}{k_1}$, and hence the widest of the sinc pulses has width $\frac{k_1}{C}$.  This means that by choosing $C$ to be sufficiently large, we can ensure that an arbitrarily high proportion of the energy lies in a window of length $k_1$ in frequency domain, meaning $W$ is approximately $1$-block sparse.  

\textbf{Constructing a block-spare signal:} We construct a $k_0$-block sparse signal by adding multiple copies of $W$ together, each shifted by a different amount in time domain, and also modulated by a different frequency (i.e., shifted by a different amount in frequency domain).  We choose $L$ such that the cases in \eqref{eq:logk0_W} collectively occupy the whole time domain, yielding $L = \Theta(\log k_1) = \Theta(\log k_0)$.

Then, we set $k_0 = \frac{k_1}{C}$ and let each copy of $W$ be shifted by a multiple of $\frac{Cn}{k_1}$, so that the copies are separated by a distance equal to the length of the thinnest segment of $W$, and collectively these thin segments cover the whole space $[n]$.  As for the modulation, we choose these so that the resulting peaks in frequency domain are separated by $\Omega(k_1^{4})$, so that there the tail of the copy of $W$ corresponding to one block has a negligible effect on the other blocks.  This is possible within $n$ coefficients, since we have chosen $k_1 = O(n^{0.1})$.

\textbf{Evaluating the values of $|Z_j^r|^2$} Recall that we are considering $G$ in \eqref{eq:budget_alloc2} equaling a rectangle of width $\frac{n}{k_1}$.  Because of the above-mentioned separation of the blocks in frequency domain, each copy of $W$ can essentially be treated separately.  By construction, within a window of length $\frac{n}{k_1}$, we have one copy of $W$ at magnitude $\sqrt{2^{L-1}}$, two copies at magnitude $\sqrt{2^{L-2}}$, and so on.  Upon subsampling by a factor of $k_1$, the relative magnitudes remain the same; there is no aliasing, since we let $G$ be rectangular.  Hence, the dominant coefficients in the spectrum of the subsampled signal exhibit this same structure, having energies of a form such as $(8,4,4,2,2,2,2,1,1,1,1,1,1,1,1)$ when sorted and scaled (up to negligible leakage effects).  Moreover, the matrix of $|\wh{Z}_j^r|^2$ values (\emph{cf.}, Figure \ref{fig:MatrixExample}) essentially amounts to circular shifts of a vector of this form -- the structure of any given $\wh{Z}^r$ maintains this geometric structure, but possibly in a different order.

\textbf{Lower bounding the sum of budgets allocated:} We now turn to the allocation problem in \eqref{eq:budget_alloc2}.  Allocating a sparsity budget $s$ to a signal $Z^r$ covers all coefficients $j$ for which $|\wh{Z}^r_j|^2 \ge \frac{ \|\wh{Z}\|^2 }{ s }$.  For the signal we have constructed, the total energy $E$ is equally spread among the $L$ geometric levels: The $\ell$-th level consists of $2^{\ell-1}$ coefficients having energy $2^{1-\ell} \frac{E}{L}$, and hence covering that level requires that $s \ge L \cdot 2^{\ell}$.

Hence, setting $s = L \cdot 2^{\ell-1}$ covers the top $\ell$ levels, for a total of $2^{\ell} - 1$ coefficients.  That is, covering some number of coefficients requires letting $s$ be $\Omega(L)$ times that number, and hence covering a constant fraction of the $k_0$ coefficients requires the sum of sparsity budgets to be $\Omega( L k_0 )$.  Moreover, we have designed every block to have the same energy, so accounting for a constant fraction of the energy amounts to covering a constant fraction of the $k_0$ coefficients.

Since we selected $L = \Theta(\log k_0)$, this means that the sum of sparsity budgets is $\Omega(k_0 \log k _0)$, so that the $\log k_0$ factor must be present in any solution to \eqref{eq:budget_alloc2}.

%!TEX root = BlockSparseFT-new.tex
\section{Location of Reduced Signals} \label{sec:loc_k}
\newcommand{\one}{{\bf 1}}
\newcommand{\zero}{{\bf 0}}
\newcommand{\h}{{\bf w}}
\renewcommand{\H}{{\bf W}}
\newcommand{\A}{{\mathcal A}}
\newcommand{\quant}{\text{quant}}
\newcommand{\E}{{\mathcal{E}}}

In Algorithm \ref{alg:loc_k}, we provide a location primitive that, given a sequence of budgets $s^r$, locates dominant frequencies in the sequence of reduced signals $Z^r$ using $O\big( \sum_{r\in[2k_1]} s^r\log n \big)$ samples. The core of the primitive is a simple $k$-sparse recovery scheme, where $k$ frequencies are hashed into $B=Ck$ buckets for a large constant $C>1$, and then each bucket is decoded individually. Specifically, for each bucket that is approximately $1$-sparse (i.e., dominated by a single frequency that hashed into it) the algorithm accesses the signal at about a logarithmic number of locations and decodes the bit representation of the dominant frequency bit by bit. More precisely, to achieve the right sample complexity we decode the frequencies in blocks of $O(\log\log n)$ bits. Such schemes or versions thereof have been used in the literature (e.g., \cite{GMS, Has12, K16}).

A novel aspect of our decoding scheme is that it receives access to the input signal $X$, but must run a basic sparse recovery scheme as above on each reduced signal $Z^r$. Specifically, for each $r$ it must hash $Z^r$ into $s^r$ buckets (the {\em budget} computed in \textsc{MultiblockLocate} and passed to \textsc{LocateSignal} as input). This would be trivial since $Z^r$ can be easily accessed given access to $X$ (\emph{cf.}, Lemma \ref{lem:downsamp-cost-unit-access}), but the fact that we need to operate on the {\em residual signal} $X-\chi$ (where $\wh{\chi}$ is block sparse and given explicitly as input) introduces difficulties.

The difficulty is that we would like to compute $\chi$ on the samples that individual invocations of sparse recovery use, for each $r\in [2k_1]$, but computing this directly would be very costly. Our solution consists of ensuring that all invocations of sparse recovery use the same random permutation $\pi$, and therefore all need to access $X-\chi$ on a set of shifted intervals after a change of variables given by $\pi$ (crucially, this change of variables is shared across all $r$). The lengths of the intervals are different, and given by $s^r$, but it suffices to compute the values of $\chi$ on the shifts of the largest of these intervals, which is done in \textsc{HashToBinsReduced} (see Lemma~\ref{lem:hash2bins}). We present the details below in Algorithm~\ref{alg:loc_k}.

\if 0  While the hashing techniques themselves we use here are fairly standard (e.g., see \cite{Has12,Has12a}), we details are slightly different, and we provide a proof for completeness.  A notable difference to \cite{Has12,Has12a} is that we use less sharp filters, so as to avoid any dependence on $\log n$ in the sample complexity. 

Moreover, there is a non-trivial addition of computing the required values of a $(k_1,\delta)$-downsampling using \textsc{HashToBinsReduced} in Algorithm  \ref{alg:hash2bins}.  In order to compute these values efficiently, we must ensure that we use \emph{the same permutations} for each downsampled signal.  
% Note that not all of the parameters to \textsc{SemiEquiInverseFFT} are shown in Algorithm \ref{alg:loc_k}, but they are made clear in the runtime analysis in the proof of Lemma \ref{lem:loc_k} below.
\fi

For convenience, throughout this section, we use $m$ to denote the reduced signal length $n/k_1$.

\begin{algorithm}[H]
    \caption{Location primitive: Given access to the input signal $X$,  a partially recovered signal $\wh{\chi}$, a budget $k$ and bound on failure probability $p$, identifies any given $j \in [n/k_1]$ with $|\wh{Z}_j^r|^2\geq \|\wh{Z}^r\|_2^2/k$ for some $r \in [2k_1]$, in the $(k_1,\delta)$-downsampling of $X-\chi$.}\label{alg:location}
    \begin{algorithmic}[1]
    \Procedure{LocateReducedSignals}{$X, \wh{\chi}, n, k_0, k_1, \{s^r\}_{r \in [2k_1]}, \delta, p$} 
        \\ \Comment{Uses large absolute constants $C_1,C_2,C_3 > 0$}
    \State $B^r \gets C_2 s^r$ for each $r \ in [2k_1]$
    \State $H^r \gets (m, B, F')$-flat filter for each $r \in [2k_1]$, for sufficiently large $F'\geq 2$
    \State $\Bmax \gets B^r$
    \State $\{Z_X^r\}_{r \in [2k_1]} \gets (k_1,\delta)$-downsampling of $X$ \Comment See Definition \ref{def:downsampling}
    \State $m \gets n/k_1$
    \State $L\gets \emptyset$
    
    \For {$t=\{1,\dotsc,C_1\log(2/p)\}$}
    \State $\sigma \gets $ uniformly random odd integer in $[m]$
    \State $\A\gets $ $C_3\log\log m$ uniformly random elements in $[m]\times [m]$
    \State $\Lambda\gets 2^{\lfloor \frac1{2}\log_2\log_2 m\rfloor}$, $N\gets \Lambda^{\lceil \log_\Lambda m\rceil}$ \Comment{Implicitly extend $X$ to an $m$-periodic length-$N$ signal}
    
    \For{$(\alpha,\beta)\in \A$} \Comment Hashing with common randomness
            \State $\h\gets N\Lambda^{-g} $
        \State $\Delta \gets \alpha + \h\cdot\beta$
        \State $\mathbf{H} \gets \{H^r\}_{r \in [2k_1]}$
        \State $\mathbf{B} \gets \{B^r\}_{r\in[2k_1]}$
        \State $\wh{U}^r(\alpha+\h \cdot \beta) \gets \textsc{HashToBinsReduced}(\{Z_X^r\}_{r \in [2k_1]},\wh{\chi}, \mathbf{H}, n,k_1,\mathbf{B},\sigma,\Delta)$
    \EndFor
    
%    \State Compute $\{\chi_i\}$ using $\textsc{SemiEquiInverseBlockFFT}$ w/ block sparsity $(O(k_0+\Bmax),k_1)$  \\
%         \Comment Performed for all $\pi(j) = \sigma j + (\alpha+ N\Lambda^{-g} \cdot \beta)$ with $(\alpha,\beta)\in \A$, $g \in \{1,\dotsc,\log_{\Lambda} N\}$ \\
%         \Comment See proof of Lemma \ref{lem:loc_k} for precise parameters
    
    \For {$r \in [2k_1]$}
        \State $B\gets C_2 s^r$ \label{line:est_B} \Comment Rounded up to a power of two
        
%        \For {$g=\{1,\dotsc,\log_\Lambda N\}$}
%            \State $\h\gets N\Lambda^{-g}  $ % \Comment{Note that $\h=0$ modulo $N$ when $g=0$}
%        \State $\wh{U}^r(\alpha+\h \cdot \beta) \gets$ FFT of $(m,B,H,\sigma,\alpha+\h \cdot \beta)$-hashing of $Z^r$, for all $(\alpha,\beta)\in \A$ \label{line:loc_hash}
%        \EndFor

        \For{$b \in [B]$}\Comment{Loop over all hash buckets}
            \State ${\bf f}\gets {\bf 0}$
            \For {$g=\{1,\dotsc,\log_\Lambda N\}$} 
            \State $\h\gets N\Lambda^{-g} $ \label{line:g_loop_start}
            \State {\bf If}~there exists a unique $\lambda \in \{0,1,\dotsc,\Lambda-1\}$ such that  \label{line:r_test}
            \State~~~~$\left|\omega_{\Lambda}^{-\lambda \cdot \beta}\cdot \omega^{-(N\cdot \Lambda^{-g} {\bf f})\cdot \beta}\cdot \frac{\wh{U}^r_b(\alpha+\h\cdot \beta)}{\wh{U}^r_b(\alpha)}-1\right|<\frac1{3}$ for at least $\frac{3}{5}$ fraction of $(\alpha, \beta)\in \A$
            \State {\bf then}  ${\bf f}\gets {\bf f}+\Lambda^{g-1}\cdot \lambda$ \label{line:g_loop_end}
        \EndFor
        \State $L \gets L \cup \{\sigma^{-1}{\bf f}\cdot \frac{m}{N}\}$ \Comment{Add recovered element to output list}
        \EndFor
        \EndFor
    \EndFor 
    
    \State \textbf{return} $L$
    \EndProcedure 
    \end{algorithmic} \label{alg:loc_k}
\end{algorithm}

\medskip
\noindent \textbf{Lemma \ref{lem:loc_k}} (\textsc{LocateReducedSignal} guarantees -- formal version) {\em
Fix $(n,k_0,k_1)$, the signals $X, \wh{\chi}\in \C^{n}$ with $\wh{\chi}_0$ uniformly distributed over an arbitrarily length-$\frac{\|\wh{\chi}\|^2}{\poly(n)}$ interval, the sparsity budgets $\{s^r\}_{r \in [2k_1]}$ with $s^r = O\big(\frac{k_0}{\delta}\big)$ for all $r\in[2k_1]$, and the parameters $\delta \in \big(\frac{1}{n},\frac{1}{20}\big)$ and $p \in \big(\frac{1}{n^3},\frac{1}{2} \big)$, and let $\{Z^r\}_{r \in [2k_1]}$ be the $(k_1,\delta)$-downsampling of $X - \chi$.

Letting $L$ denote the output of  $\Call{LocateReducedSignals}{X, \wh{\chi}, n, k_0, k_1, \{s^r\}_{r\in[2k_1]}, \delta, p}$, we have that for any $j\in \big[\frac{n}{k_1}\big]$ such that $|Z_j^r|^2\geq \|Z^r\|_2^2/s^r$ for some $r \in [2k_1]$, one has $j \in L$ with probability at least $1-p$.  The list size satisfies $|L| = O\big( \sum_{r \in [2k_1]} s^r \log\frac{1}{\delta}\big)$.  Moreover, if $\wh{\chi}$ is $(O(k_0),k_1)$-block sparse, the sample complexity is $O\big(\sum_{r \in [2k_1]}s^r \log\frac{1}{p} \log\frac{1}{\delta} \log n\big)$, and the runtime is $O\big( \sum_{r \in [2k_1]} s^r  \log\frac{1}{p} \log\frac{1}{\delta} \log^2 n + \frac{k_0 k_1}{\delta}\log\frac{1}{p} \log^3 n\big)$. 
} 
\begin{proof}
    We first note that the claim on the list size follows immediately from the fact that $B = O(s^r)$ entries are added to the list for each $t$ and $r$, and the loop over $t$ is of length $O\big(\log\frac{1}{p}\big)$.
    
    In order to prove the main claim of the lemma, it suffices to show that for any single value of $r$, if we replace the loop over $r$ by that single value, then $L$ contains any given $j\in \big[\frac{n}{k_1}\big]$ such that $|Z_j^r|^2\geq \|Z^r\|_2^2/s^r$, with probability at least $1-p$.  Since this essentially corresponds to a standard sparse recovery problem, we switch to simpler notation throughout the proof: We let $Y$ denote a generic signal $Z^r$, we write its length as $m =n/k_1$, we index its entries in frequency domain as $\wh{Y}_f$, and we define $k = s^r$.
    
      The proof now consists of two steps. First, we show correctness of the location algorithm assuming that the \textsc{SemiEquiInverseBlockFFT} computation in line~11 computes all the required values for the computation of $\wh{U}$ in line~19. We then prove that \textsc{SemiEquiInverseBlockFFT} indeed computes all the required values of $\chi$, and conclude with sample complexity and runtime bounds.

{\bf Proving correctness of the location process}	We show that each element $f$ with $|\wh{Y}_f|^2\geq \|\wh{Y}\|_2^2/k$ is reported in a given iteration of the outer loop over $t=1,\ldots, C_1\log(2/p)$, with probability at least $9/10$. Since the loops use independent randomness, the probability of $f$ not being reported in any of the iterations is bounded by $(1/10)^{C_1\log(2/p)}\leq p/2$ if $C_1$ is sufficiently large.
	
	Fix an iteration $t$. We first show that the random set $\A$ chosen in \textsc{LocateReducedSignals} has useful error-correcting properties with high probability.
	Specifically, we let $\Ebal$ denote the event that for every $\lambda\in [\Lambda], \lambda\neq 0$ at least a fraction $49/100$ of the numbers $\{\omega_\Lambda^{\lambda\cdot \beta}\}_{(\alpha, \beta)\in \A}$ have non-positive real part; in that case, we say that $\A$ is {\em balanced}. We have for fixed $\lambda\in [\Lambda], \lambda\neq 0$ that since the pair $(\alpha, \beta)$ was chosen uniformly at random from $[m]\times [m]$, the quantity $\omega_\Lambda^{\lambda\cdot \beta}$ is uniformly distributed on the set of roots of unity of order $2^s$ for some $s>0$ (since $\lambda\neq 0$). At least half of these roots have non-positive real part, so for every fixed $\lambda\in [\Lambda], \lambda\neq 0$ one has
	$\PP_\beta[\text{Re}(\omega_{\Lambda}^{\lambda\cdot \beta})\leq 0]\geq 1/2$. It thus follows by standard concentration inequalities that for every fixed $\lambda$ at least $49/100$ of the numbers  $\{\omega_\Lambda^{\lambda\cdot \beta}\}_{(\alpha, \beta)\in \A}$ have non-positive real part with probability at least $1-e^{-\Omega(|\A|)}=1-\exp(-\Omega(C_3 \log\log m)))\geq 1-1/(100\log_2 m)$ as long as $C_3$ is larger than an absolute constant. A union bound over $\Lambda\leq \log_2 m$ values of $\lambda$ shows that $\PP[\Ebal]\geq 1-(\log_2 m)\cdot /(100\log_2 m) = 1- 1/100$ for sufficiently large $m$ (recall from Section \ref{sec:intro} that $\frac{n}{k_1}$ exceeds a large absolute constant by assumption). We henceforth condition on $\Ebal$.

	Fix any $f$  such that $|\wh{Y}_f|^2\geq ||\wh{Y}||_2^2/k$, and let $q=\sigma i$ for convenience. We show by induction on $g=1,\ldots, \log_\Lambda N$ that before the $g$-th iteration of lines \ref{line:g_loop_start}--\ref{line:g_loop_end} of Algorithm~\ref{alg:location}, we have that ${\bf f}$ coincides with ${\bf q}$ on the bottom $g\cdot \log_2 \Lambda$ bits, i.e., ${\bf f}-{\bf q}= 0 \mod \Lambda^{g-1}$. 
	
	The {\bf base} of the induction is trivial and is provided by $g=1$.  
	We now show the {\bf inductive step}. Assume by the inductive hypothesis that ${\bf f}-{\bf q}= 0 \mod \Lambda^{g-1}$, so that
	${\bf q}={\bf f}+\Lambda^{g-1}(\lambda_0+\Lambda \lambda_1+\Lambda^2 \lambda_2+\ldots)$. Thus,  $(\lambda_0, \lambda_1,\ldots)$ is the expansion of $({\bf q}-{\bf f})/\Lambda^{g-1}$ base $\Lambda$, and $\lambda_0$ is the least significant digit. We now show that $\lambda_0$ is the unique value of $\lambda$ that satisfies the condition of line \ref{line:r_test} of Algorithm~\ref{alg:location}, with high probability 
	
	In the following, we use the definitions of $\pi(f)$, $h(f)$, and $o_f(f')$ from Definition \ref{def:hashing} with $\Delta = 0$.  First, we have for each $a=(\alpha, \beta)\in \A$ and $\h\in \H$ that
	\begin{equation*}
		\begin{split}
			\wh{H}_{o_f(f)}^{-1}\wh{U}_{h(f)}(\alpha+\h\cdot \beta)-  \wh{Y}_f \omega_N^{(\alpha+\h\cdot \beta) {\bf q}}
			&= \wh{H}_{o_f(f)}^{-1}\wh{U}^*_{h(f)}(\alpha+\h\cdot \beta) -  \wh{Y}_f \omega_N^{(\alpha+\h\cdot \beta) {\bf q}} + \err_w\\
			&=\wh{H}_{o_f(f)}^{-1} \sum_{f'\in [m]\setminus \{f\}} \wh{H}_{o_f(f')} \wh{Y}_{f'} \omega_N^{\sigma f' \cdot (\alpha+\h\cdot \beta)}+\err_w=:E'(\h),
		\end{split}
	\end{equation*}
	where $\err_w = \wh{H}_{o_f(f)}^{-1}(\wh{U}_{h(f)} - \wh{U}^*_{h(f)})(\alpha+\h\cdot \beta)$.
	
	And similarly 
	\begin{equation*}
		\begin{split}
			\wh{H}_{o_f(f)}^{-1}\wh{U}_{h(f)}(\alpha)-  \wh{Y}_f \omega_N^{\alpha {\bf q}}
			&= \wh{H}_{o_f(f)}^{-1}\wh{U}^*_{h(f)}(\alpha)-  \wh{Y}_f \omega_N^{\alpha {\bf q}} + \err\\
			&= \wh{H}_{o_f(f)}^{-1}\sum_{f'\in [m]\setminus \{f\}} \wh{H}_{o_f(f')} \wh{Y}'_{f'} \omega_N^{\sigma f' \cdot \alpha} + \err=:E''.
		\end{split}
	\end{equation*}
	where $\err = \wh{H}_{o_f(f)}^{-1}(\wh{U}_{h(f)} - \wh{U}^*_{h(f)})(\alpha)$.

	We will show that $f$ is recovered from bucket $h(f)$ with high (constant) probability. The bounds above imply that 
	\begin{equation}\label{eq:gergergre}
	\begin{split}
	\frac{\wh{U}_{h(f)}(\alpha+\h \beta))}{\wh{U}_{h(f)}(\alpha)}=\frac{\wh{Y}_f \omega_N^{(\alpha+\h \beta) {\bf q}}+E'(\h)}{\wh{Y}_f \omega_N^{\alpha {\bf q}}+E''}.
	\end{split}
	\end{equation}
	
	The rest of the proof consists of two parts. We first show that with high probability over the choice of $\pi$, the error terms $E'(\h)$ and $E''$ are small in absolute value for most $a=(\alpha, \beta)\in \A$ with extremely high probability. We then use this assumption to argue that $f$ is recovered.
	
	\paragraph{Bounding the error terms $E'(\h)$ and $E''$ (part (i)).} We have by Parseval's theorem that
	\begin{equation}\label{eq:go4ng43g34g}
	\begin{split}
	\EE_{a}[|E'(\h)|^2]
	&\leq \wh{H}_{o_f(f)}^{-2} \sum_{f'\in [m]\setminus \{f\}} \wh{H}_{o_f(f')}^2 |Y_{f'}|^2 + |\err_w|^2 + 2 |\err_w| \wh{H}_{o_f(f)}^{-1} \sum_{f'\in [m]\setminus \{f\}} \wh{H}_{o_f(f')} |Y_{f'}|,
	\end{split}
	\end{equation}
	and
	\begin{equation}
	\begin{split}
	\EE_a[|E''|^2]
	&\leq \wh{H}_{o_f(f)}^{-2}\sum_{f'\in [m]\setminus \{f\}} \wh{H}_{o_f(f')}^2 |\wh{Y}_{f'}|^2 + |\err|^2 + 2 |\err| \wh{H}_{o_f(f)}^{-1}\sum_{f'\in [m]\setminus \{f\}} \wh{H}_{o_f(f')} |\wh{Y}_{f'}|,
	\end{split} \nonumber
	\end{equation}
	where we used the fact that $\alpha+\h \beta$ is uniformly random in $[m]$ (due to $\alpha$ being uniformly random in $[m]$ and independent of $\beta$ by definition of $\A$ in line~6 of Algorithm~\ref{alg:location}).
	
	Taking the expectation of the term $\wh{H}_{o_f(f)}^{-2} \sum_{f'\in [m]\setminus \{f\}} \wh{H}_{o_f(f')}^2 |Y_{f'}|^2$ with respect to $\pi$, we obtain
	\begin{align*}
		&\EE_\pi\bigg[\wh{H}_{o_f(f)}^{-2} \sum_{f'\in [m]\setminus \{f\}} \wh{H}_{o_f(f')}^2 |Y_{f'}|^2\bigg]=O(\|Y\|_2^2/B)=O(||X'||_2^2/(C_2k))
	\end{align*}
	by Lemma~\ref{lem:perm_property} (note that $F' \geq 2$, so the lemma applies) and the choice $B=C_2\cdot k$ (line~\ref{line:est_B} of Algorithm~\ref{alg:location}). We thus have by Markov's inequality together with the assumption that $|\wh{Y}_f|^2\geq ||\wh{Y}||_2^2/k$  that 
	$$
	\PP_\pi\bigg[\wh{H}_{o_f(f)}^{-2} \sum_{f'\in [m]\setminus \{f\}} \wh{H}_{o_f(f')}^2 |\wh{Y}_{f'}|^2>|\wh{Y}_f|^2/1700\bigg]<O(1/C_2)<1/40
	$$
	and 
	$$
	\PP_\pi\bigg[\wh{H}_{o_f(f)}^{-2}\sum_{f'\in [m]\setminus \{f\}} \wh{H}_{o_f(f')}^2 |\wh{Y}_{f'}|^2 >|\wh{Y}_f|^2/1700\bigg]<O(1/C_2)<1/40
	$$
	since $C_2$ is larger than an absolute constant by assumption. 
	
	\textbf{Bounding $\err$ and $\err_w$ (numerical errors from semi-equispaced FFT computation):} Recall that we have by assumption that $\wh{\chi}_0$ uniformly distributed over an arbitrarily length-$\frac{\|\wh{\chi}\|^2}{\poly(n)}$ interval, and that $\wh{Y} = \wh{Z}^r$ for some $\wh{Z}^r$ in the $(k_1,\delta)$-downsampling of $X - \chi$.  By decomposing $\wh{Z}^r_j = ((\wh{X}^r - \wh{\chi}^r) \star \wh{G} )_{j k_1}$ into a deterministic part and a random part (in terms of the above-mentioned uniform distribution), we readily obtain for some $c' > 0$ that
	\begin{equation}
        \|\wh{Y}\|^2 \ge \frac{\|\wh{\chi}\|_2^2}{n^{c'}} \label{eq:chi_to_Z}
	\end{equation}
	with probability at least $1-\frac{1}{n^4}$.  Since $p \ge \frac{1}{n^3}$ by assumption, we deduce that this also holds with probability at least $1 - p/2$.  By the the accuracy of the $\wh{\chi}$ values stated in Lemma \ref{lem:semi_equi}, along with the argument used in \eqref{eq:semi_appr2}--\eqref{eq:se_abs_bound} its proof in Appendix \ref{sec:pf_semi} to convert \eqref{eq:chi_to_Z} to accuracy on hashed values, we know that $|\wh{U}_{h(f)}-\wh{U}^*_{h(f)}| \le \|\wh{U}-\wh{U}^*\|_{\infty} \le n^{-c+1}\|\wh{\chi}\|_2$.  Hence, by using $|\wh{H}_{o_f(f)}|^{-2} \le 2$ and $\alpha ,  \alpha+\h \cdot \beta \le m$, we find that 
	\begin{gather}
	|\err| \le 2n^{-c+1} \|\wh{\chi}\|_2 \le 2n^{-c+c'+1} \|\wh{Y}\|_2 \nonumber \\
	|\err_w| \le 2n^{-c+1} \|\wh{\chi}\|_2 \le 2n^{-c+c'+1} \|\wh{Y}\|_2,  \nonumber 
	\end{gather}
	where the second inequality in each equation holds for some $c' > 0$ by \eqref{eq:chi_to_Z}.
	
	Note also that $\wh{H}_{o_f(f)}^{-1}\sum_{f'\in [m]\setminus \{f\}} \wh{H}_{o_f(f')} |\wh{Y}_{f'}| \le 2\|\wh{Y}\|_1 \le 2\sqrt{m}\|\wh{Y}\|_2$, since we have $\wh{H}_{o_f(f)}^{-1} \le 2$ and $\wh{H}_{f'}|\le 1$ for all $f'$. We can thus write
	\begin{align}
	|\err|^2 + 2 |\err| \cdot \wh{H}_{o_f(f)}^{-1}\sum_{f'\in [m]\setminus \{f\}} \wh{H}_{o_f(f')} |\wh{Y}_{f'}|
	&\le |\err|^2 + 4 \sqrt{m} |\err| \cdot \|\wh{Y}\|_2 \nn  \\
	&\le 4n^{-2c+2c'+2} \|\wh{Y}\|_2^2 + 8n^{-c+c'+3/2} \|\wh{Y}\|_2^2 \nn \\
	&= n^{\Omega(-c+c')} \|\wh{Y}\|_2^2, \label{eq:est_poly_error}
	\end{align}
	which can thus be made to behave as $\frac{1}{\poly(n)} \|\wh{Y}\|_2^2$ by a suitable choice of $c$.
	
	\paragraph{Bounding the error terms $E'(\h)$ and $E''$ (part (ii)).} By the union bound, we have $|E'(\h)|^2\leq |\wh{Y}_f|^2/1600$ and  $|E''|^2\leq |\wh{Y}_f|^2/1600$ simultaneously with probability at least $1-1/20$ -- denote the success event by $\E^t_{f, \pi} (\h)$. Conditioned on $\E^t_{f, \pi}(\h)$, we thus have by~\eqref{eq:go4ng43g34g} and \eqref{eq:est_poly_error}, along with the fact that $\A$ is independent of $\pi$, that 
	$$
	\EE_{a}[|E'(\h)|^2|]\leq |Y_f|^2/1600\text{~~~and~~~}\EE_{a}[|E''|^2]\leq |Y_f|^2/1600.
	$$
	Another application of Markov's inequality gives
	$$
	\PP_a[|E'(\h)|^2\geq  |\wh{Y}_f|^2/40]\leq 1/40 \text{~~~and~~~}\PP_a[|E''|^2\geq  |\wh{Y}_f|^2/40]\leq 1/40.
	$$
	This means that conditioned on $\E^t_{f, \pi}(\h)$, with probability at least $1-e^{-\Omega(|\A|)}\geq 1-1/(100\log_2 m)$ over the choice of $\A$, both events occur for all but $2/5$ fraction of $a\in \A$; denote this success event by $\E^t_{f, \A}(\h)$.  We condition on this event in what follows. Let $\A^*(\h)\subseteq \A$ denote the set of values of $a\in \A$ that satisfy the bounds above.
	
	In particular, we can rewrite ~\eqref{eq:gergergre} as
	\begin{align}
	\frac{\wh{U}_{h(f)}(\alpha+\h\beta)}{\wh{U}_{h(f)}(\alpha)}&=\frac{\wh{Y}_f \omega_N^{(\alpha+\h\beta) {\bf q}}+E'(\h)}{\wh{Y}_f \omega_N^{\alpha {\bf q}}+E''} \nn \\
	&=\frac{\omega_N^{(\alpha+\h\beta) {\bf q}}}{\omega_N^{\alpha {\bf q}}}\cdot\xi\text{~~~$\bigg($where~~}\xi=\frac{1+\omega^{-(\alpha+\h\beta) {\bf q}}E'(\h)/\wh{Y}_f'}{1+\omega_N^{-\alpha{\bf q}} E''/\wh{Y}_f'}\bigg) \nn \\
	&=\omega_N^{(\alpha+\h\beta) {\bf q}-\alpha {\bf q}}\cdot\xi \nn \\
	&=\omega_N^{\h\beta {\bf q}}\cdot\xi. \nonumber % \label{eq:gergergre-2}
	\end{align}
	We thus have for  $a\in \A^*(\h)$ that
	\begin{equation}\label{eq:bound-1-oigb344tg32t}
	|E'(\h)|/|\wh{Y}_f'|\leq  1/40\text{~~~and~~~~}|E''|/|\wh{Y}_f'|\leq  1/40.
	\end{equation}

	\paragraph{Showing that $\A^*(\h)\subseteq \A$ suffices for recovery.}
	By the above calculations, we get
	$$
	\frac{\wh{U}_{h(f)}(\alpha+\h \beta)}{\wh{U}_{h(f)}(\alpha)}=\omega_N^{\h \beta {\bf q}}\cdot\xi=\omega_N^{N\Lambda^{-g} \beta {\bf q}}\cdot\xi=\omega_N^{N\Lambda^{-g} \beta {\bf q}}+\omega_N^{N\Lambda^{-g} \beta {\bf q}}(\xi-1). 
	$$
	We proceed by analyzing the first term, and we will later show that the second term is small. Since ${\bf q}={\bf f}+\Lambda^{g-1}(\lambda_0+\Lambda \lambda_1+\Lambda^2 \lambda_2+\ldots)$, by the inductive hypothesis, we have
	\begin{equation*}
		\begin{split}
			\omega_\Lambda^{-\lambda\cdot \beta}\cdot \omega_N^{-N\Lambda^{-g}{\bf f}\cdot \beta}\cdot \omega_N^{N\Lambda^{-g} \beta {\bf q}}&=\omega_\Lambda^{-\lambda\cdot \beta}\cdot \omega_N^{N\Lambda^{-g}({\bf q}-{\bf f})\cdot \beta}\\
			&=\omega_\Lambda^{-\lambda\cdot \beta}\cdot \omega_N^{N\Lambda^{-g}(\Lambda^{g-1}(\lambda_0+\Lambda \lambda_1+\Lambda^2 \lambda_2+\ldots))\cdot \beta}\\
			&=\omega_\Lambda^{-\lambda\cdot \beta}\cdot \omega_N^{(N/\Lambda)\cdot (\lambda_0+\Lambda \lambda_1+\Lambda^2 \lambda_2+\ldots)\cdot \beta}\\
			&=\omega_\Lambda^{-\lambda\cdot \beta}\cdot \omega_{\Lambda}^{\lambda_0\cdot \beta}\\
			&=\omega_\Lambda^{(-\lambda+\lambda_0)\cdot \beta},
		\end{split}
	\end{equation*}
	where we used the fact that $\omega_N^{N/\Lambda}=e^{2\pi f (N/\Lambda)/N}=e^{2\pi f/\Lambda}=\omega_\Lambda$. Thus, we have
	\begin{equation} % \label{eq:92hg34grggggdds}
	\omega_{\Lambda}^{-\lambda\cdot \beta}\omega^{-(N\Lambda^{-g}{\bf f})\cdot \beta}\frac{\wh{U}_{h(f)}(\alpha+\h \beta)}{\wh{U}_{h(f)}(\alpha)}=\omega_\Lambda^{(-\lambda+\lambda_0)\cdot \beta}\xi. \nonumber
	\end{equation}
	
	We now consider two cases. First suppose that $\lambda=\lambda_0$. Then $\omega_\Lambda^{(-\lambda+\lambda_0)\cdot \beta}=1$, and it remains to note that  by~\eqref{eq:bound-1-oigb344tg32t} we have $|\xi-1|\leq \frac{1+1/40}{1-1/40}-1< 1/3$.
	Thus, every $a\in \A^*(\h)$ passes the test in line~\ref{line:r_test} of Algorithm~\ref{alg:location}. Since $|\A^*(\h)|\geq (3/5)|\A|$ by the argument above, we have that $\lambda_0$ passes the test in line~\ref{line:r_test}. It remains to show that $\lambda_0$ is the unique element in $0,\ldots, \Lambda-1$ that passes this test.
	
	Suppose that $\lambda \neq \lambda_0$. Then, by conditioning on $\Ebal$, at least a $49/100$ fraction of $\omega_\Lambda^{(-\lambda+\lambda_0)\cdot \beta}$ have negative real part.  This means that for at least $49/100$ of $a\in \A$, we have 
	$$
	|\omega_\Lambda^{(-\lambda+\lambda_0)\cdot \beta}\xi-1|\geq |\mathbf{i}\cdot |\xi|-1|\geq |(7/9)\mathbf{i}-1|> 1/3,
	$$
	and hence the condition in line~16 of Algorithm~\ref{alg:location} is not satisfied for any $\lambda \neq \lambda_0$. 
	
	We thus get that conditioned on $\Ebal$ and the intersection of $\E^t_{f, \pi}(\h)$ for all $\h\in \H$ and $\E^t_{f, \A}$, recovery succeeds for all values of $g=1,\ldots, \log_\Lambda N$. By a union bound over the failure events, we get that 
	$$
	\PP\bigg[\Ebal\cap \E^t_{f, \A} \cap \Big( \bigcap_{\h\in \H} \E^t_{f, \pi}(\h) \Big) \bigg]\geq 1-1/100-(\log_\Lambda N)\cdot\frac1{100\log_2 m}\geq 98/100.
	$$
	This shows that location is successful for $f$ in a single iteration $t$ with probability at least $98/100\geq 9/10$, as required.
	
	\textbf{Sample complexity and runtime.} 
%	Recall the definition of hashing from \eqref{eq:hashing}:
%    \begin{equation}
%        U_f = \frac{B}{n} \sum_{i\in [\frac{m}{B}]} Z^r_{\sigma( \Delta + f + B \cdot i)} H_{j + B \cdot i}, \quad f \in [B]. \label{eq:hashing2}
%    \end{equation} 
%    Since $H$ has support $O(B)$, computing these for a given $Z^r$ requires $O(B)$ samples of $Z^r$, or $O\big(B\log\frac{1}{\delta}\big)$ samples of $X$ by Lemma \ref{lem:downsamp-cost-unit-access}.  Thus, if we replace the loop over $r$ in Algorithm \ref{alg:loc_k} by a single value, along with its corresponding choice $B = C_2 s^r$ in line \ref{line:est_B}, then the sample complexity of the hashing is $O(B |\A| \log_\Lambda m\cdot \log(1/p)\log(1/\delta)=O(B\log m \log(1/p) \log(1/\delta))$. Substituting $B = C_2 s^r$ and summing over $r \in [2k_1]$ gives the desired result.
%	
%	The runtime corresponding to the hashing on line \ref{line:loc_hash} matches the sample complexity, and the subsequent decoding step takes $O(B\log_\Lambda m\cdot |\A|\cdot \Lambda\log(1/p))=O(B\log^2 N \log(1/p))$ time.  Analogously, for a given $r$ value, performing the FFT on line \ref{line:loc_hash} for all $(\alpha,\beta,g,t)$ takes $O\big(B\log B \log n \log\frac{1}{p}\big)$.  Setting $B = O(s^r)$, summing these bounds over all $r$, and upper bounding $\log s^r$ by $\log n$ in the latter, their combined contribution behaves as $O\big(\sum_{r} s^r \log^2 n \log\frac{1}{p}\big)$.
	We first consider the calls to \textsc{HashToBinsReduced}.  This is called for $\log\log m$ values of $(\alpha,\beta)$ and $\log_{\Lambda} N = O\big( \frac{\log m}{\log \log m} \big)$ values of $g$ in each iteration, the product of which is $O(\log m) = O(\log n)$.  Moreover, the number of iterations is $O\big(\log\frac{1}{p}\big)$.  Hence, using Lemma \ref{lem:hash2bins}, we find that the combination of all of these calls costs $O\big(F \sum_{r\in[2k_1]}B^r \log \frac{1}{\delta} \log n\big)$ samples, with a runtime of $O\big((\Bmax F + k_0) k_1 \log^3 n )$, where $\Bmax = O(\max_{r}s^r)$, and $k_0$ is such that $\wh{\chi}$ is $(O(k_0),k_1)$-block sparse.   By the assumption $\max_{r} s^r = O\big( \frac{k_0}{\delta}\big)$, the runtime simplifies to $O\big( \frac{k_0k_1}{\delta} \log^3 n )$

\end{proof}

\section{Pruning the Location List} \label{sec:prune}

\begin{algorithm}
	\caption{Prune a location list via hashing and thresholding techniques.}
	\begin{algorithmic}[1]
    \Procedure{PruneLocation}{$X, \wh{\chi}, L, n, k_0 , k_1 , \delta , p, \theta$}
    
    \State $B \gets 160\frac{k_0 k_1}{\delta} $
    \State $F \gets 10 \log \frac{1}{\delta}$
    \State $G \gets (n, B, F)$-flat filter
    
    \State $T \gets 10 \log \frac{1}{\delta p}$
    
    \For{\texttt{$t \in \{1,\dotsc,T\}$}}
    \State $\Delta \gets$ uniform random sample from $[\frac{n}{k_1}]$
    \State $\sigma \gets$ uniform random sample from odd numbers in  $[\frac{n}{k_1}]$
    \State $\wh{U} \gets \textsc{HashToBins}(X,\wh{\chi},G,n,B,\sigma,\Delta)$
    \State $W^{(t)}_j \gets \sum_{f \in I_j} \big| \wh{G}^{-1}_{o_f(f)} \wh{U}_{h(f)} \omega_n^{-\sigma \Delta f} \big|^2$ for all $j \in L$ \Comment $h(f),o_f(f)$ in Definition \ref{def:hashing} 
    
    \EndFor
    
    \State $W_j \gets \Median_{\, t}(W^{(t)}_j)$ for all $j \in L$
    \State $L' \gets \{j \in L \,:\, W_j \ge \theta\}$
    
    %\For{\texttt{$j \in L$}}
    %\If{$ W_j \ge \theta$} 
    %\State $L' \gets L' \cup \{j\}$
    %\EndIf
    %
    %\EndFor
    
    \State \textbf{return} $L'$
    \EndProcedure
		
	\end{algorithmic}
	\label{alg:prune}
	
\end{algorithm}

The pruning procedure is given in Algorithm \ref{alg:prune}. Its goal is essentially to reduce the size of the list returned by \textsc{MultiBlockLocate} (\emph{cf.}, Algorithm \ref{alg:2}) from $O(k_0 \log(1+k_0))$ to $O(k_0)$.  More formally, the following lemma shows that with high probability, the pruning algorithm retains most of the energy in the head elements, while removing most tail elements.

\medskip
\noindent \textbf{Lemma \ref{lem:prune}}~(\textsc{PruneLocation} guarantees -- re-stated from Section \ref{sec:utilities}) {\em 
Given $(n,k_0,k_1)$, the parameters $\theta > 0$, $\delta \in \big(\frac{1}{n}, \frac{1}{20}\big)$ and $p \in(0,1)$, and the signals $X \in \CC^n$ and $\wh{\chi} \in \CC^n$ with $\|\wh{X} - \wh{\chi}\|_2 \ge \frac{1}{\poly(n)}\|\wh{\chi}\|_2$, the output $L'$ of \textsc{PruneLocation}$(X,\wh{\chi}, L, k_0, k_1, \delta , p, n, \theta)$ has the following properties:

\begin{enumerate}[label={\bf \alph*.}]
	\item Let $\Stail$ denote the tail elements in the signal $\wh{X} - \wh{\chi}$, defined as
	$$\Stail = \Big \{ j \in \Big[\frac{n}{k_1}\Big] \, : \, \|(\wh{X} - \wh{\chi})_{I_j}\|_2 \leq \sqrt{\theta} - \sqrt{\frac{\delta}{k_0}}\|\wh{X} - \wh{\chi}\|_2 \Big \},$$
	where $I_j$ is defined in Definition \ref{def:block}.    Then, we have
	$$\EE \Big[ \big| L' \cap \Stail \big| \Big] \le \delta p \cdot |L \cap \Stail|.$$
	\item Let $\Shead$ denote the head elements in the signal $\wh{X} - \wh{\chi}$, defined as
	$$\Shead = \Big \{ j \in \Big[\frac{n}{k_1}\Big] \, : \, \|(\wh{X} - \wh{\chi})_{I_j}\|_2 \geq \sqrt{\theta} + \sqrt{\frac{\delta}{k_0}} \|\wh{X} - \wh{\chi}\|_2 \Big \}.$$
	Then, we have
	$$\EE \Big[ \sum_{j \in (L \cap \Shead) \backslash L' } \| (\wh{X}-\wh{\chi})_{I_j} \|_2^2 \Big] \leq  \delta p \sum_{j \in L \cap \Shead} \|(\wh{X}-\wh{\chi})_{I_j}\|_2^2.$$
\end{enumerate}
Moreover, provided that $\|\wh{\chi}\|_0 = O(k_0k_1)$, the sample complexity is $O(\frac{k_0 k_1}{\delta} \log \frac{1}{\delta p} \log \frac{1}{\delta})$, and the runtime is $O(\frac{k_0 k_1}{\delta} \log \frac{1}{\delta p} \log \frac{1}{\delta} \log n + k_1 \cdot |L| \log\frac{1}{\delta p})$.
}
\begin{proof}
	We begin by analyzing the properties of the random variables $W_j$ used in the threshold test. We define $X' = X - \chi$, let $\wh{U}$ be the output of \textsc{HashToBins}, and let $\wh{U}^*$ be its exact counterpart as defined in Lemma \ref{lem:hash2bins}.  It follows that we can write the random variable $W_j^{(t)}$ (\emph{cf.}, Algorithm \ref{alg:prune}) as
	\begin{align}
	W_j^{(t)}
	&= \sum_{f \in I_j} \Big| \wh{G}^{-1}_{o_f(f)} \wh{U}_{h(f)} \omega_n^{-\sigma \Delta f} \Big|^2 \nonumber \\
	&= \sum_{f \in I_j} \Big| \wh{G}^{-1}_{o_f(f)} \wh{U}^*_{h(f)} \omega_n^{-\sigma \Delta f} + \wh{G}^{-1}_{o_f(f)} (\wh{U}_{h(f)} - \wh{U}^*_{h(f)}) \omega_n^{-\sigma \Delta f} \Big|^2 \nonumber \\
	&= \sum_{f \in I_j} \Big| \wh{X}'_f + \err^{(t)}_f + \errtilde^{(t)}_f\Big|^2, \label{eq:wj_alt}
	\end{align}
	where (i) $\err^{(t)}_f = \wh{G}^{-1}_{o_f(f)} \sum_{f' \in [n] \backslash \{f\}} \wh{X}'_{f'} \wh{G}_{o_f(f')} \omega_n^{\sigma \Delta(f'-f)}$, with $(\sigma,\Delta)$ implicitly depending on $t$; this follows directly from Lemma \ref{lem:uhat}, along with the definitions $\pi(f) = \sigma f$ and $o_{f}(f') = \pi(f') - \frac{n}{B} h(f)$. (ii) $\errtilde^{(t)}_f = \wh{G}^{-1}_{o_f(f)} (\wh{U}_{h(f)}-\wh{U}^*_{h(f)}) \omega^{-\sigma \Delta f}$, a polynomially small error term (\emph{cf}., Lemma \ref{lem:hash2bins}).
	
	\textbf{Bounding $\err^{(t)}_f$ and $\errtilde^{(t)}_f$:} In Lemma \ref{lem:err_terms} below, we show that
	\begin{gather}
    	\EE_{\Delta,\pi} \Big[ |\err^{(t)}_f|^2 \Big] \le \frac{20}{B}\|\wh{X}'\|_2^2 \label{eq:avg_err_t} \\
    	|\errtilde^{(t)}_f| \le 2n^{-c+c'} \|\wh{X}'\|_2, \label{eq:err'_bound0}
	\end{gather}
	where $c$ is used in \textsc{HashToBins}, and $c'$ is value such that $\|\wh{X} - \wh{\chi}\|_2 \ge \frac{1}{n^{c'}}\|\wh{\chi}\|_2$.  For \eqref{eq:err'_bound0}, we upper bound the $\ell_{2}$ norm by the square root of the vector length times the $\ell_{\infty}$ norm, yielding
		\begin{equation}
	    	\sqrt{\sum_{f \in [n]} |\errtilde^{(t)}_f|^2} \le \sqrt{n}\max_{f \in [n]} |\errtilde^{(t)}_f| \le 2n^{-c+c'+1/2} \|\wh{X}'\|_2. \label{eq:err'_bound}
		\end{equation}
	
	We now calculate the probability of a given block $j$ passing the threshold test, considering two separate cases.
	
	\begin{itemize}
		\item {\bf If $j$ is in the tail:} The probability for $j$ to pass the threshold is closely related to $\PP[ W_j^{(t)} \ge \theta ] = \PP\big[ \sqrt{W_j^{(t)}} \ge \sqrt{\theta} \big]$. From \eqref{eq:wj_alt}, $\sqrt{W_j^{(t)}}$ is the $\ell_2$-norm of a sum of three signals, and hence we can apply the triangle inequality to obtain
		$$ \PP \big[ W_j^{(t)} \ge \theta \big] \le \PP \bigg[ \sum_{f \in I_j} | \err^{(t)}_f |^2 \ge \Big( \sqrt{\theta} - \sqrt{\sum_{f \in I_j} | \wh{X}'_f |^2} - 2n^{-c+c'+1/2} \|\wh{X}'\|_2 \Big)^2 \bigg],$$
		where we have applied \eqref{eq:err'_bound}.
				
		By definition, for any $j \in \Stail$, we have $\sqrt{\theta} - \|\wh{X}'_{I_j}\|_2 \ge \sqrt{\frac{\delta}{k_0}}\|\wh{X}'\|_2$. Hence, and recalling that $\delta \ge \frac{1}{n}$, if $c$ if sufficiently large so that $\sqrt{\frac{\delta}{k_0}} - 2n^{-c+c'+1/2} \ge \sqrt{\frac{0.9\delta}{k_0}}$, then Markov's inequality yields
		\begin{align}
			\PP \big[ W_j^{(t)} \ge \theta \big] 
			&\le \PP \bigg[ \sum_{f \in I_j} | \err^{(t)}_f |^2 \ge \frac{0.9\delta}{k_0}\|\wh{X}'\|_2^2 \bigg] \nn \\
			&\le \frac{\EE_{\Delta,\pi} \Big[ \sum_{f \in I_j} |\err^{(t)}_f|^2 \Big] }{\frac{0.9\delta}{k_0}\|\wh{X}'\|_2^2} \nn \\
			&\le \frac{ \frac{20 k_1}{B}\|\wh{X}'\|_2^2 }{\frac{0.9\delta}{k_0}\|\wh{X}'\|_2^2} \nn \\
			& \le \frac{1}{6} \nn 
		\end{align}
		where the third line follows form \eqref{eq:avg_err_t} and $|I_j| = k_1$, and the final line follows from the choice $B = 160 \frac{k_0k_1}{\delta}$.  Since $W_j$ is the median of $T$ independent such random variables, it can only exceed $\theta$ if there exists a subset of $t$ values of size $\frac{T}{2}$ with $W_j^{(t)} \ge \theta$.  Hence, 
		\begin{equation}
		\PP [W_j \ge \theta] \le {{T}\choose{T/2}} \Big( \frac{1}{6} \Big)^{T/2} \le \Big(\frac{2}{3}\Big)^{T/2} \le \delta p, \nonumber 
		\end{equation}
		where we applied ${{T}\choose{T/2}} \le 2^T$, followed by $T = 10\log\frac{1}{\delta p}$ (\emph{cf.}, Algorithm \ref{alg:prune}).
		
		\item {\bf If j is in the head:} We proceed similarly to the tail case, but instead use the triangle inequality in the form of a lower bound (i.e., $\|a + b\|_2 \ge \|a\|_2 - \|b\|_2$), yielding
		\begin{equation}
		\PP \big[ W_j^{(t)} \le \theta \big] \le \PP \bigg[ \sum_{f \in I_j} | \err^{(t)}_f |^2 \ge \Big( \sqrt{\sum_{f \in I_j} | \wh{X}'_f |^2} - \sqrt{\theta} - 2n^{-c+c'+1/2} \|\wh{X}'\|_2 \Big)^2 \bigg]. \nn
		\end{equation}
		By definition, for any $j \in \Shead$, we have $\| \wh{X}'_{I_j}\|_2 - \sqrt{\theta} \ge \sqrt{\frac{\delta}{k_0}}\|\wh{X}'\|_2 $. Hence, if $c$ if sufficiently large so that $\sqrt{\frac{\delta}{k_0}} - 2n^{-c+c'+1/2} \ge \sqrt{\frac{0.9\delta}{k_0}}$, then analogously to the tail case above, we have
		\begin{align}
			\PP \big[ W_j^{(t)} \le \theta \big] 
			&\le \PP \bigg[ \sum_{f \in I_j} | \err^{(t)}_f |^2 \ge \frac{0.9 \delta}{k_0}\|\wh{X}'\|_2^2 \bigg] \le \frac{1}{6}, \nn 
		\end{align}
		and consequently $\PP[W_j \le \theta] \le \delta p$.
	\end{itemize}
	
	\noindent\textbf{First claim of lemma:} Since $L' \subset L$, we have
	\begin{equation}
	\EE \Big[ \big| L' \cap \Stail \big| \Big] = \sum_{j \in L \cap \Stail} \PP \big[ j \in L' \big] = \sum_{j \in L \cap \Stail} \PP \big[ W_j \ge \theta \big]. \nn
	\end{equation}
	Since we established that $\PP[W_j \ge \theta]$ is at most $\delta p$, we obtain
	\begin{equation}
	\EE \Big[ \big| L' \cap \Stail \big| \Big] \le \sum_{j \in L \cap \Stail} \delta p = \delta p  \cdot |L \cap \Stail|. \nn
	\end{equation}
	
	\noindent\textbf{Second claim of lemma:} In order to upper bound $\sum_{j \in (L \cap \Shead) \backslash L' } \| \wh{X}'_{I_j} \|_2^2$, we first calculate its expected value as follows:
	\begin{equation}
	\begin{split}
	\EE \Big[ \sum_{j \in (L \cap \Shead) \backslash L' } \| \wh{X}'_{I_j} \|_2^2 \Big]
	&= \EE \Big[ \sum_{j \in L \cap \Shead} \| \wh{X}'_{I_j} \|_2^2 \, \Ic \big[ j \notin L' \big] \, \Big]\\
	&= \sum_{j \in L \cap \Shead} \| \wh{X}'_{I_j} \|_2^2 \, \PP \big[ j \notin L' \big]\\
	&= \sum_{j \in L \cap \Shead} \| \wh{X}'_{I_j} \|_2^2 \, \PP \big[ W_j \le \theta \big].
	\end{split} \nn
	\end{equation}
	The probability $\PP \big[ W_j \le \theta\big]$ for $j \in L \cap \Shead$ is at most $\delta p$, and hence
	\begin{equation}
	\EE \Big[ \sum_{j \in (L \cap \Shead) \backslash L' } \| \wh{X}'_{I_j} \|_2^2 \Big] \le \delta p \sum_{j \in L \cap \Shead } \| \wh{X}'_{I_j} \|_2^2. \nn
	\end{equation}
	
	\paragraph{Sample complexity and runtime} For the sample complexity, note that the algorithm only uses samples via its call to \textsc{HashToBins}.  By part (i) of Lemma \ref{lem:hash2bins} and the choices $B= 160 \frac{k_0 k_1}{\delta}$ and $F = 10 \log \frac{1}{\delta}$, the sample complexity is $O(FB) = O(\frac{k_0 k_1 }{\delta} \log \frac{1}{\delta})$ per hashing operation. Since we run the hashing in a loop $10\log \frac{1}{\delta p}$ times, the sample complexity is $O(\frac{k_0 k_1 }{\delta} \log \frac{1}{\delta} \log \frac{1}{\delta p})$.  
	
	The runtime depends on three operations. The first is calling \textsc{HashToBins}, for which an analogous argument as that for the sample complexity holds, with the extra $\log n$ factor arising from Lemma \ref{lem:hash2bins}.  The second operation is the computation of $W_j^{(t)}$, which takes $|I_j| = O(k_1)$ time for each $j \in L$. Hence, the total contribution from the loop is $O( k_1 \cdot |L| \log \frac{1}{\delta p})$. Finally, since the median can be computed in linear time, computing the medians for every $j \in L$ costs $O(|L| \log \frac{1}{\delta p})$ time, which is dominated by the computation of $W_j^{(t)}$.
	
\end{proof}

In the preceding proof, we made use of the following.

\begin{lem} \label{lem:err_terms}
    \emph{(\textsc{EstimateValues} guarantees -- re-stated from Section \ref{sec:utilities})}
    Fix  $(n,k_0,k_1,B)$, the signals $X \in \CC^n$ and $\wh{\chi} \in \CC^n$ with $\|\wh{X} - \wh{\chi}\|_2 \ge \frac{1}{n^{c'}}\|\wh{\chi}\|_2$, and the uniformly random parameters $\sigma,\Delta \in [n]$ with $\sigma$ odd, and let $\wh{U}$ be the output of $\textsc{HashToBins}(X,\wh{\chi},G,n,B,\sigma,\Delta)$ and $\wh{U}^*$ its exact counterpart.  Then defining $\err_f = \wh{G}^{-1}_{o_f(f)} \sum_{f' \in [n] \backslash \{f\}} \wh{X}'_{f'} \wh{G}_{o_f(f')} \omega_n^{\sigma \Delta(f'-f)}$ and $\errtilde_f = \wh{G}^{-1}_{o_f(f)} (\wh{U}_{h(f)}-\wh{U}^*_{h(f)}) \omega^{-\sigma \Delta f}$ (for $h$ and $\sigma_f$ in Definition \ref{def:hashing}), we have
	\begin{gather}
    	\EE_{\Delta,\pi} \Big[ |\err_f|^2 \Big] \le \frac{20}{B}\|\wh{X}'\|_2^2  \\
    	|\errtilde_f| \le 2n^{-c+c'} \|\wh{X}'\|_2
	\end{gather}
	for $c$ used in $\textsc{HashToBins}$.
\end{lem}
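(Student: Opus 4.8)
The plan is to treat the two bounds as independent short computations, each leaning on a tool already proved. For the first bound I would fix the permutation $\pi$ (equivalently, $\sigma$) and average over $\Delta$ first. Writing $|\err_f|^2 = \err_f\,\overline{\err_f}$ expands into a double sum over $f',f''\in[n]\setminus\{f\}$ carrying the phase $\omega_n^{\sigma\Delta(f'-f'')}$; since $n$ is a power of two and $\sigma$ is odd, $\sigma$ is a unit modulo $n$, so $\frac1n\sum_{\Delta\in[n]}\omega_n^{\sigma\Delta(f'-f'')}$ is $1$ if $f'=f''$ and $0$ otherwise. This collapses the double sum and gives
\[
\EE_\Delta\big[|\err_f|^2\big] = |\wh{G}_{o_f(f)}|^{-2}\sum_{f'\ne f}|\wh{X}'_{f'}|^2\,|\wh{G}_{o_f(f')}|^2 .
\]
By the definition $o_f(f)=\pi(f)-\tfrac nB\round\big(\pi(f)\tfrac Bn\big)$ we have $|o_f(f)|\le\tfrac{n}{2B}$, so property (ii) of Definition \ref{def:filterG} yields $\wh{G}_{o_f(f)}\ge 1-(\tfrac14)^{F-1}\ge\tfrac34$ (using $F\ge2$), hence the scalar $|\wh{G}_{o_f(f)}|^{-2}\le 2$ deterministically. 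I would pull that factor out, take $\EE_\pi$ of the remaining sum, and apply Lemma \ref{lem:perm_property} with $X$ replaced by $X'$ to obtain $\sum_{f'\ne f}|\wh{X}'_{f'}|^2\EE_\pi[|\wh{G}_{o_f(f')}|^2]\le\tfrac{10}{B}\|\wh{X}'\|_2^2$. Combining the two steps gives $\EE_{\Delta,\pi}[|\err_f|^2]\le\tfrac{20}{B}\|\wh{X}'\|_2^2$.

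For the second bound, $\errtilde_f$ is just $\wh{U}_{h(f)}-\wh{U}^*_{h(f)}$ multiplied by the unit-modulus factor $\omega_n^{-\sigma\Delta f}$ and by $\wh{G}_{o_f(f)}^{-1}$, which again has modulus at most $2$ by the computation above. Hence $|\errtilde_f|\le 2\|\wh{U}-\wh{U}^*\|_\infty$, and Lemma \ref{lem:hash2bins}(i) bounds $\|\wh{U}-\wh{U}^*\|_\infty\le n^{-c}\|\wh{\chi}\|_2$. Rearranging the hypothesis $\|\wh{X}'\|_2=\|\wh{X}-\wh{\chi}\|_2\ge n^{-c'}\|\wh{\chi}\|_2$ into $\|\wh{\chi}\|_2\le n^{c'}\|\wh{X}'\|_2$ then gives $|\errtilde_f|\le 2n^{-c+c'}\|\wh{X}'\|_2$, as required.

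I do not expect a genuine obstacle here; the statement is essentially a bookkeeping lemma. The only points needing care are (a) the vanishing of the off-diagonal phases, which depends on $\sigma$ being invertible modulo the power of two $n$ — this is exactly why the permutation is chosen as in Lemma \ref{lem:perm_interval}; and (b) controlling the scalar $\wh{G}_{o_f(f)}^{-1}$, which needs only the flatness of $\wh{G}$ near the origin together with the elementary bound $|o_f(f)|\le\tfrac n{2B}$. The identification of $\wh{U}^*_{h(f)}$ via Lemma \ref{lem:uhat} (so that the $f'=f$ term of the hashed signal is exactly $\wh{X}'_f\wh{G}_{o_f(f)}\omega_n^{\sigma\Delta f}$, leaving $\err_f$ as the residual) was already carried out in \eqref{eq:wj_alt}, so for this lemma I can simply work from the stated definitions of $\err_f$ and $\errtilde_f$.
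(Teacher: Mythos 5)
Your proof is correct and follows essentially the same route as the paper's: average over $\Delta$ first (the paper phrases the phase-orthogonality collapse as "by Parseval"), bound the scalar $|\wh{G}_{o_f(f)}|^{-2}\le 2$ via the flatness of $\wh{G}$ near zero, apply Lemma \ref{lem:perm_property} to the remaining sum, and for $\errtilde_f$ chain the $\ell_\infty$ bound of Lemma \ref{lem:hash2bins} with the hypothesis $\|\wh{\chi}\|_2\le n^{c'}\|\wh{X}'\|_2$. The only difference is that you spell out the $\Delta$-averaging and the derivation of the factor $2$ in more detail than the paper does.
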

\begin{proof}
    We take the expectation of $|\err_f|^2$, first over $\Delta$:
	\begin{equation}
	\EE_{\Delta} \Big[ |\err_f|^2 \Big] = |\wh{G}_{o_f(f)}|^{-2} \sum_{f' \in [n] \backslash \{f\}} |\wh{X}'_{f'}|^2 |\wh{G}_{o_f(f')}|^2 \nn
	\end{equation}
	by Parseval. By Definition \ref{def:filterG} and the definition of $o_f(\cdot)$, we can upper bound $|\wh{G}_{o_f(f)}|^{-2} \le 2$.  Continuing, we take the expectation with respect to the random permutation $\pi$:
	\begin{align}
	\EE_{\Delta,\pi} \Big[ |\err_f|^2 \Big] 
	&\le \EE_{\pi} \Big[ 2 \sum_{f' \in [n] \backslash \{f\}} |\wh{X}'_{f'}|^2 |\wh{G}_{o_f(f')}|^2 \Big] \nonumber \\
	&= 2 \sum_{f' \in [n] \backslash \{f\}} |\wh{X}'_{f'}|^2 \EE_{\pi} \Big[ |\wh{G}_{o_f(f')}|^2 \Big] \le \frac{20}{B}\|\wh{X}'\|_2^2.
	\end{align}
	by Lemma \ref{lem:perm_property}. 
	% Since there are $k_1$ frequencies in $I_j$, we conclude that
	%\begin{equation}
	%\EE_{\Delta,\pi} \Big[ \sum_{f \in I_j} |\err_f|^2 \Big] \le \frac{20k_1}{B}\|\wh{X}'\|_2^2 = \frac{\delta}{8 k_0}\|\wh{X}'\|_2^2 \label{eq:Ij_avg}
	%\end{equation}
	%by the choice $B = 160\frac{k_0 k_1}{\delta}$.
	
	We now turn to $\errtilde_f$.  We know from Lemma \ref{lem:hash2bins} that $|\wh{U}_{h(f)}-\wh{U}^*_{h(f)}| \le \|\wh{U}-\wh{U}^*\|_{\infty} \le n^{-c}\|\wh{\chi}\|_2$.  Hence, and again using $|\wh{G}_{o_f(f)}|^{-2} \le 2$, we find that 
	\begin{equation}
    	|\errtilde_f| \le 2n^{-c} \|\wh{\chi}\|_2 \le 2n^{-c+c'} \|\wh{X}'\|_2, 
	\end{equation}
	where the second inequality follows since $\|\wh{\chi}\|_2 \le n^{c'} \|\wh{X}'\|_2$ for some $c' > 0$ by assumption. 
\end{proof}

\section{Estimating Individual Frequency Values} \label{sec:est_vals}

\begin{algorithm}
	\caption{Energy estimation procedure for individual frequencies} \label{alg:est_values}
	\begin{algorithmic}[1]
		
		\Procedure{EstimateValues}{$X, \wh{\chi}, L, n, k_0, k_1, \delta , p$}
		
		\State $B \gets \frac{1200}{\delta}k_0  k_1$
		\State $F \gets 10 \log \frac{1}{\delta}$
		\State $G \gets (n, B, F)$-flat filter \Comment See Definition \ref{def:filterG}
		\State $\Fc \gets \{ f \in [n] \, : \, \round(\frac{f}{k_1}) \in L \}$
		\State $T \gets 10\log \frac{2}{p}$
		\For{\texttt{$t \in \{1,\dotsc,T\}$}}
		\State $\Delta \gets$ uniform random sample from $[\frac{n}{k_1}]$
		\State $\sigma \gets$ uniform random sample from odd numbers in  $[\frac{n}{k_1}]$
		\State $\wh{U} \gets \textsc{HashToBins}(X,\wh{\chi},G,n,B,\sigma,\Delta)$ \Comment $o_f(f), h(f)$ in Definition \ref{def:hashing}
		\State $W^{(t)}_f \gets \wh{G}^{-1}_{o_f(f)} \wh{U}_{h(f)} \omega^{-\sigma \Delta f} $ for each $f \in \Fc$
		
		\EndFor
		
		\State $W_f \gets \Median_{\, t}(x^{(t)}_f)$ for each $f \in \Fc$ \Comment Separately for the real and imaginary parts
		
		\State \textbf{return} $W$
		\EndProcedure
		
	\end{algorithmic}
	
\end{algorithm}

Once we have located the blocks, we need to estimate the frequency values with them.  The function \textsc{EstimateValues} in Algorithm \ref{alg:est_values} performs this task for us via basic hashing techniques.  The following lemma characterizes the guarantee on the output.

\medskip
\noindent \textbf{Lemma  \ref{lem:estimate}} (Re-stated from Section \ref{sec:utilities}) {\em 
	For any integers $(n,k_0,k_1)$, list of block indices $L$, parameters $\delta \in \big(\frac{1}{n},\frac{1}{20}\big)$ and $p \in (0,1/2)$, and signals $X \in \CC^n$ and $\wh{\chi} \in \CC^n$ with $\| \wh{X} - \wh{\chi} \|_2 \ge \frac{1}{\poly(n)} \| \wh{\chi} \|_2$, the output $W$ of the function \textsc{EstimateValues}$(X,\wh{\chi}, L, n, k_0, k_1, \delta, p)$ has the following property:
		
		\begin{equation}
		\sum_{f \in \bigcup_{j \in L} I_j} |W_f - (\wh{X}-\wh{\chi})_f|^2 \le \delta \frac{|L|}{3k_0} \| \wh{X}-\wh{\chi} \|_2^2 \nn
		\end{equation}
		with probability at least $1-p$, where $I_j$ is the $j$-th block.   Moreover, the sample complexity is $O(\frac{k_0 k_1}{\delta} \log \frac{1}{p} \log\frac{1}{\delta})$, and if $\|\wh{\chi} \|_0=O(k_0 k_1)$, then the runtime is $O(\frac{k_0 k_1}{\delta} \log \frac{1}{p} \log\frac{1}{\delta} \log n + k_1 \cdot |L|\log \frac{1}{p})$.
}
\begin{proof}
	Let $X'=X-\chi$, and let $\wh{U}$ be the output of \textsc{HashToBins} and $\wh{U}^*$ its exact counterpart.
	We start by calculating $W_f^{(t)}$ for an arbitrary $f \in \Fc$:
	\begin{align}
	W_f^{(t)}
	&= \wh{G}^{-1}_{o_f(f)} \wh{U}_{h(f)} \omega^{-\sigma \Delta f} \nonumber \\
	&= \wh{G}^{-1}_{o_f(f)} \wh{U}^*_{h(f)} \omega^{-\sigma \Delta f} + \wh{G}^{-1}_{o_f(f)} (\wh{U}_{h(f)}-\wh{U}^*_{h(f)}) \omega^{-\sigma \Delta f} \nonumber \\
	&= \wh{X}'_f + \err^{(t)}_f + \errtilde^{(t)}_f \text{~~~~(by Lemma \ref{lem:uhat})},
	\label{eq:Wf_init}
	\end{align}
	where $\err^{(t)}_f = \wh{G}^{-1}_{o_f(f)} \sum_{f' \in [n] \backslash \{f\}} \wh{X}'_{f'} \wh{G}_{o_f(f')} \omega^{\sigma \Delta(f'-f)}$, and $\errtilde^{(t)}_f = \wh{G}^{-1}_{o_f(f)} (\wh{U}_{h(f)}-\wh{U}^*_{h(f)}) \omega^{-\sigma \Delta f}$, for $(\sigma,\Delta)$ implicitly depending on $t$.
	
	\textbf{Bounding $\err^{(t)}_f$ and $\errtilde^{(t)}_f$:} Using Lemma \ref{lem:err_terms} in Appendix \ref{sec:prune}, we have
	\begin{gather}
		\EE_{\Delta,\pi} \Big[ |\err^{(t)}_f|^2 \Big] \le \frac{20}{B}\|\wh{X}'\|_2^2 \label{eq:avg_err} \\
		|\errtilde^{(t)}_f| \le 2n^{-c+c'} \|\wh{X}'\|_2, \label{eq:err_tilde}
	\end{gather}
	where $c$ is used in $\textsc{HashToBins}$, and $c'$ is the exponent in the $\poly(n)$ notation of the assumption $\| \wh{X} - \wh{\chi} \|_2 \ge \frac{1}{\poly(n)} \| \wh{\chi} \|_2$.
	
	In order to characterize $|W_f^{(t)} - \wh{X}'_f|^2$, we use the following:
	\begin{equation}
	|\errtilde_f|^2 + 2|\err^{(t)}_f| \cdot |\errtilde^{(t)}_f| \le 4n^{2(-c+c')} \|\wh{X}'\|_2^2 + 4n^{-c+c'} \|\wh{X}'\|_2 \cdot |\err^{(t)}_f|. \label{eq:err_tilde2}
	\end{equation}
	which follows directly from \eqref{eq:err_tilde}. 
	
	\textbf{Characterizing $|W_f^{(t)} - \wh{X}'_f|^2$:}
	We have from \eqref{eq:Wf_init}, \eqref{eq:avg_err}, and \eqref{eq:err_tilde2} that 
	\begin{align}
    	\EE[|W_f^{(t)} - \wh{X}'_f|^2] 
        	&\le \EE\big[ |\err^{(t)}_f|^2 + 2|\err^{(t)}_f| \cdot |\errtilde^{(t)}_f| + |\errtilde^{(t)}_f|^2\big] \nn \\
        	&\le \frac{20}{B}\|\wh{X}'\|_2^2 + 4n^{2(-c+c')} \|\wh{X}'\|_2^2 + 4n^{-c+c'} \|\wh{X}'\|_2  \EE\big[ |\err^{(t)}_f| \big] \nn \\
        	&\le \frac{20}{B}\|\wh{X}'\|_2^2 + 4n^{2(-c+c')} \|\wh{X}'\|_2^2 + 4\sqrt{\frac{20}{B}}n^{-c+c'} \|\wh{X}'\|_2^2, \label{eq:avg_total}
	\end{align}
	where the last line follows by writing $\EE\big[ |\err^{(t)}_f| \big] \le \sqrt{\EE\big[ |\err^{(t)}_f|^2 \big]}$ via Jensen's inequality, and  then applying \eqref{eq:avg_err}.  
	
	Since $B = \frac{1200 k_0 k_1}{\delta}$ and we have assumed $\delta \ge \frac{1}{n}$, we have $B \le 1200n^3$, and hence we have for sufficiently large $c$ that \eqref{eq:avg_total} simplifies to $\EE[|W_f^{(t)} - \wh{X}'_f|^2] \le \frac{25}{B} \|\wh{X}'\|_2^2$.  This means that
	\begin{equation}
	\PP_{\Delta,\pi} \Big[ |W_f^{(t)} - \wh{X}'_f|^2 \ge \frac{160t}{B}\|\wh{X}'\|_2^2 \Big] \le \frac{\EE_{\Delta,\pi} \Big[ |W_f^{(t)} - \wh{X}'_f|^2 \Big] }{\frac{160t}{B}\|\wh{X}'\|_2^2} \le \frac{1}{6t}. \label{eq:single_run}
	\end{equation}
	by Markov's inequality. 
	
	\textbf{Taking the median:} Recall that $W_f$ is the median of $T$ independent random variables, with the median taken separately for the real and imaginary parts.  Since $|W|^2 = |\mathrm{Re}(W)|^2 + |\mathrm{Im}(W)|^2$, we find that \eqref{eq:single_run} holds true when $W_f^{(t)} - \wh{X}'_f$ is replaced by its real or imaginary part.  Hence, with probability at least $1 - {T \choose T/2} \big( \frac{1}{6t} \big)^{T/2}$, we have $|\mathrm{Re}(W_f^{(t)} - \wh{X}'_f)|^2 < \frac{160t}{B}\|\wh{X}'\|_2^2$, and analogously for the imaginary part.  Combining these and applying the union bound, we obtain
	\begin{equation}
	\PP\bigg[ |W_f - \wh{X}'_f|^2 \ge \frac{320t}{B}\|\wh{X}'\|_2^2 \bigg] \le 2{T \choose T/2} \Big( \frac{1}{6t} \Big)^{T/2} \le 2\Big(\frac{2}{3t}\Big)^{T/2} \le \frac{p}{t^{T/2}}, \label{tailbound}
	\end{equation}
	where we first applied ${T \choose T/2} \le 2^T$, and then the choice $T = 10\log\frac{2}{p}$ from Algorithm \ref{alg:est_values} and the choice of $p \le 1/2$.
	
	We now bound the error as follows:
	\begin{equation}
	\begin{split}
	|W_f - \wh{X}'_f|^2 \le \frac{320}{B}\|\wh{X}'\|_2^2 + \Big| |W_f - \wh{X}'_f|^2 - \frac{320}{B}\|\wh{X}'\|_2^2 \Big|_+.
	\end{split} \label{eq:err_bound}
	\end{equation}
	We write the expected value of the second term as
	\begin{equation}
	\begin{split}
	\EE \Big[ \Big| |W_f - \wh{X}'_f|^2 - \frac{320}{B}\|\wh{X}'\|_2^2 \Big|_+ \Big]
	&= \int_{0}^{\infty} \PP \Big[ \Big| |W_f - \wh{X}'_f|^2 - \frac{320}{B}\|\wh{X}'\|_2^2 \Big|_+ \ge u \Big] du\\
	&= \int_{0}^{\infty} \PP \Big[ |W_f - \wh{X}'_f|^2 \ge \frac{320}{B}\|\wh{X}'\|_2^2 + u \Big] du\\
	&= \int_{1}^{\infty} \frac{320}{B}\|\wh{X}'\|_2^2 \PP \Big[ |W_f - \wh{X}'_f|^2 \ge \frac{320v}{B}\|\wh{X}'\|_2^2 \Big] dv
	\end{split} \nn
	\end{equation}
	where we applied the change of variable $v=1 + \frac{u}{\frac{320}{B}\|\wh{X}'\|_2^2} $. By incorporating \eqref{tailbound} into this integral, we obtain
	\begin{align}
	\EE \Big[ \Big| |W_f - \wh{X}'_f|^2 - \frac{320}{B}\|\wh{X}'\|_2^2 \Big|_+ \Big]
	&\le \frac{320}{B}\|\wh{X}'\|_2^2 \int_{1}^{\infty} \frac{p}{v^{T/2}} dv \nonumber \\
	&\le \frac{320}{B}\|\wh{X}'\|_2^2 \cdot \frac{p}{T/2-1} \nonumber \\
	&\le \frac{80}{B}\|\wh{X}'\|_2^2 \cdot p, \label{eq:err_final}
	\end{align}
	where the second line is by explicitly evaluating the integral (with $T > 2$), and the third by $T/2 -1 \ge 4$ (\emph{cf.}, Algorithm \ref{alg:est_values}).
	
	Summing \eqref{eq:err_bound} over $\Fc = \cup_{j \in L} I_j$, we find that the total error is upper bounded as follows:
	\begin{equation}
	\begin{split}
	\sum_{f \in \Fc} |W_f - \wh{X}'_f|^2 
	&\le \frac{320 |\Fc|}{B}\|\wh{X}'\|_2^2 + \sum_{f \in \Fc} \Big| |W_f - \wh{X}'_f|^2 - \frac{320}{B}\|\wh{X}'\|_2^2 \Big|_+ .
	\end{split} \nn
	\end{equation}
	From \eqref{eq:err_final}, the expected value of the second term is at most $p \cdot \frac{80|\Fc|}{B}\|\wh{X}'\|_2^2$, and hence
	\begin{equation}
	\sum_{f \in \Fc} |W_f - \wh{X}'_f|^2 \le \frac{400 |\Fc|}{B}\|\wh{X}'\|_2^2 \nonumber
	\end{equation}
	with probability at least $1-p$, by Markov's inequality.  The lemma now follows by the choice $B = \frac{1200}{\delta}k_0  k_1$ in Algorithm \ref{alg:est_values}.
	
	\paragraph{Sample complexity and runtime:} To calculate the sample complexity, note that the only operation in the algorithm that takes samples is the call to \textsc{HashToBins}. By Lemma \ref{lem:hash2bins}, and the choices $B= 1200 \frac{k_0 k_1}{\delta}$ and $F = 10 \log \frac{1}{\delta}$, the sample complexity is $O(\frac{k_0 k_1}{\delta} \log \frac{1}{\delta})$ per hashing performed. Since we run the hashing in a loop $10\log \frac{2}{ p}$ times, this amounts to a total of $O(\frac{k_0 k_1 }{\delta} \log \frac{1}{\delta} \log \frac{1}{ p})$.
	
	The runtime depends on two operations. The first one is calling \textsc{HashToBins}, whose analysis follows similarly to the aforementioned sample complexity analysis using the assumption $\|\wh{\chi}\|_0 = O(k_0k_1)$, but with an extra $\log n$ factor compared to the sample complexity, as per Lemma \ref{lem:hash2bins}.  
	
	The other operation is computation of $W_f^{(t)}$, which takes unit time for each $f \in \Fc$. Since the size of $|\Fc|=k_1 \cdot |L|$, running it in a loop costs $O( k_1 \cdot |L| \log \frac{1}{p})$. Computing the median is done in linear time which consequently results in $|\Fc| T=O( k_1 \cdot |L| \log \frac{1}{p})$.
\end{proof}

\bibliographystyle{IEEEtranSA}
\bibliography{JS_References,addon.bib,paper}

\end{document}